\newtheorem{remark}{Remark}
\newcommand{\peq}{=}
\newcommand{\equi}{\simeq}
\newcommand{\Univ}[1][]{\mathsf{U}_{#1}}
\renewcommand{\Prop}[1][]{\mathsf{Prop}_{#1}}
\newcommand{\Set}[1][]{\mathsf{Set}_{#1}}
\newcommand\tabs[2]{\lambda (#1\! :\! #2).}
\newcommand\tapp[2][\tickA]{#2\,[#1] }
\newcommand{\tickA}{\alpha}
\newcommand{\tickB}{\beta}
\newcommand\latbind[2]{{\triangleright}\, (#1 \!: \!#2) .}
\newcommand\toksubst[3][\kappa]{\left[#2/#3\right]}
\newcommand{\later}{\triangleright}
\newcommand{\dfix}{\mathsf{dfix}}
\newcommand{\fix}{\mathsf{fix}}
\newcommand{\clockc}{\kappa_0}
\newcommand{\grD}[1][\kappa]{{L^{#1}}}
\newcommand{\ciD}{{L}}
\newcommand{\inr}{\mathsf{inr}}
\newcommand{\inl}{\mathsf{inl}}
\newcommand{\tirrAx}[1][\kappa]{\mathsf{tirr}^{#1}}
\newcommand{\pfix}[1][\kappa]{\mathsf{pfix}^{#1}}
\newcommand{\clocktype}{\mathsf{clock}}
\newcommand{\capp}[2][\kappa]{#2\,[#1]}
\newcommand\wfcxt[2][]{#2 \vdash_{#1}}
\newcommand\hastype[4][]{
#2 \vdash_{#1} #3: #4
}
\newcommand{\subst}[2]{[#1/#2]}
\newcommand{\pFPC}{\ProbFPC}
\newcommand{\Fin}{\mathsf{Fin}}
\begin{document}

\title{Modelling Recursion and Probabilistic Choice in Guarded Type Theory} 

\author{Philipp Jan Andries Stassen}
\email{stassen@cs.au.dk}
\affiliation{%
  \institution{Aarhus University}
  \country{Denmark}
}

\author{Rasmus Ejlers M{\o}gelberg}
\email{mogel@itu.dk}
\affiliation{%
  \institution{IT University of Copenhagen}
  \country{Denmark}
}

\author{Maaike Zwart}
\email{mazw@itu.dk}
\affiliation{%
  \institution{IT University of Copenhagen}
  \country{Denmark}
}

\author{Alejandro Aguirre}
\orcid{0000-0001-6746-2734}
\affiliation{
  \institution{Aarhus University}
  \country{Denmark}
}
\email{alejandro@cs.au.dk}

\author{Lars Birkedal}
\orcid{0000-0003-1320-0098}
\affiliation{
  \institution{Aarhus University}
  \country{Denmark}
}
\email{birkedal@cs.au.dk}

\renewcommand{\shortauthors}{Stassen et al.}

\begin{abstract}
  Constructive type theory combines logic and programming in one language.
  This is useful both for reasoning about programs written in type theory, as well as for reasoning about other programming languages inside type theory.
  It is well-known that it is challenging to extend these applications to languages with recursion and computational effects such as probabilistic choice, because these features are not easily represented in constructive type theory. 

    We show how to define and reason about $\ProbFPC$, a programming language with probabilistic choice and recursive types, in guarded type theory. We use higher inductive types to represent finite distributions and guarded recursion to model recursion.
    We define both operational and denotational semantics of $\ProbFPC$, as well as a relation between the two.
    The relation can be used to prove adequacy, but we also show how to use it to reason about programs up to contextual equivalence.
\end{abstract}



\keywords{Probabilistic Programming Languages, Type Theory, Guarded Recursion, Recursive Types}

\maketitle


\section{Introduction}
\label{sec:introduction}

Traditionally, modelling and reasoning about programming languages is done using either operational or denotational techniques.
Denotational semantics provides mathematical abstractions that 
are used to see beyond the details of the operational implementation, to
describe principles common to many different languages, and to provide modular building blocks that can be extended 
to settings with more language features. 
On the other hand, denotational semantics can easily become very sophisticated mathematically, and 
modelling combinations of recursion and other features such as probabilistic sampling and higher-order store can be difficult.
Operational techniques are often more direct and flexible, and are usually easier to implement in proof assistants, 
but reasoning requires constructing complex operational models and logics.

\emph{Synthetic guarded domain theory} (SGDT) provides an alternative approach
to both denotational and operational semantics. 
The idea is to work in an expressive meta-language with guarded recursion in the sense of Nakano~\cite{nakano:Modality},
and to use recursion on the meta-level to model recursion on the object level.
The specific meta-language used 
in this paper is Clocked Cubical Type Theory (CCTT)~\cite{CubicalCloTT}.
It includes a modal type-operator $\later^\kappa$, indexed by a so-called clock $\kappa$ (see Section \ref{sec:cctt}), to describe data that is available 
one time step from now. It is possible to define elements by guarded recursion by means of a fixed point combinator $\fix^\kappa : (\later^\kappa A \to A) \to A$.
By applying the fixed point combinator to an operator on a universe one can obtain solutions to guarded recursive type equations \cite{DBLP:conf/lics/BirkedalM13}.
For example, one can define a \emph{guarded delay monad} $\grD$, which maps a type $A$ to $\grD A$ satisfying $\grD A \equi A + \later^\kappa (\grD A)$.
This guarded delay monad has been used to model recursive computations, both 
in operational semantics, as well as in denotational semantics, which simply models function types of the object language using the Kleisli functions 
of the metalanguage: $\Interp{A \to B} \defeq \Interp A \to \grD \Interp B$. Guarded recursion is then also used for reasoning about the model, in particular
to establish a relation between syntax and semantics to prove adequacy. These results were originally proved for the simply typed
lambda calculus extended with recursive terms~\cite{DBLP:journals/entcs/PaviottiMB15} 
and recursive types~\cite{DBLP:journals/mscs/MogelbergP19}, but have been recently extended to cover also languages with general store and 
polymorphism~\cite{DBLP:journals/corr/abs-2210-02169}.

SGDT has several of the benefits listed for operational and denotational techniques above: the models can be described at a high level of
abstraction, and once one is familiar with programming in the meta-language,
the mathematical knowledge required for constructing the models is limited. 
The main reason for both of these is that recursion and other tools needed for constructing and reasoning about these models, are build into the meta-language,
so the mathematical difficulties appear on the next meta-level. Moreover, since metalanguages such as CCTT are dependent type theories, these results
can potentially be directly implemented in proof assistants. For example, CCTT has been implemented\footnote{https://github.com/agda/guarded} 
as an experimental extension of Cubical Agda~\cite{DBLP:journals/pacmpl/VezzosiM019}. 
Finally, the denotational semantics of the language can be seen as a shallow embedding
of the object language, so one can potentially also use the meta-language as a language for both programming and reasoning. 

In this paper we show how to extend SGDT to model probabilistic programming languages, i.e., 
languages that include commands that generate random values by sampling from a probability distribution.
It is well known that it is challenging to develop semantic models for reasoning about higher-order probabilistic programming languages
that include recursion, even in a classical meta-theory.
Nonetheless, a plethora of denotational approaches have been investigated in recent years, \cite{DBLP:conf/lics/JonesP89, DBLP:journals/pacmpl/VakarKS19,DBLP:conf/lics/HeunenKSY17,DBLP:journals/jacm/EhrhardPT18, DBLP:journals/pacmpl/DahlqvistK20}.
Other operational-based approaches to reason about probabilistic programs have also been shown to scale to rich languages with a variety of features, 
using techniques such as logical relations~\cite{DBLP:conf/lics/JohannSV10,DBLP:conf/fossacs/BizjakB15,DBLP:conf/esop/CulpepperC17,DBLP:journals/pacmpl/WandCGC18,DBLP:journals/pacmpl/ZhangA22}, or bisimulations~\cite{DBLP:conf/esop/CrubilleL14,DBLP:conf/popl/LagoSA14}.


Precisely, we show how to develop operational and denotational semantics of 
$\ProbFPC$\footnote{The name derives from Plotkin's Fixed Point Calculus (FPC)~\cite{plotkin1985denotational,DBLP:phd/ethos/Fiore94}}, a call-by-value higher-order probabilistic programming language with recursive types,
in CCTT, and prove that the denotational semantics is adequate with respect to the operational semantics. We also show how to use these results for reasoning about $\ProbFPC$ programs
in CCTT. 
One of the challenges for doing so is that most of the previously developed theory for probabilistic languages relies on classical reasoning, which is not available in CCTT. 
For example, many types used in our model do not have decidable equality, and as a consequence, even the finite distribution monad $\Dist{}$ cannot be defined in the standard
classical way. Fortunately, CCTT not only models guarded recursion but also higher-inductive types (HITs), which
we use to define $\Dist{}$: on a set $A$,  $\Dist{A}$
is the free convex algebra \cite{DBLP:conf/ifipTCS/Jacobs10}, that is, a set $\Dist{A}$ together with a
binary operation $\DistChoice{p}{}{}$, indexed by a rational number $p$, satisfying 
the natural equational theory (idempotency, associativity, and commutativity).
Using a HIT to represent distributions
allows us to use the equational properties 
of the meta-language level when
reasoning about the semantics of $\ProbFPC$, and it provides us with a useful induction principle for
proving propositions ranging over $\Dist{A}$.

To model the combination of recursion and probabilistic choice,
we use the 
\emph{guarded convex delay monad} which we denote by $\GLiftDist{}$. On a type $A$,
$\GLiftDist{A} \simeq \Dist(A+\Later{(\GLiftDist{A})})$. 
Intuitively, this means that a computation of type $A$ will be a distribution over values of $A$ (immediately
available) and delayed computations of type $A$. Quantifying over clocks gives the \emph{convex delay monad}
$\LiftDist A \defeq \forall\kappa. \GLiftDist A$ which is a coinductive solution to the equation 
$\LiftDist A \equi \Dist(A + \LiftDist A)$. Both operational and denotational semantics
are defined using $\GLiftDist{}$ because it allows for guarded recursive definitions. These definitions 
can then be clock quantified to give elements of $\LiftDist{}$, which has the benefit that steps
can be eliminated, because $\Later$ does not get in the way. 

As our main result, we define a logical relation, relating
the denotational and operational semantics, and prove that it is sound with
respect to contextual refinement. Traditionally, defining a logical relation
relating denotational and operational semantics for a language with recursive
types is non-trivial, see, e.g., \cite{DBLP:journals/iandc/Pitts96}. Here we use
the guarded type theory to define the logical relation by guarded recursion. As
usual, the logical relation is divided into a relation for values and a relation
for computations. Since computations compute to distributions, the challenge
here lies in defining the relation for computations in terms of the relation for
values. Earlier work on operationally-based logical relations
\cite{DBLP:conf/fossacs/BizjakB15,DBLP:journals/pacmpl/AguirreB23} used
bi-orthogonality to reduce the problem to relating termination probabilities for
computations of ground type. Here, we follow the approach of the recent
article~\cite{gregersenAHTB24}, which uses
couplings~\cite{lindvall_lectures_2002,thorisson/2000,Villani2008OptimalTO,DBLP:conf/popl/BartheGB09}
to lift relations on values to relations on distributions. Note that the
development in~\cite{gregersenAHTB24} relies on classical logic (in particular,
the composition of couplings, the so-called bind lemma, relies on the axiom of
choice) and thus does not apply here. Instead, we define a novel constructive
notion of lifting of relations $\Rel : A \to B \to \Prop$ to relations on convex delay 
algebras $\GRelLift{\Rel} : \GLiftDist{A}\to \LiftDist{B} \to \Prop$ by guarded recursion. 
We establish a series of basic results for this, including a version of the important bind lemma that
allows us to compose these liftings in proofs.

Finally, we use the semantics and the logical relation to prove contextual refinement of examples that combine probabilistic choice and recursion.
The examples illustrate yet another benefit of the denotational semantics: Since fewer steps are needed in the denotational semantics 
than in the operational semantics, using this for reasoning makes the arguments less cluttered by steps than a direct
relation between syntax and syntax would. 

\paragraph{Contributions}

\begin{enumerate}
\item To the best of our knowledge, we present the first constructive type theoretic account of
  operational and denotational semantics of $\ProbFPC$.
\item We develop the theory of the finite distribution monad in cubical type theory, and we introduce the convex delay monad in CCTT,
and develop the basic theory for it. 
\item We develop the basic constructive theory of couplings for convex delay algebras and use it to define a logical relation, relating denotational and operational semantics.
 \item We demonstrate how to use the semantics to reason about examples that combine probabilistic choice and recursion.
\end{enumerate}

\newcommand{\nextop}[1][\kappa]{\mathsf{next}^{#1}}

\section{Clocked Cubical Type Theory}
\label{sec:cctt}

We will work in Clocked Cubical Type Theory (CCTT)~\cite{CubicalCloTT}, an extension of Cubical Type Theory~\cite{CTT}
with guarded recursion. 
At present, CCTT is the only existing type theory containing all constructions needed for this paper.
We will not describe CCTT in detail, but only
describe the properties we need,
with the hope of making the paper more accessible, and
so that the results can be reused in other (future) type theories with the same properties.

\subsection{Basic properties and HITs}

CCTT has an infinite hierarchy of (Tarski style) universes ($\Univ[i]$), as well as identity types satisfying
the standard rules and function extensionality. Precisely, CCTT, being based on Cubical Type Theory,
has a path type as primitive, rather than an identity type in the traditional sense. However, 
the differences between the two are inessential to this work.
We will write $a \peq_A b$, or just $a \peq b$,
for the identity type associated with terms $a,b : A$ of the same type.
Following conventions from homotopy type
theory~\cite{hottbook}, we say that a type $A$ is a (homotopy) proposition if for
any $x,y : A$, the type $x \peq y$ is contractible, and that it is a (homotopy) set if for any
$x,y : A$, the type $x \peq y$ is a proposition. We write $\Prop[i]$, and $\Set[i]$
for the subuniverses of $\Univ[i]$ of propositions and sets respectively, often omitting the
universe level $i$. We write $A \equi B$ for the type of equivalences
from $A$ to $B$. We shall mostly use this in the case where $A$ and $B$ are sets, and in that
case, an equivalence is simply given by the standard notion of isomorphism of sets as phrased
inside type theory using propositional equality. All types used represent to syntax, as well as the denotation
of all types, are sets. Likewise all relations are valued in $\Prop$. These choices simplify reasoning,
as no higher dimensional structure needs to be accounted for. The only types used that are not sets
are the universes $\Set$ and $\Univ$. 

CCTT also has higher inductive types (HITs), and we will use these to construct
propositional truncation, set truncation and the finite distributions monad (see \autoref{sec:fin:dist}).
In particular, this means that one can express ordinary propositional logic with operators $\land, \lor, \exists, \forall$
on $\Prop$, using the encodings of $\exists$ and $\vee$ defined using propositional truncation~\cite{hottbook}.
Recall in particular the elimination principles for $\exists$: When proving a proposition $\psi$
assuming $\exists (x : X). \phi(x)$ we may assume we have an $x$ in hand satisfying $\phi(x)$, but
we cannot do that when mapping from $\exists (x : X). \phi(x)$ to an arbitrary type.
We will also need inductive types to represent the type $\NN$ of natural numbers,
as well as types and terms of the language $\pFPC$ described in \autoref{sec:prob:FPC}.
These are all captured by the schema for higher inductive types in CCTT~\cite{CubicalCloTT}.
%

\subsection{Guarded recursion}

\begin{figure}
\begin{mathpar}
  \inferrule*
  {\wfcxt{\Gamma} \\ \kappa : \clocktype \in \Gamma}
  {\wfcxt{\Gamma, \tickA : \kappa}{}}
  \and
  \inferrule*
  {\hastype{\Gamma}{t}{\latbind{\tickA}{\kappa} A}\\ \wfcxt{\Gamma,\tickB:\kappa,\Gamma'}}
  {\hastype{\Gamma,\tickB: \kappa,\Gamma'}{\tapp[\tickB] t}{A\toksubst{\tickB}{\tickA}}}
  \and
  \inferrule*
  {\hastype{\Gamma,\tickA:\kappa}{t}{A}}
  {\hastype{\Gamma}{\tabs{\tickA}{\kappa} t}{\latbind{\tickA}{\kappa} A}}
  \and
    \inferrule*
  {\hastype{\Gamma,\kappa : \clocktype}{t}{A}}
  {\hastype{\Gamma}{\Lambda\kappa. t}{\forall \kappa . A}}
  \and
  \inferrule*
  {\hastype{\Gamma}{t}{\forall \kappa . A}\\
    \hastype\Gamma{\kappa'}\clocktype}
  {\hastype{\Gamma}{t [\kappa']}{A \subst{\kappa'}{\kappa}}}
  \and
  \inferrule*
  {\hastype{\Gamma}{t}{\later^\kappa A \to A}}
  {\hastype{\Gamma}{\dfix^\kappa\,t}{\later^\kappa A}}
  \and
  \inferrule*
{\hastype[]{\Gamma}{t}{\later^\kappa A \to A} }
{\hastype{\Gamma}{\pfix[\kappa] \, t}{\latbind{\tickA}{\kappa}{{\tapp{(\dfix^{\kappa} t)}}\peq_A{t(\dfix^{\kappa} t)}}}}
\end{mathpar}
\caption{Selected typing rules for Clocked Cubical Type Theory.}
\label{fig:later:typing}
\end{figure}

Guarded recursion uses a modal operator $\later$ (pronounced `later') to describe data that is available one time step from now.
In multiclocked guarded recursion, $\later$ is indexed by a clock $\kappa$. Clocks can be variables of
the pretype $\clocktype$ or they can be the clock constant $\clockc$. Clocks can be universally quantified in the type
$\forall\kappa. A$, with rules similar to $\Pi $-types, including a functional extensionality principle.
The modality $\later$ is a Fitch-style modality \cite{clouston2018fitch,drat}, in the sense that introduction and elimination for
$\later$ is by abstraction and application to \emph{ticks}, i.e., assumptions of the form $\tickA :\kappa$, to be thought
of as evidence that time has ticked on clock $\kappa$. Because ticks can occur in terms, they can also occur in
types, and the type $\latbind{\tickA}{\kappa} A$ binds $\tickA$ in $A$. When $\tickA$ does not occur in $A$ we simply
write $\later^\kappa A$ for $\latbind{\tickA}{\kappa} A$. The rules for tick abstraction and application that we shall
use in this paper are presented in Figure~\ref{fig:later:typing}. In the figure, the notation $\wfcxt\Gamma$ means
that $\Gamma$ is a well-formed context.  Note that the rule for tick application assumes that
$\tickB$ (or anything occurring after that in the context) does not already occur in $t$. This is to avoid terms of the type
$\later^\kappa \later^\kappa A \to \later^\kappa A$ merging two time steps into one.
We will sometimes use the notation
\begin{equation} \label{eq:nextop}
 \nextop \defeq \lambda x. \tabs\tickA\kappa x : A \to \later^\kappa A.
\end{equation}

The rules presented in Figure~\ref{fig:later:typing} are special cases of those of CCTT. The general rules allow
certain `timeless' assumptions in $\Gamma'$ to occur in $t$ in the tick application rule. This allows typing of
an extensionality principle for $\later$ of type
\begin{equation}
\label{eq:later:ext}
\left(x \peq_{\later^\kappa A} y\right)
\equi
\latbind\tickA\kappa{\left({\tapp x}\peq_A{\tapp y}\right)}.
\end{equation}
In this paper we shall simply take this as an axiom. 
Intuitively, (\ref{eq:later:ext}) states that $x$ and $y$ are equal, 
if the elements they deliver in the next time step are equal. 
One consequence of this
is that $\later$ preserves the property of being a set or a proposition.
Other omitted rules for ticks allow typing of
a \emph{tick-irrelevance} axiom 
\begin{equation} \label{eq:tirrAx}
 \tirrAx 
  : \Pi(x : \later^\kappa A) . \latbind{\tickA}{\kappa}  \latbind{\tickB}{\kappa} {(\tapp{x}) \peq_A(\tapp[\tickB] x)}.
\end{equation}
stating that all ticks are propositionally equal (although they should not be considered judgementally equal~\cite{bahr2017clocks}).

The fixed point combinator $\dfix{}$ allows for defining and programming with guarded recursive types.
Define $\fix^\kappa : (\later^\kappa A \to A) \to A$ as $\fix^\kappa f = f(\dfix^\kappa f)$. Then one can prove that
\begin{equation} \label{eq:fix:unfold}
\fix^\kappa t
\peq t(\tabs\tickA\kappa{\fix^\kappa t}).
\end{equation}
Applying the fix point operator to maps on a universe, as in \cite{DBLP:conf/lics/BirkedalM13}, one can define \emph{guarded recursive types}
such as the \emph{guarded delay monad} $\grD$ mapping a type $A$ to $\grD A$ satisfying
\begin{equation}
\label{eq:grD}
\grD A \equi A + \later^\kappa (\grD A).
\end{equation}
In this paper, we will not spell out how such guarded recursive
types are defined as fixed points, but just give the defining guarded recursive equation. Intuitively, the reason
this is well defined is that $\grD A$ only occurs under a $\later$ on the right-hand side of (\ref{eq:grD}), which
allows the recursive equation to be phrased as a map $\later^\kappa \Univ \to \Univ$.
One can also use $\fix$ to program with $\grD$ defining, e.g., the
diverging computation as $\bot = \fix(\lambda x . \inr(x))$ (leaving the type equivalence between $\grD A$
and its unfolding implicit). In this case (\ref{eq:fix:unfold}) specialises to $\bot \peq \inr(\tabs\tickA\kappa\bot)$.

\subsection{Coinductive types}

Coinductive types can be represented by quantifying over clocks in guarded recursive types~\cite{DBLP:conf/icfp/AtkeyM13}.
For example, the type $\ciD A \defeq \forall\kappa. \grD A$ defines a coinductive solution to $\ciD A \equi A + \ciD A$ in
CCTT, provided $A$ is \emph{clock-irrelevant}, meaning that the canonical map $A \to \forall\kappa. A$ is an equivalence.
More generally, one can prove that any functor $F : \Univ \to \Univ$ (in the naive sense of having a functorial action
$(A \to B) \to FA \to FB$) commuting with clock quantification in the sense that $F(\forall\kappa . X) \equi \forall\kappa. F(X)$
via the canonical map, has a final coalgebra defined as $\nu(F) \defeq \forall \kappa. \nu^\kappa(F)$, where
$\nu^\kappa(F)\equi F(\later^\kappa(\nu^\kappa(F))$ is defined using $\fix$. This encoding also works
for indexed coinductive types. The correctness of the encoding of coinductive types can be proved in CCTT, and
relies on the type equivalence
\[
 \forall\kappa. A \equi \forall\kappa. \later^\kappa \!A.
\]

For the encoding to be useful, one needs a large collection of clock-irrelevant types and functors commuting with clock
quantification. We will need that $\NN$ is clock-irrelevant, and that clock quantification commutes with sums
in the sense that $\forall\kappa.(A + B) \equi (\forall\kappa. A) + (\forall\kappa.B)$. Both of these can be proved in
CCTT using the notion of induction under clocks for higher inductive types~\cite{CubicalCloTT}. Note also that all propositions $P$ are clock-irrelevant,
because the clock constant $\clockc$ can be used to define a map $(\forall\kappa . P) \to P$. The following lemma can be used to prove types clock irrelevant.

\begin{lemma} \label{lem:embed:to:cirr}
 Suppose $i : A \to B$ is injective in the sense of the existence of a map
 $\Pi x,y: A. (i(x) \peq i(y)) \to x\peq y$, and suppose $B$ is clock irrelevant. Then
 also $A$ is clock-irrelevant.
\end{lemma}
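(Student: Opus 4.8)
The plan is to reduce clock-irrelevance of $A$ to that of $B$ using a convenient reformulation of the property. Write $\eta_X \defeq \lambda(x:X).\Lambda\kappa.x : X \to \forall\kappa.X$ for the canonical map and $r_X \defeq \lambda f.\, f[\clockc] : (\forall\kappa.X) \to X$ for evaluation at the clock constant. The $\beta$-rule for clock application makes $r_X(\eta_X(x))$ equal to $x$, so $r_X$ is always a retraction of $\eta_X$; hence $X$ is clock-irrelevant iff $r_X$ is moreover a section, i.e.\ iff $\Lambda\kappa.(f[\clockc]) \peq f$ for all $f : \forall\kappa.X$ (for the forward direction, any quasi-inverse of $\eta_X$ must agree with $r_X$ since $r_X\circ\eta_X$ is the identity; the backward direction simply hands $r_X$ back as a two-sided inverse). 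Using clock extensionality together with the $\beta$-rule, this last condition is in turn equivalent to: for all $f : \forall\kappa.X$ and all clocks $\kappa_1,\kappa_2$, one has $f[\kappa_1] \peq f[\kappa_2]$. I would record this equivalence first, as the working characterisation of clock-irrelevance.

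The rest is then immediate. Given $f : \forall\kappa.A$ and clocks $\kappa_1,\kappa_2$, I want $f[\kappa_1] \peq f[\kappa_2]$; by the assumed injectivity of $i$ it suffices to prove $i(f[\kappa_1]) \peq i(f[\kappa_2])$. Set $h \defeq \Lambda\kappa.\, i(f[\kappa]) : \forall\kappa.B$, so that by the $\beta$-rule $h[\kappa_j] \peq i(f[\kappa_j])$ for $j=1,2$. Since $B$ is clock-irrelevant, the characterisation applied to $B$ and $h$ gives $h[\kappa_1] \peq h[\kappa_2]$; transporting along the two previous equalities yields $i(f[\kappa_1]) \peq i(f[\kappa_2])$. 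Applying the characterisation in the other direction, now to $A$, shows $A$ is clock-irrelevant.

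I do not expect a real obstacle; the only step requiring some care is the reformulation in the first paragraph — namely checking that $r_X\circ\eta_X$ really is the identity from the clock-application computation rule, that passing between ``$\Lambda\kappa.(f[\clockc])\peq f$'' and ``$f[\kappa_1]\peq f[\kappa_2]$'' is sound (this uses clock extensionality and the fact that applying a fixed clock to an equality yields an equality), and the standard fact that exhibiting a quasi-inverse suffices to conclude that a map is an equivalence. Once the characterisation is in place, the argument is pure $\beta$-computation for ticks and clocks together with the two hypotheses.
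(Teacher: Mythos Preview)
Your proof is correct and follows essentially the same approach as the paper's. The only cosmetic difference is that the paper works directly with the characterisation ``$f[\kappa]\peq f[\clockc]$ for all $\kappa$'' (comparing against the fixed clock constant) rather than your symmetric ``$f[\kappa_1]\peq f[\kappa_2]$ for all $\kappa_1,\kappa_2$'', but these are equivalent and the core step—pushing $i$ inside the clock abstraction to form $\Lambda\kappa.\,i(f[\kappa]):\forall\kappa.B$ and then invoking clock-irrelevance of $B$ followed by injectivity of $i$—is identical.
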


For the rest of the paper, we work informally in CCTT. 

\section{Finite distributions}
\label{sec:fin:dist}
In the previous section we saw the definition of the guarded delay monad $\grD$ (Equation~\ref{eq:grD}), which models non-terminating computations. In this section, we introduce the distribution monad $\Dist$, which models finite probability distributions. These two monads will be combined to form the guarded convex delay monad in the next section. 

To construct the distribution monad $\Dist$, we first need some infrastructure to reason about probabilities.  
We will assume types representing the open rational interval $\II$ and closed rational interval $\IIc$, 
and we assume that we have the operations product, division of smaller numbers by larger numbers, and inversion ($1-(-)$). 
We also need that $\II$ is a set with decidable equality, and that $\II$ is clock irrelevant.
In practice, there are several ways of specifying $\II$, for example as a type of pairs $(n,d)$, of mutually prime, positive natural numbers satisfying $n < d$, and $\IIc$ can be obtained, e.g., by just adding two points to $\II$. For this encoding, clock irrelevance of $\II$ follows from the embedding into $\NN\times\NN$ and Lemma~\ref{lem:embed:to:cirr}. However, for this paper the specific implementation is irrelevant, so we leave this open. 

Then, we need to find the right representation of probability distributions in type theory. 
In classical presentations of probability theory, a finite distribution on a set $A$ is a finite map into $\IIc$, whose
values sum to $1$. In type theory, this would be represented by a subtype of $A \to \IIc$, which would need some notion of finite support, as well as a definition of the functorial action of $\Dist$. Especially the latter is difficult to do: Given a map $f: A \rightarrow B$, then $\Dist(f): \Dist(A) \rightarrow \Dist(B)$ should map a probability distribution $\mu$ to the distribution which, for each $b : B$, sums the probabilities $\mu(a)$ for each $a$ such that $f(a) = b$. To define this, we would need decidable equality on $B$ to compute for which $a$ the equality $f(a) = b$ holds, which is too restrictive. 
Another approach could be to use lists of key-value pairs, but this requires a quotient to obtain the correct notion of equality of distributions, which would make it very hard to work with in practice. 

Here we instead choose to represent $\Dist$ as the free monad for the theory of convex algebras, which is known to generate the finite distribution monad~\cite{DBLP:conf/ifipTCS/Jacobs10}. The operation $\DistChoiceOp$ in the definition below should be thought of as a convex sum: $\DistChoice{p}{x}{y} = p \cdot x + (1-p) \cdot y$.  

\begin{definition}[Convex Algebra]
  A \emph{convex algebra} is a set $A$ together with an operation $\DistChoiceOp :\II \to A \to A \to A$, such that
  \begin{align*}
	  \tag{idem}  \DistChoice{p}{\mu}{\mu} &\peq \mu \\
	  \tag{comm} \DistChoice{p}{\mu}{\nu} &\peq \DistChoice{1-p}{\nu}{\mu} \\
	  \tag{assoc}  \DistChoice{q}{\left(\DistChoice{p}{\mu_1}{\mu_2}\right)}{\mu_3}
	  &\peq \DistChoice{pq}{\mu_1}{\left(\DistChoice{\frac{q-pq}{1-pq}}{\mu_2}{\mu_3}\right)}
  \end{align*}
  A map $f : A \to B$ between convex algebras $A$ and $B$ is a \emph{homomorphism} if
  $f(\DistChoice{p}{\mu}{\nu}) \peq \DistChoice{p}{f(\mu)}{f(\nu)}$ holds for all $\mu,\nu, p$.
\end{definition}

\begin{definition} \label{def:Dist}
 Let $\Dist(A)$ be the higher inductive type defined using two constructors
\begin{align*}
 \DiracOp & : A \to \Dist(A) & \DistChoiceOp :\II \to \Dist(A) \to \Dist(A) \to \Dist(A),
\end{align*}
and equations (paths) for idempotency, commutativity and associativity as in the definition of convex algebra, plus
an equation for set truncation.
\end{definition}
%

One advantage of HITs is that they come with an easy induction principle. For $\Dist(A)$, it is as follows: 
If $\phi(x)$ is a proposition for all $x : \Dist(A)$ and $\phi(\Dirac a)$ holds for all $a$, and moreover, $\phi(\mu)$ and $\phi(\nu)$ implies $\phi(\DistChoice p\mu \nu)$ for all $\mu,\nu, p$, then $\phi(x)$ holds for all $x$. We will use this proof technique in numerous places throughout this paper. Similarly, the recursion principle for 
$\Dist(A)$ states that to define a map $f$ from $\Dist(A)$ into a set $B$, it suffices to define the cases of $f(\Dirac x)$ and $f(\DistChoice pxy)$,
the latter using $f(x)$ and $f(y)$, in such a way that the convex algebra equations are respected. The recursion principle implies the following.

\begin{proposition}
 $\Dist(A)$ is the free convex algebra on $A$, in the sense that for any convex algebra $B$, and function
 $f : A \to B$, there exists a unique homomorphism of convex algebras $\uext f: \Dist(A) \to B$ satisfying $f = \uext f \circ \DiracOp$.
 As a consequence, $\Dist(-)$ forms a monad on the category of sets.
\end{proposition}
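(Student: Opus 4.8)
The plan is to apply the recursion and induction principles for the higher inductive type $\Dist(A)$ stated just above; indeed the proposition is essentially a reformulation of the fact that $\Dist(A)$ is presented by generators and relations matching exactly the signature of a convex algebra, so both existence and uniqueness reduce to bookkeeping with those principles.

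\textbf{Existence.} Given a convex algebra $B$ and $f : A \to B$, I would define $\uext f : \Dist(A) \to B$ by the recursion principle, setting $\uext f(\Dirac a) \defeq f(a)$ and $\uext f(\DistChoice{p}{x}{y}) \defeq \DistChoice{p}{\uext f(x)}{\uext f(y)}$ using the operation $\DistChoiceOp$ of $B$. The obligation is to check that these clauses respect the path constructors of $\Dist(A)$: the idempotency, commutativity and associativity paths are sent to the corresponding equations of the convex algebra $B$, which hold by assumption, and the set-truncation constructor is respected because $B$ is a set. Hence $\uext f$ is well defined, it is a homomorphism of convex algebras by the very definition of its action on $\DistChoiceOp$, and $\uext f \circ \DiracOp \peq f$ holds by the first clause.

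\textbf{Uniqueness.} Suppose $g : \Dist(A) \to B$ is another homomorphism with $g \circ \DiracOp \peq f$. Since $B$ is a set, $g(x) \peq \uext f(x)$ is a proposition for every $x$, so I would prove it by the induction principle for $\Dist(A)$. For $x = \Dirac a$ we have $g(\Dirac a) \peq f(a) \peq \uext f(\Dirac a)$. For $x = \DistChoice{p}{\mu}{\nu}$, using that $g$ is a homomorphism, the induction hypotheses $g(\mu) \peq \uext f(\mu)$ and $g(\nu) \peq \uext f(\nu)$, and the definition of $\uext f$, we get $g(\DistChoice{p}{\mu}{\nu}) \peq \DistChoice{p}{g(\mu)}{g(\nu)} \peq \DistChoice{p}{\uext f(\mu)}{\uext f(\nu)} \peq \uext f(\DistChoice{p}{\mu}{\nu})$. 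Function extensionality then gives $g \peq \uext f$.

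\textbf{Monad structure.} The unit is $\DiracOp : A \to \Dist(A)$ and the Kleisli extension of $h : A \to \Dist(B)$ is $\uext h : \Dist(A) \to \Dist(B)$, which is legitimate since $\Dist(B)$ is a convex algebra. The monad laws then follow formally from the universal property: the left unit law is the equation $\uext h \circ \DiracOp \peq h$; the right unit law holds because $\mathrm{id}_{\Dist(A)}$ is a homomorphism extending $\DiracOp$, so uniqueness forces $\uext{\DiracOp} \peq \mathrm{id}$; and for $h : A \to \Dist(B)$, $k : B \to \Dist(C)$ both $\uext k \circ \uext h$ and $\uext{(\uext k \circ h)}$ are homomorphisms $\Dist(A) \to \Dist(C)$ extending $\uext k \circ h$ (composites of homomorphisms are homomorphisms), so uniqueness identifies them; functoriality of $\Dist(-)$ on sets is the special case $\Dist(j) \defeq \uext{(\DiracOp \circ j)}$. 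Equivalently, this is the standard observation that the free construction is left adjoint to the forgetful functor $\mathsf{ConvexAlg} \to \Set$, and every adjunction induces a monad. No step here is genuinely difficult; the only point needing care is verifying in the existence part that the recursion principle really applies — i.e.\ that the convex-algebra axioms of $B$ are precisely the data required to respect the path constructors, with set-truncation discharged by $B$ being a set.
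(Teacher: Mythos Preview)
Your proposal is correct and matches the paper's approach exactly: the paper simply remarks that ``the recursion principle implies'' the proposition, and you have spelled out precisely that argument, including the uniqueness via the induction principle and the standard derivation of the monad laws from freeness.
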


Now that we have a type $\Dist(A)$ of probability distributions over a set $A$, we would like to reason about these distributions. In particular, we would like to know when two distributions are equal. For that, we use the following trick: we first relate probability distributions on finite sets to the classical probability distributions on these sets (that is, to functions into $\IIc$ with finite support). Then, we use functoriality of $\Dist$ to generalise to arbitrary sets $A$.

Define the Bishop finite sets in the standard way by induction on $n$ as $ \Fin(0) = 0$ and $\Fin(n+1) = \Fin(n) + 1$, where the $+$ on 
the right hand side refers to sum of types.
%
For these sets, we can relate $\Dist$ to its classical definition.
First, we use that if any set $A$ has decidable equality, we can associate a probability function $f_\mu : A \to \IIc$ to a distribution $\mu$ by induction, using $f_{\delta(x)}(y) = 1$ if $x\peq y$ and $f_{\delta(x)}(y) = 0$ else. This gives us an equivalence of types:

\begin{lemma} \label{lem:Fin:Dist}
 $\Dist(\Fin(n)) \equi \Sigma( f : \Fin(n) \to \IIc) . \mathsf{sum}(f) = 1$ 
\end{lemma}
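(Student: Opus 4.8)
The plan is to construct the equivalence by exhibiting maps in both directions and showing they are mutually inverse, using the recursion and induction principles for $\Dist$ together with decidable equality on $\Fin(n)$. First I would define the forward map $\Phi : \Dist(\Fin(n)) \to \Sigma(f : \Fin(n) \to \IIc).\,\mathsf{sum}(f) = 1$ by the recursion principle for $\Dist$: on $\Dirac(x)$ it returns the indicator function $y \mapsto (\text{if } x \peq y \text{ then } 1 \text{ else } 0)$, which is legitimate because $\Fin(n)$ has decidable equality, and on $\DistChoice{p}{\mu}{\nu}$ it returns the pointwise convex combination $y \mapsto \DistChoice{p}{(\Phi\mu).1(y)}{(\Phi\nu).1(y)}$ of the underlying functions. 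For this to be well-defined I must check that the target type carries a convex algebra structure (pointwise convex combination of functions into $\IIc$, with the sum condition preserved since $p\cdot 1 + (1-p)\cdot 1 = 1$) and that the three convex algebra equations hold pointwise — these follow from the corresponding arithmetic identities in $\IIc$. Concretely, the carrier $\Sigma(f : \Fin(n) \to \IIc).\,\mathsf{sum}(f) = 1$ is a set (being a $\Sigma$-type over a set with propositional second component, since $\mathsf{sum}(f) = 1$ is an equality in the set $\IIc$), so the recursion principle applies.

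Next I would define the backward map $\Psi$ in the other direction. Given $(f, p_f)$ with $\mathsf{sum}(f) = 1$, I build a distribution by induction on $n$: using an enumeration $x_0, \dots, x_{n-1}$ of $\Fin(n)$, form the iterated convex combination weighting $x_0$ by $f(x_0)$ against the renormalised distribution of the remaining mass $f(x_1) + \cdots + f(x_{n-1})$ on $x_1, \dots, x_{n-1}$ — this is where I need division of a smaller rational by a larger one, available by assumption on $\II$, and I must handle the degenerate case where the remaining mass is $0$ (equivalently $f(x_0) = 1$), in which case I just return $\Dirac(x_0)$. Some care is needed to phrase this cleanly; an alternative is to express $\Psi$ via the monad structure as a ``formal sum'' $\sum_i f(x_i)\cdot \Dirac(x_i)$ defined by the standard fold, which sidesteps the explicit renormalisation bookkeeping by peeling off mass one point at a time and using commutativity/associativity.

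Then I would prove the two round-trip identities. The composite $\Phi \circ \Psi = \mathrm{id}$ is a statement about the $\Sigma$-type; since the second component is a proposition it suffices to check the underlying functions agree, i.e.\ that evaluating $\Psi(f)$ through $\Phi$ at each $y : \Fin(n)$ recovers $f(y)$, which is a finite computation by induction on $n$ unwinding the convex combinations and the definition of the indicator functions. The composite $\Psi \circ \Phi = \mathrm{id}$ is proved by the induction principle for $\Dist(\Fin(n))$: the goal $\Psi(\Phi(\mu)) \peq \mu$ is a proposition in $\mu$ (as $\Dist(\Fin(n))$ is a set), the $\Dirac$ case is immediate from the definitions, and the $\DistChoice{p}{\mu}{\nu}$ case follows from the inductive hypotheses together with the fact that $\Psi$ is a convex algebra homomorphism — which in turn is exactly the statement that $\Psi$ is the unique homomorphic extension $\uext{(\lambda i.\, \Dirac(i))}$ postcomposed appropriately, or can be checked directly using the convex algebra equations.

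The main obstacle I expect is defining $\Psi$ and establishing that it is a convex algebra homomorphism: the renormalisation forces case splits on whether various partial sums vanish, and proving homomorphism then amounts to a somewhat delicate rational-arithmetic verification matching the (assoc) axiom's coefficient $\frac{q - pq}{1 - pq}$ against the renormalisation factors. Routing everything through the free-algebra universal property from the Proposition above — defining $\Psi$ as the homomorphic extension of $i \mapsto \Dirac(i)$ reindexed through the classical-distribution structure on the target — would make the homomorphism property hold by construction and localise the arithmetic to checking that $\Sigma(f : \Fin(n)\to\IIc).\,\mathsf{sum}(f)=1$ genuinely is a convex algebra, which is the cleanest way to organise the proof.
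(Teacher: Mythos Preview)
Your proposal is correct, and the underlying arithmetic you would do is the same as the paper's, but the organisation differs in a way worth noting. The paper does not construct $\Phi$ and $\Psi$ and verify both round-trips; instead it proves directly that the classical type $\Dist'(\Fin(n)) \defeq \Sigma(f:\Fin(n)\to\IIc).\,\mathsf{sum}(f)=1$ is itself a \emph{free} convex algebra on $\Fin(n)$, by induction on $n$: given any convex algebra $B$ and $g : \Fin(n{+}1) \to B$, one defines the extension $\uext g(f)$ by branching (using decidable equality on $\IIc$) on whether $f(n) = 0$, $f(n) = 1$, or $f(n) \in \II$, and in the last case decomposes $f = \DistChoice{p}{\Dirac n}{f'}$ with $f' : \Dist'(\Fin(n))$ handled by the inductive hypothesis. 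The homomorphism check then has nine cases (three on each argument), and the paper verifies one representative case via the arithmetic you anticipate. The equivalence with $\Dist(\Fin(n))$ is then immediate because both satisfy the same universal property.

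Your approach reaches the same destination, but your ``cleanest'' suggestion in the final paragraph is slightly off: the Proposition gives freeness of $\Dist(\Fin(n))$, which lets you get $\Phi$ as a homomorphism for free, but it does nothing for $\Psi$ --- to get $\Psi$ as a homomorphism ``by construction'' you would need freeness of $\Dist'(\Fin(n))$, and that is precisely the content of the paper's proof. So either you prove $\Psi$ is a homomorphism directly (same nine-case arithmetic, specialised to $B = \Dist(\Fin(n))$), or you generalise and prove $\Dist'(\Fin(n))$ free as the paper does. The paper's framing buys a reusable universal property at no extra cost.
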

Here, $\mathsf{sum}$ is the sum of the values of $f$, defined by induction on $n$. 
Note that the right hand side of this equivalence indeed captures the classical definition of probability distributions over $\Fin(n)$.
We use Lemma~\ref{lem:Fin:Dist} to reason about equality for probability distributions as follows.
\begin{example} \label{example:proving:eq}
 The isomorphism of Lemma~\ref{lem:Fin:Dist} allows us to prove equations of distributions
 by first mapping to the right hand side of the isomorphism and then using functional extensionality. Consider, for example, the equation
\begin{align*}
 \DistChoice p{(\DistChoice p ab)}{(\DistChoice pca)} \peq \DistChoice{2p(1-p)}{(\DistChoice{\frac12}bc)}a,
\end{align*}
 where $a,b,c : \Dist(A)$ for some $A$. In the case where $A = \Fin(3)$ and $a = \Dirac 0,b = \Dirac 1,c = \Dirac 2$,
 we can prove this by noting that both the left and right-hand sides of the equation correspond to the map
 $f(0) = p^2 + (1-p)^2 = 1 + 2p^2- 2p = 1-2p(1-p)$, $f(1) = f(2) = p(1-p)$. 
 Finally, the case of general $A,a,b,c$ follows from applying functoriality to the canonical map $\Fin(3) \to \Dist A$.
\end{example}


Lastly, the guarded convex delay monad in the next section will be defined as a distribution on a sum type. To reason about such distributions, we will frequently use the fact that any distribution $\mu : \Dist(A + B)$ on a sum type can be written uniquely as
a convex sum of two subdistributions: One on $A$ and one on $B$. In the classical
setting where a distribution is a map with finite support this is trivial: The two subdistributions are simply the normalised restrictions
of the distribution map to $A$ and $B$ respectively. In the constructive type theoretic setting the proof requires a bit more work, so we mention this as a separate theorem.

\begin{theorem} \label{thm:dist:sum:equiv}
 For all sets $A$ and $B$, the map
 \[
  \Dist(A) + \Dist(B) + \Dist(A)\times \II \times \Dist(B) \to \Dist(A + B),
 \]
 defined by
 \begin{align*}
 f(\In1(\mu)) & \defeq \Dist(\inl)(\mu) \\
 f(\In2(\mu)) & \defeq \Dist(\inr)(\mu) \\
 f(\In3(\mu, p , \nu)) & \defeq \DistChoice p{(\Dist(\inl)(\mu))}{(\Dist(\inr)(\nu))},
\end{align*}
is an equivalence of types.
\end{theorem}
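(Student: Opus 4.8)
The plan is to exhibit a two-sided inverse of $f$ by exploiting the universal property of $\Dist(A+B)$ as the free convex algebra on $A+B$ (the preceding proposition). Write $T$ for the domain of $f$, that is $T \defeq \Dist(A) + \Dist(B) + \Dist(A)\times\II\times\Dist(B)$; this is a set. First I would equip $T$ with a convex algebra structure, defining its binary operation $\DistChoice{q}{t_1}{t_2}$ by case analysis on $t_1$ and $t_2$. Mixtures of two pure-$A$ elements stay pure-$A$, and dually for $B$: $\DistChoice{q}{\In1(\mu_1)}{\In1(\mu_2)} \defeq \In1(\DistChoice{q}{\mu_1}{\mu_2})$. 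A mixture of a pure-$A$ element with a pure-$B$ element becomes a proper mixture: $\DistChoice{q}{\In1(\mu)}{\In2(\nu)} \defeq \In3(\mu,q,\nu)$, and symmetrically. The remaining cases require re-normalising the conditionals; for instance
\[
\DistChoice{q}{\In1(\mu_1)}{\In3(\mu_2,p,\nu)} \defeq \In3\bigl(\DistChoice{q/(q+(1-q)p)}{\mu_1}{\mu_2},\; q+(1-q)p,\; \nu\bigr),
\]
and the $\In3$-against-$\In3$ case mixes the two $A$-conditionals and the two $B$-conditionals with suitably reweighted parameters. One first checks that all rationals occurring here lie in the open interval $\II$: weights such as $q+(1-q)p$, $pq$ and $qp_1+(1-q)p_2$ are strict convex combinations or products of elements of $\II$, hence again in $\II$, and quotients such as $q/(q+(1-q)p)$ are ratios of a strictly smaller positive rational by a strictly larger one, which we assumed available.

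The main work of the proof is then to verify that this operation on $T$ satisfies idempotency, commutativity and associativity. This is a routine but lengthy case analysis on the (up to) three components of each argument; each sub-case reduces to the corresponding convex algebra equation in $\Dist(A)$ or $\Dist(B)$ together with an elementary identity between rationals, and associativity is the most laborious as it produces the largest number of sub-cases. By an analogous case analysis --- this time using that $\Dist(\inl)$ and $\Dist(\inr)$ are convex algebra homomorphisms and using the convex algebra axioms inside $\Dist(A+B)$ --- one checks that the given map $f : T \to \Dist(A+B)$ is itself a homomorphism of convex algebras.

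Now $T$ is a convex algebra, so the universal property yields a unique homomorphism $g \defeq \uext{i} : \Dist(A+B) \to T$ extending the function $i : A+B \to T$ with $i(\inl a) = \In1(\Dirac{a})$ and $i(\inr b) = \In2(\Dirac{b})$. To see $f \circ g = \mathrm{id}$ it suffices, since both sides are homomorphisms out of the free convex algebra $\Dist(A+B)$, to check that they agree on generators: $f(g(\Dirac{\inl a})) = f(\In1(\Dirac{a})) = \Dist(\inl)(\Dirac{a}) = \Dirac{\inl a}$, and symmetrically on $\inr$. For $g \circ f = \mathrm{id}_T$ I would argue by case analysis on $T$. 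On the pure-$A$ component, $g \circ \Dist(\inl)$ and $\In1(-)$ are both homomorphisms $\Dist(A) \to T$ --- the latter by the first defining clause above --- and they agree on generators, since $g(\Dist(\inl)(\Dirac{a})) = g(\Dirac{\inl a}) = i(\inl a) = \In1(\Dirac{a})$; so they are equal by freeness of $\Dist(A)$, giving $g(f(\In1(\mu))) = g(\Dist(\inl)(\mu)) = \In1(\mu)$. The pure-$B$ component is dual. Finally, on a proper mixture, $g(f(\In3(\mu,p,\nu))) = g(\DistChoice{p}{\Dist(\inl)(\mu)}{\Dist(\inr)(\nu)}) = \DistChoice{p}{\In1(\mu)}{\In2(\nu)} = \In3(\mu,p,\nu)$, using that $g$ is a homomorphism, the two equalities just established, and the defining clause $\DistChoice{p}{\In1(\mu)}{\In2(\nu)} = \In3(\mu,p,\nu)$, which applies since $p \in \II$. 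As a map with a two-sided inverse is an equivalence, this completes the proof.

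The one genuinely laborious ingredient is the verification of the convex algebra axioms for $T$; everything else is formal manipulation with the two universal properties. An alternative that avoids defining the operation on $T$ would be to prove instead that $f$ is injective and merely surjective (enough, since domain and codomain are sets): surjectivity is a straightforward induction on $\Dist(A+B)$, but establishing injectivity runs into essentially the same rational bookkeeping, now routed through an ``$A$-mass'' homomorphism $\Dist(A+B) \to \IIc$.
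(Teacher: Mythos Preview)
Your approach is correct but genuinely different from the paper's. You construct the convex algebra structure on the three-fold sum $T$ directly (nine defining clauses, twenty-seven cases for associativity) and then read off the inverse from the universal property of $\Dist(A+B)$. The paper instead first proves an auxiliary lemma that $\Dist(A+1) \simeq 1 + [0,1)\times\Dist(A)$, and then shows $f$ is a bijection of sets: surjectivity by induction on $\Dist(A+B)$, injectivity by post-composing with $\Dist(\mathrm{id}_A + {!})$ and $\Dist({!} + \mathrm{id}_B)$ and invoking the auxiliary lemma. The paper even remarks that your direct route is a valid alternative, noting that factoring through the auxiliary lemma cuts the associativity case-count from twenty-seven to eight. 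A further point worth noting: in the proof of the auxiliary lemma the paper avoids hand-verifying the rational identities for associativity by reducing, via Lemma~\ref{lem:dist:Kfinite}, to the case $A = \Fin(n)$, where the classical functions-to-$[0,1]$ description is available and the identity can be checked by function extensionality; this trick would equally apply to your twenty-seven cases if you wanted to sidestep the arithmetic. What your route buys is conceptual directness --- the inverse is produced in one stroke by freeness --- at the cost of a larger case analysis; the paper's route trades a smaller case analysis for an extra lemma and a separate surjectivity/injectivity argument.
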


\section{Convex delay algebras}
\label{sec:convex:delay:alg}


In this section, we define the guarded convex delay monad $\GLiftDist$ and the convex delay monad $\LiftDist$ modelling the combination of probabilistic choice and recursion.
We first recall the notion of delay algebra (sometimes called a lifting or a  $\Later$-algebra~\cite{DBLP:journals/entcs/PaviottiMB15}) and define a notion of convex delay algebra.

%
\begin{definition}[Convex Delay Algebra]
 A ($\kappa$-)\emph{delay algebra} is a set $A$ together with a map ${\GStep: \later^\kappa A \to A}$. A delay algebra
 homomorphism is a map $f : A \to B$ such that $f(\GStep(a)) \peq \GStep(\tabs{\tickA}\kappa{f(\tapp a)})$ for all $a : \later^\kappa A$.
 A \emph{convex delay algebra} is a set $A$ with both a delay algebra structure and a convex algebra structure, and a
 homomorphism of these is a map respecting both structures.
\end{definition}

Defining the functorial action of $\later^\kappa$ as $\later^\kappa(f)(a) \defeq \tabs{\tickA}\kappa{f(\tapp a)}$, the requirement for $f$
being a delay algebra homomorphism can be expressed as the following commutative diagram 
\[  \begin{tikzcd}
    \later^\kappa A \ar{r}{\later^\kappa (f)} \ar{d}{\GStep} & \later^\kappa B \ar{d}{\GStep} \\
    A \ar{r}{f} & B
  \end{tikzcd}
\]
%

Recall the guarded delay monad $\grD$ satisfying $\grD A \equi A + \later^\kappa (\grD A)$. The type $\grD A$ is easily seen to be the free guarded delay algebra on a set $A$. Similarly, define the \emph{guarded convex delay monad}
%
as the guarded recursive type
\[
  \GLiftDist{A} \simeq \Dist(A+\Later{(\GLiftDist{A})}).
\]
Again, this can be constructed formally as a fixed point on a universe (assuming $A$ lives in the same universe), but
we shall not spell this out here. We show that it is a free convex delay algebra. First
define the convex delay algebra structure ($\GStep, \GChoiceOp$) on $\GLiftDist{A}$ and the inclusion $\GDirac : A \to \GLiftDist{A}$
by
\begin{align*}
 \GStep(a) & = \Dirac {\inr(a)} &
 \GChoice p\mu\nu & = \DistChoice p\mu\nu & 
 \GDirac{a} & = \Dirac {\inl(a)} 
\end{align*}
where the convex algebra structure on the right-hand sides of the equations above refers to those of $\Dist$.

\begin{proposition}
 $\GLiftDist{A}$ is the free convex delay algebra structure on $A$, for any set $A$. As a consequence $\GLiftDist{}$
 defines a monad on the category of sets.
\end{proposition}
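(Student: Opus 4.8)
The plan is to establish freeness by the universal property: given any convex delay algebra $B$ with structure maps $\GStep_B$ and $\DistChoiceOp_B$, and any function $f : A \to B$, we must produce a unique convex delay algebra homomorphism $\bar{f} : \GLiftDist{A} \to B$ with $\bar{f} \circ \GDirac = f$. The key observation is that $\GLiftDist{A} \equi \Dist(A + \later^\kappa \GLiftDist{A})$, so we can hope to define $\bar{f}$ by guarded recursion: assume we already have $\later^\kappa$-access to the homomorphism on $\GLiftDist{A}$, use it together with $\GStep_B$ to handle the $\later^\kappa\GLiftDist{A}$ summand, combine with $f$ on the $A$ summand to get a map $A + \later^\kappa\GLiftDist{A} \to B$, and then extend this map over $\Dist(-)$ using the universal property of $\Dist$ (the earlier proposition that $\Dist(C)$ is the free convex algebra on $C$) applied to the convex algebra structure on $B$. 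Concretely, writing the recursion variable as $g : \later^\kappa(\GLiftDist{A} \to B)$, define $\bar{f} = \fix^\kappa\bigl(\lambda g.\, \uext{[f,\ \GStep_B \circ \later^\kappa\text{-ap}(g)]}\bigr)$, where $\later^\kappa\text{-ap}(g) : \later^\kappa\GLiftDist{A} \to \later^\kappa B$ applies the delayed function under the tick, using tick-irrelevance to make this well-typed, and $[-,-]$ is the copairing out of the sum.

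Next I would verify that this $\bar{f}$ is indeed a homomorphism of convex delay algebras. The convex-algebra-homomorphism part is essentially immediate: $\bar{f}$ is defined as $\uext{(\ldots)}$, and the extension $\uext{(-)}$ produced by the universal property of $\Dist$ is by construction a convex algebra homomorphism, so $\bar{f}(\GChoice p\mu\nu) = \bar{f}(\DistChoice p\mu\nu) = \DistChoice p{\bar f(\mu)}{\bar f(\nu)}$. The delay-algebra-homomorphism part, $\bar{f}(\GStep(a)) \peq \GStep_B(\later^\kappa(\bar f)(a))$, unfolds using $\GStep(a) = \Dirac{\inr(a)}$ and the fixed-point equation (\ref{eq:fix:unfold}): $\bar f(\Dirac{\inr(a)}) = \GStep_B(\later^\kappa\text{-ap}(\tabs\tickA\kappa\bar f)(a))$, which is exactly $\GStep_B(\later^\kappa(\bar f)(a))$ after unwinding the definitions; one uses (\ref{eq:fix:unfold}) to replace the delayed copy of $\bar f$ inside the fixed point with $\bar f$ itself under the tick, plus function/later extensionality (\ref{eq:later:ext}). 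The equation $\bar f \circ \GDirac = f$ is similarly just $\bar f(\Dirac{\inl(a)}) = f(a)$ by the computation rule for $\uext{(-)}$ on Dirac.

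For uniqueness, suppose $h : \GLiftDist{A} \to B$ is another convex delay algebra homomorphism with $h \circ \GDirac = f$. I would prove $h(x) \peq \bar f(x)$ for all $x : \GLiftDist{A}$ by the combination of $\Dist$-induction and guarded recursion: fix a tick-indexed induction hypothesis $\later^\kappa(\Pi x.\, h(x) \peq \bar f(x))$, then do induction on $x : \Dist(A + \later^\kappa\GLiftDist{A})$ using the induction principle for the HIT $\Dist$ (valid since $h(x)\peq\bar f(x)$ is a proposition, $B$ being a set). For $x = \Dirac{\inl(a)}$, both sides equal $f(a)$. For $x = \Dirac{\inr(a)}$, both sides equal $\GStep_B(\later^\kappa(-)(a))$ applied to $h$ resp.\ $\bar f$, and these agree by the guarded induction hypothesis together with (\ref{eq:later:ext}) and the fact that both $h$ and $\bar f$ are delay algebra homomorphisms. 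For $x = \DistChoice p\mu\nu$, both $h$ and $\bar f$ being convex algebra homomorphisms reduces the goal to the inductive hypotheses for $\mu$ and $\nu$. The path-constructor cases of the HIT induction are automatic because the goal is a proposition.

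The main obstacle I anticipate is purely in the bookkeeping around ticks in the definition of $\bar f$: making the term $\GStep_B \circ \later^\kappa\text{-ap}(g)$ type-check requires care, since $g : \later^\kappa(\GLiftDist{A}\to B)$ must be applied to a delayed argument $a : \later^\kappa\GLiftDist A$ under a single tick to yield $\later^\kappa B$, and one has to invoke tick-irrelevance (\ref{eq:tirrAx}) or work in the more general tick-application rule to justify this. The other genuinely non-trivial point — though it is really just an invocation of an already-established result — is that the copairing of $f$ with the delayed-step map lands in a \emph{convex} algebra $B$, so that the universal property of $\Dist$ applies; this needs $B$'s convex algebra structure, which is part of the data of a convex delay algebra, so there is nothing to prove there. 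Everything else is a routine, if somewhat lengthy, diagram chase in CCTT, and the monad structure then follows formally from freeness: the unit is $\GDirac$, and multiplication/Kleisli extension come from extending identity-like maps along the universal property, with the monad laws following from uniqueness.
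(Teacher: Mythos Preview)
Your proposal is correct and follows essentially the same approach as the paper: define the extension by guarded recursion, with the body given by the $\Dist$-universal extension of the copairing $[f, \GStep_B \circ (\lambda a.\tabs\tickA\kappa \tapp g(\tapp a))]$, verify the homomorphism laws by unfolding the fixed point, and prove uniqueness by guarded recursion combined with HIT induction on $\Dist$. Your concern about tick bookkeeping is slightly overcautious: the paper simply writes $\tabs\tickA\kappa{\tapp g(\tapp a)}$, which type-checks directly under the basic tick rules without needing tick irrelevance, since both $g$ and $a$ are bound before the tick.
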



\begin{proof}
 Suppose $f: A \to B$ and that $B$ is a convex delay algebra. We define the extension $\uext f : \GLiftDist{A} \to B$
 by guarded recursion, so suppose we are given $g : \later^\kappa(\GLiftDist{A} \to B)$ and define
\begin{align*}
 \uext f(\GStep (a)) & = \GStep(\tabs\tickA\kappa{\tapp g(\tapp a)}) \\
 \uext f(\GDirac(a)) & = f(a) \\
 \uext f(\GChoice p\mu\nu) & = \DistChoice p{\uext f(\mu)}{\uext f(\nu)}.
\end{align*}
Note that these cases define $\uext f$ by induction on $\Dist$ and $+$.
Unfolding the guarded recursive definition using (\ref{eq:fix:unfold}) gives
\[
 \uext f(\GStep (a)) = \GStep(\tabs\tickA\kappa{\uext f(\tapp a)}).
\]
so that $\uext f$ is a homomorphism of convex delay algebras. For uniqueness, suppose $g$ is another
homomorphism extending $f$. We show that $g \peq \uext f$ by guarded recursion and function extensionality.
So suppose we are given $p : \Later(\Pi (x : \GLiftDist{A})(g(a) \peq \uext f(a)))$.
Then, for any $a : \later^\kappa(\GLiftDist{A})$, the term $\tabs\tickA\kappa{\tapp p (\tapp a)}$ proves
\[{\latbind\tickA\kappa{(g(\tapp a)\peq \uext f(\tapp a))}},\]
which by (\ref{eq:later:ext}) is equivalent to
 $\tabs\tickA\kappa{g(\tapp a)} \peq \tabs\tickA\kappa{\uext f(\tapp a)}$.
Then,
\begin{align*}
 g(\GStep(a)) & \peq \GStep(\tabs\tickA\kappa{g(\tapp a)}) \\
 & \peq \GStep(\tabs\tickA\kappa{\uext f(\tapp a)}) \\
 & \peq \uext f(\GStep (a)).
\end{align*}
The rest of the proof that $g(a) = \uext f(a)$, for all $a$, then follows by HIT induction on $\Dist$.

In terms of category theory, we have shown that $\GLiftDist$ is left adjoint to the forgetful functor from 
convex delay algebras to the category of sets. It therefore defines a monad on the category of sets. 
\end{proof}

We will often write $t \GBind f$ for $\uext f(t)$ where $\uext f$ is the unique extension of
$f$ to a convex delay algebra homomorphism.

\begin{example}
 The type $\GLiftDist A$ can be thought of as a type of probabilistic processes returning values in $A$. Define, for example,
 a geometric process with probability $p : \II$ as $\Ggeo p\, 0$ where
\begin{align*}
 \Ggeo p & : \NN \to \GLiftDist \NN \\
 \Ggeo p\,n & \defeq \DistChoice p{(\GDirac n)}{\GStep(\tabs\tickA\kappa{\Ggeo p\,(n+1)})}
\end{align*}
Note that this gives
\begin{align*}
 \Ggeo p (0) & \peq \DistChoice p{(\GDirac 0)}{}{\GStep(\tabs\tickA\kappa{\Ggeo p\,(1)})}  \\
 & \peq \DistChoice p{(\GDirac 0)}{\left(\GStep\tabs\tickA\kappa{\left(\DistChoice p{(\GDirac 1)}{(\GStep\tabs\tickA\kappa{\Ggeo p\,(2)})}\right)}\right)}  \\
 & \peq \ldots
\end{align*}
\end{example}

The modal delay $\Later$ in the definition of $\GLiftDist A$ prevents us from accessing values computed
later, thereby also preventing us from, \emph{e.g.}, computing probabilities of `termination in $n$ steps' as elements of $\IIc$. Such operations
should instead be defined on the convex delay monad $\LiftDist$. This monad is defined in terms of $\GLiftDist$ by quantifying over clocks:
\[
  \LiftDist{A} \defeq \forall\kappa . \GLiftDist A.
\]

\begin{proposition} \label{prop:LiftDist}
 If $A$ is clock-irrelevant then $\LiftDist{A}$ is the final coalgebra
 for the functor $F(X) = \Dist(A+X)$.
\end{proposition}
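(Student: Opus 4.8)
The plan is to recognise $\LiftDist A$ as an instance of the general encoding of final coalgebras by clock quantification recalled in Section~\ref{sec:cctt}. Writing $F(X) \defeq \Dist(A + X)$, the defining equivalence of the guarded convex delay monad is precisely $\GLiftDist A \equi F(\later^\kappa(\GLiftDist A))$, so $\GLiftDist A$ is (up to the evident equivalence) the type $\nu^\kappa(F)$ obtained by applying $\fix$ to $F$ on the universe, and hence $\LiftDist A = \forall\kappa.\GLiftDist A$ is $\nu(F) = \forall\kappa.\nu^\kappa(F)$. By the general result it then suffices to verify the single hypothesis of that theorem for $F$: that $F$ is a functor $\Univ\to\Univ$ commuting with clock quantification, i.e.\ that the canonical map $F(\forall\kappa. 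X) \to \forall\kappa. F(X)$ is an equivalence. Once this is established, the theorem hands us a final $F$-coalgebra structure on $\nu(F) = \LiftDist A$, which is exactly the statement; the structure map is the equivalence $\LiftDist A \equi \Dist(A + \LiftDist A)$ mentioned in the introduction.

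Functoriality of $F$ is immediate: for $g : X \to Y$ put $F(g) \defeq \Dist(\mathrm{id}_A + g)$, using that $\Dist$ is a functor on sets and that $+$ is functorial; note that $F$ lands in sets since $\Dist$ always does. For the commutation I would factor the canonical map as a composite of three canonical equivalences
\[
 \Dist\bigl(A + \forall\kappa. X\bigr)
 \;\equi\;
 \Dist\bigl((\forall\kappa. A) + (\forall\kappa. X)\bigr)
 \;\equi\;
 \Dist\bigl(\forall\kappa. (A + X)\bigr)
 \;\equi\;
 \forall\kappa.\, \Dist(A + X),
\]
and check by a routine naturality argument that the composite is the canonical map $F(\forall\kappa. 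X)\to\forall\kappa. F(X)$. The first equivalence uses the hypothesis that $A$ is clock-irrelevant, so that the canonical map $A\to\forall\kappa. A$ is an equivalence, together with functoriality of $\Dist$; the second is the fact, recalled in Section~\ref{sec:cctt}, that clock quantification commutes with binary sums; the third is the statement that $\Dist$ itself commutes with clock quantification.

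The only non-formal ingredient, and the step I expect to be the main obstacle, is this last claim: for every clock-indexed family $Y$, the canonical map $\Dist(\forall\kappa. Y) \to \forall\kappa.\Dist(Y)$ is an equivalence. Since $\Dist$ is a higher inductive type none of whose constructors or path constructors mentions $\later$, and whose only index type is $\II$, which we assumed clock-irrelevant, this should be an instance of the induction-under-clocks principle for HITs of~\cite{CubicalCloTT}. Concretely I would build the inverse $\forall\kappa.\Dist(Y) \to \Dist(\forall\kappa. Y)$ by induction under clocks, sending a clocked Dirac $\Lambda\kappa.\Dirac(y_\kappa)$ to $\Dirac(\Lambda\kappa. y_\kappa)$ and a clocked convex sum $\Lambda\kappa.\DistChoice{p_\kappa}{\mu_\kappa}{\nu_\kappa}$ to a $\DistChoiceOp$ built recursively, where clock-irrelevance of $\II$ is exactly what lets us replace the a priori $\kappa$-dependent rational $p_\kappa$ by a single $p:\II$ (and, at each stage, forces the choice of constructor to be independent of $\kappa$); the convex-algebra paths and the set-truncation path are carried along accordingly. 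Checking that this is a two-sided inverse to the canonical map is then a direct $\Dist$-induction. I expect the delicate part to be pinning down the exact form of the induction-under-clocks elimination for $\Dist$, rather than any of the surrounding category-theoretic bookkeeping, which is merely an application of the general machinery.
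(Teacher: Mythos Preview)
Your proposal is correct and follows essentially the same approach as the paper: reduce to showing that $F$ commutes with clock quantification, which in turn reduces to showing that $\Dist$ commutes with clock quantification, and handle the latter via the induction-under-clocks principle for HITs from~\cite{CubicalCloTT}. The paper's proof is terser about the factorisation you spell out, and it additionally notes a shortcut you might find useful: instead of carrying out the induction-under-clocks argument for $\Dist$ by hand, one can appeal to a general result (\cite[Proposition~14]{ZwartExtraTime}) stating that any free model monad for an algebraic theory with finite arities commutes with clock quantification, which applies directly since $\Dist$ is the free convex algebra monad.
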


\begin{proof}
 We must show that $F$ commutes with clock quantification, and this reduces easily to showing that
 $\Dist$ commutes with clock quantification. The latter can be either proved directly by using the
 same technique as for the similar result for the finite powerset functor~\cite{CubicalCloTT}, or by referring to
 \cite[Proposition~14]{ZwartExtraTime}, which states that any free model monad for a theory with finite arities
 commutes with clock quantification.
\end{proof}

\begin{remark}
  From now on, whenever we look at a type $\LiftDist{A}$, we assume $A$ to be clock-irrelevant.
\end{remark}

In particular, $\LiftDist{A} \equi \Dist(A + \LiftDist A)$, and so $\LiftDist$ carries a convex algebra structure $(\CEta\,,  \CChoiceOp)$
as well as a map $\CStep : \LiftDist A \to \LiftDist A$ defined by $\CStep(x) = \Dirac{\inr(x)}$. Define also $\CDirac : A \to \LiftDist{A}$
as $\CDirac x \defeq \Dirac{\inl\, x}$.

\begin{example}
 The geometric process can be defined as an element of $\NN \to \LiftDist{\NN}$ as $\geo p\,n \defeq \Lambda\kappa . \Ggeo p\, n$.
 This satisfies the equations
 \begin{align*}
 \geo p\, 0 &  \peq \CChoice p{(\CDirac\, 0)}{\CStep(\geo p\,(1))}  \\
 & \peq \DistChoice p{(\CDirac\, 0)}{\CStep(\DistChoice p{(\CDirac\, 1)}{\CStep(\geo p\,(2))})}  \\
 & \peq \dots
\end{align*}
\end{example}

\begin{lemma}
 $\LiftDist$ carries a monad structure whose unit is $\CDirac$ and where the Kleisli extension
 $\uext f : \LiftDist A \to \LiftDist B$ of a map $f : A \to \LiftDist B$ satisfies
\begin{align*}
 \uext f(\CStep x) & \peq \CStep(\uext f(x)) &
 \uext f(\CChoice p\mu\nu) & \peq \CChoice p{\uext f(\mu)}{\uext f(\nu)}).
\end{align*}
\end{lemma}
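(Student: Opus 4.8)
The plan is to obtain the monad structure on $\LiftDist$ by transporting, along clock quantification, the monad structure carried by the guarded convex delay monads $\GLiftDist$ from the previous proposition. Fix clock-irrelevant types $A,B$ and a map $f : A \to \LiftDist B$. For a clock $\kappa$ write $f_\kappa \defeq \lambda a.\, f(a)[\kappa] : A \to \GLiftDist B$ for the instance of $f$ at $\kappa$, and let $\uext{f_\kappa} : \GLiftDist A \to \GLiftDist B$ be the unique convex delay algebra homomorphism extending $f_\kappa$ provided by the previous proposition. I would define the Kleisli extension and unit on $\LiftDist$ by
\[
 \uext f \;\defeq\; \lambda x.\, \Lambda\kappa.\, \uext{f_\kappa}(x[\kappa]) \;:\; \LiftDist A \to \LiftDist B, \qquad \text{unit} \;=\; \CDirac.
\]

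The one preliminary fact I need is that the convex delay algebra structure on $\LiftDist A$ is computed ``clockwise'':
\begin{align*}
 \CDirac a &\peq \Lambda\kappa.\, \GDirac a, &
 \CChoice p\mu\nu &\peq \Lambda\kappa.\, \GChoice p{\mu[\kappa]}{\nu[\kappa]}, &
 \CStep x &\peq \Lambda\kappa.\, \GStep(\nextop(x[\kappa])).
\end{align*}
To establish this I would trace through the equivalence $\LiftDist A \equi \Dist(A + \LiftDist A)$: it is built by unfolding the guarded recursive type, then using that $\Dist$ commutes with clock quantification, that $\forall\kappa.(X+Y) \equi (\forall\kappa.X)+(\forall\kappa.Y)$, and finally clock-irrelevance of $A$ together with $\forall\kappa.\later^\kappa Z \equi \forall\kappa.Z$. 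Each canonical map occurring in this chain is a homomorphism for the relevant structure, because the functorial actions of $\Dist$, of binary sums, of $\later^\kappa$ and of $\forall\kappa$ all preserve the operation in question; hence the convex delay algebra structure induced on $\LiftDist A$ through the equivalence agrees with the pointwise one built from the $\GDirac$, $\GChoiceOp$, $\GStep$ on the $\GLiftDist A$.

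Granting this, the monad laws and the two displayed equations all reduce to the corresponding statements about $\GLiftDist$ applied under $\Lambda\kappa$, using only the $\eta$-law $\Lambda\kappa.\, t[\kappa] \peq t$ for $\forall$, the computation rule for clock application, and the $\beta$-rule for ticks. For the unit laws: $\uext f(\CDirac a) \peq \Lambda\kappa.\, \uext{f_\kappa}(\GDirac a) \peq \Lambda\kappa.\, f_\kappa(a) \peq \Lambda\kappa.\, f(a)[\kappa] \peq f(a)$; and $(\CDirac)_\kappa \peq \GDirac$, so $\uext{(\CDirac)_\kappa} \peq \mathsf{id}$ by uniqueness of the extension, giving $\uext{\CDirac}(x) \peq \Lambda\kappa.\, x[\kappa] \peq x$. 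Associativity follows from $(\uext g \circ f)_\kappa \peq \uext{g_\kappa} \circ f_\kappa$ (immediate from the definitions and the clock computation rule) together with associativity of $\GLiftDist$. For the step equation, $\uext{f_\kappa}$ is a delay algebra homomorphism and $(\nextop(x[\kappa]))\,[\tickA] \peq x[\kappa]$ by tick $\beta$, so
\begin{align*}
 \uext f(\CStep x) &\peq \Lambda\kappa.\, \uext{f_\kappa}(\GStep(\nextop(x[\kappa]))) \\
 &\peq \Lambda\kappa.\, \GStep(\nextop(\uext{f_\kappa}(x[\kappa]))) \;\peq\; \CStep(\uext f(x)),
\end{align*}
and since $\uext{f_\kappa}$ is also a convex algebra homomorphism,
\begin{align*}
 \uext f(\CChoice p\mu\nu) &\peq \Lambda\kappa.\, \uext{f_\kappa}(\GChoice p{\mu[\kappa]}{\nu[\kappa]}) \\
 &\peq \Lambda\kappa.\, \GChoice p{\uext{f_\kappa}(\mu[\kappa])}{\uext{f_\kappa}(\nu[\kappa])} \;\peq\; \CChoice p{\uext f(\mu)}{\uext f(\nu)}.
\end{align*}

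I expect the only real obstacle to be the clockwise identification of the structure maps in the second paragraph: one must check that the chain of equivalences defining $\LiftDist A \equi \Dist(A + \LiftDist A)$ respects all the algebraic operations, i.e. that each canonical map involved is a homomorphism. Once that is in place, the remaining argument is a mechanical transfer of facts already proved for $\GLiftDist$ through $\Lambda\kappa$ and evaluation at clocks; the same computations in fact show that $\uext f$ is the unique extension commuting with all the maps $(-)[\kappa]$.
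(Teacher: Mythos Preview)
Your approach is correct, but it differs from the paper's: the paper simply cites an external result (\cite[Lemma~16]{ZwartExtraTime}) and gives no further argument. That reference presumably provides a general theorem that clock-quantifying a guarded monad satisfying suitable conditions yields a monad, of which this lemma is an instance.

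Your construction is essentially what that cited lemma would unpack to in this specific case: define the Kleisli extension clockwise using the free convex delay algebra structure on each $\GLiftDist$, then check the laws by reducing to the guarded case under $\Lambda\kappa$. You have also correctly isolated the one genuinely delicate point, namely that the convex delay algebra structure on $\LiftDist A$ induced through the chain of equivalences $\LiftDist A \equi \Dist(A+\LiftDist A)$ coincides with the pointwise structure $\Lambda\kappa.\,\GDirac$, $\Lambda\kappa.\,\GChoice p{-}{-}$, $\Lambda\kappa.\,\GStep(\nextop(-[\kappa]))$. This identification is exactly the sort of thing the general machinery in the cited paper is designed to handle once and for all; your hands-on verification is more laborious but self-contained. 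Both routes are sound; the citation is modular and concise, while your argument makes the paper independent of the external reference at the cost of spelling out the clockwise transport.
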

\begin{proof}
 This is a consequence of~\cite[Lemma~16]{ZwartExtraTime}.
\end{proof}


\subsection{Probability of termination}
\label{sec:ProbTerm}

Unfolding the type equivalence $\LiftDist{A} \equi \Dist(A + \LiftDist A)$ we can define maps out of $\LiftDist{A}$ by cases. For example, we can define
the probability of immediate termination $\probtermOp{0} : \LiftDist A \to \IIc$, as:
\begin{align*}
 \probterm{0} {\CDirac a} & = 1 \\
 \probterm{0} {\CStep d} & = 0 \\
 \probterm{0} {\CChoice pxy} & =  p\cdot\probterm{0} x + (1-p)\cdot\probterm{0} y.
\end{align*}
Likewise, we define a function $\run : \LiftDist{A} \to \LiftDist{A}$ that runs a computation for one step, eliminating a single level of $\CStep$ operations:
\begin{align*}
 \run (\CDirac a) & = \CDirac a &
 \run (\CStep d) & = d &
 \run (\CChoice pxy) & =  \CChoice p {(\run \,x)}{(\run\, y)}.
\end{align*}
We can hence compute the probability of termination in $n$ steps, by first running a computation for $n$ steps, and then computing the
probability of immediate termination of the result:
\begin{align*}
 \probterm{n}{x} & = \probterm{0}{\run^n x}.
\end{align*}
For example, running the geometric process for one step gives:
\begin{align*}
 \run(\geo p\,0) & \peq \DistChoice p{(\CDirac\, 0)}{(\DistChoice p{(\CDirac\, 1)}{\CStep(\geo p\,(2))})},
\end{align*}
so $\probterm 1{\geo p \, 0}  \peq p + (1-p)p = 2p - p^2$.


We will later prove (Lemma~\ref{leadsto:lem5}) that $\probtermseq \mu : \NN \to \IIc$ is monotone for all $\mu$. Intuitively, the probability of
termination for $\mu$ is the limit of this sequence. The following definition gives a simple way of comparing such limits, without introducing real
numbers to our type theory. 

\begin{definition}
 Let $f,g \colon \mathbb{N} \to [0,1]$ be monotone. Write $f \leqlim g$ for the following proposition:
 for every $n \colon \mathbb{N}$ and
every rational $\varepsilon > 0$ there exists $m \colon \mathbb{N}$ such that
$f(n) \leq g(m) + \epsilon$. We write $f \eqlim g$ for the conjunction $f \leqlim g$ and $g
\leqlim f$.
\end{definition}
%

It is easy to show that the relation $\leqlim$ is reflexive and transitive and thus $\eqlim$ is an equivalence relation.
Furthermore, $\leqlim$ is closed under pointwise convex sums.
\begin{lemma}\label{lemma:leqlim-conv}
Let $f_1,g_1,f_2,g_2 \colon \mathbb{N} \to [0,1]$ be monotone and let $p \in (0,1)$. 
If $f_1 \leqlim g_1$ and $f_2 \leqlim g_2$, then $p\cdot f_1 + (1-p)\cdot f_2 \leqlim p\cdot g_1 + (1-p)\cdot g_2$.
\end{lemma}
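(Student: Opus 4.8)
The plan is to unfold the definition of $\leqlim$ and reduce the claim to an elementary inequality on rationals in $[0,1]$. Fix $n : \NN$ and a rational $\varepsilon > 0$. I want to produce an $m$ such that
\[
 p \cdot f_1(n) + (1-p)\cdot f_2(n) \leq p\cdot g_1(m) + (1-p)\cdot g_2(m) + \varepsilon.
\]
First I would apply the hypothesis $f_1 \leqlim g_1$ at the same $n$ but with the smaller error $\varepsilon/2$ (or, to be safe about the factor $p$, with error $\varepsilon/(2p)$ — since $p \in (0,1)$ this is a legitimate positive rational), obtaining some $m_1$ with $f_1(n) \leq g_1(m_1) + \varepsilon/(2p)$. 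Symmetrically, applying $f_2 \leqlim g_2$ at $n$ with error $\varepsilon/(2(1-p))$ yields some $m_2$ with $f_2(n) \leq g_2(m_2) + \varepsilon/(2(1-p))$.

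Next I would set $m \defeq \max(m_1, m_2)$ and use monotonicity of $g_1$ and $g_2$: since $m \geq m_1$ we get $g_1(m_1) \leq g_1(m)$, and similarly $g_2(m_2) \leq g_2(m)$. Combining the two displayed inequalities, multiplying the first by $p > 0$ and the second by $1-p > 0$, and adding, gives
\[
 p\cdot f_1(n) + (1-p)\cdot f_2(n) \leq p\cdot g_1(m) + (1-p)\cdot g_2(m) + p\cdot\tfrac{\varepsilon}{2p} + (1-p)\cdot\tfrac{\varepsilon}{2(1-p)},
\]
and the error terms collapse to $\varepsilon/2 + \varepsilon/2 = \varepsilon$, which is exactly what is needed. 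One should also remark in passing that $p\cdot f_1 + (1-p)\cdot f_2$ and $p\cdot g_1 + (1-p)\cdot g_2$ are again monotone (immediate from monotonicity of the summands and $p, 1-p \geq 0$), so that the statement $p\cdot f_1 + (1-p)\cdot f_2 \leqlim p\cdot g_1 + (1-p)\cdot g_2$ even typechecks.

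I do not expect any real obstacle here; the only mild subtlety is constructive bookkeeping with the rationals, namely that division of $\varepsilon$ by $2p$ and by $2(1-p)$ is well-defined precisely because $p \in (0,1)$ strictly, and that $\max$ on $\NN$ together with monotonicity is available. If one prefers to avoid dividing by $p$ and $1-p$, an alternative is to instantiate both hypotheses with error $\varepsilon$ directly and note that $p\cdot(g_1(m)+\varepsilon) + (1-p)\cdot(g_2(m)+\varepsilon) = p\cdot g_1(m) + (1-p)\cdot g_2(m) + \varepsilon$, which is if anything cleaner — so the proof is genuinely just "instantiate, take the max, use monotonicity, and compute the convex combination of the error bounds."
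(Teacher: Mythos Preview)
Your proof is correct; the paper does not spell out a proof of this lemma, and your argument is exactly the natural one. The cleaner alternative you mention at the end---instantiate both hypotheses with the same $\varepsilon$ and use $p\cdot\varepsilon + (1-p)\cdot\varepsilon = \varepsilon$---is indeed the way to go, since it avoids the otherwise harmless but unnecessary division by $p$ and $1-p$.
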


\subsection{Step reductions}
\label{sec:leadsto}
Rather than eliminating a full level of $\CStep$ operations with $\run$, it is sometimes useful to allow different branches to run for a different number of steps. We capture this in the \emph{step reduction relation} $\leadsto$. This is the reflexive, transitive, and convex closure of the one-step reduction relation:
\begin{definition}
Define the step reduction relation $\leadsto : \LiftDist A \to \LiftDist A \to \Prop$ inductively by the following rules
\begin{mathparpagebreakable}
  \inferrule{ }{\nu \leadsto \nu}
  \and
  \inferrule{ }{\CStep \nu \leadsto \nu}
  \and
  \inferrule{\nu \leadsto \nu' \and \nu' \leadsto \nu''}{\nu \leadsto \nu''}
  \and
  \inferrule{\nu_1 \leadsto \nu_1' \and \nu_2 \leadsto \nu_2'}{ \CChoice{p}{\nu_1}{\nu_2} \leadsto \CChoice{p}{\nu_1'}{\nu_2'}}
\end{mathparpagebreakable}
\end{definition}

\begin{remark} \label{rem:leadsto:def}
 CCTT lacks higher inductive families as primitive, but the
 special case of the step reduction relation can be represented by defining
 a fuelled version $\leadsto^n$ by induction on $n$, and existentially quantifying $n$.
\end{remark}

The step reduction relation $\leadsto$ is closely related to $\run$.
\begin{lemma}\label{lemma:leadsto-run}
Let $\nu, \nu' : \LiftDist A $. Then:
\begin{enumerate}
  \item For all $n : \NN$,  $\nu \leadsto \run^n \nu$. \label{lemma:leadsto-run1}
  \item If $\nu \leadsto \nu'$ then, for all $n : \NN$, $\run^{n} \nu \leadsto \run^{n} \nu'$. \label{lemma:leadsto-run2}
  \item If $\nu \leadsto \nu'$ then there exists an $n : \NN$ such that $\nu' \leadsto \run^{n} \nu.$ \label{lemma:leadsto-run3}
\end{enumerate}
\end{lemma}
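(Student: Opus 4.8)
The plan is to prove the three items in order, each drawing on the previous ones, after first recording two routine identities about iterating $\run$, both proved by induction on $n$ from the defining clauses of $\run$: $\run^{m+n}\nu = \run^{m}(\run^{n}\nu)$, and $\run^{n}(\CChoice{p}{x}{y}) = \CChoice{p}{(\run^{n}x)}{(\run^{n}y)}$. For item (1) I would first prove the case $n=1$, namely $\nu \leadsto \run\,\nu$, by induction on $\nu$ via the equivalence $\LiftDist A \equi \Dist(A + \LiftDist A)$ — legitimate since $\nu \leadsto \run\,\nu$ is a proposition in $\nu$: the case $\nu = \CDirac a$ is reflexivity, the case $\nu = \CStep d$ is the one-step rule ($\run(\CStep d)=d$), and the case $\nu = \CChoice{p}{x}{y}$ is the convex-closure rule applied to the two induction hypotheses. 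General $n$ then follows by induction on $n$: the base case is reflexivity, and for $n+1$ one combines $\nu \leadsto \run^{n}\nu$ (induction hypothesis) with $\run^{n}\nu \leadsto \run(\run^{n}\nu) = \run^{n+1}\nu$ (the case $n=1$ applied at $\run^{n}\nu$) via transitivity.

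For item (2) it suffices to show $\nu \leadsto \nu' \Rightarrow \run\,\nu \leadsto \run\,\nu'$, since the statement for general $n$ then follows by induction on $n$, applying this one-step version to $\run^{n}\nu \leadsto \run^{n}\nu'$. I would prove the one-step version by induction on the derivation of $\nu \leadsto \nu'$ (formally, by induction on the fuel of Remark~\ref{rem:leadsto:def}): the reflexivity and transitivity cases are immediate from the corresponding rules and the induction hypotheses; the convex-closure case is the convex-closure rule applied to the induction hypotheses after rewriting with $\run(\CChoice{p}{x}{y}) = \CChoice{p}{(\run x)}{(\run y)}$; and the one-step case, where the derivation is $\CStep\nu' \leadsto \nu'$, asks precisely for $\nu' \leadsto \run\,\nu'$, which is item (1) at $n=1$.

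For item (3) I would again induct on the derivation of $\nu \leadsto \nu'$; the goal $\exists n.\ \nu' \leadsto \run^{n}\nu$ is a proposition, so in the branching cases we may destruct the existential witnesses supplied by the induction hypotheses. For reflexivity take $n=0$; for the one-step rule $\CStep\nu' \leadsto \nu'$ take $n=1$, since $\run(\CStep\nu') = \nu'$. For transitivity $\nu \leadsto \nu' \leadsto \nu''$ with witnesses $n_1$ (so $\nu' \leadsto \run^{n_1}\nu$) and $n_2$ (so $\nu'' \leadsto \run^{n_2}\nu'$), take $n = n_1+n_2$: by item (2), $\run^{n_2}\nu' \leadsto \run^{n_2}(\run^{n_1}\nu) = \run^{n_1+n_2}\nu$, and compose with transitivity. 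The one case with real content is convex closure, $\CChoice{p}{\nu_1}{\nu_2} \leadsto \CChoice{p}{\nu_1'}{\nu_2'}$ obtained from $\nu_1 \leadsto \nu_1'$ and $\nu_2 \leadsto \nu_2'$: the induction hypotheses give possibly different $n_1,n_2$ with $\nu_i' \leadsto \run^{n_i}\nu_i$, but $\run$ advances both branches in lockstep, so I would take $n = \max(n_1,n_2)$ and use item (1) to pad each branch, $\run^{n_i}\nu_i \leadsto \run^{n-n_i}(\run^{n_i}\nu_i) = \run^{n}\nu_i$, then close under the convex-closure rule using $\run^{n}(\CChoice{p}{\nu_1}{\nu_2}) = \CChoice{p}{(\run^{n}\nu_1)}{(\run^{n}\nu_2)}$.

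The step I expect to require the most care is item (3): it is the only place where distinct branches genuinely run for different numbers of steps, so the bookkeeping with $\max$ and truncated subtraction — together with the need to call on items (1) and (2) to realign the branches and the step counts — is where a slip would most easily occur; everything else reduces to routine structural or rule induction.
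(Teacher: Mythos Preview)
Your proposal is correct and matches the paper's proof in all essential respects: induction on $n$ (with case analysis on $\nu$) for (1), induction on the derivation of $\leadsto$ for (2) and (3), the appeal to (1) in the $\CStep$-case of (2), and the $\max$-padding via (1) in the choice case of (3). The only cosmetic difference is that you factor (1) and (2) through the $n=1$ case and then lift by a separate induction on $n$, whereas the paper handles general $n$ directly in a single induction; both are fine.
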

\begin{proof}
  The first two facts are proven by induction on $n$, the last by induction on $\leadsto$.
\end{proof}

Unlike $\run$, the relation $\leadsto$ can execute different branches of a probabilistic computation for a different number of steps, as illustrated in the following example.
\begin{example}
Consider:
$\nu = \CChoice{p}{(\CStep (\CDirac a))}{(\CStep (\CStep (\CDirac b)))}$.
Then $\run$ removes $\CStep$ operations from both branches. 
\begin{align*}
  \run(\nu) & = \CChoice{p}{(\CDirac a)}{(\CStep (\CDirac b))} \\
  \run^2(\nu) & = \CChoice{p}{(\CDirac a)}{(\CDirac b)} \\
  \run^n(\nu) & = \CChoice{p}{(\CDirac a)}{(\CDirac b)} \text{ for } n \geq 2.
\end{align*}
As per Lemma \ref{lemma:leadsto-run}, we have $\nu \leadsto (\run^n \nu)$, for each $n : \NN$, as well as:
\begin{align*}
\nu & \leadsto \CChoice{p}{(\CDirac a)}{(\CStep( \CStep (\CDirac b)))} \\
\nu & \leadsto \CChoice{p}{(\CStep (\CDirac a))}{(\CStep (\CDirac b))} \\
\nu & \leadsto \CChoice{p}{(\CStep (\CDirac a))}{(\CDirac b)}.
\end{align*}
We will use this flexibility in running different branches for a different number of steps in both our proofs and examples, such as in the proof of Lemma~\ref{lem:prereqs:bind-lemma}.
\end{example}

The step reduction relation is confluent, since we can always match two probabilistic processes that stem from the same parent process by running all branches long enough.

\begin{minipage}{0.6\linewidth}
\begin{corollary}[Confluence]\label{cor:confluence}
$\;$ \\ Let $\nu, \nu_1, \nu_2 : \LiftDist A$. \\ If both $\nu \leadsto \nu_1$ and $\nu \leadsto \nu_2$, \\ then there is a $\nu' : \LiftDist$ such that $\nu_1 \leadsto \nu'$ and $\nu_2 \leadsto \nu'$.
\end{corollary}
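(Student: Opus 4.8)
The plan is to avoid a direct diamond-lemma induction on $\leadsto$ and instead exploit the tight connection between $\leadsto$ and $\run$ recorded in Lemma~\ref{lemma:leadsto-run}. The idea is that iterating $\run$ produces a \emph{canonical} reduct, and that every reduct of $\nu$ eventually ``catches up'' with a sufficiently long $\run$-iteration of $\nu$; so two reducts $\nu_1, \nu_2$ of $\nu$ can both be joined to a common such iteration.

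Concretely, suppose $\nu \leadsto \nu_1$ and $\nu \leadsto \nu_2$. By Lemma~\ref{lemma:leadsto-run}(\ref{lemma:leadsto-run3}) there are $n_1, n_2 : \NN$ with $\nu_1 \leadsto \run^{n_1}\nu$ and $\nu_2 \leadsto \run^{n_2}\nu$. The conclusion of the corollary is a proposition (the existence of $\nu'$ together with the two reductions, each of which is a proposition), so we may use the elimination principle for $\exists$ to work with concrete $n_1$ and $n_2$; recall from Remark~\ref{rem:leadsto:def} that $\leadsto$ is itself an existential over a fuelled approximant, so this is the only place where truncation plays a role. Now set $n := \max(n_1, n_2)$ and take $\nu' := \run^n \nu$. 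Writing $\run^n \nu = \run^{\,n - n_1}(\run^{n_1}\nu)$ (legitimate since $n \geq n_1$), Lemma~\ref{lemma:leadsto-run}(\ref{lemma:leadsto-run1}) applied to $\run^{n_1}\nu$ gives $\run^{n_1}\nu \leadsto \run^{\,n-n_1}(\run^{n_1}\nu) = \nu'$, and composing with $\nu_1 \leadsto \run^{n_1}\nu$ by transitivity yields $\nu_1 \leadsto \nu'$. The symmetric argument, using $n \geq n_2$, gives $\nu_2 \leadsto \nu'$, completing the proof.

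There is essentially no serious obstacle: the only points needing a little care are the reduction of the statement to its $\run$-based reformulation via Lemma~\ref{lemma:leadsto-run}(\ref{lemma:leadsto-run3}), the routine observation that the goal is a proposition so the extracted fuel values may be used freely, and the arithmetic bookkeeping that $n - n_i$ is a genuine natural number because $n = \max(n_1,n_2)$.
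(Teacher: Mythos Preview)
Your proof is correct and follows exactly the same route as the paper: apply Lemma~\ref{lemma:leadsto-run}(\ref{lemma:leadsto-run3}) to obtain $n_1,n_2$, take $\nu' = \run^{\max(n_1,n_2)}\nu$, and use Lemma~\ref{lemma:leadsto-run}(\ref{lemma:leadsto-run1}) together with transitivity to close the diamond. Your additional remarks about propositional truncation and the arithmetic $n - n_i$ are sound and simply make explicit details that the paper leaves implicit.
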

\end{minipage}
\hfill
\begin{minipage}{0.39\linewidth}
\scalebox{0.9}{
  \begin{tikzpicture}
    \node (v) at (0,0) {$v$};
    \node (v1) at (0.85,0.85) {$v_1$};
    \node (v2) at (0.85,-0.85) {$v_2$};
    \node (v3) at (1.7,0) {$v'$};
    \draw[->, decorate, decoration={snake, amplitude=.4mm, segment length=3mm, post=lineto, post length=1mm}] 
        (v) -- (v1) node[midway, above right] {};
    \draw[->, decorate, decoration={snake, amplitude=.4mm, segment length=3mm, post=lineto, post length=1mm}] 
        (v) -- (v2) node[midway, below right] {};
    \draw[->, decorate, decoration={snake, amplitude=.4mm, segment length=3mm, post=lineto, post length=1mm}] 
        (v1) -- (v3) node[midway, above right] {};
    \draw[->, decorate, decoration={snake, amplitude=.4mm, segment length=3mm, post=lineto, post length=1mm}] 
        (v2) -- (v3) node[midway, below right] {};
  \end{tikzpicture}
  }
\end{minipage}

\begin{proof}
  We apply Lemma~\ref{lemma:leadsto-run}.\ref{lemma:leadsto-run3}, giving us an $n_1$ and $n_2$ such that $\nu_1 \leadsto \run^{n_1} \nu$ and $\nu_2 \leadsto \run^{n_2} \nu$. Taking $N = \max(n_1, n_2)$ then gives the confluence via Lemma~\ref{lemma:leadsto-run}.\ref{lemma:leadsto-run1}. 
  \end{proof}
%
%
%

We mention two more useful properties of step reductions. Firstly, the bind operation preserves the $\leadsto$ relation, which follows by an easy induction on $\leadsto$.
\begin{lemma}\label{leadsto:lem4}
If $f : A \rightarrow \LiftDist(B)$, and $\nu, \nu' : \LiftDist A$ such that $\nu \leadsto \nu'$. Then also
$\uext f(\nu) \leadsto \uext f(\nu')$.
\end{lemma}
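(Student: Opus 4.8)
The plan is to prove Lemma~\ref{leadsto:lem4} by induction on the derivation of $\nu \leadsto \nu'$, following the four rules in the definition of $\leadsto$. By Remark~\ref{rem:leadsto:def} this is really an induction on the fuel $n$ in the fuelled relation $\leadsto^n$, but it is cleaner to present it as a rule induction and I would do so, checking each of the four constructors.

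First, for the reflexivity rule $\nu \leadsto \nu$, the goal $\uext f(\nu) \leadsto \uext f(\nu)$ is immediate from reflexivity of $\leadsto$. Second, for the rule $\CStep\nu \leadsto \nu$, I would use the computation rule for the Kleisli extension, namely $\uext f(\CStep \nu) \peq \CStep(\uext f(\nu))$ from the bind lemma, so the goal becomes $\CStep(\uext f(\nu)) \leadsto \uext f(\nu)$, which is exactly an instance of the $\CStep$-elimination rule for $\leadsto$. Third, for transitivity, with $\nu \leadsto \nu'$ and $\nu' \leadsto \nu''$, the induction hypotheses give $\uext f(\nu) \leadsto \uext f(\nu')$ and $\uext f(\nu') \leadsto \uext f(\nu'')$, and transitivity of $\leadsto$ closes the goal. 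Fourth, for the convex-sum rule, with $\nu_1 \leadsto \nu_1'$ and $\nu_2 \leadsto \nu_2'$, the induction hypotheses give $\uext f(\nu_i) \leadsto \uext f(\nu_i')$ for $i = 1,2$; using $\uext f(\CChoice p{\nu_1}{\nu_2}) \peq \CChoice p{\uext f(\nu_1)}{\uext f(\nu_2)}$ (again from the bind lemma) and the congruence rule for $\leadsto$ under $\CChoiceOp$, the goal $\uext f(\CChoice p{\nu_1}{\nu_2}) \leadsto \uext f(\CChoice p{\nu_1'}{\nu_2'})$ follows.

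There is no real obstacle here: each case is discharged by combining one equation from the Kleisli-extension lemma for $\LiftDist$ with the corresponding closure rule of $\leadsto$. The only point requiring a little care is making the rule induction rigorous given that $\leadsto$ is defined via the fuelled approximants $\leadsto^n$ of Remark~\ref{rem:leadsto:def}: one phrases the statement as ``for all $n$, if $\nu \leadsto^n \nu'$ then $\uext f(\nu) \leadsto \uext f(\nu')$'' and proceeds by induction on $n$ together with case analysis on the last rule applied, but this is routine bookkeeping. I would state the proof briefly as ``by induction on $\leadsto$, using the computation rules for $\uext f$ on $\CStep$ and $\CChoiceOp$'' and leave the four straightforward cases to the reader.
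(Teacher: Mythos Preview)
Your proof is correct and follows essentially the same approach as the paper: induction on the derivation of $\leadsto$, handling each of the four rules using the Kleisli-extension equations $\uext f(\CStep x) \peq \CStep(\uext f(x))$ and $\uext f(\CChoice p{\nu_1}{\nu_2}) \peq \CChoice p{\uext f(\nu_1)}{\uext f(\nu_2)}$. A minor terminological point: what you call ``the bind lemma'' for these equations is in the paper the (unnamed) monad-structure lemma for $\LiftDist$ stated just after Proposition~\ref{prop:LiftDist}; the paper's ``Bind Lemma'' (Lemma~\ref{lem:prereqs:bind-lemma}) is something else.
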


Secondly, the probability of termination is monotone with respect to $n$, and along step reductions:
\begin{lemma}\label{leadsto:lem5}
For $\nu, \nu' : \LiftDist A$, we have:
\begin{enumerate}
  \item For all $n, m : \NN$ such that $n \leq m$: $\probterm{n}{\nu} \leq \probterm{m}{\nu}$. \label{leadsto:lem5-1}
  \item If $\nu \leadsto \nu'$ then for all $n : \NN$: $\probterm{n}{\nu} \leq \probterm{n}{\nu'}$. \label{leadsto:lem5-2}
  \item If $\nu \leadsto \nu'$ then $\probtermseq{\nu} \eqlim \probtermseq{\nu'}$. \label{leadsto:lem5-3}
\end{enumerate}
\end{lemma}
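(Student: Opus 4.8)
The plan is to establish the three parts in order, since later parts build on earlier ones. For part~\ref{leadsto:lem5-1}, it suffices by induction on the gap $m - n$ to treat the case $m = n+1$, i.e.\ to show $\probterm{n}{\nu} \leq \probterm{n+1}{\nu}$ for all $\nu$. Unfolding definitions, $\probterm{n}{\nu} = \probterm{0}{\run^n\nu}$ and $\probterm{n+1}{\nu} = \probterm{0}{\run^{n+1}\nu} = \probterm{0}{\run^n(\run\,\nu)}$, so after pushing the $\run^n$ inside it is enough to prove the single-step statement $\probterm{0}{\mu} \leq \probterm{1}{\mu}$ for all $\mu$, together with the observation that $\run$ commutes appropriately — more precisely, I would instead prove directly by induction on $n$ that $\probterm{0}{\run^n\nu} \leq \probterm{0}{\run^{n+1}\nu}$, reducing in the base case to $\probterm{0}{\nu} \leq \probterm{0}{\run\,\nu}$. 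This base case is proved by HIT induction on $\nu : \LiftDist A \equi \Dist(A + \LiftDist A)$ (using that the inequality is a proposition): on $\CDirac a$ both sides are $1$; on $\CStep d$ the left side is $0$; and the $\CChoice{p}{x}{y}$ case follows since both $\probterm{0}{-}$ and $\run$ distribute over $\CChoiceOp$, so it reduces to the inductive hypotheses combined with the fact that $r \mapsto p\cdot r + (1-p)\cdot(-)$ is monotone.

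For part~\ref{leadsto:lem5-2}, I would use the fuelled presentation of $\leadsto$ from Remark~\ref{rem:leadsto:def} and induct on the derivation of $\nu \leadsto \nu'$. The reflexivity case is trivial. For the axiom $\CStep\nu \leadsto \nu$, we need $\probterm{n}{\CStep\nu} \leq \probterm{n}{\nu}$; since $\run(\CStep\nu) = \nu$, we have $\run^{n}(\CStep\nu) = \run^{n-1}\nu$ for $n \geq 1$, so $\probterm{n}{\CStep\nu} = \probterm{n-1}{\nu} \leq \probterm{n}{\nu}$ by part~\ref{leadsto:lem5-1}, and for $n = 0$ we get $\probterm{0}{\CStep\nu} = 0$. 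Transitivity is immediate from transitivity of $\leq$. For the congruence rule under $\CChoiceOp$, we use again that $\run^n$ and $\probterm{0}{-}$ both distribute over $\CChoiceOp$, so $\probterm{n}{\CChoice{p}{\nu_1}{\nu_2}} = p\cdot\probterm{n}{\nu_1} + (1-p)\cdot\probterm{n}{\nu_2}$, and the result follows from the two inductive hypotheses together with monotonicity of convex sums in each argument.

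For part~\ref{leadsto:lem5-3}, we must show $\probtermseq{\nu} \eqlim \probtermseq{\nu'}$, i.e.\ both $\leqlim$ directions. These sequences are monotone by part~\ref{leadsto:lem5-1}, so $\eqlim$ makes sense. The direction $\probtermseq{\nu} \leqlim \probtermseq{\nu'}$ is the easy one: given $n$ and $\varepsilon > 0$, part~\ref{leadsto:lem5-2} gives $\probterm{n}{\nu} \leq \probterm{n}{\nu'} \leq \probterm{n}{\nu'} + \varepsilon$, so we may take $m = n$. For the reverse direction $\probtermseq{\nu'} \leqlim \probtermseq{\nu}$, we invoke Lemma~\ref{lemma:leadsto-run}.\ref{lemma:leadsto-run3}: from $\nu \leadsto \nu'$ we obtain an $N$ with $\nu' \leadsto \run^N\nu$; then part~\ref{leadsto:lem5-2} applied to this reduction gives $\probterm{n}{\nu'} \leq \probterm{n}{\run^N\nu}$, and unfolding $\probterm{n}{\run^N\nu} = \probterm{0}{\run^{n+N}\nu} = \probterm{n+N}{\nu}$, so for any $n$ and $\varepsilon$ we may take $m = n + N$ (with room to spare, so $\varepsilon$ is not even needed). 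I expect the main obstacle to be part~\ref{leadsto:lem5-1}/\ref{leadsto:lem5-2}: the only genuine content is the single-step base case $\probterm{0}{\nu} \leq \probterm{0}{\run\,\nu}$ and correctly handling the interaction between $\run^n$, $\probterm{0}{-}$, and $\CStep$ at the fuel boundary $n = 0$; everything downstream is then bookkeeping with monotonicity of convex combinations and the $\leqlim$ definition.
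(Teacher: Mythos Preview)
Your proposal is correct. Parts~\ref{leadsto:lem5-1} and~\ref{leadsto:lem5-2} follow essentially the same strategy as the paper, with minor organisational differences: for~\ref{leadsto:lem5-1} the paper does a separate case analysis on $\nu$ for both $n=0$ and $n=n'+1$, whereas you prove only the single-step statement $\probterm{0}{\nu}\leq\probterm{0}{\run\,\nu}$ and then instantiate at $\run^n\nu$ (note that the induction on $n$ you mention is then vacuous --- the base case, once proved for \emph{all} $\nu$, already yields every instance). For~\ref{leadsto:lem5-2} the paper proves the $n=0$ case by induction on $\leadsto$ and then lifts to arbitrary $n$ via Lemma~\ref{lemma:leadsto-run}.\ref{lemma:leadsto-run2}, whereas you induct on $\leadsto$ directly for all $n$ and handle the $\CStep$ axiom by a case split on $n$; both are fine.

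Part~\ref{leadsto:lem5-3} is where you take a genuinely different route. The paper proves $\probtermseq{\nu}\eqlim\probtermseq{\nu'}$ by induction on the derivation of $\nu\leadsto\nu'$, using reflexivity and transitivity of $\eqlim$, a direct index-shift computation for the $\CStep$ axiom, and Lemma~\ref{lemma:leqlim-conv} (closure of $\leqlim$ under convex sums) for the congruence rule. You instead obtain both $\leqlim$ directions from part~\ref{leadsto:lem5-2}: the forward direction is immediate, and for the reverse you invoke the ``catch-up'' Lemma~\ref{lemma:leadsto-run}.\ref{lemma:leadsto-run3} to get $\nu'\leadsto\run^N\nu$, whence $\probterm{n}{\nu'}\leq\probterm{n}{\run^N\nu}=\probterm{n+N}{\nu}$. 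This is shorter and avoids Lemma~\ref{lemma:leqlim-conv} entirely, trading it for the already-established Lemma~\ref{lemma:leadsto-run}.\ref{lemma:leadsto-run3}; the paper's inductive argument is more self-contained but repeats work already packaged in the $\leadsto/\run$ lemmas.
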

\begin{proof}
  The first statement is by induction on $n$ and case analysis for $\nu$, the second statement by induction on $n$ and $\leadsto$.
  The third statement follows by induction on $\leadsto$, the properties of $\leqlim$ and monotonicity of $\probtermseq{-}$ (i.e. the first statement of this lemma).
\end{proof}

\subsection{Approximate step reductions}
The monads $\GLiftDist$ and $\LiftDist$ model the fact that at any point in time we might not have complete information about a probability distribution, with more information coming in at every time 
step. The relations $\run$ and $\leadsto$ process the incoming information and give new probability distributions that are future variants of the original (delayed) distribution, after a finite number of 
unfoldings. However, it is often useful to talk about what happens in the limit, after infinitely many unfoldings. For this purpose we introduce the relation $\limleadsto$. Intuitively, $v \limleadsto v'$ 
holds if we can get arbitrarily close to $v'$ by applying sufficiently many unfoldings to $v$. 
%

\begin{definition}
Define the \emph{approximate step reduction} relation $\limleadsto : \LiftDist A \to \LiftDist A \to \Prop$ as:
$\nu \limleadsto \nu'$ if for all $p \in \II$ there is a $\nu'' : \LiftDist A$ such that $\nu \leadsto \CChoice{p}{\nu'}{\nu''}$. 
\end{definition}

\begin{example} \label{ex:lem:convergence}
The approximate step reduction relation $\limleadsto$ allows us to talk about limits constructively. For example, say $h_x$ is a hesitant point distribution $h_x$, if 
$h_x \leadsto \CChoice{q}{\CDirac(x)}{h_x}$. For instance, $h_x \peq \Lambda\kappa.(\Fix(\lambda y. \GChoice{q}{\GDirac(x)}{\GStep{y}}))$ satisfies this, but we will see another example in section~\ref{sec:hes:id}. In the limit, $h_x$ should be the same distribution as the normal point distribution $\CDirac(x)$. 
Indeed,
\[h_x \limleadsto \CDirac(x). \]
To prove this, we need to show that for all $p \in \II$ there is a $\nu$ such that $h_x \leadsto \CChoice{p}{\CDirac(x)}{\nu}$.
Let, for any natural number $n$, $q_n = \sum_{i = 0}^{n} q(1-q)^i$, then by applying transitivity of $\leadsto$ $n$ times:
\[
h_x \leadsto \CChoice{q_n}{\CDirac(x)}{h_x}.
\]
If we choose $n$ such that $q_n \geq p$, then we can write 
\[
\CChoice{q_n}{\CDirac(x)}{h_x} \peq \CChoice{p}{\CDirac(x)}{(\CChoice{\frac{q_n - p}{1 - p}}{\CDirac(x)}{h_x})}.
\]
Then choosing $\nu = \CChoice{\frac{q_n - p}{1 - p}}{\CDirac(x)}{h_x}$ gives $h_x \leadsto \CChoice{p}{\CDirac(x)}{\nu}$, which is what we needed to show.
\end{example}

We again prove some useful properties of $\limleadsto$ that we need in future proofs.
We start with the defining properties of the step reductions. These also hold for approximate step reductions:
\begin{lemma}\label{lemma:limleadsto-properties}
 The relation $\limleadsto$ is reflexive and transitive. Furthermore, $\CStep \nu \limleadsto \nu$, and $\limleadsto$ is preserved by sums: if $\nu_1 \limleadsto \nu_1'$ and $\nu_2 \limleadsto \nu_2'$, then for all $p \in \II, \CChoice{p}{\nu_1}{\nu_2} \limleadsto \CChoice{p}{\nu_1'}{\nu_2'}$.
 
\end{lemma}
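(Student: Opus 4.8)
The plan is to reduce each clause to the corresponding property of the ordinary step reduction $\leadsto$ (Lemma \ref{lemma:leadsto-run}, Corollary \ref{cor:confluence}) plus the arithmetic of the interval $\II$. For \emph{reflexivity}, given $\nu$ and $p \in \II$, I take $\nu'' \defeq \nu$ and use the identity rule $\nu \leadsto \CChoice p\nu\nu$, which is just the reflexivity rule of $\leadsto$ postcomposed with the convex-sum rule applied to two copies of $\nu \leadsto \nu$, together with the idempotency equation $\CChoice p\nu\nu \peq \nu$; concretely $\nu = \CChoice p\nu\nu \leadsto \CChoice p\nu\nu$. For $\CStep\nu \limleadsto \nu$: given $p$, take $\nu'' \defeq \nu$ and observe $\CStep\nu \leadsto \nu = \CChoice p\nu\nu$ using the $\CStep$-rule of $\leadsto$ and idempotency again.

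For \emph{preservation by convex sums}, suppose $\nu_1 \limleadsto \nu_1'$ and $\nu_2 \limleadsto \nu_2'$ and fix $p, q \in \II$. Applying the hypotheses at probability $q$ gives $\nu_1'', \nu_2''$ with $\nu_1 \leadsto \CChoice q{\nu_1'}{\nu_1''}$ and $\nu_2 \leadsto \CChoice q{\nu_2'}{\nu_2''}$. By the convex-sum rule for $\leadsto$ we get $\CChoice p{\nu_1}{\nu_2} \leadsto \CChoice p{(\CChoice q{\nu_1'}{\nu_1''})}{(\CChoice q{\nu_2'}{\nu_2''})}$. The remaining work is purely equational in $\Dist$: reassociate this nested convex sum, using the (comm) and (assoc) laws together with the arithmetic on $\II$ (as in Example \ref{example:proving:eq}), into the form $\CChoice q{(\CChoice p{\nu_1'}{\nu_2'})}{(\CChoice p{\nu_1''}{\nu_2''})}$, which exhibits the required $\nu'' \defeq \CChoice p{\nu_1''}{\nu_2''}$ witnessing $\CChoice p{\nu_1}{\nu_2} \leadsto \CChoice q{(\CChoice p{\nu_1'}{\nu_2'})}{\nu''}$. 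Since $q$ was arbitrary, this is exactly $\CChoice p{\nu_1}{\nu_2} \limleadsto \CChoice p{\nu_1'}{\nu_2'}$.

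For \emph{transitivity}, suppose $\nu \limleadsto \nu'$ and $\nu' \limleadsto \nu''$ and fix $p \in \II$. The idea is a standard $\varepsilon$-in-two-halves argument: pick a probability $p_1$ close enough to $1$ that after the first reduction most of the mass sits on $\nu'$, then reduce further inside that part using the second hypothesis at a probability $p_2$, and finally combine. Concretely, using $\nu' \limleadsto \nu''$ at some $p_2$ gives $\nu' \leadsto \CChoice{p_2}{\nu''}{\rho}$; feeding this (via Lemma \ref{leadsto:lem4}-style monotonicity of $\leadsto$ under the convex-sum rule, or directly via the convex-sum rule applied inside the first reduction) into $\nu \leadsto \CChoice{p_1}{\nu'}{\sigma}$ yields $\nu \leadsto \CChoice{p_1}{(\CChoice{p_2}{\nu''}{\rho})}{\sigma}$, and reassociating by (assoc) turns this into $\nu \leadsto \CChoice{p_1 p_2}{\nu''}{\tau}$ for a suitable $\tau$. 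Choosing $p_1, p_2 \in \II$ with $p_1 p_2 \geq p$ (possible since $\II$ is the \emph{open} interval, so we can always take both factors strictly below $1$ yet with product as close to $1$ as we like — this uses the closure of $\II$ under products and the density used implicitly in Example \ref{ex:lem:convergence}) and then splitting off the excess mass via (assoc) exactly as in Example \ref{ex:lem:convergence} gives $\nu \leadsto \CChoice p{\nu''}{\nu'''}$ for the appropriate $\nu'''$, as required.

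The main obstacle is the transitivity clause: unlike the other parts it genuinely uses that one can drive $p_1 p_2$ above any given $p \in \II$ while keeping $p_1, p_2$ valid elements of the open interval, and it requires carefully tracking the leftover subdistributions through two nested reassociations. The convex-sum case is routine $\Dist$-algebra but should be stated carefully since the reassociation interleaves a $p$-split and a $q$-split; reflexivity and the $\CStep$ clause are immediate.
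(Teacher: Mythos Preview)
Your proposal is correct and follows essentially the same approach as the paper's proof: reflexivity and the $\CStep$ clause via idempotency, the convex-sum clause via the interchange law $\CChoice{p}{(\CChoice{q}{a}{b})}{(\CChoice{q}{c}{d})} \peq \CChoice{q}{(\CChoice{p}{a}{c})}{(\CChoice{p}{b}{d})}$, and transitivity by composing the two $\leadsto$-reductions and reassociating. The only cosmetic difference is in transitivity: the paper avoids the ``split off excess mass'' step by choosing $p_1 = \frac{1+p}{2}$ and $p_2 = \frac{2p}{1+p}$ so that $p_1 p_2 = p$ exactly, which is slightly cleaner than achieving $p_1 p_2 \geq p$ and then reassociating once more.
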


The bind operation also preserves approximate step reductions, which follows directly from the fact that the bind operation preserves $\leadsto$.
\begin{lemma}\label{lemma:limleadsto-homopreserve}
 If $f : A \rightarrow \LiftDist(B)$, and $\nu, \nu' : \LiftDist A$ such that $\nu \limleadsto \nu'$. Then also
$\uext f(\nu) \limleadsto \uext f(\nu')$.
\end{lemma}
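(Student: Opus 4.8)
The plan is to deduce the claim directly from Lemma~\ref{leadsto:lem4} --- which states that $\uext f$ preserves the exact step reduction relation $\leadsto$ --- together with the fact that $\uext f$ is a homomorphism of convex algebras, as recorded in the monad structure lemma for $\LiftDist$.

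First I would unfold the definition of $\limleadsto$ on both sides. The hypothesis $\nu \limleadsto \nu'$ gives, for each $p \in \II$, some $\nu'' : \LiftDist A$ with $\nu \leadsto \CChoice{p}{\nu'}{\nu''}$, while the goal is to produce, for each $p \in \II$, some $\mu'' : \LiftDist B$ with $\uext f(\nu) \leadsto \CChoice{p}{\uext f(\nu')}{\mu''}$. Fixing $p$ and taking the witness $\nu''$ in hand, I would apply Lemma~\ref{leadsto:lem4} to $f$ and the reduction $\nu \leadsto \CChoice{p}{\nu'}{\nu''}$ to obtain $\uext f(\nu) \leadsto \uext f(\CChoice{p}{\nu'}{\nu''})$. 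Then, using that $\uext f$ commutes with the convex sum, $\uext f(\CChoice{p}{\nu'}{\nu''}) \peq \CChoice{p}{\uext f(\nu')}{\uext f(\nu'')}$, and rewriting the reduction along this equality yields $\uext f(\nu) \leadsto \CChoice{p}{\uext f(\nu')}{\uext f(\nu'')}$. Choosing $\mu'' \defeq \uext f(\nu'')$ then discharges the goal, establishing $\uext f(\nu) \limleadsto \uext f(\nu')$.

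I do not expect any real obstacle: the argument is essentially a two-line combination of an already-proved preservation result and the homomorphism property of $\uext f$. The only minor subtlety is that the witness $\nu''$ is extracted from an existential over $\LiftDist A$ (a type that need not be a proposition), which is harmless here because the conclusion $\uext f(\nu) \limleadsto \uext f(\nu')$ is itself a proposition, so the elimination principle for $\exists$ applies.
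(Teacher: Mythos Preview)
Your proposal is correct and matches the paper's proof essentially line for line: unfold $\limleadsto$, apply Lemma~\ref{leadsto:lem4} to push $\uext f$ through the $\leadsto$, use the convex-algebra homomorphism property of $\uext f$ to split the sum, and take $\uext f(\nu'')$ as the witness. Your remark about eliminating the existential into a propositional goal is a valid extra observation that the paper leaves implicit.
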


Step reductions are also approximate step reductions, and approximately reducing step reductions just results in further approximate reductions, as does step reducing an approximate step reduction.
\begin{lemma}\label{lemma:limleadsto-leadsto-interact}
Let $\nu, \nu_1, \nu_2 : \LiftDist{A}$. Then:
\begin{enumerate}
  \item If $\nu \leadsto \nu_1$, then $\nu \limleadsto \nu_1$. \label{limleadsto-leadsto-interact1}
  \item If $\nu \leadsto \nu_1$ and $\nu_1 \limleadsto \nu_2$, then $\nu \limleadsto \nu_2$. \label{limleadsto-leadsto-interact2}
  \item If $\nu \limleadsto \nu_1$ and $\nu_1 \leadsto \nu_2$, then $\nu \limleadsto \nu_2$. \label{limleadsto-leadsto-interact3}
\end{enumerate}
\end{lemma}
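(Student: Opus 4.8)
The plan is to prove the three statements in order, each reducing to the definitions of $\limleadsto$ and $\leadsto$ together with properties already established (reflexivity/transitivity of $\leadsto$, convexity of $\leadsto$, Lemma~\ref{lemma:limleadsto-properties}, and Corollary~\ref{cor:confluence}).

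For (1): suppose $\nu \leadsto \nu_1$. To show $\nu \limleadsto \nu_1$ we must produce, for each $p \in \II$, some $\nu''$ with $\nu \leadsto \CChoice{p}{\nu_1}{\nu''}$. Take $\nu'' \defeq \nu_1$; then $\CChoice{p}{\nu_1}{\nu_1} \peq \nu_1$ by idempotency of $\DistChoiceOp$ (the (idem) law, inherited by $\LiftDist$), so $\nu \leadsto \CChoice{p}{\nu_1}{\nu''}$ follows directly from the hypothesis.

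For (2): suppose $\nu \leadsto \nu_1$ and $\nu_1 \limleadsto \nu_2$. Fix $p \in \II$. By the second hypothesis there is $\nu''$ with $\nu_1 \leadsto \CChoice{p}{\nu_2}{\nu''}$. Composing with $\nu \leadsto \nu_1$ by transitivity of $\leadsto$ gives $\nu \leadsto \CChoice{p}{\nu_2}{\nu''}$, which is exactly what is needed for $\nu \limleadsto \nu_2$.

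For (3): suppose $\nu \limleadsto \nu_1$ and $\nu_1 \leadsto \nu_2$. Fix $p \in \II$. From the first hypothesis pick $\nu''$ with $\nu \leadsto \CChoice{p}{\nu_1}{\nu''}$. Now apply the convexity rule of $\leadsto$ to $\nu_1 \leadsto \nu_2$ and $\nu'' \leadsto \nu''$ (reflexivity), obtaining $\CChoice{p}{\nu_1}{\nu''} \leadsto \CChoice{p}{\nu_2}{\nu''}$. Transitivity of $\leadsto$ then yields $\nu \leadsto \CChoice{p}{\nu_2}{\nu''}$, establishing $\nu \limleadsto \nu_2$ with witness $\nu''$. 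The only mild subtlety throughout is the bookkeeping that the convex-algebra laws on $\Dist$ transport along the equivalence $\LiftDist A \equi \Dist(A + \LiftDist A)$ to $\CChoiceOp$, which is already used implicitly elsewhere in the section; none of the three parts presents a real obstacle — this lemma is essentially a routine unfolding of definitions, with part (3) being the only one using the convexity closure rule of $\leadsto$.
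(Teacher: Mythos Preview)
Your proof is correct and follows essentially the same approach as the paper: part (1) uses idempotency with witness $\nu'' = \nu_1$, part (2) uses transitivity of $\leadsto$, and part (3) uses the convex closure rule together with reflexivity and transitivity of $\leadsto$. The paper additionally remarks that (2) and (3) can alternatively be obtained from (1) and transitivity of $\limleadsto$ (Lemma~\ref{lemma:limleadsto-properties}), but your direct arguments match the primary proofs given there.
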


Finally we show that approximate step reductions preserve limit probabilities of termination:
\begin{lemma}\label{lemma:limleadsto-eqlim}
	Let $\nu_1, \nu_2 : \LiftDist{A}$. If $\nu_1 \limleadsto \nu_2$ then $\probtermseq{\nu_1} \eqlim \probtermseq{\nu_2}$.
\end{lemma}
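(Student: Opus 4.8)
The plan is to prove the two directions $\probtermseq{\nu_1} \leqlim \probtermseq{\nu_2}$ and $\probtermseq{\nu_2} \leqlim \probtermseq{\nu_1}$ of the conjunction $\eqlim$ separately. The only ingredient needed beyond the definitions is that probability of termination commutes with convex sums: for all $k : \NN$, all $\mu,\rho : \LiftDist A$ and all $p \in \II$ one has $\probterm{k}{\CChoice{p}{\mu}{\rho}} = p\cdot\probterm{k}{\mu} + (1-p)\cdot\probterm{k}{\rho}$. This follows by induction on $k$: by definition $\run$ distributes over $\CChoiceOp$, so $\run^k(\CChoice{p}{\mu}{\rho}) = \CChoice{p}{\run^k\mu}{\run^k\rho}$, and $\probterm{0}{-}$ is by definition linear on this convex combination. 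Both sequences in the statement are monotone by Lemma~\ref{leadsto:lem5}.\ref{leadsto:lem5-1}, so the relations $\leqlim$ are well-formed.

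For $\probtermseq{\nu_1} \leqlim \probtermseq{\nu_2}$, fix $n : \NN$ and a rational $\varepsilon > 0$, and choose $p \in \II$ with $1 - p \leq \varepsilon$. By $\nu_1 \limleadsto \nu_2$ there is $\nu''$ with $\nu_1 \leadsto \CChoice{p}{\nu_2}{\nu''}$, so by Lemma~\ref{leadsto:lem5}.\ref{leadsto:lem5-2} and the commutation identity, $\probterm{n}{\nu_1} \leq \probterm{n}{\CChoice{p}{\nu_2}{\nu''}} = p\cdot\probterm{n}{\nu_2} + (1-p)\cdot\probterm{n}{\nu''} \leq \probterm{n}{\nu_2} + (1-p) \leq \probterm{n}{\nu_2} + \varepsilon$. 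Hence $m \defeq n$ witnesses the required inequality.

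For $\probtermseq{\nu_2} \leqlim \probtermseq{\nu_1}$, fix $n : \NN$ and a rational $\varepsilon > 0$, and this time choose $p \in \II$ with $\tfrac{1-p}{p} \leq \tfrac{\varepsilon}{2}$, which is possible by density of the rationals near $1$. By $\nu_1 \limleadsto \nu_2$ there is $\nu''$ with $\nu_1 \leadsto \CChoice{p}{\nu_2}{\nu''}$, hence by Lemma~\ref{leadsto:lem5}.\ref{leadsto:lem5-3} also $\probtermseq{\CChoice{p}{\nu_2}{\nu''}} \leqlim \probtermseq{\nu_1}$. Instantiating the latter at the point $n$ with precision $\tfrac{p\varepsilon}{2} > 0$ yields $m : \NN$ with $\probterm{n}{\CChoice{p}{\nu_2}{\nu''}} \leq \probterm{m}{\nu_1} + \tfrac{p\varepsilon}{2}$. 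Dropping the nonnegative term $(1-p)\cdot\probterm{n}{\nu''}$ from the commutation identity gives $p\cdot\probterm{n}{\nu_2} \leq \probterm{m}{\nu_1} + \tfrac{p\varepsilon}{2}$, and dividing by $p$ and using $\probterm{m}{\nu_1} \leq 1$ gives $\probterm{n}{\nu_2} \leq \tfrac{\probterm{m}{\nu_1}}{p} + \tfrac{\varepsilon}{2} \leq \probterm{m}{\nu_1} + \tfrac{1-p}{p} + \tfrac{\varepsilon}{2} \leq \probterm{m}{\nu_1} + \varepsilon$, as required.

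The main obstacle is the second direction, where the estimate must travel through the convex combination in the ``wrong'' direction: recovering a bound on $\probterm{n}{\nu_2}$ from one on $p\cdot\probterm{n}{\nu_2}$ introduces an error $\tfrac{1-p}{p}$ that can be absorbed only because $\II$ is dense near --- though never reaches --- $1$. This is exactly why the definition of $\limleadsto$ quantifies over all $p \in \II$, and the bookkeeping detail to get right is the order of choices: $p$ must be fixed first, then $\nu''$ obtained from $\limleadsto$, and only then $m$ extracted from the $\leqlim$ hypothesis with a precision depending on the already-chosen $p$.
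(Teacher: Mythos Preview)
Your proof is correct and follows essentially the same strategy as the paper's: both directions pick $p$ close to $1$, invoke $\limleadsto$ to obtain $\nu_1 \leadsto \CChoice{p}{\nu_2}{\nu''}$, and then use Lemma~\ref{leadsto:lem5} (part~\ref{leadsto:lem5-2} for the first direction, part~\ref{leadsto:lem5-3} for the second) to transfer the bound. The only difference is cosmetic bookkeeping in the second direction---you divide by $p$ and bound $\tfrac{1-p}{p}$, whereas the paper keeps $p = 1-\tfrac{\varepsilon}{2}$ and bounds the residual term $\tfrac{\varepsilon}{2}(\probterm{n}{\nu_2}-\probterm{n}{\nu_3})$ directly by $\tfrac{\varepsilon}{2}$---but the structure and ingredients are identical.
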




\section{Probabilistic FPC}
\label{sec:prob:FPC}

In this section we define the language $\pFPC$, its typing rules, and operational semantics in
CCTT. $\pFPC$ is the extension of simply typed lambda calculus with recursive types and probabilistic choice. 
The typing rules of $\pFPC$ are presented in Figure~\ref{fig:pfpc:typing}. In typing judgements, $\Gamma$ is assumed to be
a variable context and all types (which include recursive types of the form $\RecTy\tau$) are closed. 

\begin{figure}[tbp]
  {\small
\begin{mathpar}
  \inferrule*{x:\sigma \in \Gamma \and \vdash \Gamma}{\Gamma \vdash x:\sigma} \and
  \inferrule*{\vdash \Gamma }{\Gamma \vdash \Star : \Unit} \and
  \inferrule*{n : \NN \and \vdash \Gamma}{\Gamma \vdash \Numeral{n} :\Nat} \and
  \inferrule*{\Gamma\vdash L: \Nat\and \Gamma \vdash M : \sigma \and \Gamma \vdash N : \sigma}{\Gamma \vdash\Ifz{L}{M}{N} : \sigma} \and
  \inferrule*{\Gamma \vdash M : \Nat}{\Gamma \vdash \Suc{M}:\Nat} \and
  \inferrule*{\Gamma \vdash M : \sigma}{\Gamma \vdash \Inl{M}:\CoprodTy\sigma\tau} \and
  \inferrule*{\Gamma\vdash L: \CoprodTy\sigma{\sigma'}\and \Gamma,x:\sigma \vdash M : \tau \and \Gamma,y:\sigma' \vdash N : \tau}{\Gamma \vdash\Case{L}{x.M}{y.N} : \tau} \and
  \inferrule*{\Gamma \vdash M : \sigma \and \Gamma \vdash N : \tau}{\Gamma \vdash \Pair{M}{N}:\ProdTy\sigma\tau} \and
  \inferrule*{\Gamma \vdash M : \ProdTy\sigma\tau}{\Gamma \vdash \Fst{M}:\sigma} \and
  \inferrule*{\Gamma,(x : \sigma) \vdash M :\tau}{\Gamma\vdash \Lam{M} :\sigma\to \tau} \and
  \inferrule*{\Gamma \vdash N:\sigma\to \tau \and \Gamma \vdash M:\sigma}{\Gamma\vdash N M : \tau} \and
  \inferrule*{\Gamma \vdash M : \tau[\RecTy\tau/X]}{\Gamma \vdash \Fold{M}:\RecTy\tau} \and
  \inferrule*{\Gamma \vdash M : \RecTy\tau}{\Gamma \vdash \Unfold{M}:\tau[\RecTy\tau/X]} \and
  \inferrule*{\Gamma \vdash M : \sigma \and \Gamma \vdash N : \sigma \and p : \II}{\Gamma \vdash \Choice{p}{M}{N}:\sigma}
\end{mathpar}
}
\caption{Typing rules for $\pFPC$. Rules for $\Pred{}, \Inr{},\Snd{}$ omitted.}
\label{fig:pfpc:typing}
\end{figure}

We assume a given representation of terms and types of $\pFPC$ within CCTT. For example, 
as inductive types 
with the notion of closedness defined as decidable properties 
on these types. We will also assume that terms are annotated with enough types so that one can deduce the type 
of subterms from terms. This allows the evaluation function and the denotational semantics to be defined by
induction on terms rather than typing judgement derivations. It also means the typing judgement 
$\Gamma \vdash M : \sigma$ is decidable. We write $\Ty$ for the type of closed types, and $\Tm[\Gamma]\sigma$
for the type of terms $M$ satisfying $\Gamma \vdash M : \sigma$, constructed as a subtype of the type of all terms.
For empty $\Gamma$ we write simply $\Tm\sigma$.  We write 
$\Val{\sigma}$ for the set of closed values of type $\sigma$, as captured by the grammar
\begin{align*}
 V,W & := \Star\mid \Numeral{n} \mid \Lam{M} \mid \Fold{V} \mid \Pair{V}{W} \mid \Inl{V} \mid \Inr{V}
\end{align*}


The \emph{operational semantics} is given by a function $\GEval$ of type
\begin{equation} \label{eq:GEval}
\GEval : \{\sigma:\Ty\} \to \Tm{\sigma} \to \GLiftDist(\Val{\sigma}),
\end{equation}
associating to a term $M$ with type $\sigma$ 
an element of the free guarded convex delay algebra over $\Val{\sigma}$. 
The corresponding element of the coinductive convex delay algebra
can be defined as 
\[
 \Eval (M) \defeq \Lambda\kappa. \GEval(M) : \LiftDist{}(\Val{\sigma}).
\]
The function $\GEval$ is defined by an outer guarded recursion and an inner
induction on terms. In other words, by first assuming given a term of type  
$\later^\kappa(\{\sigma:\Ty\} \to \Tm{\sigma} \to \GLiftDist(\Val{\sigma}))$, then defining $\GEval(M)$
by induction on $M$.  
The cases are given in Figure~\ref{fig:pfpc:eval}. The figure 
overloads names of some term constructors to functions of values, e.g.
$\SynInl : \Val\sigma \to \Val{\sigma + \tau}$ mapping $V$ to $\Inl V$, and $\SynPred, \SynSuc : \Val\Nat \to \Val\Nat$
defined using $\Val\Nat \equi \NN$. The figure also uses pattern matching within bindings,
e.g. in the case of function application, where we use that all values of function type are of the form $\Lam{M'}$ for some $M'$.

In the cases of $\Case{L}{x.M}{y.N}$ and $M\,N$ the recursive calls
to $\GEval$ are under $\GStep$ and so can be justified by guarded recursion. This is necessary,
because they are not instantiated at a structurally smaller term. The case
for $\Unfold{M}$ also introduces a computation step using $\GStep$. While 
this is not strictly necessary to define the operational
semantics, we introduce it to simulate the steps of the denotational
semantics defined in Section~\ref{sec:den:sem}. This simplifies the proof of Theorem~\ref{thm:eq-pterm-denot-op}.

\begin{figure}[tbp]
\begin{align*}
 \GEval (V) & \defeq \GDirac V \\
 \GEval (\Suc {M}) & \defeq \GLiftDist(\SynSuc)(\GEval M) \\
 \GEval(\Inl{M}) & \defeq \GLiftDist(\SynInl)(\GEval M) \\
 \GEval(\Ifz{L}{M}{N}) & \defeq \GEval{L} \GBind 
    \SemCase{}{\Zero}{\GEval(M)}{\Numeral{n+1}}{\GEval(N)} \\
 \GEval(\Fst{M}) & \defeq (\GEval M) \GBind \lambda\Pair VW. \GDirac V \\ 
 \GEval{\Pair{M}{N}} & \defeq \GEval{M} \GBind \lambda V. 
 \GEval{N}  \GBind \lambda W. \GDirac{\Pair{V}{W}} \\
 \GEval(\Case{L}{x.M}{y.N}) & \defeq \GEval(L) \GBind 
      \SemCase{}{\Inl{V}}{\GStep{}(\tabs\tickA\kappa\GEval(M[V/x]))}
        {\Inr{V}}
        {\GStep{}(\tabs\tickA\kappa\GEval(N[V/y]))} \\
 \GEval (M\,N) & \defeq
      \GEval(M) \GBind \lambda (\Lam{M'}). \ \GEval(N) \\ & \phantom{\defeq\defeq} \GBind 
      \lambda V. \GStep{\parens{\tabs\tickA\kappa\GEval(M'[V/x])}} \\
 \GEval(\Fold{M}) & \defeq \GLiftDist(\Fold) (\GEval(M)) \\
 \GEval(\Unfold{M}) & \defeq  \GEval M \GBind \lambda (\Fold{V}). \GStep(\tabs\tickA\kappa{\GDirac V)} \\
 \GEval{(\Choice{p}{M}{N})} & \defeq \GChoice{p}{(\GEval{M})}{(\GEval N)} 
\end{align*}
\caption{The evaluation function $\GEval{}$. Rules for $\Pred{}, \Inr{},\Snd{}$ omitted.}
\label{fig:pfpc:eval}
\end{figure}

In $\pFPC$ we can define a fixpoint combinator as illustrated below:

\begin{example}\label{ex:Y:comb}
  We define $\EmptyCtx \vdash \YComb{} : \FnTy{(\FnTy{(\FnTy{\sigma}{\tau})}{(\FnTy{\sigma}{\tau})})}{\FnTy\sigma\tau}$ by
  $\YComb \defeq\ \Lam[f]{\Lam[z]{e_f(\Fold{e_f})z}}$, where 
  \begin{align*}
    e_f & : \FnTy{(\RecTy{(\FnTy{X}{\FnTy{\sigma}\tau})})}{\FnTy\sigma\tau} \\
    e_f & \defeq \Lam[y]{\LetIn{y'}{\Unfold{y}}f(\Lam{y' y x})}
  \end{align*}
  Here, $y : \RecTy{(\FnTy{X}{\FnTy{\sigma}\tau})}$,
 $y' : \FnTy{(\RecTy{(\FnTy{X}{\FnTy{\sigma}\tau})})}{\FnTy\sigma\tau}$
 and $\Lam{y' y x} : \FnTy\sigma\tau$.

Then, for any values $\EmptyCtx \vdash \varphi : \FnTy{(\FnTy\sigma\tau)}{\FnTy\sigma\tau}$ and $\EmptyCtx \vdash V : \sigma$,
  \[
	  \GEval{((\YComb{\varphi})(V))} \peq (\ExpStep)^{4}(\GEval{((\varphi(\YComb{\varphi}))(V))})
  \]
  where $\ExpStep \triangleq (\GStep \circ \GNext)$.
\end{example}


\subsection{Contextual Refinement}

Two terms $M,N$ are contextually equivalent if for any closing context $C[-]$ of ground type
the terms $C[M]$ and $C[N]$ have the same observational behaviour. Here, 
the only observable behaviour we consider is the probability of termination for programs of type $\Unit$, which, for a 
closed term $M$ should be the limit of the sequence $\probterm n {\Eval\, M}$. 

\begin{definition}[Contextual Refinement]
\label{def:ctx:ref}
 Let $\Gamma \vdash M,N : \tau$ be terms. We say that $M$ \emph{contextually refines} $N$
 if for any closing context of unit type $C:\CtxTy\cdot\Unit$,
	$\probtermseq{\Eval(C[M])} \leqlim \probtermseq{\Eval(C[N])}$.
 In this case, write $M \CtxRef N$. We say that $M$ and $N$ are contextually equivalent ($M \CtxEq N$) if 
 $M \CtxRef N$ and $N \CtxRef M$. 
\end{definition}



Definition~\ref{def:ctx:ref} should be read as defining a predicate on $M$ and $N$ in CCTT, \emph{i.e.}, a 
function $\Tm[\Gamma]\sigma \to \Tm[\Gamma]\sigma \to \Prop$.
The notion of closing context $C:\CtxTy\cdot\Unit$
is a special case of a typing judgement on contexts $C:\CtxTy\Delta\tau$ defined in a standard way. 
Note that $C[-]$ may capture free variables in $M$. 

\section{Denotational semantics}
\label{sec:den:sem}

\begin{figure}[tbp]
\begin{center}
 \textbf{Interpretation of types}
\end{center}
\begin{align*}
  \GSVal{ \Nat} & \defeq \mathbb{N} & \GSVal{\ProdTy\sigma\tau} & \defeq \GSVal{\sigma} \times \GSVal{\tau} &
  \GSVal{\FnTy\sigma\tau} & \defeq \GSVal{\sigma} \to \GDomain{\tau} \\
  \GSVal{\Unit} & \defeq \SemUnit & \GSVal{\CoprodTy\sigma\tau} & \defeq \GSVal{\sigma} + \GSVal{\tau} &
  \GSVal{\RecTy\tau} &\defeq \Later{\GSVal{\tau[\RecTy\tau/X]}}
\end{align*}
\begin{center}
 \textbf{Interpretation of terms}
\end{center}
\begin{align*}
    \GInterpret{\Star}\rho \defeq \GDirac(\star) \quad
    \GInterpret{\Numeral{n}}\rho &\defeq \GDirac(n) \quad
	\GInterpret{x}\rho \defeq \GDirac(\rho(x)) \\
    \GInterpret{\Suc{M}}\rho &\defeq \GLiftDist{(\SemSuc{})}\left(\GInterpret{\Gamma \vdash M : \Nat}\rho\right)\\
    \GInterpret{\Ifz{t}{M}{N}}\rho &\defeq  \GInterpret{L}{\rho} \GBind \DoIfz{\GInterpret{M}{\rho}}{\GInterpret{N}{\rho}} \\
    \GInterpret{\Lam{M}}\rho &\defeq\GDirac\left(\lambda (v:\GValInterpret{\sigma}{\rho}). \GInterpret{M}{\rho.x\mapsto v}\right) \\
    \GInterpret{M\,N}\rho &\defeq \GInterpret{M}\rho \GBind \lambda f. \GInterpret{N}\rho \GBind \lambda v. fv  \\
    \GInterpret{\Pair{M}{N}}\rho &\defeq \GInterpret{M}\rho \GBind \lambda v. \left(\GInterpret{N}\rho \GBind \lambda w. \GDirac\SemPair{v}{w}\right)\\
    \GInterpret{\Fst{M}}\rho &\defeq \GLiftDist{(\SemFst{})}\left(\GInterpret{M}\rho\right)\\
    \GInterpret{\Inl{M}}\rho &\defeq \GLiftDist{(\SemInl{})}\left(\GInterpret{M}\rho\right)\\
    \GInterpret{\Case{L}{x.M}{y.N}}\rho &\defeq \GInterpret{L}\rho \GBind 
    \begin{cases}
      \SemInl{v}\mapsto {\GInterpret{M}{\rho.x\mapsto v}}\\
      \SemInr{v}\mapsto {\GInterpret{N}{\rho.y\mapsto v}}
    \end{cases}
    \\ 
    \GInterpret{\Fold{M}}{\rho}&\defeq \GLiftDist{(\GNext)}(\GInterpret{M}\rho) \\
    \GInterpret{\Unfold{M}}{\rho}&\defeq \GInterpret{M}\rho \GBind\lambda v. \GStep (\tabs\tickA\kappa \GDirac{(\tapp v)}) \\
    \GInterpret{\Choice{p}{M}{N}}{\rho}&\defeq \GChoice{p}{\GInterpret{M}\rho}{\GInterpret{N}\rho}
\end{align*}
\caption{Denotational semantics for $\pFPC$. Rules for $\Pred{}, \Inr{},\Snd{}$ omitted.} 
\label{fig:den:sem}
\end{figure}

We now define a \emph{denotational semantics} for $\ProbFPC$.
The semantics is based on two ideas: (i) we use guarded recursion to interpret recursive types,
inspired by \cite{DBLP:journals/mscs/MogelbergP19}, and (ii) we use the guarded convex delay monad
to interpret effectful computations, similarly to how monads are used to interpret effectful languages in standard denotational semantics. 
Specifically we define two functions 
\begin{align*}
   \GSVal{-} & : \Ty \to \Set \\
   \GInterpret{-}{-} & : \{\Gamma : \Ctx\} \to \{ \sigma:\Ty\} \to \Tm[\Gamma]\sigma \to \GSVal{\Gamma}\to \GDomain\sigma
\end{align*}
which interpret types and terms, respectively, see Figure~\ref{fig:den:sem}. Note that by using guarded recursion to interpret types,
it suffices to define the denotation of closed types. 

The interpretation of terms is defined by induction on terms.
%
Note that all syntactic values are interpreted
as semantic values in the sense that we can define a map 
\[
\GValInterpret{-}{} : \{\sigma : \Ty\} \to \Val\sigma \to \GSVal\sigma
\]
satisfying $\GInterpret{V}{} = \GDirac (\GValInterpret{V}{})$ for all $V$. Note also that
the denotational semantics only steps when unfolding from a recursive type. As a consequence,
the denotational semantics is much less cluttered by steps than the operational semantics,
which makes it simpler to use for reasoning. Despite the fewer steps, the two semantics
refine each other. In particular we have the theorem below, which we will prove using
the techniques we develop in later sections:
\begin{theorem}
  \label{thm:eq-pterm-denot-op}
  For any well-typed closed expression $\cdot \vdash M : \Unit$ we have that
  \[
	  \probtermseq{\Eval \,M} \eqlim \probtermseq{\Interpret{M}{}}
  \]
  where $\Interpret{M}{} \defeq \Lambda\kappa. \GInterpret{M}{}$
\end{theorem}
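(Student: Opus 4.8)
The plan is to derive the theorem as a corollary of a logical relation between the operational and denotational semantics, developed by guarded recursion in the later sections; I sketch only the shape of the argument. First I would define, by guarded recursion, a value relation $\mathcal V_\sigma : \Val\sigma \to \GSVal\sigma \to \Prop$, simultaneously with its lifting to a computation relation via the convex-delay lifting $\GRelLift{-}$, which is what makes the operational computation $\GEval(M)$ and the denotation $\GInterpret{M}{}$ comparable. The clause of $\mathcal V$ at a recursive type $\RecTy\tau$ is the one that genuinely needs the guarded fixpoint, since $\GSVal{\RecTy\tau} \defeq \Later{\GSVal{\tau[\RecTy\tau/X]}}$, and the clause at function type is stated in terms of the computation relation in the usual way. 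The core of the development is the \emph{fundamental lemma}: every well-typed term is related to itself, i.e.\ for $\Gamma \vdash M : \sigma$ and environments related pointwise by $\mathcal V$, the computations $\GEval(M)$ and $\GInterpret{M}{}$ are related by $\GRelLift{\mathcal V_\sigma}$.

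I would prove the fundamental lemma by induction on $M$ inside an outer guarded recursion. The outer guarded fixpoint discharges the clauses where $\GEval$ and $\GInterpret{-}{}$ issue recursive calls under a $\GStep$ --- namely $\GEval(M\,N)$, $\GEval(\Case{L}{x.M}{y.N})$ and $\GEval(\Unfold{M})$ --- and here the fact that $\GEval$ inserts an extra $\GStep$ at $\Unfold{}$ (remarked on after Figure~\ref{fig:pfpc:eval}) is precisely what lines the operational steps up with the denotational step forced by $\GSVal{\RecTy\tau}$. All the compositional clauses are built from $\GBind$ and convex sums, and are handled by the bind lemma and the convex-sum congruence for $\GRelLift{-}$; the steps that $\GEval$ has in excess of $\GInterpret{-}{}$ are absorbed because the lifting is stable under the step reductions $\leadsto$ and $\limleadsto$. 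The value clauses ($\GEval(V) = \GDirac V$ related to $\GDirac$ of its denotation) reduce to the statement that every syntactic value relates to its semantic value, which is the base of the induction.

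I then specialise to a closed term $M : \Unit$. Both $\Val\Unit$ and $\GSVal\Unit = \SemUnit$ are singletons, so $\mathcal V_\Unit$ is, up to the evident isomorphism, the identity relation on a one-element set; the fundamental lemma therefore gives that $\GEval(M)$ and $\GInterpret{M}{}$ --- equivalently, after clock quantification, $\Eval\,M$ and $\Interpret{M}{}$ --- are related by $\GRelLift{\mathcal V_\Unit}$. It remains to read off the comparison of termination probabilities, which I would isolate as a basic property of $\GRelLift{-}$: whenever two computations are related by the lifting of the identity relation on return values, their termination-probability sequences are $\eqlim$-equal. That property follows by unfolding the guarded-recursive definition of $\GRelLift{-}$, using that the coupling witnessing relatedness equates the ``value mass'' $\probterm{0}{-}$ of the two sides, that the interleaved step reductions preserve limit termination probabilities (Lemma~\ref{leadsto:lem5}.\ref{leadsto:lem5-3} and Lemma~\ref{lemma:limleadsto-eqlim}), and that $\leqlim$ is a congruence for convex sums (Lemma~\ref{lemma:leqlim-conv}); since the lifting allows slack symmetrically on both sides, both $\leqlim$ inequalities come out and we conclude $\probtermseq{\Eval\,M} \eqlim \probtermseq{\Interpret{M}{}}$.

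I expect the main obstacle to be exactly this last ingredient: bridging the guarded-recursively defined lifting $\GRelLift{-}$ and the analytic relation $\eqlim$. This is where the bookkeeping of couplings interleaved with the $\leadsto$ and $\limleadsto$ reductions must really be made to work, and it is why the basic theory of $\GRelLift{-}$, $\leadsto$, $\limleadsto$ and $\eqlim$ has to be in place before the theorem can be discharged; the fundamental lemma itself and the triviality of $\mathcal V_\Unit$ are then routine.
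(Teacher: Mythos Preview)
Your outline gets one direction but not the other, and the sentence ``since the lifting allows slack symmetrically on both sides, both $\leqlim$ inequalities come out'' is where it breaks. The lifting $\GRelLift{\Rel}$ of Definition~\ref{def:liftRel} is deliberately \emph{asymmetric}: the left argument lives in $\GLiftDist{A}$ and is consumed by guarded recursion, while the right argument lives in $\LiftDist{B}$ and is the one that may absorb steps via $\leadsto$ and $\limleadsto$. Accordingly, Lemma~\ref{lem:eq1-probterm} only yields the implication $\mu \CRelLift{\eqrel} \nu \Rightarrow \probtermseq{\mu} \leqlim \probtermseq{\nu}$, a single inequality. The fundamental lemma (Lemma~\ref{lem:guarded-fundamental}) places the denotational semantics on the left and the operational semantics on the right, so it delivers $\probtermseq{\Interpret{M}{}} \leqlim \probtermseq{\Eval\,M}$ and nothing more. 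You cannot simply swap the two arguments, because the operational semantics has strictly more steps (at application and $\mathsf{case}$, not just $\mathsf{unfold}$), and those extra steps are only harmlessly absorbed when they sit on the $\LiftDist$ side.

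For the reverse inequality the paper takes a genuinely different route: it introduces an auxiliary denotational semantics $\SInterp{-}$ that inserts exactly the same steps as the operational semantics, proves a soundness theorem $\GLiftFun{\SGValInterpret{-}{}}(\GEval M) = \SGInterpret{M}{}$ (so $\probtermseq{\Eval\,M} = \probtermseq{\SInterp{M}{}}$), and then builds a \emph{second} logical relation, this time between the two denotational semantics, to obtain $\probtermseq{\SInterp{M}{}} \leqlim \probtermseq{\Interp{M}{}}$. Your plan is missing this entire detour; a single fundamental lemma for one asymmetric lifting cannot produce $\eqlim$.
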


The denotational semantics satisfies the following substitution lemma:
\begin{lemma}[Substitution Lemma]
  \label{lem:subst-lemma}
 For any well-typed term $\Gamma.(x:\sigma)\vdash M:\tau$ as well as every well typed value $\Gamma\vdash V:\sigma$ we have
 \[
 \GInterpret{M[V/x]}{\rho} \peq \GInterpret{M}{\rho.x\mapsto\GValInterpret{V}{\rho}}
 \]
\end{lemma}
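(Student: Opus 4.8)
The plan is to prove the statement by a plain structural induction on the term $M$, generalised over the context $\Gamma$ and the environment $\rho$ so that the induction hypothesis is also available for subterms living in an extended context. The observation that makes this straightforward is that, in contrast to the operational semantics $\GEval$ (cf.\ Figure~\ref{fig:pfpc:eval}), the denotational interpretation $\GInterpret{-}{-}$ of Figure~\ref{fig:den:sem} is defined by a \emph{genuine} structural recursion on terms: every recursive occurrence of $\GInterpret{-}{-}$ is on an immediate subterm, including in the $\Case{}{}{}$ and application cases, because bound variables are handled by the semantic environment rather than by syntactic substitution. Hence no guarded recursion, fuel, or $\GStep$-protection is needed here, and a single induction on $M$ suffices.

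Before the induction I would record two routine facts about environments, each proved by its own easy induction. First, a \emph{weakening} property: $\GInterpret{M}{\rho}$, and likewise $\GValInterpret{V}{\rho}$, depend only on the restriction of $\rho$ to the free variables of $M$ (resp.\ of $V$); in particular $\GValInterpret{V}{\rho.z\mapsto v} \peq \GValInterpret{V}{\rho}$ whenever $z$ is not free in $V$, which lets me re-type a value of $\Gamma$ as a value of $\Gamma.(z:\sigma')$ without changing its denotation. Second, an \emph{exchange} property: for distinct variables $x,z$ the environments $\rho.x\mapsto a.z\mapsto b$ and $\rho.z\mapsto b.x\mapsto a$ are equal (this may even hold definitionally, depending on how contexts are represented). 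Throughout I also use $\alpha$-conversion to assume bound variables are chosen fresh for $x$ and for the free variables of $V$, and the identity $\GInterpret{V}{\rho} \peq \GDirac(\GValInterpret{V}{\rho})$ noted after Figure~\ref{fig:den:sem}.

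The induction is then mostly mechanical. For $M \peq x$ we compute $\GInterpret{x[V/x]}{\rho} \peq \GInterpret{V}{\rho} \peq \GDirac(\GValInterpret{V}{\rho})$ and, on the other side, $\GInterpret{x}{\rho.x\mapsto\GValInterpret{V}{\rho}} \peq \GDirac\bigl((\rho.x\mapsto\GValInterpret{V}{\rho})(x)\bigr) \peq \GDirac(\GValInterpret{V}{\rho})$; the case $M \peq y$ with $y \neq x$ and the constants $\Star,\Numeral{n}$ are immediate since the substitution is trivial and the two environments agree on $y$ (resp.\ are irrelevant). All the non-binding term formers — $\Suc{}$, $\Inl{}$, $\Inr{}$, $\Fst{}$, $\Snd{}$, $\Fold{}$, $\Unfold{}$, pairing, application, $\Ifz{}{}{}$, and $\Choice{}{}{}$ — go through by congruence: substitution commutes with the former, the interpretation is a fixed function of the interpretations of the immediate subterms, and the induction hypotheses rewrite those (for application and $\Ifz{}{}{}$ one rewrites the denotations appearing inside the $\GBind$ continuation). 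The only cases with genuine content are the two binders $\Lam{M'}$ and $\Case{L}{z.M_1}{w.M_2}$: after pushing the substitution inside the binder using freshness, I apply the induction hypothesis to the body in the extended context, obtaining e.g.\ $\GInterpret{M'[V/x]}{\rho.z\mapsto v} \peq \GInterpret{M'}{(\rho.z\mapsto v).x\mapsto\GValInterpret{V}{\rho.z\mapsto v}}$, then use weakening to replace $\GValInterpret{V}{\rho.z\mapsto v}$ by $\GValInterpret{V}{\rho}$, and finally exchange to rewrite $(\rho.z\mapsto v).x\mapsto\GValInterpret{V}{\rho}$ as $(\rho.x\mapsto\GValInterpret{V}{\rho}).z\mapsto v$; an application of function extensionality together with congruence of $\GDirac$ (resp.\ of $\GBind$) then closes the case.

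I expect the only real obstacle to be exactly this interaction in the binder cases between capture-avoiding substitution under a binder and extension of the semantic environment; the weakening and exchange facts are precisely what is needed to reconcile the two, and once they are in place every case is routine. Since all types in sight are sets and all the equalities live in $\Prop$, no higher coherence data needs to be tracked.
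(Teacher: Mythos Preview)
Your proposal is correct and follows essentially the same approach as the paper's proof: a structural induction on $M$ with the binder cases handled via $\alpha$-renaming. You are somewhat more explicit than the paper in isolating the weakening and exchange facts needed in the $\Lam{-}$ and $\Case{}{}{}$ cases (the paper silently passes from $\GValInterpret{V}{\rho.z\mapsto v}$ to $\GValInterpret{V}{\rho}$), but the underlying argument is identical.
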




The following example presents a useful equality to work with denotational semantics of recursive
functions defined via a fixed point combinator:
\begin{example}
 Recall the fixed point operator $\YComb{}$ from Example~\ref{ex:Y:comb}. The denotational semantics
 satisfies the following equations
 \begin{align} 
	 \GInterpret{(\Lam[y]{M})\,x}{x\mapsto v} &\peq \GInterpret{M}{y\mapsto v} \label{eq:den:beta:red} \\
	 \GInterpret{\YComb{f}\,x}{x\mapsto v} &\peq (\GStep{}\circ \GNext)(\GInterpret{f(\YComb{f})\,x}{x\mapsto v}) \label{eq:den:Yf:open}
\end{align}
for any value $f$ and semantic value $v$, and where $x$ does not appear free in $M$. By applying the substitution lemma, we also get
   	\begin{align*}
		\GInterpret{(\Lam[y]{M})\,V}{} &\peq \GInterpret{M[V/x]}{} \\
		\GInterpret{\YComb{f}\,V}{} &\peq (\GStep{}\circ \GNext)(\GInterpret{f(\YComb{f})\,V}{})
	\end{align*}
\end{example}



\section{Couplings and Lifting relations}
\label{sec:couplings}
Our next goal is to define a logical relation between the syntax and semantics and use it to reason about contextual equivalence of $\pFPC$ terms. 
This is done by defining a value relation (of type $\GSVal\sigma \to \Val\sigma \to \Prop$) and 
an expression relation (of type $\GLiftDist{}{\GSVal\sigma} \to \LiftDist{}(\Val\sigma) \to \Prop$) by mutual recursion. 
We start by defining a relational lifting~\cite{katsumata2013preorders} operation mapping a relation
$\Rel : A \to B \to \Prop$ on values and to a relation $\GRelLift{\Rel} : \GLiftDist{A}\to \LiftDist{B} \to \Prop$ on computations.
This will be used to define the expression relation as the relational lifting of the value relation. 
As opposed to standard lifting constructions, that consider the same monad on both sides of the relation, there is an asymmetry in the type of $\GRelLift{\Rel}$.
The reason is that the logical relation will be defined by guarded recursion in the first argument, while using arbitrarily deep unfoldings of the term on the right-hand side.
This section defines the relational lifting construction and proves its basic properties.

The first step is lifting a relation $\Rel$ over $A \times B$ to a relation over $\Dist{A} \times \Dist{B}$.
Recall the notion of $\Rel$-coupling~\cite{lindvall_lectures_2002,DBLP:conf/popl/BartheGB09}, which achieves this precise end:
%
\begin{definition} \label{def:coupling}
 Let $\Rel : A \to B \to \Prop$, and let $\mu : \Dist A, \nu : \Dist B$ be finite distributions. An $\Rel$-coupling between $\mu$ and $\nu$ is a distribution on the total space of $\Rel$, \emph{i.e.}, $\rho : \Dist(\Sigma (a \!:\! A), (b \!: \!B) . a \Rel b)$, whose marginals are $\mu$ and $\nu$:
\begin{align*}
 \DistMarginalFst(\rho) & \peq \mu & \DistMarginalSnd(\rho) \peq \nu,
\end{align*}
where $\mathsf{pr}_1$ and $\mathsf{pr}_2$ are the projection maps $A \xleftarrow{\mathsf{pr}_1} A \times B \xrightarrow{\mathsf{pr}_2} B$.
We write $\CoupType\Rel\mu\nu$ for the type of $\Rel$-couplings between $\mu$ and $\nu$.
\end{definition}

In other words, an $\Rel$-coupling is a joint distribution over $A\times B$ with the property that $\Rel$ always holds for any pair of
values sampled from it.
Our definition of $\mu \GRelLift{\Rel} \nu$ will generalise this idea for pairs of computations, whose final
values might become available at different times.
Recall that by Theorem~\ref{thm:dist:sum:equiv} a distribution
$\mu : \GLiftDist{A} \equi \Dist(A + \Later(\GLiftDist A))$ must be either a distribution of values, one
of delayed computations or a convex combination of the two.
%
The relation $\mu \GRelLift{\Rel} \nu$ should then hold if (1) the values available
in $\mu$ now can be matched by $\nu$ after possibly some computation steps, and (2) the
delayed computation part of $\mu$ can be matched later, in a guarded recursive step. The matching
of values is done via an $\Rel$-coupling, and the computation steps are interpreted using $\limleadsto$. We choose $\limleadsto$ over the more straightforward  $\leadsto$ to allow the matching values for $\mu$ to be delivered in the limit of an infinite sequence of computation steps. By choosing $\limleadsto$, we just require that any fraction of it can be delivered in finite time. 
In the following definition, we leave the inclusions $\Dist(\inl)$ and $\Dist(\inr)$ implicit and simply write, e.g., $\mu : \Dist{A}$ for the first
case mentioned above.

\begin{definition}[Lifting of Relations]\label{def:liftRel}
Given a relation $\Rel : A \to B \to \Prop$ we define its lift $\GRelLift{\Rel} : \GLiftDist{A}\to \LiftDist{B} \to \Prop$ as:
$\mu \GRelLift{\Rel} \nu$ if one of the following three options is true:
\begin{enumerate}
\item\label{item:val}  $\mu : \Dist{A}$, there is a $\nu' : \Dist{B}$ such that $\nu \limleadsto \nu'$, and there is a $\rho: \CoupType\Rel{\mu}{\nu'}$. 
\item\label{item:delay} $\mu : \Dist(\Later{}\GLiftDist{A})$ and $\Later{(\alpha : \kappa)}\  (((\Gzeta (\mu))[\alpha]) \GRelLift{\Rel} \nu)$,\\
where $\Gzeta\, : \Dist(\Later{}\GLiftDist{A}) \rightarrow \Later{}\GLiftDist{A}$ is defined as
\begin{align*}
   \Gzeta (\Dirac{x}) & \defeq x & 
   \Gzeta (\DistChoice{p}{\mu}{\nu}) & \defeq \tabs{\tickA}{\kappa} \GChoice{p}{\tapp[\tickA]{\Gzeta(\mu)}}{\tapp[\tickA]{\Gzeta(\nu)}}
\end{align*}
\item\label{item:both} There exist $\mu_1 : \Dist{A}, \mu_2 : \Dist(\Later{}\GLiftDist{A})$, and $p \in \II$ such that $\mu = \DistChoice{p}{\mu_1}{\mu_2}$, there exist $\nu_1, \nu_2 : \LiftDist{B}$ such that $\nu \limleadsto \CChoice{p}{\nu_1}{\nu_2}$, and $\mu_1 \GRelLift{\Rel} \nu_1$ and $\mu_2 \GRelLift{\Rel} \nu_2$.
\end{enumerate}
We then define $\CRelLift{\Rel} : \LiftDist{A} \to \LiftDist{B} \to \Prop$ as:
\[
\mu \CRelLift{\Rel} \nu \defeq \forall \kappa . (\mu[\kappa] \GRelLift{\Rel} \nu) .
\]
\end{definition}
The following is a consequence of the encoding of coinductive types using guarded recursion.

\begin{lemma} \label{lem:CRelLift:coinductive}
 If $A$ is clock irrelevant, then
 $\CRelLift{\Rel}$ is the coinductive solution to the equation that $\mu \CRelLift{\Rel} \nu$ if and only if
 one of the following:
 \begin{enumerate}
\item\label{item:Cval}  $\mu : \Dist{A}$, there is a $\nu' : \Dist{B}$ such that $\nu \limleadsto \nu'$, and there is a $\rho: \CoupType\Rel{\mu}{\nu'}$.
\item\label{item:Cdelay}  $\mu : \Dist(\LiftDist{A})$ and $\run(\mu) \CRelLift{\Rel} \nu$.
\item\label{item:Cboth} There exist $\mu_1 : \Dist{A}, \mu_2 : \Dist(\LiftDist{A})$, and $p : \II$ such that
$\mu \peq \DistChoice{p}{\mu_1}{\mu_2}$, there exist $\nu_1, \nu_2 : \LiftDist{B}$ such that $\nu \limleadsto \CChoice{p}{\nu_1}{\nu_2}$, and $\mu_1 \CRelLift{\Rel} \nu_1$ and $\mu_2 \CRelLift{\Rel} \nu_2$.
\end{enumerate}
\end{lemma}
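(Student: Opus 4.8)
The plan is to derive the stated characterization of $\CRelLift{\Rel}$ from the general theory of encoding coinductive types via clock quantification, which was recalled in Section~\ref{sec:cctt}. Recall that $\GRelLift{\Rel}$ was defined by guarded recursion, so unfolding its fixed-point equation (using \eqref{eq:fix:unfold}) gives that $\mu \GRelLift{\Rel} \nu$ holds iff one of the three clauses in Definition~\ref{def:liftRel} holds, where the recursive occurrences in clauses \eqref{item:delay} and \eqref{item:both} sit under a $\later^\kappa$. Writing $\Phi(R)(\mu,\nu)$ for the clause-wise disjunction as a functor in the relation argument $R : \GLiftDist{A} \to \LiftDist{B} \to \Prop$ applied after a $\later^\kappa$, the point is that $\GRelLift{\Rel}$ is (equivalent to) $\nu^\kappa(\Phi)$, and $\CRelLift{\Rel} = \forall\kappa.\,\GRelLift{\Rel}$ is its clock-quantified version.

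First I would set up the functor $F$ on $\Prop$-valued relations $\LiftDist{A} \to \LiftDist{B} \to \Prop$ whose coinductive solution is claimed, namely $F(S)(\mu,\nu)$ being the disjunction of clauses \eqref{item:Cval}, \eqref{item:Cdelay}, \eqref{item:Cboth}, and check it is a genuine functor (monotone in $S$, acting on implications), which is routine since $S$ appears only positively. Then I would invoke the general principle from Section~\ref{sec:cctt}: a functor commuting with clock quantification has final coalgebra $\nu(F) \defeq \forall\kappa.\,\nu^\kappa(F)$. So the task reduces to two things: (a) identifying $\forall\kappa.\,\GRelLift{\Rel}$ with $\forall\kappa.\,\nu^\kappa(F')$ for the appropriate guarded functor $F'$, and (b) showing that the guarded recursive unfolding, once clock-quantified, turns into exactly clauses \eqref{item:Cval}--\eqref{item:Cboth}. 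For (b) the key technical moves are: clock quantification commutes with $\Dist$ (Proposition~\ref{prop:LiftDist} and its proof), with $+$ (stated in Section~\ref{sec:cctt}), and with $\Sigma$-types over clock-irrelevant data, so that $\forall\kappa.(\mu[\kappa] : \Dist A)$ genuinely means $\mu$ lies in the $\Dist A$ summand of $\LiftDist{A} \equi \Dist(A + \LiftDist A)$, and similarly for the other summands; and the crucial identity $\forall\kappa.\,\later^\kappa X \equi \forall\kappa.\,X$, which collapses the delayed recursive occurrence in clause \eqref{item:delay}/\eqref{item:both} to an honest recursive occurrence of $\CRelLift{\Rel}$. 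I also need the equation relating $\Gzeta$ under clock quantification to $\run$: quantifying $\Gzeta(\mu)$ over clocks and applying the tick yields precisely $\run$ applied to the coinductive distribution, because $\Gzeta$ pushes $\later$ out of a $\Dist$ of delayed computations exactly as $\run$ strips one $\CStep$ layer — this is where clause \eqref{item:delay} becomes clause \eqref{item:Cdelay}.

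Concretely the steps, in order, are: (1) state the functor $F$ and verify it commutes with clock quantification, reducing everything to showing $\Dist$, $+$, and the relevant $\Sigma$-types do — citing the arguments already used for Proposition~\ref{prop:LiftDist}; (2) unfold $\GRelLift{\Rel}$'s guarded fixed point to get the three-clause disjunction with recursive calls under $\later^\kappa$; (3) apply $\forall\kappa$ to this unfolding and push the quantifier inside the disjunction, the $\Dist$s, the $+$s, and the $\Sigma$ over $\CoupType{\Rel}{-}{-}$ (using clock-irrelevance of $A$, of $\II$, and that $\Dist$ preserves clock-irrelevance, plus that $\limleadsto$ and $\leadsto$ are $\Prop$-valued hence clock-irrelevant and do not mention $\kappa$ since they live on $\LiftDist{B}$); (4) use $\forall\kappa.\,\later^\kappa X \equi \forall\kappa.\,X$ to turn $\later^\kappa(\dots \GRelLift{\Rel} \dots)$ into $\dots \CRelLift{\Rel} \dots$; (5) identify the clock-quantified $\Gzeta$-expression with $\run$ to land clause \eqref{item:delay} onto clause \eqref{item:Cdelay}; (6) conclude that $\CRelLift{\Rel}$ satisfies the displayed equation, and that by the final-coalgebra property it is \emph{the} coinductive solution.

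The main obstacle I expect is step (3)--(5): carefully commuting $\forall\kappa$ past the nested type formers in clause \eqref{item:delay}, in particular making precise that $\forall\kappa.\,\Dist(\later^\kappa \GLiftDist{A})$ is equivalent to $\Dist(\LiftDist{A})$ and that under this equivalence the operation $\nu \mapsto \Lambda\kappa.\,\tapp{(\Gzeta(\nu[\kappa]))}$ corresponds to $\run$. This requires knowing that $\Dist$ commutes with clock quantification \emph{and} with $\later^\kappa$ in a way compatible with the $\Dist(\inl)/\Dist(\inr)$ decomposition of Theorem~\ref{thm:dist:sum:equiv}, and checking the $\Gzeta$/$\run$ correspondence by $\Dist$-induction. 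Everything else — monotonicity of $F$, the disjunction commuting with $\forall\kappa$ (which needs that each disjunct, being built from clock-irrelevant pieces, commutes, and that the finite disjunction itself does), and reflexivity/transitivity bookkeeping — is routine given the machinery already in the excerpt.
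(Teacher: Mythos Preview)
Your proposal is correct and follows essentially the same approach as the paper: invoke the general encoding of indexed coinductive types via clock quantification, verify the functor commutes with $\forall\kappa$ by pushing it past $\Dist$, $+$, $\Sigma$, and propositional truncation (using clock-irrelevance of $A$, $\II$, and propositions), and identify the clock-quantified $\Gzeta$ with $\run$ via $\Gzeta(\capp\mu) = \nextop(\capp{\run(\mu)})$. The paper organizes this slightly differently---it first constructs the guarded predicate $S^\kappa$ on $\LiftDist{A}\times\LiftDist{B}$ directly from the functor and then proves $S^\kappa(\mu,\nu) \Leftrightarrow (\capp\mu) \GRelLift{\Rel} \nu$ by guarded recursion, using Theorem~\ref{thm:dist:sum:equiv} for the reverse direction---but the content and the key technical moves you identified are the same.
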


The following lemma allows us to use existence of a lifting to prove an inequality between probabilities of termination, which will be useful when reasoning about contextual refinement
\begin{lemma}
\label{lem:eq1-probterm}
Let $\mu, \nu : \LiftDist{1}$, and let $\eqrel : 1 \to 1 \to \Prop$ be the identity relation, relating the unique element to itself. Then,
\[
\mu \CRelLift{\eqrel} \nu \Rightarrow \probtermseq{\mu} \leqlim \probtermseq{\nu}.
\]
\end{lemma}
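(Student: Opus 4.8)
The plan is to prove the statement by coinduction on the structure given by Lemma~\ref{lem:CRelLift:coinductive}, combined with the basic facts about $\leqlim$ and $\limleadsto$ established in Section~\ref{sec:leadsto}. Concretely, I would show the stronger-looking (but equivalent) claim that for all $\mu,\nu : \LiftDist{1}$, if $\mu \CRelLift{\eqrel} \nu$ then $\probtermseq{\mu} \leqlim \probtermseq{\nu}$, and proceed by case analysis on which of the three clauses of Lemma~\ref{lem:CRelLift:coinductive} witnesses $\mu \CRelLift{\eqrel} \nu$. Since $\leqlim$ is an inequality between sequences and the recursion in clause~\ref{item:Cdelay} strips a $\run$, the argument needs guarded recursion: I would assume $\Later$ of the statement (passing to the guarded level via the clock quantifier in the definition of $\CRelLift{}$, or directly using the guarded relation $\GRelLift{}$ and an outer $\Lambda\kappa$/$\fix$) and discharge each case.

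The three cases go as follows. In case~\ref{item:Cval}, $\mu : \Dist{1}$ consists only of values, so $\probterm{0}{\mu} = 1 = \probterm{n}{\mu}$ for all $n$ (every summand is $\CDirac\star$), hence $\probtermseq{\mu}$ is constantly $1$; meanwhile a $\eqrel$-coupling between $\mu$ and $\nu'$ forces $\nu' : \Dist{1}$ to be a distribution of values as well, so $\probterm{0}{\nu'} = 1$, and from $\nu \limleadsto \nu'$ Lemma~\ref{lemma:limleadsto-eqlim} gives $\probtermseq{\nu} \eqlim \probtermseq{\nu'}$, whence $\probtermseq{\mu} = \probtermseq{\nu'} \eqlim \probtermseq{\nu}$; one then extracts $\leqlim$ from $\eqlim$. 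In case~\ref{item:Cdelay}, $\mu = \Dirac{\inr(d)}$-style, i.e.\ $\mu : \Dist(\LiftDist{1})$, so $\probterm{0}{\mu} = 0$ and $\probterm{n+1}{\mu} = \probterm{n}{\run(\mu)}$; the coinductive hypothesis applied to $\run(\mu) \CRelLift{\eqrel} \nu$ gives $\probtermseq{\run(\mu)} \leqlim \probtermseq{\nu}$, and since $\probterm{n+1}{\mu} \le \probterm{n}{\run(\mu)}$ (actually equal) and $\probterm{0}{\mu} = 0 \le \text{anything}$, the $\leqlim$ inequality lifts to $\mu$ — here one uses that a witness $m$ for $\probterm{n}{\run(\mu)}$ also serves, after a shift, for $\probterm{n+1}{\mu}$. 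In case~\ref{item:Cboth}, $\mu \peq \CChoice{p}{\mu_1}{\mu_2}$ with $\mu_i \CRelLift{\eqrel} \nu_i$ and $\nu \limleadsto \CChoice{p}{\nu_1}{\nu_2}$; by the coinductive hypothesis $\probtermseq{\mu_i} \leqlim \probtermseq{\nu_i}$, so by Lemma~\ref{lemma:leqlim-conv} $\probtermseq{\CChoice{p}{\mu_1}{\mu_2}} = p\cdot\probtermseq{\mu_1} + (1-p)\cdot\probtermseq{\mu_2} \leqlim p\cdot\probtermseq{\nu_1} + (1-p)\cdot\probtermseq{\nu_2} = \probtermseq{\CChoice{p}{\nu_1}{\nu_2}}$, and finally Lemma~\ref{lemma:limleadsto-eqlim} with $\nu \limleadsto \CChoice{p}{\nu_1}{\nu_2}$ gives $\probtermseq{\CChoice{p}{\nu_1}{\nu_2}} \eqlim \probtermseq{\nu}$, so $\probtermseq{\mu} \leqlim \probtermseq{\nu}$ by transitivity of $\leqlim$. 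I would also need the small observation that $\probterm{n}{-}$ respects the convex-algebra structure, i.e.\ $\probtermseq{\CChoice{p}{\mu_1}{\mu_2}} = p\cdot\probtermseq{\mu_1} + (1-p)\cdot\probtermseq{\mu_2}$, which is immediate from the defining equations of $\probterm{0}{}$ and $\run$.

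The main obstacle I anticipate is bookkeeping around the mismatch between the guarded recursion and the indexing of $\probterm{n}{}$: in case~\ref{item:Cdelay} the recursive call is available only "later", but $\probterm{n+1}{\mu}$ already needs information about $\run(\mu)$ at index $n$, so one must be careful that the guarded fixed point is set up so that the $\Later$ is consumed exactly when we move from index $n+1$ to $n$ — equivalently, it is cleaner to do the induction on the natural-number index $n$ as the outer induction and use the coinductive unfolding of $\CRelLift{\eqrel}$ inside, rather than a naive guarded $\fix$ on the whole statement. A secondary subtlety is that clauses~\ref{item:Cval} and \ref{item:Cboth} invoke $\limleadsto$ (not $\leadsto$), so the passage from $\nu'$ (resp.\ $\CChoice{p}{\nu_1}{\nu_2}$) back to $\nu$ must go through Lemma~\ref{lemma:limleadsto-eqlim}; since that lemma only yields $\eqlim$ on the limit sequences, all the inequalities must be assembled at the level of $\leqlim$/$\eqlim$ and one should not expect pointwise inequalities $\probterm{n}{\mu}\le\probterm{n}{\nu}$ to hold — only the limit comparison survives. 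Once these two points are handled, each case is a short combination of the cited lemmas.
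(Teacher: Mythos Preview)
Your self-diagnosis in the third paragraph is exactly right, and the fix you propose there --- outer induction on the index $n$, with the coinductive unfolding of $\CRelLift{\eqrel}$ used only to case-split inside each step --- is precisely what the paper does. The guarded-recursion framing of your first two paragraphs cannot be salvaged as written: the goal $\probtermseq{\mu}\leqlim\probtermseq{\nu}$ is a plain proposition with no delay-algebra structure, so a hypothesis of the form $\Later(\text{goal})$ is inert. Concretely, in case~\ref{item:Cdelay} a guarded fixed point would only hand you $\Later(\probtermseq{\run(\mu)}\leqlim\probtermseq{\nu})$, and there is no way to extract from that the witness $m$ required for a given $n$ and $\epsilon$.

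The paper first disposes of case~\ref{item:Cval} once and for all (both sides have constant termination probability $1$, then apply Lemma~\ref{lemma:limleadsto-eqlim}), and then proves ``for all $\epsilon>0$ there exists $m$ with $\probterm{n}{\mu}\le\probterm{m}{\nu}+\epsilon$'' by induction on $n$, case-splitting on clauses~\ref{item:Cdelay} and~\ref{item:Cboth}. One adjustment your case~\ref{item:Cboth} sketch needs under this scheme: you cannot appeal to a single ``hypothesis'' uniformly for both $\mu_1$ and $\mu_2$. Since $\mu_1:\Dist{1}$, the bound for $\mu_1$ comes from the already-established case~\ref{item:Cval}; for $\mu_2:\Dist(\LiftDist{1})$ you use that $\run(\mu_2)\CRelLift{\eqrel}\nu_2$ (the content of clause~\ref{item:Cdelay}) and invoke the inductive hypothesis at $n-1$. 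The two bounds are then combined exactly as you describe, via the convex-sum identity for $\probtermseq{-}$, Lemma~\ref{lemma:leqlim-conv}, and Lemma~\ref{lemma:limleadsto-eqlim}, with an $\epsilon/2$ split. Your choice of auxiliary lemmas and your remark that only the limit comparison (not pointwise inequalities) survives are both on target.
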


In the remainder of this section, we will prove some useful reasoning principles for working with $\GRelLift{\Rel}$
at a more abstract level, without the need for unfolding the definition.
These are summarized in Figure~\ref{fig:rules:rel:lift}, where the double bar indicates that the rule
is bidirectional.

\begin{figure}[h]
\begin{mathpar}
  \inferrule{\nu \leadsto \nu' \and \mu \GRelLift{\Rel} \nu}{\mu \GRelLift{\Rel} \nu'}
  \and
  \inferrule{\nu \leadsto \nu' \and \mu \GRelLift{\Rel} \nu'}{\mu \GRelLift{\Rel} \nu}
  \and
  \inferrule{\nu \limleadsto \nu' \and \mu \GRelLift{\Rel} \nu'}{\mu \GRelLift{\Rel} \nu}
  \and
  {\mprset { fraction ={===}}
	\inferrule{\GChoice{p}{(\GStep \mu_1)}{(\GStep \mu_2)} \GRelLift{\Rel} \nu}
	{\GStep(\tabs{\tickA}{\kappa} \GChoice{p}{(\tapp[\tickA]{\mu_1})}{(\tapp[\tickA]{\mu_2})})  \GRelLift{\Rel} \nu}}
  \and
  {\mprset { fraction ={===}}
	\inferrule{\mu \GRelLift{\Rel} \CStep(\CChoice p {\nu_1}{\nu_2})}
	{\mu \GRelLift{\Rel} \CChoice p {(\CStep(\nu_1))}{(\CStep(\nu_2))}}}
  \and
  \inferrule{\mu_1 \GRelLift{\Rel} \nu_1 \and \mu_2 \GRelLift{\Rel} \nu_2}
  {(\GChoice{p}{\mu_1}{\mu_2}) \GRelLift{\Rel} (\CChoice{p}{\nu_1}{\nu_2})}
  \and
  \inferrule{\mu \GRelLift{\Rel} \uext \nu \and \forall a,b. a \Rel b \rightarrow f(a) \GRelLift{\mathcal{S}} g(b)}
  {\uext f(\mu) \GRelLift{\mathcal{S}}  \uext g(\nu)}
  \end{mathpar}

  \caption{Reasoning principles for $\GRelLift{\Rel}$}
\label{fig:rules:rel:lift}
\end{figure}

First we show that $\GRelLift{\Rel}$ is invariant with respect to $\leadsto$ and $\limleadsto$-reductions on its second
argument.

\begin{lemma}\label{lem:prereqs:leadsto-liftrel}
 If $\nu \leadsto \nu'$ then $\mu \GRelLift{\Rel} \nu$ iff $\mu \GRelLift{\Rel} \nu'$.
 In particular $\mu \GRelLift{\Rel} (\CStep{}\nu)$ iff $\mu\GRelLift{\Rel} \nu$.
\end{lemma}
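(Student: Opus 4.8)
The plan is to prove the biconditional $\mu \GRelLift{\Rel} \nu \iff \mu \GRelLift{\Rel} \nu'$ (for $\nu \leadsto \nu'$) by guarded recursion on the definition of $\GRelLift{\Rel}$, handling the two directions somewhat differently. For the direction $\nu \leadsto \nu'$, $\mu \GRelLift{\Rel} \nu' \Rightarrow \mu \GRelLift{\Rel} \nu$: this is the easy direction and does not even need induction. Whichever of the three clauses of Definition~\ref{def:liftRel} witnesses $\mu \GRelLift{\Rel} \nu'$, it mentions $\nu'$ only through an assertion $\nu' \limleadsto (\text{something})$ (in clauses~\ref{item:val} and~\ref{item:both}); clause~\ref{item:delay} does not mention the right-hand side beyond the recursive call. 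So from $\nu \leadsto \nu'$ and Lemma~\ref{lemma:limleadsto-leadsto-interact}.\ref{limleadsto-leadsto-interact2} (a $\leadsto$ followed by a $\limleadsto$ gives a $\limleadsto$) we get $\nu \limleadsto (\text{the same something})$, and the same witnessing data proves $\mu \GRelLift{\Rel} \nu$. In clause~\ref{item:delay} we just reuse the hypothesis verbatim.

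For the converse direction, $\nu \leadsto \nu'$, $\mu \GRelLift{\Rel} \nu \Rightarrow \mu \GRelLift{\Rel} \nu'$, the argument is by induction on the derivation of $\nu \leadsto \nu'$ (using the fuelled presentation $\leadsto^n$ from Remark~\ref{rem:leadsto:def}, i.e.\ induction on the fuel), with an inner case analysis on the clause of Definition~\ref{def:liftRel} witnessing $\mu \GRelLift{\Rel} \nu$. The reflexivity rule $\nu \leadsto \nu$ is trivial, and transitivity follows by composing the two induced implications. The genuine cases are $\CStep \nu' \leadsto \nu'$ and the congruence rule $\CChoice{p}{\nu_1}{\nu_2} \leadsto \CChoice{p}{\nu_1'}{\nu_2'}$. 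In each, I expect to argue: the witnessing clause for $\mu \GRelLift{\Rel} \nu$ asserts $\nu \limleadsto w$ for some $w$ (clauses~\ref{item:val},~\ref{item:both}), and I need $\nu' \limleadsto w$. Here one uses that if $\nu \leadsto \nu'$ then any approximate reduction out of $\nu$ can be pushed to one out of $\nu'$ — but this is precisely the subtle point, since $\limleadsto$ is only an \emph{approximate} reduction and $\leadsto$ may have already consumed some of the $\CStep$s that $\limleadsto w$ was relying on. The clean way around this is to observe (from Corollary~\ref{cor:confluence} / Lemma~\ref{lemma:leadsto-run}.\ref{lemma:leadsto-run3}) that $\nu \leadsto \nu'$ implies $\nu' \leadsto \run^n \nu$ for some $n$, and conversely $\nu \leadsto \run^n\nu \leadsto \ldots$; combined with the fact (Lemma~\ref{lemma:limleadsto-leadsto-interact}.\ref{limleadsto-leadsto-interact3}) that a $\limleadsto$ followed by a $\leadsto$ is still a $\limleadsto$, and unfolding the definition of $\limleadsto$ as "for all $p$ there is $\nu''$ with $\nu \leadsto \CChoice{p}{\cdot}{\nu''}$", one matches up the reductions by running the relevant branch long enough. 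For clause~\ref{item:delay}, $\mu : \Dist(\Later \GLiftDist A)$ and the recursive hypothesis $\Later(\alpha:\kappa).\,((\Gzeta(\mu))[\alpha] \GRelLift{\Rel} \nu)$ must be transported along $\nu \leadsto \nu'$; this is exactly where the guarded recursion hypothesis for the lemma itself is applied, under the $\Later$, using tick abstraction.

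The main obstacle I anticipate is the bookkeeping in the congruence case of clause~\ref{item:both}: there $\nu \limleadsto \CChoice{p}{\nu_1}{\nu_2}$ and, because $\limleadsto$ only approximates, turning this into $\nu' \limleadsto \CChoice{p}{\nu_1}{\nu_2}$ after $\nu$ has already stepped requires carefully re-deriving the "arbitrary fraction" witnesses. I would isolate the needed fact as a small sublemma — essentially "$\nu \leadsto \nu'$ and $\nu \limleadsto w$ imply $\nu' \limleadsto w$" — which should follow directly from Lemma~\ref{lemma:leadsto-run}.\ref{lemma:leadsto-run3} (giving $\nu' \leadsto \run^n\nu$), Lemma~\ref{lemma:leadsto-run}.\ref{lemma:leadsto-run1}, confluence, and the interaction lemmas for $\limleadsto$ and $\leadsto$; indeed this sublemma is close to being a restatement of parts of Lemma~\ref{lemma:limleadsto-leadsto-interact} plus confluence. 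Once that sublemma is in hand, both directions of the biconditional are short, and the "in particular" clause about $\CStep \nu$ is the instance $\nu \leadsto \nu$ under $\CStep \nu \leadsto \nu$, i.e.\ an immediate consequence of the rule $\CStep\nu \leadsto \nu$.
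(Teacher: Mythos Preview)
Your key sublemma --- ``$\nu \leadsto \nu'$ and $\nu \limleadsto w$ imply $\nu' \limleadsto w$'' --- is false in general. Counterexample: take $w = \CStep(\CDirac b)$, $\nu = \CStep w$, $\nu' = \CDirac b$. Then $\nu \leadsto \nu'$ and $\nu \limleadsto w$ (since $\nu \leadsto w$), but $\nu' \not\limleadsto w$: for any $p \in \II$ and any $z$ we have $\probterm{0}{\CChoice{p}{\CStep(\CDirac b)}{z}} \leq 1 - p < 1 = \probterm{0}{\CDirac b}$, so Lemma~\ref{leadsto:lem5}(\ref{leadsto:lem5-2}) rules out $\CDirac b \leadsto \CChoice{p}{w}{z}$. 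The underlying problem is that $\nu'$ may already have run \emph{past} the steps present in $w$, and $\leadsto$ never re-introduces steps.

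What Lemma~\ref{lemma:leadsto-run}(\ref{lemma:leadsto-run3}) actually yields is $\nu' \leadsto \run^n \nu$ for some $n$, and pushing $\run^n$ through the witnesses of $\nu \limleadsto w$ gives only $\nu' \limleadsto \run^n w$. For clause~\ref{item:val} this is enough, since there $w : \Dist B$ and $\run^n w = w$. For clause~\ref{item:both}, however, $w = \CChoice{p}{\nu_1}{\nu_2}$ and you obtain $\nu' \limleadsto \CChoice{p}{\run^n \nu_1}{\run^n \nu_2}$: the witnesses have shifted, so you must now establish $\mu_1 \GRelLift{\Rel} \run^n \nu_1$ and $\mu_2 \GRelLift{\Rel} \run^n \nu_2$. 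These follow from $\nu_i \leadsto \run^n \nu_i$ (Lemma~\ref{lemma:leadsto-run}(\ref{lemma:leadsto-run1})) together with the already-proved clauses~\ref{item:val} and~\ref{item:delay} of the very lemma being proved. The paper's proof is organised exactly this way --- no induction on $\leadsto$ at all, just case analysis on the three clauses, guarded recursion for clause~\ref{item:delay}, and the $\run^n$ argument for the other two. (Your outer induction on $\leadsto$ would not rescue clause~\ref{item:both} either: in the congruence case the probability in $\nu = \CChoice{p}{\nu_1}{\nu_2}$ is unrelated to the one appearing in the clause~\ref{item:both} decomposition of $\mu$, so the induction hypothesis on $\nu_i \leadsto \nu_i'$ never engages.)

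A smaller slip: in the right-to-left direction, clause~\ref{item:delay} does not let you ``reuse the hypothesis verbatim''. The hypothesis there is $\latbind\alpha\kappa((\Gzeta(\mu))[\alpha] \GRelLift{\Rel} \nu')$ and you need it with $\nu$ in place of $\nu'$; this also requires the guarded recursion hypothesis, just as in the other direction.
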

\begin{proof}
Left to right follows from Lemma~\ref{lemma:leadsto-run}, and right to left is by Lemma~\ref{lemma:limleadsto-leadsto-interact}.
\end{proof}

\begin{lemma}\label{lem:prereqs:limleadsto-liftrel}
If $\nu \limleadsto \nu'$ and $\mu \GRelLift{\Rel} \nu'$, then also $\mu \GRelLift{\Rel} \nu$.
\end{lemma}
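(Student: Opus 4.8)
The plan is to establish this by guarded (L\"ob) recursion, proving the whole implication at once and then case-splitting on which of the three clauses of Definition~\ref{def:liftRel} witnesses the hypothesis $\mu \GRelLift{\Rel} \nu'$. Recall that by Theorem~\ref{thm:dist:sum:equiv} every $\mu : \GLiftDist A$ has exactly one of the three shapes appearing in those clauses, so this case split is exhaustive, and since the goal $\mu \GRelLift{\Rel}\nu$ is a proposition, eliminating the disjunction $\mu\GRelLift{\Rel}\nu'$ is allowed. Concretely, I would assume a guarded hypothesis $\later^\kappa$ of the lemma, fix $\mu,\nu,\nu'$ with $\nu\limleadsto\nu'$ and $H : \mu\GRelLift{\Rel}\nu'$, and analyse $H$.

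In the value clause~(\ref{item:val}), $\mu:\Dist A$ and there is some $\nu'':\Dist B$ with $\nu'\limleadsto\nu''$ and a coupling $\rho:\CoupType\Rel\mu{\nu''}$. Then transitivity of $\limleadsto$ (Lemma~\ref{lemma:limleadsto-properties}) gives $\nu\limleadsto\nu''$, so clause~(\ref{item:val}) of Definition~\ref{def:liftRel} applies with the same $\nu''$ and $\rho$. The convex-combination clause~(\ref{item:both}) is handled identically: from $\mu = \DistChoice p{\mu_1}{\mu_2}$, witnesses $\nu_1,\nu_2$ with $\nu'\limleadsto\CChoice p{\nu_1}{\nu_2}$, and $\mu_1\GRelLift{\Rel}\nu_1$, $\mu_2\GRelLift{\Rel}\nu_2$, transitivity gives $\nu\limleadsto\CChoice p{\nu_1}{\nu_2}$ and clause~(\ref{item:both}) applies with the same data. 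Neither of these cases uses the guarded hypothesis.

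The delay clause~(\ref{item:delay}) is the one where guarded recursion is essential. Here $\mu:\Dist(\later^\kappa\GLiftDist A)$ and $H$ gives $\Later{(\alpha : \kappa)}(((\Gzeta(\mu))[\alpha])\GRelLift{\Rel}\nu')$, while the goal unfolds, again by clause~(\ref{item:delay}), to $\Later{(\alpha : \kappa)}(((\Gzeta(\mu))[\alpha])\GRelLift{\Rel}\nu)$. Under a tick $\alpha:\kappa$ I have $((\Gzeta(\mu))[\alpha])\GRelLift{\Rel}\nu'$ from $H$, the (clock-free, hence still available) hypothesis $\nu\limleadsto\nu'$, and the guarded hypothesis applied at $\alpha$; instantiating the latter at $(\Gzeta(\mu))[\alpha]$, $\nu$, $\nu'$ yields $((\Gzeta(\mu))[\alpha])\GRelLift{\Rel}\nu$, and tick-abstracting over $\alpha$ closes the goal.

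The only genuinely delicate point I anticipate is the bookkeeping in the delay case: checking that the guarded hypothesis is applied at the right ``one step smaller'' element, namely $(\Gzeta(\mu))[\alpha]$ beneath the tick, and that $\nu\limleadsto\nu'$, which does not mention $\kappa$, may legitimately be used underneath $\later^\kappa$. Everything else reduces to transitivity of $\limleadsto$ and unfolding Definition~\ref{def:liftRel}.
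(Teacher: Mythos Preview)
Your proposal is correct and matches the paper's proof essentially line for line: the paper also proceeds by guarded recursion, case-splitting on the three clauses of Definition~\ref{def:liftRel}, handling clauses~(\ref{item:val}) and~(\ref{item:both}) by transitivity of $\limleadsto$ (Lemma~\ref{lemma:limleadsto-properties}) and clause~(\ref{item:delay}) by applying the guarded hypothesis under the tick. Your remarks about the bookkeeping in the delay case are accurate and the paper leaves those details implicit.
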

\begin{proof}
This follows from transitivity of $\limleadsto$.
\end{proof}

The lemma below allows us to commute computation steps with probabilistic choices
on either side of a lifted relation:
\begin{lemma}\label{lem:prereqs:step-choice}
Let $\Rel : A \to B \to \Prop$.
\begin{enumerate}
\item If  $\mu_1,\mu_2 : \Later\GLiftDist{A}$, $\nu : \LiftDist{B}$ then 
\[ \GChoice{p}{(\GStep \mu_1)}{(\GStep \mu_2)} \GRelLift{\Rel} \nu \text{ iff  } \GStep(\tabs{\tickA}{\kappa} \GChoice{p}{(\tapp[\tickA]{\mu_1})}{(\tapp[\tickA]{\mu_2})}) \GRelLift{\Rel} \nu. \]
\item If $\mu : \GLiftDist{A}$ and $\nu_1, \nu_2 : \LiftDist{B}$ then $\mu \GRelLift{\Rel} \CStep(\CChoice p {\nu_1}{\nu_2})$ iff
$\mu \GRelLift{\Rel} \CChoice p {(\CStep(\nu_1))}{(\CStep(\nu_2))}$.
\end{enumerate}
\end{lemma}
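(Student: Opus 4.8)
The plan is to handle the two parts independently. Part~(1) comes down to unfolding Definition~\ref{def:liftRel} and a one-line computation with $\Gzeta$; part~(2) is a formal consequence of the $\leadsto$-invariance already proved in Lemma~\ref{lem:prereqs:leadsto-liftrel}.

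For part~(1), I would first note that, since $\GStep(a) = \Dirac{\inr(a)}$, both sides are of the form $\Dist(\inr)(\sigma)$ for some $\sigma : \Dist(\Later\GLiftDist{A})$ --- using naturality of $\Dirac$ and that $\Dist(\inr)$ is a convex-algebra homomorphism --- namely $\sigma_L \defeq \DistChoice{p}{\Dirac{\mu_1}}{\Dirac{\mu_2}}$ on the left and $\sigma_R \defeq \Dirac{(\tabs{\tickA}{\kappa}\GChoice{p}{\tapp[\tickA]{\mu_1}}{\tapp[\tickA]{\mu_2}})}$ on the right. By the uniqueness of the decomposition in Theorem~\ref{thm:dist:sum:equiv}, a distribution of this form cannot meet the shape requirements of cases~\ref{item:val} or~\ref{item:both} of Definition~\ref{def:liftRel}, so for both sides $(-)\GRelLift{\Rel}\nu$ is equivalent to case~\ref{item:delay}, i.e. to $\Later{(\alpha:\kappa)}\,(\Gzeta(\sigma)[\alpha]\GRelLift{\Rel}\nu)$. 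Hence it suffices to check $\Gzeta(\sigma_L)\peq\Gzeta(\sigma_R)$; but the recursion equations for $\Gzeta$ give $\Gzeta(\sigma_L) = \tabs{\tickA}{\kappa}\GChoice{p}{\tapp[\tickA]{\Gzeta(\Dirac{\mu_1})}}{\tapp[\tickA]{\Gzeta(\Dirac{\mu_2})}} = \tabs{\tickA}{\kappa}\GChoice{p}{\tapp[\tickA]{\mu_1}}{\tapp[\tickA]{\mu_2}}$ and $\Gzeta(\sigma_R) = \tabs{\tickA}{\kappa}\GChoice{p}{\tapp[\tickA]{\mu_1}}{\tapp[\tickA]{\mu_2}}$, which coincide. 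The biconditional follows.

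For part~(2), I would avoid unfolding $\GRelLift{\Rel}$ entirely. From the rules defining $\leadsto$ we get $\CStep(\CChoice{p}{\nu_1}{\nu_2}) \leadsto \CChoice{p}{\nu_1}{\nu_2}$ by the $\CStep$-rule, and $\CChoice{p}{(\CStep\nu_1)}{(\CStep\nu_2)} \leadsto \CChoice{p}{\nu_1}{\nu_2}$ by the congruence rule for $\CChoice$ applied to the two reductions $\CStep\nu_i \leadsto \nu_i$. Lemma~\ref{lem:prereqs:leadsto-liftrel} then makes both $\mu \GRelLift{\Rel} \CStep(\CChoice{p}{\nu_1}{\nu_2})$ and $\mu \GRelLift{\Rel} \CChoice{p}{(\CStep\nu_1)}{(\CStep\nu_2)}$ equivalent to $\mu \GRelLift{\Rel} \CChoice{p}{\nu_1}{\nu_2}$, and composing the two equivalences gives the claim.

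The only step requiring care is the bookkeeping in part~(1): identifying both distributions as lying purely in the delay summand of $\GLiftDist{A}$ so that Theorem~\ref{thm:dist:sum:equiv} rules out cases~\ref{item:val} and~\ref{item:both}, and handling the implicit $\Dist(\inl)$/$\Dist(\inr)$ inclusions when applying the recursion equations for $\Gzeta$. Nothing else is more than routine unfolding.
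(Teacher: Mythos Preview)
Your proposal is correct and matches the paper's proof essentially line for line: part~(1) is handled by recognising both sides as pure-delay distributions, reducing to case~\ref{item:delay} of Definition~\ref{def:liftRel}, and computing $\Gzeta$ (your explicit appeal to Theorem~\ref{thm:dist:sum:equiv} to exclude the other two cases is a welcome clarification the paper leaves implicit); part~(2) is exactly the paper's argument via the two $\leadsto$-reductions and Lemma~\ref{lem:prereqs:leadsto-liftrel}.
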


\begin{proof}
  The first statement follows from applying Definition~\ref{def:liftRel}(\ref{item:delay}). The second statement follows from Lemma~\ref{lem:prereqs:leadsto-liftrel}.
\end{proof}

Liftings are a useful technique to reason about computations because of the way they
interact with choice, and with the monad structures of $\GLiftDist$ and $\LiftDist$.
First, liftings are closed under choice operators, which allows us to construct them
by parts:

\begin{lemma}\label{lem:prereqs:choice-lemma}
  Let $\Rel : A \to B \to \Prop$, and suppose that $\mu_1 \GRelLift{\Rel} \nu_1$ and $\mu_2 \GRelLift{\Rel} \nu_2$. Then also
  $(\GChoice{p}{\mu_1}{\mu_2}) \GRelLift{\Rel} (\CChoice{p}{\nu_1}{\nu_2})$.
\end{lemma}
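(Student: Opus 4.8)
The plan is to prove the statement by guarded recursion (Löb induction), uniformly in $\Rel$, $p$, $\mu_1,\mu_2$ and $\nu_1,\nu_2$, and then to case on which of the three clauses of Definition~\ref{def:liftRel} witnesses each of the two hypotheses $\mu_1 \GRelLift{\Rel}\nu_1$ and $\mu_2\GRelLift{\Rel}\nu_2$. Throughout we use that $\GChoice{p}{\mu_1}{\mu_2} = \DistChoice{p}{\mu_1}{\mu_2}$ in $\Dist(A + \later^\kappa \GLiftDist{A})$ and $\CChoice{p}{\nu_1}{\nu_2} = \DistChoice{p}{\nu_1}{\nu_2}$ in $\LiftDist B$, and read off the value-part/delay-part decomposition of $\DistChoice{p}{\mu_1}{\mu_2}$ from Theorem~\ref{thm:dist:sum:equiv}.

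First I would dispatch the three ``pure'' combinations. If both hypotheses use clause~(\ref{item:val}), so $\mu_1,\mu_2 : \Dist{A}$ with $\nu_i \limleadsto \nu_i'$ and couplings $\rho_i : \CoupType{\Rel}{\mu_i}{\nu_i'}$, then $\DistChoice{p}{\mu_1}{\mu_2} : \Dist{A}$; I take $\nu' \defeq \CChoice{p}{\nu_1'}{\nu_2'}$, obtain $\CChoice{p}{\nu_1}{\nu_2}\limleadsto \nu'$ from Lemma~\ref{lemma:limleadsto-properties}, and use the weighted coupling $\DistChoice{p}{\rho_1}{\rho_2}$, whose two marginals are $\DistChoice{p}{\mu_1}{\mu_2}$ and $\nu'$ respectively, because the marginal maps are homomorphisms of convex algebras; this establishes clause~(\ref{item:val}) for the conclusion. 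If one hypothesis uses clause~(\ref{item:val}) (say $\mu_1 : \Dist{A}$) and the other clause~(\ref{item:delay}) (so $\mu_2 : \Dist(\later^\kappa\GLiftDist{A})$), then $\DistChoice{p}{\mu_1}{\mu_2}$ is already in the shape required by clause~(\ref{item:both}), and since $p \in \II$ we may take the splitting witnesses to be $\nu_1,\nu_2$ themselves, using reflexivity of $\limleadsto$ and the two hypotheses unchanged. If both hypotheses use clause~(\ref{item:delay}), so $\mu_1,\mu_2 : \Dist(\later^\kappa\GLiftDist{A})$ with $\latbind{\alpha}{\kappa}(\Gzeta(\mu_i)[\alpha]\GRelLift{\Rel}\nu_i)$, then $\DistChoice{p}{\mu_1}{\mu_2} : \Dist(\later^\kappa\GLiftDist{A})$ and by the defining equation of $\Gzeta$ we have $\Gzeta(\DistChoice{p}{\mu_1}{\mu_2})[\alpha] = \GChoice{p}{\Gzeta(\mu_1)[\alpha]}{\Gzeta(\mu_2)[\alpha]}$; pairing the two $\later^\kappa$-ed hypotheses and feeding them, under $\later^\kappa$, to the Löb induction hypothesis yields $\latbind{\alpha}{\kappa}(\GChoice{p}{\Gzeta(\mu_1)[\alpha]}{\Gzeta(\mu_2)[\alpha]}\GRelLift{\Rel}\CChoice{p}{\nu_1}{\nu_2})$, which is precisely clause~(\ref{item:delay}) for the conclusion.

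For the remaining cases --- those in which at least one hypothesis uses clause~(\ref{item:both}) --- I would first use Theorem~\ref{thm:dist:sum:equiv} to write each $\mu_i$ uniquely as $\DistChoice{q_i}{\mu_i^{v}}{\mu_i^{d}}$ with $\mu_i^{v} : \Dist{A}$ and $\mu_i^{d} : \Dist(\later^\kappa\GLiftDist{A})$ (a clause-(\ref{item:val}) or clause-(\ref{item:delay}) hypothesis being read as a degenerate case with the missing part absent), and correspondingly extract from Definition~\ref{def:liftRel} witnesses $\nu_i^{v},\nu_i^{d} : \LiftDist{B}$ with $\nu_i \limleadsto \CChoice{q_i}{\nu_i^{v}}{\nu_i^{d}}$, $\mu_i^{v}\GRelLift{\Rel}\nu_i^{v}$ and $\mu_i^{d}\GRelLift{\Rel}\nu_i^{d}$; by uniqueness of the sum decomposition the first of these necessarily lies in clause~(\ref{item:val}) and the second in clause~(\ref{item:delay}). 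I then rearrange, using the convex-algebra equations (idempotency, commutativity, associativity), the nested combination $\DistChoice{p}{(\DistChoice{q_1}{\mu_1^{v}}{\mu_1^{d}})}{(\DistChoice{q_2}{\mu_2^{v}}{\mu_2^{d}})}$ into $\DistChoice{s}{(\DistChoice{t}{\mu_1^{v}}{\mu_2^{v}})}{(\DistChoice{u}{\mu_1^{d}}{\mu_2^{d}})}$, where $s = p q_1 + (1-p) q_2$ is the total value weight, $t = p q_1 / s$ and $u = p(1-q_1)/(1-s)$, all of which lie in $\II$, and I perform the same rearrangement on $\CChoice{p}{(\CChoice{q_1}{\nu_1^{v}}{\nu_1^{d}})}{(\CChoice{q_2}{\nu_2^{v}}{\nu_2^{d}})}$, to which $\CChoice{p}{\nu_1}{\nu_2}$ reduces by $\limleadsto$ using Lemma~\ref{lemma:limleadsto-properties}. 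Via clause~(\ref{item:both}) this reduces the goal to $\DistChoice{t}{\mu_1^{v}}{\mu_2^{v}}\GRelLift{\Rel}\CChoice{t}{\nu_1^{v}}{\nu_2^{v}}$ --- a value$+$value instance, handled as in the previous paragraph without recursion --- and $\DistChoice{u}{\mu_1^{d}}{\mu_2^{d}}\GRelLift{\Rel}\CChoice{u}{\nu_1^{d}}{\nu_2^{d}}$ --- a delay$+$delay instance, handled exactly as in the previous paragraph, i.e.\ via clause~(\ref{item:delay}) and the Löb hypothesis.

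I expect the main obstacle to be the bookkeeping of this last step: keeping the nested convex rearrangements on the $\mu$- and $\nu$-sides in lockstep, verifying that every weight produced stays inside the open interval $\II$ so that the clauses of Definition~\ref{def:liftRel} genuinely apply, and correctly handling the degenerate weights that arise when a pure clause-(\ref{item:val}) or clause-(\ref{item:delay}) hypothesis is combined with a clause-(\ref{item:both}) one, where one of $\mu_i^{v}$, $\mu_i^{d}$ is absent and the corresponding choice collapses to a single summand. The guarded-recursive step and the coupling construction themselves are routine given Theorem~\ref{thm:dist:sum:equiv}, Lemma~\ref{lemma:limleadsto-properties}, and the defining equations of $\Gzeta$.
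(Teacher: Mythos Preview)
Your proposal is correct and matches the paper's approach closely. The paper spells out all nine case combinations explicitly, whereas you group them into the three ``pure'' combinations and then a uniform rearrangement for the five cases involving clause~(\ref{item:both}); but the underlying argument is the same---rearrange $\DistChoice{p}{(\DistChoice{q_1}{\cdot}{\cdot})}{(\DistChoice{q_2}{\cdot}{\cdot})}$ so that the value parts are grouped together and the delay parts are grouped together, then appeal to the already-established value$+$value and delay$+$delay cases, the latter using the guarded recursion hypothesis via $\Gzeta$. Your identification of the bookkeeping (weights staying in $\II$, the degenerate sub-cases where a value or delay part is absent) as the main obstacle is accurate and is precisely why the paper opts for the explicit nine-case enumeration rather than your more compact presentation.
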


\begin{proof}[Proof (sketch)]
There are 9 cases to consider, one for each of the possible combinations of cases for $\mu_1 \GRelLift{\Rel} \nu_1$ and $\mu_2 \GRelLift{\Rel} \nu_2$. The proof for each of these cases follows quite directly from the assumptions that we get from $\mu_1 \GRelLift{\Rel} \nu_1$ and $\mu_2 \GRelLift{\Rel} \nu_2$, after some rewriting of terms using the axioms of convex algebras. For example, if $\mu_1 : \Dist{A}$ and for $\mu_2$ there exist $\mu_{21} : \Dist{A}, \mu_{22} : \Dist(\Later{}\GLiftDist{A})$ such that $\mu_2 \peq \DistChoice{p'}{\mu_{21}}{\mu_{22}}$, then:
         \begin{align*}
           \GChoice{p}{\mu_1}{\mu_2} & \peq \GChoice{p}{\mu_1}{(\GChoice{p'}{\mu_{21}}{\mu_{22}})}  
             \peq \GChoice{p + (1-p)p'}{(\GChoice{\frac{p}{p + (1-p)p'}}{\mu_1}{\mu_{21}})}{\mu_{22}},
         \end{align*}
         where $\GChoice{\frac{p}{p + (1-p)p'}}{\mu_1}{\mu_{21}} : \Dist{A}$ and $\mu_{22} : \Dist(\Later{}\GLiftDist{A})$.
We can then combine information coming from the assumptions $\mu_1 \GRelLift{\Rel} \nu_1$ and $\mu_2 \GRelLift{\Rel} \nu_2$ using the same probabilities as above to reach the desired conclusion. 
\end{proof}

Second, we can prove a bind lemma, that allows us to sequence computations
related by liftings. This lemma will be crucial e.g. in the proof of the
fundamental lemma in the following section.

\begin{lemma}[Bind Lemma]\label{lem:prereqs:eta-lemma}\label{lem:prereqs:bind-lemma}
Let $\Rel : A \to B \to \Prop$ and $\mathcal{S} : A' \to B' \to \Prop$.
 \begin{enumerate}
\item If $f : A \to \GLiftDist{A'}$ and $g : B \to \LiftDist{B'}$ satisfy $f(a) \GRelLift{\mathcal{S}} g(b)$ whenever $a \Rel b$, then for all
$\mu : \Dist{A}$ and $\nu : \Dist{B}$, if there is a $\rho: \CoupType\Rel{\mu}{\nu}$, then 
$\uext f(\mu) \GRelLift{\mathcal{S}}  \uext g(\nu)$.
\item If $f : A \to \GLiftDist{A'}$ and $g : B \to \LiftDist{B'}$ satisfy $f(a) \GRelLift{\mathcal{S}} g(b)$ whenever $a \Rel b$, then for all
$\mu : \GLiftDist{A}$ and $\nu : \LiftDist{B}$ satisfying $\mu \GRelLift{\Rel}  \uext \nu$, also
$\uext f(\mu) \GRelLift{\mathcal{S}}  \uext g(\nu)$.
\end{enumerate}
\end{lemma}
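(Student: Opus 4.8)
The plan is to prove part~(2) by guarded recursion, with part~(1) following as an easy consequence. So assume $f : A \to \GLiftDist{A'}$ and $g : B \to \LiftDist{B'}$ with $f(a) \GRelLift{\mathcal{S}} g(b)$ whenever $a \Rel b$, fix $\mu : \GLiftDist{A}$ and $\nu : \LiftDist{B}$ with $\mu \GRelLift{\Rel} \uext\nu$, and assume (the guarded hypothesis) that the statement already holds later. I would case-split on which of the three clauses of Definition~\ref{def:liftRel} witnesses $\mu \GRelLift{\Rel} \uext\nu$. In clause~(\ref{item:val}), $\mu : \Dist{A}$ and there are $\nu' : \Dist{B}$ with $\uext\nu \limleadsto \nu'$ and $\rho : \CoupType{\Rel}{\mu}{\nu'}$. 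Here I first want to replace $\uext\nu$ by $\nu$ itself: using Theorem~\ref{thm:dist:sum:equiv}, decompose $\nu$ as a convex sum of a value part $\nu_V : \Dist B$ and a delayed part, observe that $\uext\nu \leadsto$-relates these appropriately, and then reduce to the case where $\nu : \Dist B$ is itself a distribution of values — in which case part~(1) is exactly what is needed. For part~(1), given the $\Rel$-coupling $\rho : \CoupType{\Rel}{\mu}{\nu}$, push it through $f$ and $g$: apply $\Dist$-induction on $\rho$ (or on $\mu$, tracking the coupling), using the hypothesis $f(a)\GRelLift{\mathcal S}g(b)$ on Dirac pieces and Lemma~\ref{lem:prereqs:choice-lemma} to reassemble convex combinations, yielding $\uext f(\mu) \GRelLift{\mathcal S} \uext g(\nu)$. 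Then Lemma~\ref{lem:prereqs:limleadsto-liftrel} together with Lemma~\ref{lemma:limleadsto-homopreserve} (bind preserves $\limleadsto$) transports this back along $\uext\nu \limleadsto \nu'$ to conclude part~(2) in clause~(\ref{item:val}).

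For clause~(\ref{item:delay}), $\mu : \Dist(\Later\GLiftDist{A})$ and $\later(\alpha)\, (\Gzeta(\mu)[\alpha] \GRelLift{\Rel} \uext\nu)$. Applying $\uext f$ commutes with $\GStep$ (it is a convex delay algebra homomorphism), so $\uext f(\mu)$ is again of delayed form, and $\Gzeta(\uext f(\mu))[\alpha] = \uext f(\Gzeta(\mu)[\alpha])$ up to the evident naturality of $\Gzeta$ in the algebra structure — I would isolate this commutation as a small sublemma proven by $\Dist$-induction. Then the guarded hypothesis applied under the tick $\alpha$, to $\Gzeta(\mu)[\alpha]$ and $\uext\nu$, gives $\uext f(\Gzeta(\mu)[\alpha]) \GRelLift{\mathcal S} \uext g(\uext\nu)$; since $\uext g \circ \uext{(\cdot)} = \uext{(\uext g \circ \cdot)}$ by the monad laws for $\LiftDist$, the right-hand side is $\uext{(\lambda b. \uext g(g'(b)))}$-shaped appropriately — more simply, $\uext g(\uext\nu)$ is just the bind of $\nu$ along $g$, which is what we want — and we repackage into clause~(\ref{item:delay}) for $\uext f(\mu) \GRelLift{\mathcal S} \uext g(\nu)$.

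Clause~(\ref{item:both}) is handled by combining the previous two: decompose $\mu = \DistChoice{p}{\mu_1}{\mu_2}$ and $\uext\nu \limleadsto \CChoice{p}{\nu_1}{\nu_2}$ with $\mu_1 \GRelLift{\Rel} \nu_1$ (a value case) and $\mu_2 \GRelLift{\Rel} \nu_2$ (a delay case), apply the arguments above to each half to get $\uext f(\mu_i) \GRelLift{\mathcal S} \uext g(\nu_i)$, then use Lemma~\ref{lem:prereqs:choice-lemma} to form $(\GChoice{p}{\uext f(\mu_1)}{\uext f(\mu_2)}) \GRelLift{\mathcal S} (\CChoice{p}{\uext g(\nu_1)}{\uext g(\nu_2)})$, and finally use that $\uext f, \uext g$ are convex-algebra homomorphisms (so they distribute over $\GChoiceOp$, $\CChoiceOp$) plus Lemma~\ref{lemma:limleadsto-homopreserve} and Lemma~\ref{lem:prereqs:limleadsto-liftrel} to rewrite this as $\uext f(\mu) \GRelLift{\mathcal S} \uext g(\nu)$.

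I expect the main obstacle to be the bookkeeping in clause~(\ref{item:val})/part~(1): making the coupling-pushforward argument fully precise constructively. One cannot do the classical "renormalise the restriction" trick, so the induction must be set up carefully — either by inducting on the coupling distribution $\rho : \Dist(\Sigma a\, b.\, a\Rel b)$ directly (mapping its Dirac atoms via $a\Rel b \mapsto (f(a)\GRelLift{\mathcal S}g(b))$ and its convex nodes via Lemma~\ref{lem:prereqs:choice-lemma}), and then checking that $\DistMarginalFst$ and $\DistMarginalSnd$ of the pushed-forward structure indeed give $\uext f(\mu)$ and $\uext g(\nu)$. The commutation sublemma $\Gzeta(\uext f(\mu)) = \uext f(\Gzeta(\mu))$ in clause~(\ref{item:delay}) is the other fiddly point, but it is a routine $\Dist$-induction using the definitions of $\Gzeta$ and the homomorphism property of $\uext f$.
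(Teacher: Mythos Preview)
Your approach is essentially the paper's: part~(1) by induction on the coupling $\rho$ (Dirac case by hypothesis, choice case by Lemma~\ref{lem:prereqs:choice-lemma}), and part~(2) by guarded recursion with a case split on Definition~\ref{def:liftRel}, invoking part~(1) in clause~(\ref{item:val}), the $\Gzeta$/$\uext f$ commutation sublemma in clause~(\ref{item:delay}), and combining via Lemma~\ref{lem:prereqs:choice-lemma} in clause~(\ref{item:both}). Two small corrections: first, the ``$\uext\nu$'' in the statement is a typo for $\nu$, so your detour through Theorem~\ref{thm:dist:sum:equiv} to ``replace $\uext\nu$ by $\nu$'' is unnecessary --- the paper simply applies Lemma~\ref{lemma:limleadsto-homopreserve} to $\nu \limleadsto \nu'$ to get $\uext g(\nu) \limleadsto \uext g(\nu')$, then Lemma~\ref{lem:prereqs:limleadsto-liftrel}; second, part~(1) is not a consequence of part~(2) but a prerequisite for its clause~(\ref{item:val}) case, so the dependency runs the other way.
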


\begin{proof}[ Proof (sketch)]
 The first statement is by induction on the coupling $\rho: \CoupType\Rel{\mu}{\nu}$.
 
  For the second statement, suppose that $\mu\GRelLift{\Rel} \nu$.
 By the definition of $\GRelLift{\Rel}\!\!$, there are three cases to consider.
 We only show the first case here. The second case is by guarded recursion, and the third case follows from the first and second cases.

 For the first case, we assume that $\mu : \Dist{A}$, that there exist $\nu_1 : \Dist{B}$ such that $\nu \limleadsto \nu_1$, and that there exists a $\rho: \CoupType\Rel{\mu}{\nu_1}$. What we want is to show that $\uext f(\mu) \GRelLift{\mathcal{S}} \uext g(\nu)$.
 
 Notice that since the $\limleadsto$ relation is preserved by homomorphisms (Lemma~\ref{lemma:limleadsto-homopreserve}), $\nu \limleadsto \nu_1$ implies $\uext g(\nu) \limleadsto \uext g(\nu_1)$. By Lemma~\ref{lem:prereqs:limleadsto-liftrel} it is therefore enough to show that $\uext f(\mu) \GRelLift{\mathcal{S}} \uext g(\nu_1)$, which follows directly from the first statement of this lemma.
\end{proof}



\section{Relating syntax and semantics}
\label{sec:rel:syntax:semantics}

In this section we prove that our denotational semantics is adequate with
respect to the operational semantics. The overall approach is standard in that
we define type-indexed logical relations between semantic denotations and terms
of the operational semantics, and we define both value relations and expression (aka
term or computation) relations:
\begin{align*}
\GValLogRel\sigma & :\GSVal\sigma \to \Val\sigma \to \Prop \\
\GLogRel\sigma & :\GLiftDist{}{\GSVal\sigma} \to \LiftDist{}(\Val\sigma) \to \Prop
\end{align*}
There are two things to note, however. The first is that traditionally it is
tricky to define such logical relations for a programming language with
recursive types; see, e.g., \cite{DBLP:journals/iandc/Pitts96}. Here we follow
\cite{DBLP:journals/mscs/MogelbergP19} and simply use guarded recursion (and
induction on types), see Figure~\ref{fig:rel:syntax:semantics}. Notice in particular
the case for recursive types, where guarded recursion is used. The second point
to note is that the definition of the term relation is novel: it is defined as the
\emph{lifting} (in the sense of the previous section) of the value relation.
This allows us to use the reasoning principles associated with lifting (e.g.,
Lemmas~\ref{lem:prereqs:leadsto-liftrel},\ref{lem:prereqs:choice-lemma}
and~\ref{lem:prereqs:bind-lemma}) in proofs and examples.


\begin{figure}
\textbf{Value relation}
\begin{mathpar}
  \inferrule{ }
  {n\GValLogRel\Nat \Numeral{n}} 
  \and
  \inferrule{ }{\star \GValLogRel\Unit \Star}
  \and
  \inferrule{v \GValLogRel\sigma V \and w \GValLogRel\tau W }{\SemPair{v}{w} \GValLogRel{\ProdTy\sigma\tau} \Pair{V}{W}}
  \and
  \inferrule{v \GValLogRel\sigma V}{\SemInl{v} \GValLogRel{\CoprodTy\sigma\tau} \Inl{V}}
  \and
  \inferrule{v \GValLogRel\tau V}{\SemInr{v} \GValLogRel{\CoprodTy\sigma\tau} \Inr{V}}
  \and
  \inferrule{\forall w,V. w \GValLogRel\sigma V \to v(w) \GLogRel\tau \Eval{(M[V/x])}}{v \GValLogRel{\FnTy\sigma\tau} \Lam{M}}
  \and
  \inferrule{\ClockedLater{(v[\alpha]\GValLogRel{\tau[\RecTy\tau/X]} V)} }{v \GValLogRel{\RecTy\tau} \Fold{V}}
  \end{mathpar}

\textbf{Expression relation}
\[
 \mu \GLogRel\sigma d \defeq \mu \GRelLift{\GValLogRel\sigma} d
\]
  \caption{Logical Relation.}
\label{fig:rel:syntax:semantics}
\end{figure}

We extend the expression relation to open terms in the standard way by using related environments and closing substitutions:
\begin{definition}
  For $M,N : \Tm[\Gamma]\sigma$ we define
  \[
  M \GOLogRel\sigma N \defeq \left(\forall \rho,\delta. (\rho \GValLogRel\Gamma \delta) \to \GInterpret{M}\rho \GLogRel\sigma \Eval{(N[\delta])}\right)
  \]
  Here $\delta$ is a closing substitution for $\Gamma$ and $\rho : \GSVal\Gamma$
  an environment, and $\rho \GValLogRel\Gamma \delta$ denotes that for every
  variable $x$ of $\Gamma$ the corresponding semantic and syntactic value in
  $\rho$ and $\delta$ are related.
\end{definition}

One can now show that the logical relation is a congruence and that it is reflexive (the fundamental lemma holds). 
Both these are proved by induction on $C$ and $M$ respectively. 
\begin{lemma}[Congruence Lemma]
  \label{lem:guarded-congruence}
  For any terms $M,N : \Tm[\Gamma]\sigma$ and every context $C: \CtxTy\Delta\tau$ we get that
  \[ 
   M \GOLogRel\sigma N \to C[M]\GOLogRel[\Delta]\tau C[N]
  \]
\end{lemma}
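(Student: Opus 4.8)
The plan is to prove the statement by induction on the structure of the context $C$. This reduces the claim to a family of \emph{compatibility lemmas}, one per term former of $\pFPC$, each asserting that the open logical relation is preserved by that former: if all immediate subterms are pairwise related, then so are the results of applying the former. The base case $C = [-]$ is trivial, since then $C[M] = M$ and $C[N] = N$. In the inductive step, $C = \mathrm{op}(\ldots, C', \ldots)$ is some term former applied to an inner context $C'$ (which contains the hole) together with hole-free terms; by the induction hypothesis $C'[M]$ and $C'[N]$ are related, while each hole-free subterm $L$ trivially satisfies $L \GOLogRel{} L$ by reflexivity of the open logical relation, i.e.\ the fundamental lemma, which is obtained from the very same compatibility lemmas by induction on terms. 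Applying the compatibility lemma for $\mathrm{op}$ then yields $C[M]\GOLogRel{}C[N]$ at the appropriate type.

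The work is then to establish the compatibility lemmas. Each is proved by unfolding the structural definitions of $\GInterpret{-}{-}$ (Figure~\ref{fig:den:sem}) and $\GEval$ (Figure~\ref{fig:pfpc:eval}) --- both given in terms of the monadic bind $\GBind$ --- and then applying the reasoning principles for the relational lifting collected in Figure~\ref{fig:rules:rel:lift}. For the formers whose semantics is a functorial action or a sequence of binds (successor, left injection, fold, pairing, first projection, if-zero, case, and application), the outer bind is discharged by the induction hypothesis for the scrutinee together with the bind lemma (Lemma~\ref{lem:prereqs:bind-lemma}), whose side condition --- that the continuations send related values to related computations --- holds either because the relevant syntactic and semantic operations preserve the value relation (read off from the corresponding clause of $\GValLogRel{}{}$), or, for if-zero, case, and application, because after substituting related values into related open bodies (using the substitution lemma, Lemma~\ref{lem:subst-lemma}) the induction hypothesis, respectively the function-type clause $v(w)\GLogRel\tau\Eval{(M[V/x])}$ of the value relation, applies. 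The administrative $\GStep$s that $\GEval$ inserts in the case, application, and unfold clauses are absorbed on the right using step-insensitivity of the lifting in its second argument (Lemma~\ref{lem:prereqs:leadsto-liftrel}, together with Lemma~\ref{lem:prereqs:step-choice} for steps commuted past choices), while on the left a leading $\GStep$ is handled by the delay clause of Definition~\ref{def:liftRel}. Probabilistic choice is immediate from the choice lemma (Lemma~\ref{lem:prereqs:choice-lemma}); $\lambda$-abstraction follows by unfolding the function-type clause of the value relation and appealing to the induction hypothesis for the body with extended environments; and the recursive-type cases (fold and unfold) are handled by introducing the required $\later$ via tick abstraction and the $\later$-extensionality principle (\ref{eq:later:ext}), which lines up exactly with the step that both semantics insert when unfolding a recursive type.

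I expect the main obstacle to be the application case and the closely related unfold case: one must carefully align the two nested $\GBind$s on the semantic side with the $\GBind$-followed-by-$\GStep$ structure of $\GEval(M\,N)$ (respectively $\GEval(\Unfold{M})$), thread the function-type clause of the value relation through two uses of the bind lemma, and then absorb the extra operational step --- which is harmless precisely because the lifting is closed under $\leadsto$- and $\limleadsto$-reductions on the right (Lemmas~\ref{lem:prereqs:leadsto-liftrel} and~\ref{lem:prereqs:limleadsto-liftrel}). Once the bind, choice, and step-insensitivity lemmas of Section~\ref{sec:couplings} are available, the remaining cases are routine bookkeeping.
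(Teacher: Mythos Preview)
Your proposal is correct and matches the paper's approach essentially verbatim: the paper proves the congruence lemma by induction on the context derivation, invoking the fundamental lemma (Lemma~\ref{lem:guarded-fundamental}) for the hole-free subterms and a battery of compatibility lemmas (one per term former, established separately using Lemmas~\ref{lem:prereqs:bind-lemma}, \ref{lem:prereqs:choice-lemma}, \ref{lem:prereqs:leadsto-liftrel}, and \ref{lem:prereqs:step-choice}) for the inductive step. Your description of how the individual compatibility lemmas go through, including the treatment of application, case, fold/unfold, and the absorption of administrative $\GStep$s, is accurate.
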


\begin{lemma}[Fundamental Lemma]
  \label{lem:guarded-fundamental}
  For all $M : \Tm[\Gamma]\sigma$ we have that
  $
  M \GOLogRel\sigma M.
  $
\end{lemma}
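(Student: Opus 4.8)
The plan is to proceed by induction on the derivation of $\Gamma \vdash M : \sigma$, in each case reducing the goal $M \GOLogRel\sigma M$ to the hypothesis $\rho \GValLogRel\Gamma \delta$ together with the inductive hypotheses for the immediate subterms. After unfolding the definition of $M \GOLogRel\sigma M$, we fix related $\rho, \delta$ and must show $\GInterpret{M}\rho \GLogRel\sigma \Eval{(M[\delta])}$, i.e.\ $\GInterpret{M}\rho \GRelLift{\GValLogRel\sigma} \Eval{(M[\delta])}$. The key structural tool throughout is the Bind Lemma (Lemma~\ref{lem:prereqs:bind-lemma}): both the denotational semantics and the evaluation function are overwhelmingly defined as iterated Kleisli extensions $(-)\GBind(-)$ of the operations on subterms, so whenever $M$ has subterms $M_1, M_2, \ldots$, we use the inductive hypotheses to get liftings for the $\GEval$/$\GInterpret{-}{}$ of the subterms and then glue them together with the (second statement of the) bind lemma, having checked the side condition "$a \GValLogRel{} b$ implies the continuations are related" from the value-relation clauses.

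I would organize the cases as follows. The variable and unit/numeral cases are immediate: the value relation holds by the $\rho \GValLogRel\Gamma \delta$ assumption (resp.\ the axiom rules for $\Nat$, $\Unit$), and $\GInterpret{V}{} = \GDirac(\GValInterpret{V}{})$ matches $\GEval(V) = \GDirac V$ via clause~(\ref{item:val}) of Definition~\ref{def:liftRel} with the trivial coupling. For $\Suc{}, \Inl{}, \Fst{}, \Fold{}$, both semantics apply $\GLiftDist(-)$ of a function that is a value-relation homomorphism, so this follows from a functoriality-of-lifting observation (a degenerate instance of the bind lemma with $g = \CDirac \circ (\text{syntactic version})$). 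For $\Pair{M}{N}$, $M\,N$, $\Ifz{}$, $\Case{}$, the two semantics are nested binds; we apply the bind lemma once per bind, discharging the side conditions using the IHs for the components and, in the $\lambda$-case, the defining clause of $\GValLogRel{\FnTy\sigma\tau}$ (which is exactly "$w \GValLogRel\sigma V \to v(w) \GLogRel\tau \Eval{(M[V/x])}$"), after invoking the substitution lemma (Lemma~\ref{lem:subst-lemma}) to rewrite $\GInterpret{M[V/x]}{}$. The $\Choice{p}{M}{N}$ case is handled by Lemma~\ref{lem:prereqs:choice-lemma} applied to the two IHs. The delicate cases are those where $\GEval$ inserts a $\GStep$ that $\GInterpret{-}{}$ does not, namely $\Case{}$ and $M\,N$: here I would use Lemma~\ref{lem:prereqs:leadsto-liftrel} (that $\mu \GRelLift{\Rel} \CStep\nu$ iff $\mu \GRelLift{\Rel} \nu$) to absorb the extra operational step before applying the bind lemma, so the mismatch in step counts is harmless. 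For $\Unfold{M}$: both $\GEval$ and $\GInterpret{-}{}$ step once, and the value-relation clause for $\RecTy\tau$ is stated "one step later" ($\ClockedLater{(v[\alpha]\GValLogRel{\tau[\RecTy\tau/X]} V)}$), so the step on the left is matched by the step on the right and the residual obligation lives under $\Later$, to be discharged by the outer guarded recursion.

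The main obstacle is the genuinely circular cases — function application (and, implicitly, recursion through the recursive-type/$\YComb$ interaction) — where the recursive call to $\GEval$ and to $\GInterpret{-}{}$ occurs under $\GStep$ and is not at a structurally smaller term. Here the whole Fundamental Lemma must itself be proved by an outer \emph{guarded recursion}: we assume $\Later$ of the statement, and when we hit the $\GStep(\tabs{\tickA}{\kappa}\ldots)$ in $\GEval(M\,N)$ (and the matching $\GStep\circ\GNext$ on the denotational side for $\Unfold{}$), we feed the guarded hypothesis through the tick. Concretely this means the bind lemma's side condition for the application case — relating $v(w)$ to $\Eval{(M'[V/x])}$ — is itself an instance of the guarded-recursive hypothesis applied to $M'$, transported along $\tabs{\tickA}{\kappa}$ using~(\ref{eq:later:ext}) and the reasoning principle of Lemma~\ref{lem:prereqs:step-choice}. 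Getting the bookkeeping right — making sure the left-hand $\GStep$ is consumed by the guarded fixpoint while the right-hand (possibly multiple) operational $\GStep$s are absorbed via $\leadsto$/$\limleadsto$ — is the crux; once the step-matching discipline is fixed, every individual case is routine.
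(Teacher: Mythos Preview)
Your overall strategy---structural induction on $M$, with each case discharged via the Bind Lemma, the choice lemma, and step-absorption on the right via Lemma~\ref{lem:prereqs:leadsto-liftrel}---is exactly the paper's approach (the paper packages the per-constructor reasoning into separate compatibility lemmas and then quotes them, but the content is the same).

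Your final paragraph, however, contains a genuine misunderstanding. The application case is \emph{not} circular, and no outer guarded recursion on the Fundamental Lemma is needed or would help. After two applications of the Bind Lemma for $M\,N$, the residual obligation is $v(w) \GLogRel\tau \CStep(\Eval(M'[V/x]))$ under the hypotheses $v \GValLogRel{\FnTy\sigma\tau} \Lam{M'}$ and $w \GValLogRel\sigma V$. A guarded hypothesis $\Later(\forall M.\ M \GOLogRel\sigma M)$ would only tell you something about $\GInterpret{M'[V/x]}{}$, not about the abstract semantic value $v(w)$, so it cannot discharge this. The correct discharge is the one you already identified a paragraph earlier: the hypothesis $v \GValLogRel{\FnTy\sigma\tau} \Lam{M'}$ \emph{is by definition} the statement $\forall w,V.\ w\GValLogRel\sigma V \to v(w)\GLogRel\tau \Eval(M'[V/x])$, so the obligation follows immediately (plus one use of Lemma~\ref{lem:prereqs:leadsto-liftrel} for the extra $\CStep$). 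Likewise for $\Unfold$: the obligation under $\Later$ is discharged directly by the recursive-type value relation hypothesis, not by an outer guarded fixpoint. The Fundamental Lemma is a plain structural induction; all guarded recursion already lives inside the definition of $\GRelLift{\Rel}$ and the proof of the Bind Lemma. The non-structural recursive call in $\GEval$ is handled by the \emph{design} of the value relation at $\FnTy\sigma\tau$: the $\Lam{M'}$ case of the induction establishes that relation (via the IH for $M'$ at the extended context $\Gamma, x{:}\sigma$), and the application case consumes it.
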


To prove soundness of the logical relation for contextual equivalence, it only remains to relate it to
termination probability at the unit type. To begin, we use our logical relation to prove Theorem~\ref{thm:eq-pterm-denot-op}
presented in Section~\ref{sec:den:sem}:

\begin{proof}[Proof of Theorem~\ref{thm:eq-pterm-denot-op}]
 Recall that our goal is to show that $\probtermseq{\Eval \,M} \eqlim \probtermseq{\Interpret{M}{}}$ for any $M : \Tm\Unit$.
 One direction follows directly from Lemma~\ref{lem:guarded-fundamental}: Since $\forall \kappa. \GInterpret{M}{} \GLogRel\Unit \Eval{\,M}$, by 
 Lemma~\ref{lem:eq1-probterm} also $\probtermseq{\Interpret{M}{}} \leqlim \probtermseq{\Eval \,M}$. For the other direction, we
 define an alternative denotational semantics $\SInterp{-}$, which agrees with $\Interp{-}$ on types and most terms, but which uses the same steps as
 the operational semantics, so for example, 
\begin{align*}
     \SGInterpret{MN}\rho \defeq  \SGInterpret{M}\rho \GBind \lambda f. \SGInterpret{N}\rho \GBind \lambda v.\GStep(\tabs\tickA\kappa fv) 
\end{align*}
 and similarly for the interpretation of elimination for sum types. Since the steps agree, it is easy to show the following soundness result for
 $\SInterp{-}$: For any well typed closed expression $\cdot \vdash M : \sigma$ we have that
   \[
   \GLiftFun{\SGValInterpret{-}{}}(\GEval M) \equiv \SGInterpret{M}{}
   \]
and as a consequence $\probtermseq{\Eval \,M} \peq \probtermseq{\SInterp{M}{}}$. Finally, we
construct a logical relation on the denotation of types to show that  $\probtermseq{\SInterp{M}{}} \leqlim \probtermseq{\Interp{M}{}}$.
\end{proof}

%
%


As a consequence, the logical relation implies relatedness of the termination probabilities.
\begin{corollary}
  \label{cor:soundness_of_logrel}
  For any $M,N : \Tm\Unit$ we have that 
  \[
  (\forall\kappa. \GInterpret{M}{} \GLogRel\Unit \Eval{(N)}) \to  \probtermseq{\Eval{(M)}} \leqlim \probtermseq{\Eval(N)}
  \] 
\end{corollary}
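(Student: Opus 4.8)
The plan is to chain three already-established facts: the hypothesis (which is essentially an instance of the logical relation at the unit type), Lemma~\ref{lem:eq1-probterm} relating liftings to termination probabilities, and Theorem~\ref{thm:eq-pterm-denot-op} identifying operational and denotational termination probabilities, gluing everything together with transitivity of $\leqlim$.

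First I would unfold the hypothesis. By definition of the expression relation, $\GInterpret{M}{} \GLogRel\Unit \Eval(N)$ is by definition $\GInterpret{M}{} \GRelLift{\GValLogRel\Unit} \Eval(N)$, so the assumption $\forall\kappa.\,\GInterpret{M}{} \GLogRel\Unit \Eval(N)$ unfolds, by the definition of $\CRelLift{}$, to exactly $\Interpret{M}{} \CRelLift{\GValLogRel\Unit} \Eval(N)$, where $\Interpret{M}{} \defeq \Lambda\kappa.\,\GInterpret{M}{}$. Next I would observe that at the unit type the value relation is, modulo the canonical identifications $\GSVal\Unit = \SemUnit \equi 1$ and $\Val\Unit \equi 1$, precisely the identity relation $\eqrel$ of Lemma~\ref{lem:eq1-probterm}: the only rule producing $\GValLogRel\Unit$ relates $\star$ to $\Star$, and these are the unique inhabitants of $\SemUnit$ and $\Val\Unit$ respectively. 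Transporting $\CRelLift{}$ along these equivalences therefore turns the hypothesis into $\Interpret{M}{} \CRelLift{\eqrel} \Eval(N)$ between elements of $\LiftDist{1}$, so Lemma~\ref{lem:eq1-probterm} applies and yields $\probtermseq{\Interpret{M}{}} \leqlim \probtermseq{\Eval(N)}$.

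Finally I would invoke Theorem~\ref{thm:eq-pterm-denot-op} for the closed term $M : \Tm\Unit$, which gives $\probtermseq{\Eval(M)} \eqlim \probtermseq{\Interpret{M}{}}$, hence in particular $\probtermseq{\Eval(M)} \leqlim \probtermseq{\Interpret{M}{}}$. Composing this with the inequality from the previous step via transitivity of $\leqlim$ (noted right after the definition of $\leqlim$) produces $\probtermseq{\Eval(M)} \leqlim \probtermseq{\Eval(N)}$, which is the claim.

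The only subtlety — and it is a minor one — is the bookkeeping in the second step: identifying $\GValLogRel\Unit$ with $\eqrel$ across $\SemUnit \equi 1 \equi \Val\Unit$ and checking that $\CRelLift{}$ transports along these equivalences so that Lemma~\ref{lem:eq1-probterm} applies verbatim. Since all types involved are singletons, hence propositions (and thus clock-irrelevant, so Lemma~\ref{lem:CRelLift:coinductive} and Proposition~\ref{prop:LiftDist} are available if needed), this transport is immediate and requires no classical reasoning; everything else is a direct composition of the cited results.
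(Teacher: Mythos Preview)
Your proposal is correct and follows essentially the same argument as the paper: apply Lemma~\ref{lem:eq1-probterm} to the hypothesis to obtain $\probtermseq{\Interpret{M}{}} \leqlim \probtermseq{\Eval(N)}$, then invoke Theorem~\ref{thm:eq-pterm-denot-op} and transitivity of $\leqlim$. The only difference is that you spell out the identification of $\GValLogRel\Unit$ with $\eqrel$ explicitly, which the paper leaves implicit.
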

\begin{proof}
By Lemma~\ref{lem:eq1-probterm}, $\probtermseq{\Interpret{M}{}} \leqlim\probtermseq{\Eval(N)}$, and so the
result follows from Theorem~\ref{thm:eq-pterm-denot-op}. 
%
\end{proof}

Combining this with the Congruence Lemma,
we get that the logical
relation is sound with respect to contextual refinement:

\begin{theorem}
  \label{thm:main}
  For any terms $\Gamma \vdash M : \sigma$ and $\Gamma \vdash N: \sigma$ we have 
  \[
    (\forall \kappa. M \GOLogRel\sigma N) \to M \CtxRef N
    \] 
\end{theorem}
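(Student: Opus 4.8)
The plan is to reduce the statement to the two results already in hand: the Congruence Lemma (Lemma~\ref{lem:guarded-congruence}) and the closed-term soundness result (Corollary~\ref{cor:soundness_of_logrel}). Assume $\forall\kappa.\, M \GOLogRel\sigma N$. By Definition~\ref{def:ctx:ref} it suffices to fix an arbitrary closing context of unit type $C : \CtxTy\cdot\Unit$ and prove $\probtermseq{\Eval(C[M])} \leqlim \probtermseq{\Eval(C[N])}$.

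First I would push the hypothesis through the Congruence Lemma pointwise in the clock: for each $\kappa$, from $M \GOLogRel\sigma N$ we obtain $C[M] \GOLogRel[\cdot]\Unit C[N]$, hence $\forall\kappa.\, C[M] \GOLogRel[\cdot]\Unit C[N]$. Next I would specialise the open logical relation to closed terms: since $C$ is a closing context, $C[M]$ and $C[N]$ are closed, so the only environment/substitution pair to feed into the definition of $\GOLogRel[\cdot]{}$ is the empty one, which is trivially in the value relation on the empty context, and $C[N][\delta] = C[N]$; thus the statement unfolds to $\forall\kappa.\, \GInterpret{C[M]}{} \GLogRel\Unit \Eval(C[N])$. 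Finally, instantiating Corollary~\ref{cor:soundness_of_logrel} with the closed unit-typed terms $C[M]$ and $C[N]$ turns exactly this into $\probtermseq{\Eval(C[M])} \leqlim \probtermseq{\Eval(C[N])}$. As $C$ was arbitrary, this gives $M \CtxRef N$.

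There is no deep obstacle here: the real content has been discharged earlier (the Fundamental Lemma and Theorem~\ref{thm:eq-pterm-denot-op} inside Corollary~\ref{cor:soundness_of_logrel}, and the inductive argument in the Congruence Lemma). The only points needing care are bookkeeping around clock quantification — the Congruence Lemma and the unfolding of $\GOLogRel{}$ are statements relative to a fixed ambient clock, so one must apply them under $\forall\kappa$ and only then invoke the clock-quantified Corollary~\ref{cor:soundness_of_logrel} — and checking that a closing context really does yield closed terms, so that the empty environment is (vacuously) related and the definition of $\GOLogRel[\cdot]{}$ collapses to the closed expression relation $\GLogRel\Unit$ between $\GInterpret{C[M]}{}$ and $\Eval(C[N])$.
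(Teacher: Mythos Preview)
Your proposal is correct and matches the paper's own argument: the paper simply states that the theorem follows by ``combining this with the Congruence Lemma,'' where ``this'' is Corollary~\ref{cor:soundness_of_logrel}. Your write-up is in fact more careful than the paper's one-line justification, spelling out the clockwise application of Lemma~\ref{lem:guarded-congruence} and the collapse of $\GOLogRel[\cdot]\Unit$ to the closed expression relation at the empty environment.
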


As a corollary, we obtain computational adequacy (using the notation $\Interpret M{} \defeq \Lambda\kappa . \GInterpret M{}$):

\begin{theorem}[Adequacy]
 Let $M, N$ be terms of the same type. If $\Interpret M{} = \Interpret N{}$ then $M \CtxEq N$. 
\end{theorem}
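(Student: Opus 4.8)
The plan is to obtain adequacy immediately from the Fundamental Lemma (Lemma~\ref{lem:guarded-fundamental}) together with soundness of the logical relation for contextual refinement (Theorem~\ref{thm:main}). Unfolding $\CtxEq$, it suffices to prove $M \CtxRef N$ and $N \CtxRef M$, and by Theorem~\ref{thm:main} each of these follows once we establish $\forall\kappa.\,M \GOLogRel\sigma N$, respectively $\forall\kappa.\,N \GOLogRel\sigma M$.

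First I would unpack the hypothesis. We have $\Interpret M{} = \Interpret N{}$, i.e.\ $\Lambda\kappa.\GInterpret M{} = \Lambda\kappa.\GInterpret N{}$; applying both sides to a clock variable $\kappa$ and using the $\beta$-rule for clock application gives $\GInterpret M{} = \GInterpret N{}$ in the context extended by $\kappa$. More generally, reading the hypothesis as equality of the denotation maps of type $\GSVal\Gamma \to \GDomain\sigma$, we obtain $\GInterpret M\rho = \GInterpret N\rho$ for every environment $\rho : \GSVal\Gamma$ and every clock $\kappa$.

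Now fix a clock $\kappa$ and related $\rho \GValLogRel\Gamma \delta$. By the Fundamental Lemma, $N \GOLogRel\sigma N$, hence $\GInterpret N\rho \GLogRel\sigma \Eval(N[\delta])$; rewriting $\GInterpret N\rho$ to $\GInterpret M\rho$ along the equality above yields $\GInterpret M\rho \GLogRel\sigma \Eval(N[\delta])$. As $\rho,\delta$ were arbitrary this is $M \GOLogRel\sigma N$, and as $\kappa$ was arbitrary we obtain $\forall\kappa.\,M \GOLogRel\sigma N$; Theorem~\ref{thm:main} then gives $M \CtxRef N$. The symmetric argument --- starting from $M \GOLogRel\sigma M$ and rewriting $\GInterpret M\rho$ to $\GInterpret N\rho$ --- gives $N \CtxRef M$. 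Together these yield $M \CtxEq N$.

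There is no genuinely hard step here; the only points requiring care are the handling of clock quantification (invoking the Fundamental Lemma and specialising the hypothesis under a clock variable so that both match the shape $\forall\kappa.\,(-)$ required by Theorem~\ref{thm:main}) and, for open terms, propagating the equality of denotations through all environments and closing substitutions. Both are routine.
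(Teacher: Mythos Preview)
Your proposal is correct and follows essentially the same approach as the paper: apply the Fundamental Lemma to get $\forall\kappa.\,N \GOLogRel\sigma N$, transport along the denotational equality to obtain $\forall\kappa.\,M \GOLogRel\sigma N$, invoke Theorem~\ref{thm:main}, and argue symmetrically. Your extra care in unpacking the clock quantification and the environment/substitution structure for open terms is sound but not needed beyond what the paper already assumes implicitly.
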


\begin{proof}
 By Lemma~\ref{lem:guarded-fundamental} we have $\forall\kappa . N \GOLogRel\sigma N$, since 
 $\kappa$ is free in the statement of the lemma.  So, since 
 $\Interpret M{} = \Interpret N{}$ also $\forall\kappa . M \GOLogRel\sigma N$. By Theorem~\ref{thm:main}
 then also $M \CtxRef N$. The other direction is symmetric. 
\end{proof}



\section{Examples}
\label{sec:examples}

We now give a series of examples to illustrate how Theorem~\ref{thm:main} can be used for reasoning about $\pFPC$. All the examples presented are variants
of similar examples in the literature~\cite{DBLP:conf/fossacs/BizjakB15,DBLP:conf/esop/0001BBBG018}. Our point is to show how these examples can be done in constructive type theory,
and to illustrate the simplicity of these in our abstract viewpoint. We choose to work directly with the logical relation $\GOLogRel\sigma$ between syntax and semantics. 
An alternative could be to use a relation on denotational semantics, as the one used in the proof of Theorem~\ref{thm:eq-pterm-denot-op}. One could have also
defined a relation directly from syntax to syntax, but the additional steps in the operational semantics would have cluttered the proofs. We return to this point in 
Remark~\ref{rem:steps:clutter:examples}.

We start by showing that the reduction relation $\leadsto$ is contained in contextual equivalence. 

\begin{theorem} \label{thm:leadsto:ctx:eq}
 Let $M$ and $N$ be closed terms of the same type $\sigma$. 
 \begin{enumerate}
\item\label{item:leadsto:ctx:eq} If $\Eval \,M \leadsto \Eval\, N$, then $M \CtxEq N$. 
\item\label{item:limleadsto:ctx:eq} If $\Eval \,M  \limleadsto \Eval\, N$, then $N \CtxRef M$. 
\end{enumerate}
\end{theorem}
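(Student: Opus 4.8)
The plan is to deduce both parts directly from Theorem~\ref{thm:main} by combining the Fundamental Lemma (Lemma~\ref{lem:guarded-fundamental}) with the invariance of the lifted relation under step reductions of its right argument (Lemmas~\ref{lem:prereqs:leadsto-liftrel} and~\ref{lem:prereqs:limleadsto-liftrel}). First recall that for a closed term $P : \Tm\sigma$, the relation $P \GOLogRel\sigma P$ unfolds, using the unique (trivially related) empty environment and closing substitution, to $\GInterpret{P}{} \GRelLift{\GValLogRel\sigma} \Eval P$ (since $\GLogRel\sigma$ is by definition $\GRelLift{\GValLogRel\sigma}$); moreover $\kappa$ is free in the statement of Lemma~\ref{lem:guarded-fundamental}, so in fact $\forall\kappa. \GInterpret{P}{} \GRelLift{\GValLogRel\sigma} \Eval P$ for every closed $P$. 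Recall also that $\forall\kappa. P \GOLogRel\sigma Q$ implies $P \CtxRef Q$ by Theorem~\ref{thm:main} — no Congruence Lemma is needed here since the terms are already closed.

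For part~\ref{item:leadsto:ctx:eq}, suppose $\Eval M \leadsto \Eval N$. Lemma~\ref{lem:prereqs:leadsto-liftrel} gives a \emph{bidirectional} equivalence: $\mu \GRelLift{\Rel} \nu$ holds iff $\mu \GRelLift{\Rel} \nu'$ does, whenever $\nu \leadsto \nu'$. Hence, working under $\forall\kappa$, we may rewrite the right argument freely along $\Eval M \leadsto \Eval N$ in either direction. Starting from $\GInterpret{M}{} \GRelLift{\GValLogRel\sigma} \Eval M$ and rewriting the right argument to $\Eval N$ yields $\forall\kappa. \GInterpret{M}{} \GRelLift{\GValLogRel\sigma} \Eval N$, i.e. $\forall\kappa. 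M \GOLogRel\sigma N$, so $M \CtxRef N$ by Theorem~\ref{thm:main}. Symmetrically, starting from $\GInterpret{N}{} \GRelLift{\GValLogRel\sigma} \Eval N$ and rewriting the right argument back from $\Eval N$ to $\Eval M$ gives $\forall\kappa. N \GOLogRel\sigma M$, so $N \CtxRef M$. Together these give $M \CtxEq N$.

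For part~\ref{item:limleadsto:ctx:eq}, suppose $\Eval M \limleadsto \Eval N$. Here only Lemma~\ref{lem:prereqs:limleadsto-liftrel} is available, and it is one-directional: from $\nu \limleadsto \nu'$ and $\mu \GRelLift{\Rel} \nu'$ it concludes $\mu \GRelLift{\Rel} \nu$. Instantiating with $\nu = \Eval M$, $\nu' = \Eval N$, $\mu = \GInterpret{N}{}$, and using $\forall\kappa. \GInterpret{N}{} \GRelLift{\GValLogRel\sigma} \Eval N$ from the Fundamental Lemma, we obtain $\forall\kappa. \GInterpret{N}{} \GRelLift{\GValLogRel\sigma} \Eval M$, i.e. $\forall\kappa. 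N \GOLogRel\sigma M$, and therefore $N \CtxRef M$ by Theorem~\ref{thm:main}. The asymmetry of $\limleadsto$ together with the one-directionality of Lemma~\ref{lem:prereqs:limleadsto-liftrel} are precisely why only this refinement direction is obtained, which matches the statement.

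I do not expect a genuine obstacle: the argument is essentially bookkeeping on top of the infrastructure already in place. The only point requiring mild care is that the reductions $\leadsto$, $\limleadsto$ and the relation $\GRelLift{\GValLogRel\sigma}$ on the right live at the level of the clock-quantified monad $\LiftDist$, while $\GInterpret{-}{}$ on the left lives in the guarded monad $\GLiftDist$; since the step-invariance lemmas hold for an arbitrary fixed clock, pushing the $\forall\kappa$ uniformly through them is legitimate, and this is what lets us invoke Theorem~\ref{thm:main} at the end.
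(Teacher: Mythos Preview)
Your proposal is correct and follows essentially the same approach as the paper: apply the Fundamental Lemma to obtain $\forall\kappa.\,\GInterpret{P}{}\GLogRel\sigma\Eval P$, rewrite the right argument along $\leadsto$ (bidirectionally, via Lemma~\ref{lem:prereqs:leadsto-liftrel}) or $\limleadsto$ (one-directionally, via Lemma~\ref{lem:prereqs:limleadsto-liftrel}), and conclude with Theorem~\ref{thm:main}. Your explicit treatment of the instantiation in part~\ref{item:limleadsto:ctx:eq} and the remark on clock quantification are sound elaborations of what the paper leaves implicit.
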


\begin{proof}
 For (\ref{item:leadsto:ctx:eq}), by Lemma~\ref{lem:prereqs:leadsto-liftrel}, $\mu \GValLogRel\sigma \Eval \,M$ if and only if $\mu \GValLogRel\sigma \Eval \,N$. By 
 Lemma~\ref{lem:guarded-fundamental} $\forall\kappa .\GInterpret{M}{} \GValLogRel\sigma \Eval \,M$, so $\forall\kappa .\GInterpret{M}{} \GValLogRel\sigma \Eval \,N$, and so
 by Theorem~\ref{thm:main} also $M \CtxRef N$. The other way is similar. The second statement is proved similarly using Lemma~\ref{lem:prereqs:limleadsto-liftrel}.
\end{proof}

For example, since $\Eval{(\YComb{\varphi}\,V)} \leadsto \Eval{(\varphi(\YComb{\varphi})V)}$, also 
$\YComb{\varphi}\,V \CtxEq \varphi(\YComb{\varphi})\,V$. Similarly, since 
\begin{align*}
 \Eval((\Lam{M})\, V) & \peq \CStep (\Eval(M[V/x])) \leadsto \Eval(M[V/x]) \\
 \Eval(\Unfold{(\Fold{V})}) & \peq \CStep(V) \leadsto V
\end{align*}
for closed values $V$ and $\Lam{M}$, 
we get the usual call-by-value $\beta$ rules up to contextual equivalence.

\subsection{A hesitant identity function}
\label{sec:hes:id}
For any type $\sigma$ and rational number $p : \II$, define the hesitant identity function $\IdHes : \sigma \to \sigma$
as 
\begin{align*}
 {\IdHes} & \defeq \Lam[z]{\YComb{}\,\IdHesHelp\,z} & \text{where } \\
 \IdHesHelp & \defeq \Lam[f]{\Lam[x]{\Choice{p}{x}{fx}}}
\end{align*}
Given an $x$, $\IdHes$ makes a probabilistic choice between immediately returning $x$ and calling itself recursively.  
We will show that ${\IdHes} \CtxEq {\sf id}$, starting with
${\sf id} \CtxRef {\IdHes}$. 

Since $\IdHes$ is a value, it suffices to show that 
\begin{equation} \label{eq:hid:goal:op}
  \GDirac{(v)} \GLogRel\sigma \Eval{(\IdHes \,V)}
\end{equation}
for all $v, V$ satisfying $v \GValLogRel\sigma V$. Now, $\Eval{(\IdHes\, V)}  \leadsto \Eval{(\YComb{}\,\IdHesHelp\, V)}$, 
and by definition of $\YComb{}$, $\YComb{}\,\IdHesHelp$ reduces in one step to a value, which we shall simply write $\YIdHesVal{}$ for. To prove 
(\ref{eq:hid:goal:op}) it therefore suffices to prove
\begin{equation} \label{eq:hid:goal:op2}
  \GDirac{(v)} \GLogRel\sigma \Eval{(\YIdHesVal{}\, V)}
\end{equation}
Since the reduction $\Eval{(\YComb{}\,\IdHesHelp\, V)} \leadsto \Eval{(\IdHesHelp \,(\YComb{}\,\IdHesHelp) \,V)}$ factors through
$\YIdHesVal{}\,V$
it follows that
\begin{align*}
 \Eval{(\YIdHesVal{}\, V)} & \leadsto \Eval{(\IdHesHelp \,(\YComb{}\,\IdHesHelp) \,V)} \\
 & \leadsto \Eval{(\IdHesHelp \,\YIdHesVal{} \,V)} \\
 & \leadsto \Eval{(\Choice{p}V{\YIdHesVal{}\, V) }} \\
 & \peq \CChoice p{\CDirac V}{\Eval{(\YIdHesVal{}\, V)}}
\end{align*}
By Example~\ref{ex:lem:convergence} then $\Eval{(\YIdHesVal\, V)} \limleadsto \CDirac V$, so that 
(\ref{eq:hid:goal:op2}) follows from Lemma~\ref{lem:prereqs:limleadsto-liftrel}.  


To prove ${\IdHes} \CtxRef {\sf id}$, since $\IdHes$ is a value, it suffices to show that 
\begin{equation} \label{eq:hid:goal}
  \GValInterpret{\IdHes}{}(v) \GLogRel\sigma \CDirac{(V)}
\end{equation}
for all $v, V$ satisfying $v \GValLogRel\sigma V$. 
Using (\ref{eq:den:Yf:open}) we compute
\begin{align*}
 \GValInterpret{\IdHes}{}(v) & \peq \GInterpret{\YComb{}(\IdHesHelp)\, z}{z\mapsto v} \\
 & \peq \ExpStep\GInterpret{\IdHesHelp(\YComb{}(\IdHesHelp))\, z}{z\mapsto v} \\
 & \peq \ExpStep\GInterpret{\Choice{p}{x}{fx}}{x\mapsto v, f \mapsto \GValInterpret{\YIdHesVal}{}} \\
 & \peq \ExpStep(\GChoice p {\GDirac{v}}{\GValInterpret{\IdHes}{}(v)})
\end{align*}
so that (\ref{eq:hid:goal}) is equivalent to 
\begin{equation}\label{eq:hid:goal:unfolded}
\Later(\GChoice p {\GDirac{v}}{\GValInterpret{\IdHes}{}(v)}\GLogRel\sigma \CDirac{(V)} )
\end{equation}
and can therefore be easily proved by guarded recursion using Lemma~\ref{lem:prereqs:step-choice}. 

\begin{remark} \label{rem:steps:clutter:examples}
 Note the benefit of having the denotational semantics on the left hand side of the relation 
 in the example above: Unfolding (\ref{eq:hid:goal}) only leads to a single (outer) $\Later$ in (\ref{eq:hid:goal:unfolded}).
 Had we used operational semantics on the left hand side, the unfolding would have been cluttered by additional steps 
 intertwined with probabilistic choices.
 On the right hand side, the additional steps disappear by the use of $\leadsto$. 
\end{remark}

\subsection{A fair coin from an unfair coin}
\label{sec:fair:unfair}

Define $\Bool \defeq \CoprodTy{\Unit}{\Unit}$ referring to the elements as $\True$ and $\False$. A \emph{coin} is a program 
of the form $\Choice{p}{\True}{\False}:\Bool$ for $p : \II$. A coin is called \emph{fair} if $p\peq \frac{1}{2}$.

Given a coin, one can encode a fair coin as follows: Toss the coin twice. If the results are different, return the first result, otherwise try again. 
This idea can be encoded as the $\pFPC$ term $\GFair$:
%
\begin{align*}
  &\GFair \defeq \YComb(\GFairHelp) : \FnTy\Unit\Bool \\
  &\begin{aligned}
\GFairHelp \defeq \Lam[g]{}\Lam[z]{} &\mathsf{let}\ x = \Choice{p}{\True}{\False}\\
&\mathsf{let}\ y = \Choice{p}{\True}{\False} \\
&\IfThen{{\sf eqbool}(x,y)}{g(z)}{x}
  \end{aligned}
\end{align*}
We now prove that this indeed gives a fair coin.

\begin{theorem}
  \label{thm:unfair-hes}
  $\GFair\Star$ is contextually equivalent to $\Choice{\frac{1}{2}}{\True}{\False}$
\end{theorem}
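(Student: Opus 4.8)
The plan is to establish the two contextual refinements $\GFair\Star \CtxRef \Choice{\frac12}{\True}{\False}$ and $\Choice{\frac12}{\True}{\False} \CtxRef \GFair\Star$ separately, using for each the tool best suited to it, exactly as for the hesitant identity above: operational semantics together with $\limleadsto$ for one direction, and denotational semantics together with guarded recursion and Theorem~\ref{thm:main} for the other. Both rest on a single computation that unfolds $\GFair\Star$ by one ``round'' of the recursion. Since $\GFair \defeq \YComb(\GFairHelp)$, I would first note $\Eval(\GFair\Star) \leadsto \Eval(\GFairHelp(\GFair)\,\Star)$, and then evaluate the two coins $x,y$ (each $\True$ with probability $p$) together with the equality test and the conditional. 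This produces a convex combination of four outcomes: the recursive call $\GFair\Star$ when $x = y$ (probabilities $p^2$ and $(1-p)^2$), the value $\True$ when $x = \True \neq y$ (probability $p(1-p)$), and $\False$ when $x = \False \neq y$ (probability $(1-p)p$); the $\CStep$s produced by $\YComb$ and by the function applications are absorbed by $\leadsto$ (using $\CStep\,\nu \leadsto \nu$ and the congruence rule for $\CChoice{}{}{}$). Regrouping the four branches by the convex algebra axioms --- as in Example~\ref{example:proving:eq} --- then gives
\[ \Eval(\GFair\Star) \leadsto \CChoice{q}{(\CChoice{\frac12}{\CDirac\True}{\CDirac\False})}{\Eval(\GFair\Star)}, \qquad q \defeq 2p(1-p) \in \II, \]
since the probability that the two coins differ is $2p(1-p)$ and, conditioned on that, $\True$ and $\False$ are equally likely. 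The analogous computation on the denotational side, using~(\ref{eq:den:Yf:open}) together with the fact that the remaining constructs (application, $\mathsf{let}$, the conditional) are step-free denotationally, gives, writing $h \defeq \GInterpret{\GFair\Star}{}$,
\[ h \peq \ExpStep\bigl(\GChoice{q}{(\GChoice{\frac12}{\GDirac{\GValInterpret{\True}{}}}{\GDirac{\GValInterpret{\False}{}}})}{h}\bigr). \]
Finally, note $\Eval(\Choice{\frac12}{\True}{\False}) \peq \CChoice{\frac12}{\CDirac\True}{\CDirac\False}$.

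For $\Choice{\frac12}{\True}{\False} \CtxRef \GFair\Star$, the self-reduction above exhibits $\Eval(\GFair\Star)$ as a ``hesitant'' distribution in the sense of Example~\ref{ex:lem:convergence}, with the point distribution there replaced by $\CChoice{\frac12}{\CDirac\True}{\CDirac\False}$; the argument of that example then applies verbatim. Iterating the reduction $n$ times gives $\Eval(\GFair\Star) \leadsto \CChoice{q_n}{(\CChoice{\frac12}{\CDirac\True}{\CDirac\False})}{\Eval(\GFair\Star)}$ with $q_n = \sum_{i=0}^n q(1-q)^i$, and since $q_n \to 1$ this yields $\Eval(\GFair\Star) \limleadsto \CChoice{\frac12}{\CDirac\True}{\CDirac\False} \peq \Eval(\Choice{\frac12}{\True}{\False})$. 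Theorem~\ref{thm:leadsto:ctx:eq}(\ref{item:limleadsto:ctx:eq}) then gives this refinement.

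For $\GFair\Star \CtxRef \Choice{\frac12}{\True}{\False}$, I would use Theorem~\ref{thm:main}, so that it suffices to prove $\GInterpret{\GFair\Star}{} \GLogRel\Bool \Eval(\Choice{\frac12}{\True}{\False})$, i.e.\ $h \GRelLift{\GValLogRel\Bool} \mu_0$ where $\mu_0 \defeq \CChoice{\frac12}{\CDirac\True}{\CDirac\False}$, and I would prove this by guarded recursion. By the displayed equation, $h \peq \ExpStep(d) \peq \Dirac{\inr(\GNext\,d)}$ with $d \defeq \GChoice{q}{(\GChoice{\frac12}{\GDirac{\GValInterpret{\True}{}}}{\GDirac{\GValInterpret{\False}{}}})}{h}$, so $h$ is a pure delay distribution and case~(\ref{item:delay}) of Definition~\ref{def:liftRel} (using $\Gzeta(\Dirac{\GNext\,d}) \peq \GNext\,d$) reduces the goal to $\Later{(\alpha:\kappa)}\ (d \GRelLift{\GValLogRel\Bool} \mu_0)$; under the $\later$, the recursion hypothesis supplies $h \GRelLift{\GValLogRel\Bool} \mu_0$. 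Rewriting $\mu_0 \peq \CChoice{q}{\mu_0}{\mu_0}$ by idempotency, Lemma~\ref{lem:prereqs:choice-lemma} reduces $d \GRelLift{\GValLogRel\Bool} \mu_0$ to the two claims $\GChoice{\frac12}{\GDirac{\GValInterpret{\True}{}}}{\GDirac{\GValInterpret{\False}{}}} \GRelLift{\GValLogRel\Bool} \mu_0$ and $h \GRelLift{\GValLogRel\Bool} \mu_0$. The second is exactly the recursion hypothesis, and the first follows from Lemma~\ref{lem:prereqs:choice-lemma} once more, since $\GDirac{\GValInterpret{\True}{}} \GRelLift{\GValLogRel\Bool} \CDirac\True$ and $\GDirac{\GValInterpret{\False}{}} \GRelLift{\GValLogRel\Bool} \CDirac\False$ both hold by case~(\ref{item:val}) of Definition~\ref{def:liftRel}, via reflexivity of $\limleadsto$ and the evident coupling, using $\GValInterpret{\True}{} \GValLogRel\Bool \True$ and $\GValInterpret{\False}{} \GValLogRel\Bool \False$ (instances of the value relation rules). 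This completes the guarded recursion and hence, by Theorem~\ref{thm:main}, $\GFair\Star \CtxRef \Choice{\frac12}{\True}{\False}$; combined with the previous paragraph, $\GFair\Star \CtxEq \Choice{\frac12}{\True}{\False}$.

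The step I expect to be the main obstacle is the rational arithmetic underlying the shared computation: verifying that the $p^2,\ p(1-p),\ (1-p)p,\ (1-p)^2$ weighted combination of $h,\ \GDirac\True,\ \GDirac\False,\ h$ really equals $\GChoice{2p(1-p)}{(\GChoice{\frac12}{\GDirac\True}{\GDirac\False})}{h}$ --- this is where the fairness actually comes from, and is the one place where the convex algebra equations must be applied with some care, e.g.\ via the $\Fin(3)$-embedding trick of Example~\ref{example:proving:eq} --- together with keeping careful track of the $\CStep$s introduced by $\YComb$ and by the function applications in the operational semantics, so that the self-reduction comes out in the clean hesitant form used above. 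Once that computation is in place, both refinements are short.
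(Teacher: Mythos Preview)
Your proposal is correct and takes essentially the same approach as the paper's proof: both establish the self-reduction to the hesitant form with weight $q = 2p(1-p)$, then use Example~\ref{ex:lem:convergence} together with Theorem~\ref{thm:leadsto:ctx:eq} for one refinement and guarded recursion together with Theorem~\ref{thm:main} for the other. The only cosmetic difference is that the paper phrases the computations in terms of the value $\YGFairVal$ that $\YComb(\GFairHelp)$ reduces to rather than $\GFair$ directly, which makes the operational self-reduction literally close on itself; since $\Eval(\GFair\Star) \leadsto \Eval(\YGFairVal\Star)$ and $\GInterpret{\GFair\Star}{} \peq \GInterpret{\YGFairVal\Star}{}$, this is immaterial.
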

\begin{proof}
Let ${\YGFairVal}$ be the value that $\YComb(\GFairHelp)$ unfolds to. 
Unfolding definitions shows that 
\begin{align*}
\Eval{(\YGFairVal\Star)} 
& \leadsto \CChoice{p} {(\CChoice{p}{\Eval(\YGFairVal\Star)}{\CDirac\True})\!}{\!(\CChoice{p}{\CDirac\False}{\Eval(\YGFairVal\Star)})} \\
& \peq \CChoice{2p(1-p)}{(\CChoice{\frac12}{\CDirac\True}{\CDirac\False})}{\Eval{(\YGFairVal\Star)}}
\end{align*}
From this, it follows that $\Eval{(\YGFairVal\Star)} \limleadsto\CChoice{\frac12}{\CDirac(\True)}{\CDirac(\False)}$ as in Example~\ref{ex:lem:convergence}, and since 
$\Eval{(\GFair\Star)} \leadsto \Eval{(\YGFairVal\Star)}$, also $\Choice{\frac12}\True\False\CtxRef\GFair\Star$ follows from Theorem~\ref{thm:leadsto:ctx:eq}. 

On the other hand, to show $\GFair\Star \CtxRef \Choice{\frac12}\True\False$, since $\GFair \leadsto \YGFairVal$, it suffices to show that 
$\GInterpret{\YGFairVal\Star}{}\GLogRel\Bool \Eval{(\Choice{\frac{1}{2}}{\True}{\False})}$. Computing 
\begin{align*}
 \GInterpret{\YGFairVal\Star}{} & =  \ExpStep\GInterpret{\GFairHelp\,(\YComb\,\GFairHelp)\Star}{} \\
  & =  \ExpStep\GInterpret{\GFairHelp\,\YGFairVal\,\Star}{} \\
  & =  \ExpStep\left(\GChoice p{\left(\GChoice p{\GInterpret{\YGFairVal\,\Star}{}}{\GDirac(\True)}\right)}{\left(\GChoice p{\GDirac(\False)}{(\GInterpret{\YGFairVal\,\Star}{})}\right)}\right) \\
  & = \ExpStep\left(\GChoice {2p(1-p)}{\left(\GChoice {\frac12}{\GDirac(\True)}{\GDirac(\False)}\right)}{\GInterpret{\YGFairVal\,\Star}{}}\right)
\end{align*}
So that $\GInterpret{\YGFairVal\Star}{}\GLogRel\Bool \Eval{(\Choice{\frac{1}{2}}{\True}{\False})}$
is equivalent to 
\[
 \Later\left(\GChoice {2p(1-p)}{(\GChoice {\frac12}{\GDirac(\True)}{\GDirac(\False)})}{\GInterpret{\YGFairVal\,\Star}{}} \GLogRel\Bool \Eval{(\Choice{\frac{1}{2}}{\True}{\False})}\right)
\]
and can therefore be proved by guarded recursion.
\end{proof}

\subsection{Random Walk}
\label{sec:random:walk}

The next example, inspired by~\cite{DBLP:conf/esop/0001BBBG018}, is an equivalence of two random walks. 
We represent a random walk by its trace, which is a potentially infinite list of integers. 
In a call-by-value language such terms are best implemented as \emph{lazy lists}. We define
an $\ProbFPC$ type of these as 
\[
\LazyL \defeq \RecTy{1 + \Nat \times (\FnTy\Unit{X})}
\]
as well as a familiar interface for programming with these in Figure~\ref{ex:lazyl}. 
\begin{figure}[ht]
  \begin{align*}
    &\begin{aligned}
    &\Nil : \LazyL \\
    &\Nil \defeq \Fold{(\Inl(\Star))} \\ \\
    &\Head : \LazyL \to \Nat + 1 \\
    &\Head \defeq \Lam[l]{\Case{\Unfold{l}}{x.\Inr{\Star}}{(n,f).\Inl{\Numeral{n}}}}
    \end{aligned}
    \qquad \qquad
    \begin{aligned}
    &\Tail : \LazyL \to \LazyL \\
    &\Tail \defeq \Lam[l] {\Case{\Unfold{l}}{x.\Nil}{(n,f).f(\Star)}} \\ \\
    &\Cons{-}{-} : \Nat \to (\FnTy\Unit\LazyL) \to \LazyL \\
    &\Cons{-}{-} \defeq \Lam[n]{\Lam[f]{\Fold{(\Inr(n,f))}}}
    \end{aligned}
  \end{align*}
    \caption{Lazy List Interface}
  \label{ex:lazyl}
\end{figure}

Using these we can define a lazy list processing function $\EverySnd : \LazyL \to \LazyL$, returning a lazy list that contains every second element of the argument. 
Define $\EverySnd \defeq \YComb(G_{\EverySnd})$ where
\begin{align*}
  G_{\EverySnd} &: (\FnTy\LazyL\LazyL) \to \LazyL \to \LazyL \\
  G_{\EverySnd} & \defeq \Lam[g]{\Lam[l]{\Case{\Unfold{(l)}}{x.\Nil}{(n,f).\Cons{n}{\Lam[y]g(\Tail(f(\Star)))}}}}
\end{align*}
 For this example, we consider two random walks: a classical random walk $\RandW : \Nat \to \LazyL$, which in each step takes a step up or down, each with probability $\frac12$, 
 and a sped-up version $\RandWTwo : \Nat \to \LazyL$, which stays put with probability $\frac12$, and otherwise takes two steps either up or down.
 Define these as 
\begin{align*}
 \RandW & \defeq \YComb{(G_{\RandW})} & 
 \RandWTwo & \defeq \YComb{(G_{\RandWTwo})}  
\end{align*}
where 
    \begin{align*}
      G_{\RandW} & \defeq \Lam[g]{\Lam[n]{\Cons{n}{\Lam[y]{\Ifz{n}{\Nil}{\Choice{\frac{1}{2}}{g(n-1)}{g(n+1)}}}}}} \\
      G_{\RandWTwo} & \defeq \Lam[g]{\Lam[n]{\Cons{n}{\Lam[y]{\Ifz{n}{\Nil}{\Choice{\frac{1}{2}}{g(n)}{\Choice{\frac{1}{2}}{g(n-2)}{g(n+2)}}}}}}}
    \end{align*}
Intuitively, $\RandWTwo$ is just $\RandW$ at double speed. The following theorem makes this intuition precise.
    \begin{theorem}
      \label{ex:thm:randw}
      For all $n: \NN$ we have that $\EverySnd(\RandW(\Numeral{2n})) \CtxEq \RandWTwo(\Numeral{2n})$.
    \end{theorem}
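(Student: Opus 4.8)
The plan is to invoke Theorem~\ref{thm:main} in both directions: it suffices to prove, for every $n : \NN$, that $\forall\kappa.\,\EverySnd(\RandW(\Numeral{2n})) \GOLogRel\LazyL \RandWTwo(\Numeral{2n})$ and $\forall\kappa.\,\RandWTwo(\Numeral{2n}) \GOLogRel\LazyL \EverySnd(\RandW(\Numeral{2n}))$, which by Theorem~\ref{thm:main} give the two halves of $\CtxEq$. Since all the terms are closed, the first statement unwinds to $\GInterpret{\EverySnd(\RandW(\Numeral{2n}))}{} \GLogRel\LazyL \Eval(\RandWTwo(\Numeral{2n}))$, i.e.\ $\GInterpret{\EverySnd(\RandW(\Numeral{2n}))}{} \GRelLift{\GValLogRel\LazyL} \Eval(\RandWTwo(\Numeral{2n}))$, and symmetrically for the second. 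I would prove each of these two statements by guarded recursion, with the recursion variable providing $\Later\bigl(\forall n.\,(\dots)\bigr)$; the universal quantifier over $n$ is essential, as the recursive calls land at $n-1$, $n$ and $n+1$. Before starting the recursion I would normalise both sides, using the denotational equations for $\YComb{}$ (equations~(\ref{eq:den:beta:red}) and~(\ref{eq:den:Yf:open})) to replace each $\YComb(G_-)$ by the value it unfolds to, and, on the operational side, the invariance of $\GRelLift{\Rel}$ under $\leadsto$ and $\limleadsto$ (Lemmas~\ref{lem:prereqs:leadsto-liftrel} and~\ref{lem:prereqs:limleadsto-liftrel}).

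\emph{The main case $n = m+1$.}
Here $2n \geq 2$, so no $\Ifz{}{}{}$-guard fires on the first two random steps. Unfolding the operational side, $\Eval(\RandWTwo(\Numeral{2n}))$ produces its first cons cell after a run of $\CStep$ operations with \emph{no} intervening probabilistic choice (the choices sit inside the tail thunk), hence it $\leadsto$-reduces to $\CDirac(\Cons{\Numeral{2n}}{\Lam[y]{B}})$ for a suitable body $B$; by Lemma~\ref{lem:prereqs:leadsto-liftrel} these steps are absorbed. Unfolding the denotational side gives $(\ExpStep)^{j}\bigl(\GDirac(w)\bigr)$ for some $j$ and a semantic value $w : \GSVal\LazyL$ built from head $2n$ and a tail thunk $g$; each $\ExpStep$ is peeled off via Definition~\ref{def:liftRel}(\ref{item:delay}), so the goal becomes, under $j$ nested $\Later$'s, $\GDirac(w) \GRelLift{\GValLogRel\LazyL} \CDirac(\Cons{\Numeral{2n}}{\Lam[y]{B}})$. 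A coupling between two point masses of values exists exactly when the values are $\GValLogRel\LazyL$-related, so this reduces to $w \GValLogRel\LazyL \Cons{\Numeral{2n}}{\Lam[y]{B}}$; unfolding the value relation at $\RecTy{}$, at $\CoprodTy{}{}$ and at $\ProdTy{}{}$ (which adds more $\Later$'s, harmless since $\Later\phi \to \Later\Later\phi$) leaves exactly two obligations: $2n \GValLogRel\Nat \Numeral{2n}$, which is immediate, and $g(\star) \GLogRel\LazyL \Eval(B)$.

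\emph{Closing the tail obligation.}
Here $g(\star)$ is the denotation of $\EverySnd$ applied to $\Tail$ of the trace of $\RandW$ one step on from $\Numeral{2n}$. I would expand it using that the Kleisli extensions are convex delay algebra homomorphisms --- so bind distributes over $\DistChoiceOp$ --- together with associativity of bind and the convex algebra axioms (comm) and (assoc) to renormalise the nested $\frac{1}{2}$-choices; modulo $\ExpStep$- and $\leadsto$-steps, which the reasoning principles of Figure~\ref{fig:rules:rel:lift} (notably Lemmas~\ref{lem:prereqs:step-choice} and~\ref{lem:prereqs:leadsto-liftrel}) let me commute out and absorb, this rewrites $g(\star)$ into the convex combination with weights $\frac{1}{2},\frac{1}{4},\frac{1}{4}$ of $\GInterpret{\EverySnd(\RandW(\Numeral{2n}))}{}$, $\GInterpret{\EverySnd(\RandW(\Numeral{2(n-1)}))}{}$ and $\GInterpret{\EverySnd(\RandW(\Numeral{2(n+1)}))}{}$, and rewrites $\Eval(B)$ into the same convex combination of $\Eval(\RandWTwo(\Numeral{2n}))$, $\Eval(\RandWTwo(\Numeral{2(n-1)}))$ and $\Eval(\RandWTwo(\Numeral{2(n+1)}))$ --- this is the only genuinely probabilistic content, namely that two unbiased $\RandW$-steps from $2n$ land on $2n-2,2n,2n+2$ with probabilities $\frac{1}{4},\frac{1}{2},\frac{1}{4}$, matching a single $\RandWTwo$-step. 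Two applications of the choice lemma (Lemma~\ref{lem:prereqs:choice-lemma}) then reduce the goal to relating the three pairs of branches, each of which is an instance of the statement being proved at $n-1$, $n$, $n+1$ and so is supplied by the guarded recursion hypothesis. For $n=0$ the $\Ifz{}{}{}$-guard is $0$, so both $\EverySnd(\RandW(\Numeral{0}))$ and $\RandWTwo(\Numeral{0})$ reduce to the one-element list $\Cons{\Numeral{0}}{\Lam[y]{\Nil}}$ (using $\EverySnd(\Nil) \equiv \Nil$ and $\Tail(\Nil) \equiv \Nil$), and after the same absorption the goal is $\GDirac(\GValInterpret{\Nil}{}) \GRelLift{\GValLogRel\LazyL} \CDirac(\Nil)$, which holds because $\GValInterpret{\Nil}{} \GValLogRel\LazyL \Nil$ (by the Fundamental Lemma, Lemma~\ref{lem:guarded-fundamental}, applied to the closed term $\Nil$, or directly). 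The converse direction $\GInterpret{\RandWTwo(\Numeral{2n})}{} \GLogRel\LazyL \Eval(\EverySnd(\RandW(\Numeral{2n})))$ is entirely symmetric: on the operational side the composite $\EverySnd(\RandW(\dots))$ still produces its first cons cell after a run of $\CStep$'s with no intervening choice, so the same absorption applies.

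\emph{The main obstacle.}
The probabilistic reasoning is trivial; the real work is the step bookkeeping. I would need the exact $\ExpStep$/$\CStep$-normal forms of both semantics after unfolding $\YComb{}$, $\Unfold{}$, $\Ifz{}{}{}$, $\Tail$ and $\EverySnd$, tracking that denotationally only $\Unfold{}$ and the $\YComb{}$-unfolding cost a step whereas operationally there are many more, and I must verify that on the operational side no probabilistic choice is triggered before the relevant cons cell is produced, so that all those operational steps can be absorbed uniformly by $\leadsto$ and the guarded recursion hypothesis applied underneath one (multiply-nested) layer of $\Later$. The abstract rules of Figure~\ref{fig:rules:rel:lift} make this routine rather than delicate, but there is no shortcut around unfolding $G_{\EverySnd}$, $G_{\RandW}$ and $G_{\RandWTwo}$ concretely to see that the positions $2(n-1), 2n, 2(n+1)$ and the probabilities $\frac{1}{4},\frac{1}{2},\frac{1}{4}$ line up.
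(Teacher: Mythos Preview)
Your proposal is correct and follows essentially the same route as the paper: invoke Theorem~\ref{thm:main} in both directions, prove each inclusion by guarded recursion (quantifying over $n$), unfold both sides to expose the cons cell with head $\Numeral{2n}$, reduce to the tail-thunk obligation, rewrite the two-step $\RandW$ outcome into the $\frac12,\frac14,\frac14$ combination over positions $2n,2(n-1),2(n+1)$, and close with Lemma~\ref{lem:prereqs:choice-lemma} and the guarded hypothesis; the paper additionally passes explicitly to the values $\VRandW,\VRandWTwo,\VEverySnd$ before starting the recursion, and uses Lemma~\ref{lem:prereqs:step-choice} to pull the $\ExpStep$'s outside the choice on the denotational side, exactly as you anticipate under ``step bookkeeping''.
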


\begin{proof}[Proof (Sketch)]
 Let $V_{\RandW}$, $V_{\RandWTwo}$ and $\VEverySnd$ be the values that 
 $\RandW$, $\RandWTwo$ and $\EverySnd$ reduce to. It suffices to show that $\VEverySnd(\VRandW(\Numeral{2n})) \CtxEq \VRandWTwo(\Numeral{2n})$. 
 We just show $\GInterpret{\VEverySnd(\VRandW(\Numeral{2n}))}{} \GLogRel\LazyL \Eval(\VRandWTwo(\Numeral{2n}))$ by guarded recursion, omitting the
 other direction, which can be proved similarly. We focus just on the case of  $n>0$, which is the harder one. Unfolding definitions shows that 
\begin{align*}
 \GInterpret{\VEverySnd(V_{\RandW}(\Numeral{2n}))}{} 
 & \peq (\ExpStep)^3( \GDirac(\nextop(\SemInr{\SemPair{\Numeral{2n}}{\lambda\_ . W_{\RandW}}})))
\end{align*}
where 
\[
W_{\RandW} =  
  \left( \GChoice{\frac12}{\GInterpret{\VEverySnd(\Tail(\VRandW(\Numeral{2n\!-\!1})))}{}}{\GInterpret{\VEverySnd(\Tail(\VRandW(\Numeral{2n\!+\!1})))}{}}  \right) 
\]
Similarly, 
\begin{align*}
 \Eval(\VRandWTwo(\Numeral{2n})) 
 & \leadsto \Fold{(\Inr(\Numeral{2n}, \Lam[y]{\Ifz{\Numeral{2n}}{\Nil}{\Choice{\frac{1}{2}}{\VRandWTwo(\Numeral{2n})}{\Choice{\frac{1}{2}}{\VRandWTwo(\Numeral{2n\!-\!2})}{\VRandWTwo(\Numeral{2n\!+\!2})}}}}))}
\end{align*}
So showing $\GInterpret{\VEverySnd(\VRandW(\Numeral{2n}))}{} \GLogRel\LazyL \Eval(\VRandWTwo(\Numeral{2n}))$ easily reduces to showing 
\begin{equation} \label{eq:randomwalk:1}
 \Later(W_{\RandW} \GLogRel\LazyL \Eval(\Choice{\frac{1}{2}}{\VRandWTwo(\Numeral{2n})}{\Choice{\frac{1}{2}}{\VRandWTwo(\Numeral{2n\!-\!2})}{\VRandWTwo(\Numeral{2n\!+\!2})}} ))
\end{equation}
Since $2n-1>0$, unfolding definitions gives 
\begin{align*}
 \GInterpret{\VEverySnd(\Tail(\VRandW(\Numeral{2n\!-\!1})))}{} 
 & \peq (\ExpStep)^2\left(\GChoice{\frac12}{\GInterpret{\VEverySnd(\VRandW(\Numeral{2n\!-\!2}))}{}}{\GInterpret{\VEverySnd(\VRandW(\Numeral{2n}))}{}}\right)   \\
 \GInterpret{\VEverySnd(\Tail(\VRandW(\Numeral{2n\!+\!1})))}{} 
 & \peq (\ExpStep)^2\left(\GChoice{\frac12}{\GInterpret{\VEverySnd(\VRandW(\Numeral{2n}))}{}}{\GInterpret{\VEverySnd(\VRandW(\Numeral{2n\!-\!2}))}{}}  \right)
\end{align*}
so by Lemma~\ref{lem:prereqs:step-choice} (\ref{eq:randomwalk:1}) is equivalent to 
\[
  (\Later)^3
  \left( 
  \begin{aligned}
   &\GChoice{\frac12}{\GInterpret{\VEverySnd(\VRandW(\Numeral{2n}))}{}}{\left(\GChoice{\frac12}{\GInterpret{\VEverySnd(\VRandW(\Numeral{2n\!-\!2}))}{}}{\GInterpret{\VEverySnd(\VRandW(\Numeral{2n\!+\!2}))}{}}\right)} \\
  & \GLogRel\LazyL \CChoice{\frac{1}{2}}{\Eval(\VRandWTwo(\Numeral{2n}))}{\left(\CChoice{\frac{1}{2}}{\Eval(\VRandWTwo(\Numeral{2n\!-\!2}))}{\Eval(\VRandWTwo(\Numeral{2n\!+\!2}))}\right)} 
  \end{aligned}
  \right)
\]
which follows from the guarded induction hypothesis and Lemma~\ref{lem:prereqs:choice-lemma}. 
\end{proof}

\begin{remark}
	Note again how working with denotational semantics simplifies this proof. This enables us to take two steps in the walk defined by $V_{\RandW}$,
	collect all outcomes, and couple them with the outcomes of a single step of $V_{\RandWTwo}$. Other coupling-based logics like~\cite{DBLP:conf/esop/0001BBBG018}
	require specialized rules to deal with these two-to-one couplings but in our setting it is just a simple consequence of our definitions.
\end{remark}



\section{Related Work}
\label{sec:related-work}

We have discussed some related work in the Introduction and throughout the paper; here we discuss
additional related work.

In this paper, we have shown how to use synthetic guarded domain theory (SGDT) to model $\ProbFPC$.
SGDT has been used in earlier work to model PCF \cite{DBLP:journals/entcs/PaviottiMB15}, a call-by-name variant of FPC \cite{DBLP:journals/mscs/MogelbergP19}, FPC with general references \cite{DBLP:journals/corr/abs-2210-02169,DBLP:journals/corr/abs-2307-16608}, untyped lambda calculus with nondeterminism \cite{DBLP:journals/corr/abs-2112-14056}, and guarded interaction trees \cite{gitrees}.
Thus the key new challenge addressed in this paper is the modelling of probabilistic choice, in combination with recursion, which led us to introduce (guarded) convex delay algebras.

In previous works on SDGT models of PCF and FPC, the logical relation between syntax and semantics required a 1-to-1 correspondence between the steps on either side. 
The previous work on nondeterminism \cite{DBLP:journals/corr/abs-2112-14056} is in some ways closer to our work: To model the combination of non-determinism and recursion in the
case of may-simulation, it uses a monad defined similarly to $\GLiftDist$, but for the finite powerset monad, rather than distributions. It also uses a logical relation between syntax and
semantics, which similarly to ours is a refinement relation. The theory developed for $\GLiftDist$ here, such as the $\leadsto$ relation, and the lifting of relations using couplings is
new. The applications are also different:
%
%
Here we use the relation to reason about contextual refinement whereas in \emph{op. cit.} it is used for proving congruence of an applicative simulation relation defined on operational semantics.

The use of step-indexed logical relations to model the combination of
probabilistic choice, recursive types, polymorphism and other expressive
language features has also been explored in~\cite{DBLP:conf/fossacs/BizjakB15,
DBLP:journals/pacmpl/AguirreB23, gregersenAHTB24}. These papers use an explicit
account of step-indexing, and rely on non-constructive mathematics to define
their operational semantics and to prove their soundness theorems. Our approach
is constructive, and our use of SGDT eliminates the need for explicit
manipulation of step indices. Moreover, their logical relations are defined purely
in operational terms.

Applicative bisimulation for a probabilistic, call-by-value version of PCF is studied in~\cite{DBLP:conf/esop/CrubilleL14}. This work defines an approximation-based operational semantics, where a term evaluates to a family of finite distributions over values, each element of the
family corresponding to the values observed after a given finite number of steps. This is reminiscent of our semantics, although in \emph{op. cit.} the approximation semantics is then used to define a limit semantics using suprema.

Probabilistic couplings and liftings have been a popular technique in recent years to reason about probabilistic programs~\cite{DBLP:conf/popl/BartheGB09, DBLP:conf/lpar/BartheEGHSS15}, since they provide a compositional way to lift relations from
base types to distributions over those types. To the best of our knowledge, our presentation is the first that uses constructive mathematics in the definition of
couplings and the proofs of their properties, e.g. the bind lemma. Our definition of couplings is also asymmetric to account for the fact that the distribution on the left uses guarded recursion. This is similar in spirit to the left-partial coupling definition of \cite{gregersenAHTB24}, where asymmetry is used to account for step indexing.

In this paper, we model the probabilistic effects of $\ProbFPC$ using guarded / co-inductive types. In spirit, this is similar to how effects are being modeled in both coinductive and guarded interaction trees \cite{DBLP:journals/pacmpl/XiaZHHMPZ20,gitrees}. Coinductive interaction trees are useful for giving denotational semantics for first-order programs and are defined using ordinary type theory, whereas guarded interaction trees can also be applied to give denotational semantics for higher-order programs, but are defined using a fragment of guarded type theory.
Interaction trees have recently been extended to account for nondeterminism \cite{DBLP:journals/pacmpl/ChappeHHZZ23} but, to the best of our knowledge, interaction trees have so far not been extended to account for probabilistic effects.





\section{Conclusion and Future Work}
\label{sec:conclusion}

We have developed a notion of (guarded) convex delay algebras and shown how to use
it to define and relate operational and denotational semantics for $\ProbFPC$ in
guarded type theory. To the best of our knowledge, this is the first
constructive type theoretic account of the semantics of $\ProbFPC$. The denotational
semantics can be viewed as a shallow embedding of $\ProbFPC$ in constructive type
theory, which can therefore be used directly as a probabilistic programming language. 
Our examples show how to use the relation between syntax and semantics for proving
contextual equivalence of $\ProbFPC$ programs. The use of denotational semantics
for these simplifies proofs by using much fewer steps than the operational semantics.

Future work includes combining and extending the present work with the account of nondeterminism
in \cite{DBLP:journals/corr/abs-2112-14056} and to compare the resulting model
with the recent classically defined operationally-based logical relation in \cite{DBLP:journals/pacmpl/AguirreB23}.
%


\begin{acks}
  This work was supported in part by the Independent Research Fund Denmark grant number
  2032-00134B, in part by a Villum Investigator grant
  (no. 25804), Center for Basic Research in Program Verification (CPV), from the
  VILLUM Foundation, and in part by the European Union (ERC, CHORDS, 101096090).
  Views and opinions expressed are however those of the author(s) only and do
  not necessarily reflect those of the European Union or the European Research
  Council. Neither the European Union nor the granting authority can be held
  responsible for them.
\end{acks}

\bibliographystyle{ACM-Reference-Format}
\bibliography{bib.bib}

\appendix

\section{Omitted proofs}

This appendix contains proofs omitted from the main text to be used in the review process. 

\subsection*{Section~\ref{sec:cctt}}

\begin{proof}[Proof of Lemma~\ref{lem:embed:to:cirr}]
 The canonical map $A \to \forall\kappa. A$ maps $a$ to $\lambda\kappa.a$. The map defined by application
 to the clock constant $\clockc$ defines a left inverse to this map. It suffices to show that this is also
 a right inverse, i.e., that $\capp a \peq \capp[\clockc]a$ for all $a : \forall\kappa . A$ and all $\kappa$. Since
 the canonical map $B \to \forall\kappa . B$ is assumed to be an equivalence, and application to $\clockc$ is
 a left inverse, it must also be a right inverse, so the corresponding property
 $\capp b \peq \capp[\clockc]b$ for all $b : \forall\kappa . B$. Applying this to $b = \lambda\kappa. i(\capp a)$
 we get $i(\capp a) \peq i(\capp[\clockc] a)$, which by assumption implies $\capp a \peq \capp[\clockc]a$.
\end{proof}

\subsection*{Section~\ref{sec:fin:dist}}

\begin{proof}[Proof of Lemma~\ref{lem:Fin:Dist}]
 Let $\Dist'(\Fin(n)) \defeq \Sigma( f : \Fin(n) \to \IIc) . \mathsf{sum}(f) = 1$, and note
 that this carries a convex algebra structure 
 as well as contains $\Fin(n)$ via a dirac map $\DiracOp$ both defined in the standard way. We show that this is free by induction
 on $n$. In the case of $n=0$, clearly $\Dist'(\Fin(0))$ is empty, and so the case follows. For the inductive
 case, let $n$ be the element of $\Fin(n+1)$ not in the inclusion from $\Fin(n)$. Given $f : \Dist'(\Fin(n+1))$ we must define $\uext g(f)$.
 By decidability of equality for $\IIc$, we can branch on the values of $f(n)$. If $f(n) = 0$, then $f$ is in the image of the obvious
 inclusion from $\Dist'(\Fin(n))$, and so $\uext g(f)$ is defined by induction. If $f(n) = 1$ then $f= \DiracOp(n)$ and so
 $\uext g(f)$ must be $g(n)$. Finally, if $f(n) = p \in \II$, then $f = \DistChoice p{\Dirac n}{f'}$ for some $f' : \Dist'(\Fin(n))$.
 By induction we get $\uext g(f')$ and so can define $\uext g(f) = \DistChoice p{g(n)}{\uext g(f')}$.

 To prove that $\uext g$ is a
 homomorphism, suppose we are given $f : \Dist'(\Fin(n+1))$ and $h : \Dist'(\Fin(n+1))$. There are nine possible cases of $f(n)$ and $h(n)$
 and we just show the case where $f(n)= q$ and $h(n) = r$. In that case $f \peq \DistChoice q{\Dirac n}{f'}$ and
 $h \peq \DistChoice r{\Dirac n}{h'}$ for some $f', h' : \Dist'(\Fin(n+1))$. From the axioms of convex algebras, one can compute, given
 $p,q,r : \II$, probabilities $p',q',r'$ such that
 $\DistChoice p{(\DistChoice q ab)}{(\DistChoice r cd)} \peq \DistChoice{p'}{(\DistChoice{q'} ac)}{(\DistChoice{r'} bd)}$
 holds for all $a,b,c,d$ in any convex algebra. We use this to get
\begin{align*}
 \uext g(\DistChoice pfh) 
 & \peq \uext g(\DistChoice{p'}{\Dirac n}{(\DistChoice{r'}{f'}{h'})}) \\
 & \peq \DistChoice{p'}{g(n)}{\uext g(\DistChoice{r'}{f'}{h'})} \\
 & \peq \DistChoice{p'}{g(n)}{(\DistChoice{r'}{\uext g(f')}{\uext g(h')})} \\
 & \peq \DistChoice {p'}{(\DistChoice {q'} {g(n)}{g(n)})}{(\DistChoice{r'} {\uext g(f')}{\uext g(h')})} \\
 & \peq \DistChoice {p}{(\DistChoice {q} {g(n)}{\uext g(f')})}{(\DistChoice{r} {g(n)}{\uext g(h')})} \\
 & \peq \DistChoice p{\uext g(f)}{\uext g(h)}.
\end{align*}
Using the induction step to conclude  $\uext g(\DistChoice{r'}{f'}{h'}) \peq \DistChoice{r'}{\uext g(f')}{\uext g(h')}$ in the
third equality.
\end{proof}

In constructive mathematics, a set is said to be Kuratowski-finite if it is the codomain of a bijection from a set of the form
$\Fin(n)$. The distributions of $\Dist(A)$ satisfy a similar property.
%

\begin{lemma} \label{lem:dist:Kfinite}
 For any $\mu : \Dist (A)$, there exists an $n : \NN$, a map $f: \Fin(n) \to A$, and a distribution $\nu : \Dist(\Fin (n))$
 such that $\mu \peq \Dist(f)(\nu)$
\end{lemma}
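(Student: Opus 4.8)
The plan is to prove the statement by induction on the higher inductive type $\Dist(A)$. The conclusion "there exists $n$, $f$, $\nu$ with $\mu \peq \Dist(f)(\nu)$" is an existential, hence (being built from propositional truncation) a proposition, so the induction principle recalled after \autoref{def:Dist} reduces the goal to the two point constructors $\DiracOp$ and $\DistChoiceOp$: the path constructors (idempotency, commutativity, associativity, set truncation) impose no obligation when the motive is a proposition. Moreover, in the choice case we may unpack the existential induction hypotheses using the elimination principle for $\exists$, precisely because the goal we are proving is itself a proposition.

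\textbf{Base case.} If $\mu \peq \Dirac(a)$, take $n \defeq 1$, let $\star$ denote the unique element of $\Fin(1) \equi 0 + 1$, let $f : \Fin(1) \to A$ be the constant map at $a$, and put $\nu \defeq \Dirac(\star)$. Then, by naturality of $\DiracOp$, $\Dist(f)(\Dirac(\star)) \peq \Dirac(f(\star)) \peq \Dirac(a) \peq \mu$.

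\textbf{Inductive case.} Suppose $\mu \peq \DistChoice p{\mu_1}{\mu_2}$. Unpacking the induction hypotheses via $\exists$-elimination gives numbers $n_1, n_2 : \NN$, maps $f_i : \Fin(n_i) \to A$, and distributions $\nu_i : \Dist(\Fin(n_i))$ with $\mu_i \peq \Dist(f_i)(\nu_i)$. Using the equivalence $\Fin(n_1 + n_2) \equi \Fin(n_1) + \Fin(n_2)$ (an easy induction on $n_2$), set $n \defeq n_1 + n_2$, let $f : \Fin(n) \to A$ correspond to the copairing $[f_1, f_2]$, and put $\nu \defeq \DistChoice p{\Dist(\inl)(\nu_1)}{\Dist(\inr)(\nu_2)}$. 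Since $\Dist(f)$ is a homomorphism of convex algebras and $\Dist$ is functorial, we compute
\begin{align*}
 \Dist(f)(\nu)
 & \peq \DistChoice p{\Dist(f)(\Dist(\inl)(\nu_1))}{\Dist(f)(\Dist(\inr)(\nu_2))} \\
 & \peq \DistChoice p{\Dist(f\circ\inl)(\nu_1)}{\Dist(f\circ\inr)(\nu_2)} \\
 & \peq \DistChoice p{\Dist(f_1)(\nu_1)}{\Dist(f_2)(\nu_2)} \peq \DistChoice p{\mu_1}{\mu_2} \peq \mu.
\end{align*}

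\textbf{Main obstacle.} There is no serious difficulty here: the only bookkeeping is the equivalence $\Fin(n_1+n_2) \equi \Fin(n_1) + \Fin(n_2)$ together with the functoriality equation $\Dist(g \circ h) \peq \Dist(g) \circ \Dist(h)$. The key structural observation is simply that, because the target is a proposition, HIT induction on $\Dist(A)$ needs only the two generator cases and lets us treat the recursive hypotheses as concrete witnesses.
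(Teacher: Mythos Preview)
Your proof is correct and follows essentially the same approach as the paper: induction on $\mu$, with the base case using $n=1$ and the constant map, and the inductive case using $n_1+n_2$ together with the copairing and the equivalence $\Fin(n_1+n_2)\equi\Fin(n_1)+\Fin(n_2)$. Your write-up is in fact more careful than the paper's in making explicit why HIT induction suffices (the goal is a proposition) and why the existential hypotheses may be unpacked.
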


\begin{proof}
 This is proved by induction on $\mu$. In the case of $\Dirac a$ take $n=1$ and $f$ the constant map to $a$. In case
 of $\DistChoice p\mu{\mu'}$ we get by induction $n, n', f : \Fin(n) \to A, f' : \Fin(n') \to A, \nu : \Dist(\Fin(n)), \nu': \Dist(\Fin(n'))$
 such that $\mu \peq \Dist(f)(\nu)$ and $\mu' \peq \Dist(f)(\nu')$. Set $m = n+ n'$ such that
 $\Fin(m) \equi \Fin(n) + \Fin(n')$, define $\rho \defeq \Dist(\inl)(\nu)$, $\rho' \defeq \Dist(\inr)(\nu')$
 and $g : \Fin(m) \to A$ to be the copairing of $f$ and $f'$. Then $\Dist(g)(\DistChoice p\rho{\rho'}) = \DistChoice p\mu{\mu'}$.
\end{proof}



To prove Theorem~\ref{thm:dist:sum:equiv}, we first prove a lemma:

\begin{lemma} \label{lem:subdist}
 The types $\Dist(A+1)$ and $1+ \IIdc\times \Dist A$ are equivalent.
\end{lemma}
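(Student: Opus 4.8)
The plan is to realise the right-hand side $S \defeq 1 + \IIdc \times \Dist A$ as the convex algebra of \emph{subdistributions} on $A$ of total mass at most $1$, and then appeal to the universal property of $\Dist(A+1)$ as a free convex algebra. We write $\IIdc$ for the half-open rational interval (closed at $1$); note $\IIdc \equi \II + 1$, adjoining the point $1$, so via this decomposition an element of $S$ is either $\inl(\star)$ or a pair $\inr(p,\mu)$ in which $p$ is $1$ or lies in $\II$. Reading $\inl(\star)$ as the zero subdistribution and $\inr(p,\mu)$ as the subdistribution of total mass $p$ whose normalisation is $\mu$, I would first equip $S$ with the convex algebra structure
\[
  \DistChoice r{(\inl(\star))}{(\inl(\star))} \defeq \inl(\star), \qquad
  \DistChoice r{(\inl(\star))}{(\inr(q,\nu))} \defeq \inr((1-r)q,\nu), \qquad
  \DistChoice r{(\inr(p,\mu))}{(\inl(\star))} \defeq \inr(rp,\mu),
\]
\[
  \DistChoice r{(\inr(p,\mu))}{(\inr(q,\nu))} \defeq \inr\!\Bigl(rp + (1-r)q,\ \DistChoice{\frac{rp}{rp+(1-r)q}}{\mu}{\nu}\Bigr),
\]
checking that the rationals produced stay in the required intervals. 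Idempotency and commutativity reduce to the corresponding laws of $\Dist A$ after small manipulations of the renormalisation factors (using that $\frac{rp}{p} = r$, etc.), and associativity is the routine verification that convex combination of subdistributions is associative.

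Next I would define $\phi : S \to \Dist(A+1)$ by $\phi(\inl(\star)) \defeq \Dirac(\inr(\star))$, $\phi(\inr(1,\mu)) \defeq \Dist(\inl)(\mu)$, and $\phi(\inr(p,\mu)) \defeq \DistChoice p{(\Dist(\inl)(\mu))}{(\Dirac(\inr(\star)))}$ for $p : \II$; using that $\Dist(\inl)$ is a homomorphism and reassociating convex sums in $\Dist(A+1)$, a case analysis shows $\phi$ is a homomorphism. In the other direction, let $u : A + 1 \to S$ be $u(\inl(a)) \defeq \inr(1,\Dirac a)$ and $u(\inr(\star)) \defeq \inl(\star)$, and set $\psi \defeq \uext u : \Dist(A+1) \to S$, the unique homomorphism with $\psi \circ \DiracOp = u$ guaranteed by freeness of $\Dist(A+1)$.

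It then remains to show $\phi$ and $\psi$ are mutually inverse. For $\phi\circ\psi = \mathrm{id}$: since $\phi$ is a homomorphism and $\phi(u(\inl(a))) = \Dirac(\inl(a))$, $\phi(u(\inr(\star))) = \Dirac(\inr(\star))$, the map $\phi\circ\psi$ is a homomorphism $\Dist(A+1)\to\Dist(A+1)$ whose composite with $\DiracOp$ is $\DiracOp$; by the uniqueness clause of the universal property it equals the identity. For $\psi\circ\phi = \mathrm{id}_S$ I would argue by cases on $S$. The case $\inl(\star)$ is immediate since $\psi(\Dirac(\inr(\star))) = u(\inr(\star)) = \inl(\star)$. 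For $\inr(1,\mu)$: the maps $\psi\circ\Dist(\inl)$ and $\mu\mapsto\inr(1,\mu)$ are both homomorphisms $\Dist A \to S$ (the latter because $\DistChoice r{(\inr(1,\mu))}{(\inr(1,\nu))} = \inr(1,\DistChoice r\mu\nu)$) agreeing on Diracs, hence equal by freeness of $\Dist A$, so $\psi(\phi(\inr(1,\mu))) = \psi(\Dist(\inl)(\mu)) = \inr(1,\mu)$. For $\inr(p,\mu)$ with $p:\II$: apply the homomorphism property of $\psi$ and the previous two cases, $\psi(\DistChoice p{(\Dist(\inl)(\mu))}{(\Dirac(\inr(\star)))}) = \DistChoice p{(\inr(1,\mu))}{(\inl(\star))} = \inr(p,\mu)$ by the definition of the choice operation on $S$.

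The main obstacle is entirely bookkeeping: checking the convex algebra axioms for $S$ --- especially associativity, with its nested renormalisation factors --- and checking that $\phi$ is a homomorphism in every combination of cases. Both ultimately express the familiar fact that distributions on $A+1$ form the convex algebra of subdistributions on $A$, but the rational arithmetic must be carried out carefully to keep all weights inside $\II$ and $\IIdc$ (in particular to see that $rp+(1-r)q$ equals $1$ exactly when $p = q = 1$, so that the case split defining $\phi$ and the structure on $S$ are coherent).
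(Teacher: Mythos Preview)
Your proposal is correct. The paper takes a slightly different route: rather than constructing explicit mutual inverses, it shows directly that $1 + \IIdc \times \Dist A$ (with essentially the convex structure you describe) is the \emph{free} convex algebra on $A+1$, and the equivalence then follows by uniqueness of free objects. More significantly, where you propose to verify the convex-algebra axioms and the homomorphism property by direct case analysis --- the ``bookkeeping'' you rightly identify as the main obstacle --- the paper avoids this entirely by a reduction trick. It first establishes the equivalence for $A = \Fin(n)$ via Lemma~\ref{lem:Fin:Dist}, where both sides are just classical probability vectors and the convex structure on the right is simply transported across an isomorphism, so the axioms hold for free. For general $A$, each algebraic identity (e.g.\ one of the eight cases of associativity) involves only finitely many elements of $\Dist A$; by Lemma~\ref{lem:dist:Kfinite} these all factor through some $\Dist(\Fin(n))$, the identity holds there by the previous step, and one pushes forward along $\Dist(f)$. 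This is the technique of Example~\ref{example:proving:eq}, and it replaces the nested-renormalisation arithmetic you anticipate with an appeal to functoriality. Your direct approach would of course also work, but the paper's detour through finite sets is what makes the verification painless. (A minor notational point: in the paper's convention the pair $(q,\mu)$ records the weight $q$ on the adjoined point $\star$, so $\IIdc = [0,1)$ rather than $(0,1]$; your reading is the mirror image under $q \leftrightarrow 1-q$ and makes no mathematical difference.)
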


\begin{proof}[Proof of Lemma~\ref{lem:subdist}]
 Using the classical definition of $\Dist$ in terms of finite maps, the equivalence
 $\Dist(A+1) \to 1+ \IIdc\times \Dist A$ would map $f$ to $\star$ if $f(\star) = 1$ and otherwise
 to the pair $(f(\star),\frac{1}{1-f(\star)}\cdot g)$ where $g$ is the restriction of $f$ to $A$.
 In the special case of $A = \Fin(n)$, by Lemma~\ref{lem:Fin:Dist}, we can indeed define the equivalence like that and
 prove it an equivalence, also in CCTT. If we do that and transport the convex algebra structure from
 $\Dist(\Fin(n + 1))$ to $1+ \IIdc\times \Dist (\Fin(n))$ we get a structure satisfying the following equalities.
%
%
\begin{align*}
 \DistChoice p\star\star & \peq \star \\
 \DistChoice p{(q,\mu)}\star & \peq (pq+(1-p),\mu) \\
 \DistChoice p\star{(r,\nu)} & \peq (p + (1-p)r,\nu) \\
 \DistChoice p{(q,\mu)}{(r,\nu)} & \peq (pq + (1-p)r, \DistChoice{\left(\frac{p(1-q)}{p(1-q) + (1-p)(1-r)}\right)}\mu\nu)
\end{align*}
To prove the general statement, we will define a convex algebra on $1+ \IIdc\times \Dist A$
using the above equations, and prove that it defines the free convex algebra on $A + 1$.

First, to see that the convex algebra operation satisfies the axioms of convex algebras,
we use the fact that it does so in the case of $A = \Fin(n)$. To prove transitivity, for example, there
are $8$ cases to consider. One of them is
\begin{align} \label{eq:subdist:assoc}
 \DistChoice{q}{\left(\DistChoice{p}{(r,\mu)}{(s,\nu)}\right)}{(t,\rho)}
 & \peq \DistChoice{pq}{(r,\mu)}{\left(\DistChoice{\frac{q-pq}{1-pq}}{(s,\nu)}{(t,\rho)}\right)}
\end{align}
for given $p,q,r,s,t : \IIdc$ and $\mu, \nu, \rho : \Dist(A)$. By Lemma~\ref{lem:dist:Kfinite} there exists an $n$, and
$\mu',\nu', \rho' : \Dist(\Fin(n))$ as well as $f: \Fin(n) \to A$ such that $\mu \peq \Dist(f)(\mu')$,
$\nu \peq \Dist(f)(\nu')$ and $\rho \peq \Dist(f)(\rho')$. (The lemma gives three different $n$,
and different functions $f$ but we can find a common domain as $n+n+n$ as in the proof of
Lemma~\ref{lem:dist:Kfinite}). Now, since (\ref{eq:subdist:assoc}) holds for $\mu', \nu', \rho'$
we can apply $\IIdc \times \Dist(f)$ to both sides and get (\ref{eq:subdist:assoc}).

To see that this defines the free convex algebra structure on $A+1$,
suppose $B$ is a convex algebra and $f: A+1 \to B$. Let $g : \Dist A \to B$ be the unique extension of
the restriction of $f$ to $A$, and define the extension $\uext f$ of $f$ by the clauses
\begin{align*}
 \uext f(\star) & \defeq f(\star) &
 \uext f(p,\mu) & \defeq \DistChoice p{f(\star)}{g(\mu)}
\end{align*}
if $p>0$ and $g(\mu)$ if $p=0$.
To show that this is a homomorphism, we must consider four cases. One is
\begin{align*}
 \uext f(\DistChoice p{(q,\mu)}{(r,\nu)}) & \peq \DistChoice p{\uext f(q,\mu)}{\uext f(r,\nu)})
\end{align*}
where (assuming $q,r$ not zero) the left hand side unfolds to
\begin{align*}
 & \DistChoice{pq + (1-p)r}{f(\star)}{g(\DistChoice{\left(\frac{p(1-q)}{p(1-q)+(1-p)(1-r)}\right)}\mu\nu)}  \\
  & \peq \DistChoice{pq + (1-p)r}{f(\star)} {\left(\DistChoice{\left(\frac{p(1-q)}{p(1-q)+(1-p)(1-r)}\right)}{g(\mu)}{g(\nu)}\right)}
\end{align*}
and the right hand side to
\[
\DistChoice p{(\DistChoice q{f(\star)}{g(\mu)})}{(\DistChoice r{f(\star)}{g(\nu)})}
\]
and now the case is easily verified using the technique of \autoref{example:proving:eq}. 

Finally, to show uniqueness, suppose $h : 1+ \IIdc\times \Dist A \to B$ is another homomorphism
extending $f$. Then clearly $h(\star) = f(\star) = \uext f(\star)$, and since $\lambda \mu . (0,\mu)$
is a homomorphism from $\Dist A$ to $1+ \IIdc\times \Dist A$, we also get
\begin{align*}
 h(0,\mu) & = g(\mu) = \uext f (0,\mu)
\end{align*}
by induction on $\mu$. So finally since $(p,\mu) = \DistChoice p{\star}{(0,\mu)}$ also $h(p,\mu) = \uext f(p,\mu)$
for any $(p,\mu)$.
\end{proof}

We can now prove the more general theorem.


\begin{proof}[Proof (sketch) of Theorem~\ref{thm:dist:sum:equiv}]
  Since the domain and codomain are both sets, it suffices to prove that $f$ is
  both surjective and injective.

  To prove surjectivity, suppose $\mu : \Dist(A+B)$. We show that $\mu$ is
  in the image of $f$ by induction on $\mu$. The case of $\mu$ being a Dirac distribution is easy.
In the case of a sum $\DistChoice p{\mu}{\nu}$, one must consider the $9$ cases of the
induction hypotheses for $\mu$ and $\nu$. Here we just do the case
where $\mu \peq \DistChoice {q}{\mu_A}{\mu_B}$ and $\nu \peq \DistChoice {r}{\nu_A}{\nu_B}$,
omitting the $\Dist(\inl)$ and $\Dist(\inr)$.
Let $p',q',r'$ be the unique elements in $\II$ such that
\[
 \DistChoice p{(\DistChoice qab)}{(\DistChoice rcd)} \peq \DistChoice {p'}{(\DistChoice {q'}ac)}{(\DistChoice {r'}bd)}
\]
holds for all $a,b,c,d$. Then
\begin{align*}
 \DistChoice{p'}{(\DistChoice {q'}{\mu_A}{\nu_A})}{(\DistChoice {r'}{\mu_B}{\nu_B})}
 & \peq \DistChoice{p}{(\DistChoice {q}{\mu_A}{\mu_B})}{(\DistChoice {r}{\nu_A}{\nu_B})} \\
 & \peq \DistChoice p{\mu}{\nu}.
\end{align*}

 To prove injectivity, there are $8$ cases to cover.
 Suppose for example that $\DistChoice p{\mu_A}{\mu_B} \peq \DistChoice q{\nu_A}{\nu_B}$ for
 $\mu_A,\nu_A : \Dist(A)$ and $\mu_B,\nu_B : \Dist(B)$. Then also
\begin{align*}
\DistChoice p{\mu_A}{\Dirac{\star}}
& \peq \Dist(A + !)(\DistChoice p{\mu_A}{\mu_B}) \\
& \peq \Dist(A + !)(\DistChoice q{\nu_A}{\nu_B}) \\
& \peq \DistChoice q{\nu_A}{\Dirac{\star}},
\end{align*}
 which via the equivalence of Lemma~\ref{lem:subdist} allows us to conclude that $p\peq q$ and $\mu_A \peq \nu_A$. Similarly, we
 can prove that $\mu_B \peq \nu_B$. The other seven cases are similar.
\end{proof}

Alternatively, one can prove Theorem~\ref{thm:dist:sum:equiv} directly by constructing a
convex algebra structure on the $3$-fold sum. Proving Lemma~\ref{lem:subdist} first
reduces the number of cases for associativity from $27$ to $8$.

\subsection*{Section~\ref{sec:convex:delay:alg}}


\begin{proof}[Proof of Lemma~\ref{lemma:leadsto-run}] $\;$ \\
\begin{enumerate}
  \item By induction on $n$. For $n = 0$, note that $\run^{n} \nu = \nu$. So the statement simplifies to $\nu \leadsto \nu$, which holds by definition.
For $n = n' + 1$, we consider the different cases for $\nu$.
\begin{itemize}
\item If $\nu = \CEta{a}$, then $\run^{n} \nu = \nu$ and the statement holds by definition.
\item If $\nu = \CStep \nu'$, then we know $\nu \leadsto \nu'$ by definition of $\leadsto$. In addition, $\run^{n} \nu = \run^{n'} \nu'$, and we know from our IH that $\nu' \leadsto \run^{n'} \nu'$. Hence by transitivity of $\leadsto$: $\nu \leadsto \run^{n} \nu$.
\item If $\nu = \CChoice{p}{\nu_1}{\nu_2}$, then $\run^{n} \nu = \CChoice{p}{\run^{n} \nu_1}{\run^{n} \nu_2}$, then we may assume that $\nu_1 \leadsto \run^{n} \nu_1$ and $\nu_2 \leadsto \run^{n} \nu_2$. Then by the choice axiom of $\leadsto$ also:  $\nu \leadsto \CChoice{p}{\run^{n}\nu_1}{\run^{n}\nu_2} = \run^{n} (\CChoice{p}{\nu_1}{\nu_2}) = \run^{n} \nu$.
\end{itemize}

\item By induction on $\leadsto$.
\begin{itemize}
\item For the base case $\nu = \nu'$, the statement is trivial.
\item For $\nu = \CStep \nu'$, it follows trivially that $\run^0 \nu \leadsto \run^0 \nu'$. For $n \geq 1$, we have $\run^{n} \nu = \run^{n-1} \nu'$. By the first part of this Lemma, $\run^{n-1} \nu' \leadsto \run(\run^{n-1} \nu') = \run^n \nu'$.
\item If there is a $\nu''$ such that $\nu \leadsto \nu''$ and $\nu'' \leadsto \nu'$, then by induction we may assume that $\run^{n} \nu \leadsto \run^{n} \nu''$ and $\run^{n} \nu'' \leadsto \run^{n} \nu'$. Then by transitivity of $\leadsto$ also $\run^{n} \nu \leadsto \run^{n} \nu'$.
\item Lastly, if $\nu = \CChoice{p}{\nu_1}{\nu_2}$ and $\nu' = \CChoice{p}{\nu_1'}{\nu_2'}$ such that $\nu_1 \leadsto \nu_1'$ and $\nu_2 \leadsto \nu_2'$, then by induction we may assume that also $\run^{n} \nu_1 \leadsto \run^{n} \nu_1'$, and similarly for $\nu_2$ and $\nu_2'$. Then also:
\begin{align*}
  \run^{n} \nu  & = \run^{n} \CChoice{p}{\nu_1}{\nu_2} \\
   & = \CChoice{p}{\run^{n} \nu_1}{\run^{n}\nu_2} \\
   & \leadsto \CChoice{p}{\run^{n} \nu_1'}{\run^{n}\nu_2'} \\
   & = \CChoice{p}{\run^{n} \nu_1'}{\run^{n}\nu_2'} \\
   & = \run^{n} \nu'
\end{align*}
\end{itemize}

\item By induction on $\leadsto$.
\begin{itemize}
\item The base case for $\nu' = \nu$ is trivial.
\item For $\nu = \CStep \nu'$, we have that $\run^{1} \nu = \nu'$, and hence $\nu' \leadsto \run^{1} \nu$.
\item If there is a $\nu''$ such that $\nu \leadsto \nu''$ and $\nu'' \leadsto \nu'$, then by induction there is an $m_1$ such that $\nu'' \leadsto \run^{m_1} \nu$ and there is an $m_2$ such that $\nu' \leadsto \run^{m_2} \nu''$. Then the second part of this lemma, also
$\run^{m_2} \nu'' \leadsto \run^{m_2}\run^{m_1} \nu = \run^{m_1 + m_2} \nu$, and hence by transitivity: $\nu' \leadsto \run^{m_1 + m_2} \nu$.
\item Lastly, if $\nu = \CChoice{p}{\nu_1}{\nu_2}$ and $\nu' = \CChoice{p}{\nu_1'}{\nu_2'}$ such that $\nu_1 \leadsto \nu_1'$ and $\nu_2 \leadsto \nu_2'$, then by induction there is an $m_1$ such that $\nu_1' \leadsto \run^{m_1} \nu_1$ and an $m_2$ such that $\nu_2' \leadsto \run^{m_2} \nu_2$. Let $n = max(m_1, m_2)$. Then by the first part of this lemma: $\nu_1' \leadsto \run^{n} \nu_1$ and $\nu_2' \leadsto \run^{n} \nu_2$. So then also $\nu' \leadsto \run^{n} \nu$.
\end{itemize}
\end{enumerate}
\end{proof}

\begin{proof}[Proof of Lemma~\ref{leadsto:lem4}]

By induction on $\leadsto$:
\begin{itemize}
\item The first base case is obvious by reflexivity.
\item For the second base case, note that:
\[
\uext f(\CStep \nu) = \CStep (\uext f(\nu)) \leadsto \uext f(\nu)
\]
\item If $\nu \leadsto \nu'$ came from the transitivity axiom, then there is a $\nu''$ such that $\nu \leadsto \nu''$ and $\nu'' \leadsto \nu'$. By the induction hypothesis we know that then also $\uext f(\nu) \leadsto \uext f(\nu'')$ and $\uext f(\nu'') \leadsto \uext f(\nu')$. Then by transitivity of $\leadsto$: $\uext f(\nu) \leadsto \uext f(\nu')$.
\item If $\nu = \CChoice{p}{\nu_1}{\nu_2}$ and $\nu' = \CChoice{p}{\nu_1'}{\nu_2'}$, where $\nu_1 \leadsto \nu_1'$ and $\nu_2 \leadsto \nu_2'$, then by the induction hypothesis: $\uext f(\nu_1) \leadsto \uext f(\nu_1')$ and $\uext f(\nu_2) \leadsto \uext f(\nu_2')$. Then:
    \begin{align*}
    \uext f(\nu)    & = \uext f(\CChoice{p}{\nu_1}{\nu_2})\\
                    & = \CChoice{p}{(\uext f(\nu_1))}{(\uext f(\nu_2))} \\
                    & \leadsto \CChoice{p}{(\uext f(\nu_1'))}{(\uext f(\nu_2'))} \\
                    & = \uext f(\CChoice{p}{\nu_1'}{\nu_2'}) \\
                    & = \uext f(\nu')
    \end{align*}
\end{itemize}
\end{proof}


\begin{proof}[Proof of Lemma~\ref{leadsto:lem5}] $\;$ \\
\begin{enumerate}
\item For the first statement, we prove that $\probterm{n}{\nu} \leq \probterm{n+1}{\nu}$ by induction on $n$. For $n = 0$, we do a case analysis of $\nu$:
\begin{itemize}
  \item If $\nu = \CDirac a$, then by definition $\probterm{0}{\CDirac a} = 1$, and $\probterm{1}{\CDirac a} = \probterm{0}{\run(\CDirac a)} = \probterm{0}{\CDirac a} = 1$. So in this case we have equality.
  \item If $\nu = \CStep \nu'$, then by definition $\probterm{0}{\CStep \nu'} = 0$ and $\probterm{1}{\CStep \nu'} = \probterm{0}{\run(\CStep \nu')} = \probterm{0}{\nu'} \geq 0$.
  \item If $\nu = \CChoice{p}{\nu_1}{\nu_2}$, then we may assume that $\probterm{0}{\nu_1} \leq \probterm{1}{\nu_1}$ and $\probterm{0}{\nu_2} \leq \probterm{1}{\nu_2}$. Then also:
      \begin{align*}
      \probterm{0}{\CChoice{p}{\nu_1}{\nu_2}} & = \CChoice{p}{\probterm{0}{\nu_1}}{\probterm{0}{\nu_2}} \\
      & \leq \CChoice{p}{\probterm{1}{\nu_1}}{\probterm{1}{\nu_2}} \\
      & = \CChoice{p}{\probterm{0}{\run(\nu_1)}}{\probterm{0}{\run(\nu_2)}} \\
      & = \probterm{0}{\run(\CChoice{p}{\nu_1}{\nu_2})} \\
      & = \probterm{1}{\CChoice{p}{\nu_1}{\nu_2}}
      \end{align*}
\end{itemize}

For $n = n' + 1$, notice that $\probterm{n + 1}{\nu} = \probterm{n}{\run(\nu)}$. We again go by case analysis of $\nu$:
\begin{itemize}
\item If $\nu = \CDirac a$ for some $a : A$, then $\probterm{n}{\CDirac a} = \probterm{n + 1}{\CDirac a} = 1$.
\item If $\nu = \CStep \nu'$, then $\probterm{n}{\CStep \nu'} = \probterm{n'}{\run(\CStep \nu')} = \probterm{n'}{\nu'}$, and $\probterm{n + 1}{\CStep \nu'} = \probterm{n}{\run(\CStep \nu')} = \probterm{n}{\nu'}$. By the induction hypothesis for $n$, we know that $\probterm{n'}{\nu'} \leq \probterm{n}{\nu'}$, and hence $\probterm{n}{\CStep \nu'} \leq \probterm{n + 1}{\CStep \nu'}$.
\item If $\nu = \CChoice{p}{\nu_1}{\nu_2}$, then $\probterm{n}$ distributes through the sum and via a similar reasoning as above we get to the conclusion.
\end{itemize}

\item The second statement of this Lemma we also prove by induction on $n$. For $n = 0$ we go by induction on $\leadsto$.
\begin{itemize}
\item If $\nu = \nu'$, then the statement is trivially true.
\item If $\nu = \CStep \nu'$, by definition $\probterm{0}{\nu} = 0$, and hence: $\probterm{0}{\nu} \leq \probterm{0}{\nu'}$.
\item If there is a $\nu''$ such that $\nu \leadsto \nu''$ and $\nu'' \leadsto \nu'$, then by the induction hypothesis we have $\probterm{0}{\nu} \leq \probterm{0}{\nu''}$ and $\probterm{0}{\nu''} \leq \probterm{0}{\nu'}$. Then by transitivity of $\leq$, we also have $\probterm{0}{\nu} \leq \probterm{0}{\nu'}$.
\item Lastly, if $\nu = \CChoice{p}{\nu_1}{\nu_2}$ and $\nu' = \CChoice{p}{\nu_1'}{\nu_2'}$ such that $\nu_1 \leadsto \nu_1'$ and $\nu_2 \leadsto \nu_2'$, then by the induction hypothesis we have $\probterm{0}{\nu_1} \leq \probterm{0}{\nu_1'}$ and $\probterm{0}{\nu_2} \leq \probterm{0}{\nu_2'}$. Then:
\begin{align*}
  \probterm{0}{\nu} & = p\cdot\probterm{0}{\nu_1} + (1-p)\cdot\probterm{0}{\nu_2} \\
                    & \leq p\cdot\probterm{0}{\nu_1'} + (1-p)\cdot\probterm{0}{\nu_2'} \\
                    & = \probterm{0}{\nu'}.
\end{align*}
\end{itemize}
For $n > 0$, we use that $\probterm{n}{\nu} = \probterm{0}{\run^n \nu}$. By Lemma~\ref{lemma:leadsto-run}, we know that $\nu \leadsto \nu'$ implies $\run^n \nu \leadsto \run^n \nu'$, and hence by the above $\probterm{0}{\run^n \nu} \leq \probterm{0}{\run^n \nu'}$, which means that $\probterm{n}{\nu} \leq \probterm{n}{\nu'}$.
\item The third statement is proven by induction on $\leadsto$.
\begin{itemize}
	\item If $\nu = \nu'$ then we use reflexivity of $\eqlim$.
	\item If $\nu = \CStep \nu'$. Let $n \colon \mathbb{N}$, $\varepsilon > 0$. Then, by definition we have
		$\probterm{n}{\nu'} = \probterm{n + 1}{\nu} \leq  \probterm{n + 1}{\nu} + \varepsilon$, so
		$\probtermseq{\nu'} \leqlim \probtermseq{\nu}$. On the other direction, by monotonicity we
		have $\probterm{n}{\nu} \leq \probterm{n + 1}{\nu} = \probterm{n}{\nu'} \leq \probterm{n}{\nu'} + \varepsilon$,
		so $\probtermseq{\nu} \leqlim \probtermseq{\nu'}$.
	\item If there is a $\nu''$ such that $\nu \leadsto \nu''$ and $\nu'' \leadsto \nu'$, then we apply the induction hypothesis and transitivity of $\eqlim$.
	\item If $\nu = \CChoice{p}{\nu_1}{\nu_2}$, $\nu' = \CChoice{p}{\nu_1'}{\nu_2'}$, $\nu_1 \leadsto \nu'_1$, and $\nu_2 \leadsto \nu_2'$, then we apply the induction hypothesis and Lemma~\ref{lemma:leqlim-conv}.
\end{itemize}

\end{enumerate}
\end{proof}

\begin{proof}[Proof of Lemma~\ref{lemma:limleadsto-properties}] $\;$ \\
\begin{enumerate}
\item For reflexivity, note that $\leadsto$ is reflexive and that for all $\nu : \LiftDist{A}$ and all $p \in \II$, $\nu \peq \CChoice{p}{\nu}{\nu}$. We hence have: $\nu \leadsto \CChoice{p}{\nu}{\nu}$, proving that $\nu \limleadsto \nu$.
\item For transitivity, suppose that $\nu \limleadsto \nu_1$ and $\nu_1 \limleadsto \nu_2$. We need to show that $\nu \limleadsto \nu_2$. By definition, we have that for all $p, q \in \II$ there exist $\nu_1', \nu_2'$ such that:
\begin{align*}
\nu & \leadsto \CChoice{p}{\nu_1}{\nu_1'} \\
\nu_1 & \leadsto \CChoice{q}{\nu_2}{\nu_2'}
\end{align*}
and so:
\begin{align*}
\nu & \leadsto \CChoice{p}{\nu_1}{\nu_1'} \\
    & \leadsto \CChoice{p}{(\CChoice{q}{\nu_2}{\nu_2'})}{\nu_1'} \\
    & \leadsto \CChoice{pq}{\nu_2}{(\CChoice{\frac{p-pq}{1-pq}}{\nu_2'}{\nu_1'})}
\end{align*}
 
Given any $r \in \II$, we can pick $p = \frac{1+r}{2}$ and $q = \frac{r}{p} = \frac{2r}{1+r}$. We then have:
\[
\nu \leadsto \CChoice{r}{\nu_2}{(\CChoice{\frac{p-r}{1-r}}{\nu_2'}{\nu_1'})},
\]
proving that indeed, $\nu \limleadsto \nu_2$.
\item Since $\CStep \nu \leadsto \nu$, we again use $\nu \peq \CChoice{p}{\nu}{\nu}$ to show that for all $p \in \II$: $\CStep \nu \leadsto \CChoice{p}{\nu}{\nu}$ and hence $\CStep \nu \limleadsto \nu$. 
\item Regarding probabilistic choice, suppose that $\nu_1 \limleadsto \nu_1'$ and $\nu_2 \limleadsto \nu_2'$. Then, for all $p \in \II$, we know that there exist $\nu_1''$ and $\nu_2''$ such that $\nu_1 \leadsto \CChoice{p}{\nu_1'}{\nu_1''}$ and $\nu_1 \leadsto \CChoice{p}{\nu_2'}{\nu_2''}$. Then, for any $q \in \II$:
    \begin{align*}
      \CChoice{q}{\nu_1}{\nu_2} & \leadsto \CChoice{q}{(\CChoice{p}{\nu_1'}{\nu_1''})}{(\CChoice{p}{\nu_2'}{\nu_2''})} \\
       & \peq \CChoice{p}{(\CChoice{q}{\nu_1'}{\nu_2'})}{(\CChoice{q}{\nu_1''}{\nu_2''})}
    \end{align*}
    Hence for all $p \in \II$ we have a $\nu'$ (namely $\CChoice{q}{\nu_1''}{\nu_2''}$) such that $\CChoice{q}{\nu_1}{\nu_2} \leadsto \CChoice{p}{(\CChoice{q}{\nu_1'}{\nu_2'})}{\nu'}$, so indeed: $\CChoice{q}{\nu_1}{\nu_2} \limleadsto \CChoice{q}{\nu_1'}{\nu_2'}$.    
\end{enumerate}
\end{proof}

\begin{proof}[Proof of Lemma~\ref{lemma:limleadsto-homopreserve}]
  We are given that $\nu \limleadsto \nu'$. That is, for all $p \in \II$, there is a $\nu'' : \LiftDist A$ such that $\nu \leadsto \CChoice{p}{\nu'}{\nu''}$. By Lemma~\ref{leadsto:lem4}, the bind operation preserves the $\leadsto$ relation, and hence:
  \begin{align*}
    \uext f(\nu) & \leadsto \uext f(\CChoice{p}{\nu'}{\nu''}) \\
     & \peq \CChoice{p}{\uext f(\nu')}{\uext f(\nu'')},
  \end{align*} 
  and hence $\uext f(\nu) \limleadsto \uext f(\nu')$.
\end{proof}

\begin{proof}[Proof of Lemma~\ref{lemma:limleadsto-leadsto-interact}]
For the first statement, if $\nu \leadsto \nu_1$ then we use that for all $p \in \II$: $\nu_1 \peq \CChoice{p}{\nu_1}{\nu_1}$ and hence for all $p \in \II$: $\nu \leadsto \CChoice{p}{\nu_1}{\nu_1}$, proving that $\nu \limleadsto \nu_1$.

For the second statement, suppose $\nu \leadsto \nu_1$ and $\nu_1 \limleadsto \nu_2$. Then, for all $p \in \II$, there exists a $\nu_2'$ such that $\nu_1 \leadsto \CChoice{p}{\nu_2}{\nu_2'}$. So by transitivity of $\leadsto$:
\begin{align*}
 \nu & \leadsto \nu_1 \\
     & \leadsto \CChoice{p}{\nu_2}{\nu_2'},
\end{align*}
which shows that $\nu \limleadsto \nu_2$.
(Alternatively: use the first part of this lemma to conclude that $\nu \limleadsto \nu_1$, and apply transitivity of $\limleadsto$.)

For the third part, suppose $\nu \limleadsto \nu_1$ and $\nu_1 \leadsto \nu_2$. Then, for all $p \in \II$, there exists a $\nu_1'$ such that $\nu \leadsto \CChoice{p}{\nu_1}{\nu_1'}$. So by the transitivity and sum rules of $\leadsto$:
\begin{align*}
 \nu & \leadsto \CChoice{p}{\nu_1}{\nu_1'} \\
     & \leadsto \CChoice{p}{\nu_2}{\nu_1'},
\end{align*}
which shows that $\nu \limleadsto \nu_2$.
(Again, this also follows immediately from the first part of this lemma and transitivity of $\limleadsto$.)
\end{proof}

\begin{proof}[Proof of Lemma~\ref{lemma:limleadsto-eqlim}]
         Suppose that $\nu_1 \limleadsto \nu_2$. Let us first prove that $\probtermseq{\nu_1} \leqlim \probtermseq{\nu_2}$.
	Let $n \colon \mathbb{N}$ and $\varepsilon > 0$. We can assume wlog. that $\varepsilon < 1$, otherwise the proof
	is trivial. By definition of $\limleadsto$,
	there exists $\nu_3 : \LiftDist A$ such that $\nu_1 \leadsto \CChoice{(1-\varepsilon)}{\nu_2}{\nu_3}$. Then, by Lemma~\ref{leadsto:lem5}.
	we have:
	\[
		\probterm{n}{\nu_1} \leq (1-\varepsilon)\cdot\probterm{n}{\nu_2} + \varepsilon \cdot \probterm{n}{\nu_3}
		\leq \probterm{n}{\nu_2} + \varepsilon
	\]

	Now we will prove $\probtermseq{\nu_2} \leqlim \probtermseq{\nu_1}$. Again, let $n \colon \mathbb{N}$ and $0 < \varepsilon < 1$.
	Then there exists $\nu_3 : \LiftDist A$ such that $\nu_1 \leadsto \CChoice{1-\frac{\varepsilon}{2}}{\nu_2}{\nu_3}$. By Lemma~\ref{leadsto:lem5} 
	$\probtermseq{\CChoice{1-\frac{\varepsilon}{2}}{\nu_2}{\nu_3}} \leqlim \probtermseq{\nu_1}$, so there exists $m$ such that
	\[
		(1-\frac{\varepsilon}{2}) \cdot \probterm{n}{\nu_2} + \frac{\varepsilon}{2} \cdot \probterm{n}{\nu_3} \leq \probterm{m}{\nu_1} + \frac{\varepsilon}{2}
	\]
	And therefore,
	\[
		\probterm{n}{\nu_2} \leq \probterm{m}{\nu_1} + \frac{\varepsilon}{2} + \frac{\varepsilon}{2} \cdot (\probterm{n}{\nu_2} - \probterm{n}{\nu_3})
		\leq \probterm{m}{\nu_1} + \frac{\varepsilon}{2} + \frac{\varepsilon}{2}
	\]
	This concludes the proof.

\end{proof}

\subsection*{Section~\ref{sec:prob:FPC}}

\begin{proof}[Proof of Example~\ref{ex:Y:comb}]

  Let
  \[
  E_V \triangleq \lambda (\Lam{M}).\GStep{\parens{\lambda(\alpha:\kappa).\GEval(M[V/x])}}
  \]
  Note that
  \[
	  \GEval{(\YComb{\varphi})} \peq \ExpStep(\GEval{(\Lam{(e_{\varphi} (\Fold{e_{\varphi}})) x})})
  \]
  and
  \[
	  \GEval{(e_{\varphi} (\Fold{e_{\varphi}}))} \peq (\ExpStep)^{3}(\GEval{(\varphi (\Lam{(e_\varphi (\Fold{e_\varphi}) x)})})
  \]
  Therefore,
   \begin{align*}
   \GEval{((\YComb{\varphi})(V))} 
	   \peq\ & (\ExpStep)^{2}(\GEval{(e_{\varphi}(\Fold{e_{\varphi}})) V)}) \\
    \peq\ & \begin{aligned}
	    (\ExpStep)^{5}(\GEval{(\varphi (\Lam{(e_\varphi (\Fold{e_\varphi}) x)})} \GBind E_V)
    \end{aligned} \\
    \peq\ & \begin{aligned}
	    (\ExpStep)^{5}(\GDirac{(\varphi)} \GBind \lambda (\Lam[Z]{P}). \GEval{(\Lam{(e_\varphi (\Fold{e_\varphi}) x)})} \\
	    \GBind \lambda U. \ExpStep(\GEval{(P[U/z])} \GBind E_V))
  \end{aligned}\\
	   \peq\ & \begin{aligned}(\ExpStep)^{4}(\GDirac{(\varphi)} \GBind \lambda (\Lam[Z]{P}). \ExpStep.\GEval{(\Lam{(e_\varphi (\Fold{e_\varphi}) x)})} \\
		   \GBind \lambda U. \ExpStep(\GEval{(P[U/z])} \GBind E_V))
    \end{aligned} \\
	   \peq\ & \begin{aligned}(\ExpStep)^{4}(\GDirac{(\varphi)} \GBind \lambda (\Lam[Z]{P}). \GEval{(\YComb{\varphi})} \\
		   \GBind \lambda U. \ExpStep(\GEval{(P[U/z])} \GBind E_V))
    \end{aligned} \\
	   \peq\ & (\ExpStep)^{4}(\GEval((\varphi(\YComb{\varphi}))) \GBind E_V) \\
	   \peq\ & (\ExpStep)^{4}(\GEval{((\varphi(\YComb{\varphi}))(V))})
  \end{align*}

  This completes the proof.

\end{proof}

\subsection*{Section~\ref{sec:den:sem}}

\begin{proof}[Proof of Lemma~\ref{lem:subst-lemma}]
    We proceed by induction on term derivations.

  \begin{case}{$M = x$}\\
    Then
    \begin{align*}
      &\GInterpret{x[V/x]}{\rho}
      \peq \GInterpret{V}{\rho}
      \peq \GInterpret{x}{\rho.x\mapsto\GValInterpret{V}{\rho}}
    \end{align*}
  \end{case}

  \begin{case}{$\Gamma \vdash \Lam{M'} :\FnTy{\tau_1}{\tau_2}$}\\
    Assume without loss of generality that $y \not= x$.
    Otherwise, use an $\alpha$-equivalent term where this condition is true.
    Consequently, we have that
    \begin{align*}
      \GInterpret{\Lam{M'}[V/y]}\rho 
      \peq\ &\GDirac\parens{\GValInterpret{\Lam{(M'[V/y])}}{\rho}} \\
      \peq\ &\GDirac\parens{\lambda (v:\GSVal{\tau_1}).\GInterpret{M'[V/y]}{\rho,x\mapsto v}} \\
      \peq\ &\GDirac\left(\lambda (v:\GSVal{\tau_1}).\GInterpret{M'}{\rho,x\mapsto v,y\mapsto\GValInterpret{V}{\rho}}\right) \\
      \peq\ &\GDirac\left(\GValInterpret{\Lam{M'}}{\rho,y\mapsto\GValInterpret{V}{\rho}}\right) \\
      \peq\ &\GInterpret{\Lam{M'}}{\rho.y\mapsto\GValInterpret{V}{\rho}}
    \end{align*}
  \end{case}

  \begin{case}{$M = MN$}\\
    We have that
    \begin{align*}
      \GInterpret{(M\,N)[V/x]}{\rho} 
      \peq\ &\GInterpret{M[V/x]}\rho \GBind \lambda f. \GInterpret{N[V/x]}\rho \GBind \lambda v. fv \\ 
      \peq\ &\GInterpret{M}{\rho,x\mapsto\GValInterpret{V}\rho} \GBind \lambda f. \GInterpret{N}{\rho,x\mapsto\GValInterpret{V}\rho} \GBind \lambda v. fv \\ 
      \peq\ &\GInterpret{M\,N}{\rho,x\mapsto\GValInterpret{V}\rho}
    \end{align*}
  \end{case}

  \begin{case}{$M = \Fold{M'}$}\\
    We have that
    \begin{align*}
      \GInterpret{\Fold{M'}[V/x]}\rho 
      \peq \ &\Dist{(\GNext)}\left(\GInterpret{M'[V/x]}\rho\right) \\
      \peq \ &\Dist{(\GNext)}\left(\GInterpret{M'}{\rho,x\mapsto\GValInterpret{V}\rho}\right) \\
      \peq \ &\GInterpret{\Fold{M'}}{\rho,x\mapsto\GValInterpret{V}\rho}
    \end{align*}
  \end{case}

  \begin{case}{$M = \Unfold{M'}$}\\
    We have
    \begin{align*}
      \GInterpret{\Unfold{M'}[V/x]}\rho 
      \peq \ & \GInterpret{M'[V/x]}\rho\GBind \lambda v. \GStep (\lambda \alpha. \GDirac{(v[\alpha])}) \\
      \peq \ & \GInterpret{M'}{\rho,x\mapsto \GValInterpret{V}\rho}\GBind \lambda v. \GStep (\lambda \alpha. \GDirac{(v[\alpha])}) \\
      \peq \ &\GInterpret{\Unfold{M'}}{\rho,x\mapsto\GValInterpret{V}\rho}
    \end{align*}
  \end{case}

  \begin{case}{$M = \Choice{q}{N_1}{N_2}$}
    We have
    \begin{align*}
      \GInterpret{\Choice{q}{N_1}{N_2}[V/x]}\rho 
      \peq\ &\GChoice{q}{\GInterpret{N_1[V/x]}\rho}{\GInterpret{N_2[V/x]}\rho} \\
      \peq\ &\GChoice{q}{\GInterpret{N_1}{\rho,x\mapsto\GInterpret{V}\rho}}{\GInterpret{N_2}{\rho,x\mapsto\GInterpret{V}\rho}} \\
      \peq\ &\GInterpret{\Choice{q}{N_1}{N_2}}{\rho,x\mapsto\GInterpret{V}\rho}
    \end{align*}
  \end{case}
  The remaining cases are similar.
\end{proof}

\begin{proof}[Proof of equation (\ref{eq:den:Yf:open})]
  Note that $f$ is a value by assumption and $\YComb{}$ as well as $e_f$ are values by definition. Let $\rho \defeq x \mapsto v$,
  and write $\ExpStep$ for $\GStep{}\circ \GNext$. Let also $\GInterpret{M}\rho \bullet \GInterpret{N}\rho$ be shorthand for  
  $\GInterpret{M}\rho \GBind \lambda f. \GInterpret{N}\rho \GBind \lambda v. fv$. Then we have:
  \begin{align*}
   &\GInterpret{\YComb\, f\,x}\rho \\
   \peq\ & \GInterpret{\YComb \, f}{\rho} \bullet \GInterpret{x}\rho \\
   \peq\ & \GInterpret{e_{f}(\Fold{(e_{f})})}\rho \bullet \GInterpret{x}\rho\\
   \peq\ & \GInterpret{\LetIn{y'}{\Unfold{(\Fold{e_f})}}f(\Lam{(y' (\Fold{e_f})) x})}\rho \bullet \GInterpret{x}\rho\\
   \peq\ &\left(\GInterpret{\Lam[y']{f(\Lam{(y' (\Fold{e_f})) x})}}\rho \bullet \GInterpret{\Unfold{(\Fold{e_f})}}\rho\right)
    \bullet\ \GInterpret{x}\rho \\
   \peq\ & \left((\ExpStep)\GInterpret{(f(\Lam{(y' (\Fold{e_f})) x}))}{\rho,y'\mapsto\GValInterpret{e_f}\rho}\right) \bullet \GInterpret{x}\rho\\
   \peq\ & \left((\ExpStep)\left(\GInterpret{f}\rho \bullet \GInterpret{\Lam{(e_f (\Fold{e_f})) x}}{\rho}\right)\right) \bullet \GInterpret{x}\rho\\
   \peq\ & \left(\ExpStep(\GInterpret{f}\rho \bullet \GInterpret{\YComb(f)}{\rho})\right) \bullet \GInterpret{x}\rho\\
   \peq\ & (\ExpStep)\left(\left(\GInterpret{f(\YComb(f))}{\rho}\right) \bullet \GInterpret{x}\rho\right)\\
   \peq\ & (\ExpStep)\left(\GInterpret{f(\YComb(f))\,x}{\rho}\right)\\
   \end{align*}
\end{proof}

\subsection*{Section~\ref{sec:couplings}}

\begin{proof}[Proof of Lemma~\ref{lem:CRelLift:coinductive}]
  The proof is organised as follows: We first show how to apply~\cite[Theorem~4.3]{CubicalCloTT} to encode
  a coinductive type as the one described in Lemma~\ref{lem:CRelLift:coinductive}. Then we show that the
  encoding is the same as $\CRelLift{\Rel}$.

  Recall first the indexed version of~\cite[Theorem~4.3]{CubicalCloTT}: If \[F : (I \to \Univ) \to (I\to \Univ)\]
  is a functor in the naive sense, such that for any $X : \forall\kappa . (I \to \Univ)$, the canonical map
  \[
  F(\forall\kappa. X) \to \forall\kappa . F(\capp X)
  \]
  is an equivalence, then a final coalgebra for $F$ can be encoded by first solving $F(\Later X) \equi X$ and then defining
  the final coalgebra as $\forall\kappa. X$. Here $\forall \kappa . X$ means $\lambda (i:I). \forall\kappa . (\capp X \,i)$, and
  $\Later X$ means $\lambda (i : I). \Later(X\, i)$

  For this application, take $I = \LiftDist(A) \times \LiftDist(A)$, and define $F(P)(\mu, \nu)$ by the
  clauses of Lemma~\ref{lem:CRelLift:coinductive}, i.e., either (spelling out the implicit inclusions)
  \begin{enumerate}
\item   there exists $\mu' : \Dist{A}$ such that $\mu \peq \Dist(\inl)(\mu')$, 
 and there is a $\nu' : \Dist{B}$ such that $\nu \limleadsto \nu'$, and there is a $\rho: \CoupType\Rel{\mu'}{\nu'}$.
\item  there exists $\mu' : \Dist(\LiftDist{A})$ such that $\mu \peq \Dist(\inr)(\mu')$ and $P(\run(\mu), \nu)$.
\item there exist $\mu_1 : \Dist{A}, \mu_2 : \Dist(\LiftDist{A})$, and $p : \II$ such that
$\mu \peq \DistChoice{p}{\Dist(\inl)(\mu_1)}{\Dist(\inr)(\mu_2)}$, there exist $\nu_1, \nu_2 : \LiftDist{B}$ such that $\nu \limleadsto \CChoice{p}{\nu_1}{\nu_2}$, 
and $F(P)(\Dist(\inl)(\mu_1), \nu_1)$ and $F(P)(\Dist(\inr)(\mu_2), \nu_2)$.
\end{enumerate}
Note the use of $F(P)$ in condition (3), which is due to the two recursive calls just being shorthand for (1) and (2). 

%
    
Given $P : \forall\kappa. (\LiftDist(A) \times \LiftDist(A) \to \Univ)$ we show that $\forall\kappa . F(\capp P) \equi F(\forall\kappa . P)$,
the right hand side of this is the statement that either
\begin{enumerate}
\item   there exists $\mu' : \Dist{A}$ such that $\mu \peq \Dist(\inl)(\mu')$, 
 and there is a $\nu' : \Dist{B}$ such that $\nu \limleadsto \nu'$, and there is a $\rho: \CoupType\Rel{\mu'}{\nu'}$.
\item  there exists $\mu' : \Dist(\LiftDist{A})$ such that $\mu \peq \Dist(\inr)(\mu')$ and $\forall\kappa. \capp{P}(\run(\mu), \nu)$.
\item there exist $\mu_1 : \Dist{A}, \mu_2 : \Dist(\LiftDist{A})$, and $p : \II$ such that
$\mu \peq \DistChoice{p}{\Dist(\inl)(\mu_1)}{\Dist(\inr)(\mu_2)}$, and there exist $\nu_1, \nu_2 : \LiftDist{B}$ such that $\nu \limleadsto \CChoice{p}{\nu_1}{\nu_2}$, 
and $F(\forall\kappa. \capp{P})(\Dist(\inl)(\mu_1), \nu_1)$ and $F(\forall\kappa. \capp{P})(\Dist(\inr)(\mu_2), \nu_2)$.
\end{enumerate}
To prove this, recall the following facts from~\cite{CubicalCloTT}: $\forall\kappa. (-)$ commutes with sums, products, $\Sigma$ and
propositional truncation. In particular, it commutes with existential quantification over clock irrelevant types. Recall
also that our assumption of a clock constant makes all propositions clock irrelevant. In particular, in
$\forall\kappa . F(\capp P)$ we can commute $\forall\kappa$ first over the case split, then
remove it from the case of (1) because this is a proposition not referring to $P$, and so $\kappa$ will not appear in it.
In (2), we can commute $\forall\kappa$ over first the existential quantification over the clock irrelevant type $\Dist(\LiftDist{A})$,
then over the product and use that $\mu \peq \Dist(\inr)(\mu')$ is a proposition, arriving at condition (2) of $F(\forall\kappa . P)$.
The case of (3) is similar.

The conclusion from this is that the final coalgebra mentioned in Lemma~\ref{lem:CRelLift:coinductive} can be encoded as
$S(\mu, \nu) \defeq \forall\kappa . S^\kappa (\mu, \nu)$, where $S^\kappa$ is the guarded recursive relation satisfying
$S^\kappa(\mu, \nu)$ iff either
\begin{enumerate}
\item there exists $\mu' : \Dist{A}$ such that $\mu \peq \Dist(\inl)(\mu')$, 
 and there is a $\nu' : \Dist{B}$ such that $\nu \limleadsto \nu'$, and there is a $\rho: \CoupType\Rel{\mu'}{\nu'}$.
\item  there exists $\mu' : \Dist(\LiftDist{A})$ such that $\mu \peq \Dist(\inr)(\mu')$ and $\Later (S^\kappa(\run(\mu), \nu))$.
\item there exist $\mu_1 : \Dist{A}, \mu_2 : \Dist(\LiftDist{A})$, and $p : \II$ such that
$\mu \peq \DistChoice{p}{\Dist(\inl)(\mu_1)}{\Dist(\inr)(\mu_2)}$, and there exist $\nu_1, \nu_2 : \LiftDist{B}$ such that $\nu \limleadsto \CChoice{p}{\nu_1}{\nu_2}$, 
and $S^\kappa(\Dist(\inl)(\mu_1), \nu_1)$ and $S^\kappa(\Dist(\inr)(\mu_2), \nu_2)$.
\end{enumerate}

It remains to show that $S^\kappa(\mu, \nu)$ is equivalent to $\GRelLift{\Rel}(\capp\mu, \nu)$, which we prove by guarded recursion.
First recall that $(\capp\mu) \GRelLift{\Rel} \nu$ iff either
\begin{enumerate}
\item there exists $\mu' : \Dist{A}$ such that $\capp\mu \peq \Dist(\inl)(\mu')$, 
there is a $\nu' : \Dist{B}$ such that $\nu \limleadsto \nu'$, and there is a $\rho: \CoupType\Rel{\mu'}{\nu'}$. 
\item there exists $\mu' : \Dist(\Later\GLiftDist{A})$ such that $\capp\mu \peq \Dist(\inr)(\mu')$ and $\Later{(\alpha : \kappa)}\  (((\Gzeta (\capp\mu))[\alpha]) \GRelLift{\Rel} \nu)$.
\item there exist $\mu_1 : \Dist{A}, \mu_2 : \Dist(\Later{}\GLiftDist{A})$, $p \in \II$ such that $\capp\mu \peq \DistChoice{p}{\Dist(\inl)(\mu_1)}{\Dist(\inr)(\mu_2)}$, 
there exist $\nu_1, \nu_2 : \LiftDist{B}$ such that $\nu \limleadsto \CChoice{p}{\nu_1}{\nu_2}$, and $\Dist(\inl)(\mu_1) \GRelLift{\Rel} \nu_1$ and 
$\Dist(\inr)(\mu_2) \GRelLift{\Rel} \nu_2$ 
\end{enumerate}
%
We first show that $S^\kappa(\mu, \nu)$ implies $\GRelLift{\Rel}(\capp\mu, \nu)$. Cases (1) in the two definitions are equal. In the case of (2) of $S^\kappa(\mu, \nu)$, to prove (2) of
$\GRelLift{\Rel}(\capp\mu, \nu)$, consider $\Dist(\nextop\circ\mathsf{ev}_\kappa)(\mu'): \Dist(\Later\GLiftDist{A})$, where $\mathsf{ev}_\kappa : \LiftDist{A} \to \GLiftDist{A}$ 
is evaluation at $\kappa$.
Since $\mu \peq \Dist(\inr)(\mu')$, it is easy to see that $\capp\mu \peq \Dist(\inr)(\Dist(\nextop\circ\mathsf{ev}_\kappa)(\mu'))$, 
and clearly $\Gzeta(\capp\mu) = \nextop(\capp{\run(\mu)})$ so that $\Later (S^\kappa(\run(\mu), \nu))$ implies $\Later(\tickA : \kappa) (S^\kappa(\tapp{\Gzeta(\capp\mu)}, \nu))$,
which implies $\Later(\tickA : \kappa) (\tapp{\Gzeta(\capp\mu)}\GRelLift{\Rel} \nu)$ by guarded recursion as desired. The case of (3) is similar. 
%

In the other direction we just show that (2) of $(\capp\mu) \GRelLift{\Rel} \nu$ implies (2) of $S^\kappa(\mu, \nu)$. So suppose the former, and apply
the equivalence of Theorem~\ref{thm:dist:sum:equiv} to $\mu$. Since $\capp \mu$ is in the second of the three possible cases, so must $\mu$ be. In other words,
there must exist a $\mu'' : \Dist(\LiftDist{A})$ such that $\mu \peq \Dist(\inr)(\mu'')$ and $\Dist(\nextop\circ \mathsf{ev}_\kappa)(\mu'') = \mu'$.
Then, as before,  $\Gzeta(\capp\mu) = \nextop(\capp{\run(\mu)})$ and so the assumption $\Later(\tickA : \kappa) (\tapp{\Gzeta(\capp\mu)}\GRelLift{\Rel} \nu)$
is equivalent to $\Later(\tickA : \kappa) (\capp{\run(\mu)}\GRelLift{\Rel} \nu)$, which by guarded recursion implies $\Later (S^\kappa(\run(\mu), \nu))$
as desired. 
%
\end{proof}

\begin{proof}[Proof of Lemma~\ref{lem:eq1-probterm}]
  Recall that, by definition of $\leqlim$, we have to prove that for all $n$ and $\varepsilon > 0$ there exists $m$ such that $\probterm{n}{\mu} \leq \probterm{m}{\nu} + \varepsilon$.
  We first show that the lemma holds whenever $\mu : \Dist{1}$. Here, there exists $\nu' : \Dist{1}$ such that $\nu \limleadsto \nu'$ and there is a $\rho: \CoupType{\eqrel}{\mu}{\nu'}$. Then for any $p \in \II$ there exist $\nu'' : \LiftDist{1}$ such that $\nu \leadsto \CChoice{p}{\nu'}{\nu''}$. We now have:
        \begin{itemize}
	  \item $\probterm{n}{\mu} = \probterm{n}{\nu'} = 1$
	  \item By Lemma~\ref{lemma:limleadsto-eqlim}, $\probtermseq{\nu} \eqlim \probtermseq{\nu'}$, and there exists $m$ such that $\probterm{n}{\nu'} \leq \probterm{m}{\nu} + \varepsilon$.
        \end{itemize}

  Otherwise, we do induction on $n$. For $n = 0$, $\varepsilon > 0$, we have two cases for $\mu \CRelLift{\eqrel} \nu$:
  \begin{itemize}
  
    \item If $\mu : \Dist(\LiftDist{1})$, then $\probterm{0}{\mu} = 0$ and the statement is trivially true.
    \item If there exist $\mu_1 : \Dist{1}, \mu_2 : \Dist(\LiftDist{1}), p \in \II$ such that $\mu = \DistChoice{p}{\mu_1}{\mu_2}$, and there exist $\nu_1, \nu_2 : \LiftDist{1}$ such that $\nu \limleadsto \CChoice{p}{\nu_1}{\nu_2}$, and $\mu_1\CRelLift{\eqrel}\nu_1$, $\mu_2\CRelLift{\eqrel}\nu_2$:

        \begin{itemize}
          \item $\probterm{0}{\mu} = p$
	  \item By the previous case, $\probtermseq{\mu_1} \leq \probtermseq{\nu_1}$, and furthermore, $\probterm{0}{\mu_1} = 1$.
		  Therefore, there exists $m'$ such that $1 \leq \probterm{m'}{\nu_1} + \varepsilon/2$, and thus
			\begin{align*}
				p &\leq p\cdot \probterm{m'}{\nu_1} + p\cdot \varepsilon/2 \\
				  &\leq p\cdot \probterm{m'}{\nu_1} + p\cdot \varepsilon/2 + (1-p) \cdot \probterm{m'}{\nu_2} + (1-p)\cdot\varepsilon/2 \\
				  &\peq \probterm{m'}{\CChoice{p}{\nu_1}{\nu_2}} + \varepsilon/2
			\end{align*}
			so $p \leq \probterm{m'}{\CChoice{p}{\nu_1}{\nu_2}} + \varepsilon/2$.
			By Lemma~\ref{lemma:limleadsto-eqlim}, $\probtermseq{\CChoice{p}{\nu_1}{\nu_2}} \eqlim \probtermseq{\nu}$, so there exists $m$ such that
			$\probterm{m'}{\CChoice{p}{\nu_1}{\nu_2}} \leq \probterm{m}{\nu} + \varepsilon/2$,
			and then we can conclude $\probterm{0}{\mu} \leq \probterm{m}{\nu} + \varepsilon$.
	\end{itemize}
  \end{itemize}
  For $n = n' + 1$, $\varepsilon > 0$, we have again two cases for $\mu \CRelLift{\eqrel} \nu$.
  \begin{itemize}
    \item If $\mu : \Dist(\LiftDist{1})$ and $\run (\mu) \CRelLift{\eqrel} \nu$ then $\probterm{n}{\mu} = \probterm{n'}{\run(\mu)}$. By the induction hypothesis for $n'$, we know that then there is an $m$ such that $\probterm{n'}{\run(\mu)} \leq \probterm{m}{\nu} + \varepsilon$, and hence also $\probterm{n}{\mu} \leq \probterm{m}{\nu} + \varepsilon$.
    \item If there exist $\mu_1 : \Dist{1}, \mu_2 : \Dist(\LiftDist{1}), p \in \II$ such that $\mu = \DistChoice{p}{\mu_1}{\mu_2}$, and there exist $\nu_1, \nu_2 : \LiftDist{1}$ such that $\nu \limleadsto \CChoice{p}{\nu_1}{\nu_2}$, and $\mu_1\CRelLift{\eqrel}\nu_1$, $\mu_2\CRelLift{\eqrel}\nu_2$:
        Then:
        \begin{itemize}
		\item $\probterm{n}{\mu} \peq p\cdot \probterm{n'}{\mu_1} + (1-p)\cdot\probterm{n'}{\run(\mu_2)}$
		\item By definition of the lifting, $\run(\mu_2)\CRelLift{\eqrel}\nu_2$. By the inductive hypothesis, there exist $m_1, m_2$ such that
			$\probterm{n'}{\mu_1} = 1 \leq \probterm{m_1}{\nu_1} + \varepsilon/2$ and
			$\probterm{n'}{\run(\mu_2)} \leq \probterm{m_2}{\nu_2} + \varepsilon/2$.
		\item Let $m'$ be the largest of $m_1,m_2$. By Lemma~\ref{lemma:limleadsto-eqlim} there exists $m$ such that
			$\probterm{m'}{\DistChoice{p}{\nu_1}{\nu_2}} \leq \probterm{m}{\nu} + \varepsilon/2$.
	\end{itemize}
	Putting all of the above together, we now have:
	\begin{align*}
		\probterm{n}{\mu} &\peq p\cdot \probterm{n'}{\mu_1} + (1-p)\cdot\probterm{n'}{\run(\mu_2)} \\
				  &\leq p\cdot (\probterm{m_1}{\nu_1} + \varepsilon/2) + (1-p) \cdot (\probterm{m_2}{\nu_2} + \varepsilon/2) \\
				  &\leq p\cdot \probterm{m'}{\nu_1} + (1-p) \cdot \probterm{m'}{\nu_2} + \varepsilon/2 \\
				  &\leq p\cdot \probterm{m}{\nu} + \varepsilon/2 + \varepsilon/2 \\
				  &\peq p\cdot \probterm{m}{\nu} + \varepsilon
	\end{align*}
  which is what we needed to show. \qedhere
  \end{itemize}
\end{proof}

\begin{proof}[Proof of Lemma~\ref{lem:prereqs:leadsto-liftrel}]$\;$\\
 For left to right, suppose that $\mu \GRelLift{\Rel} \nu$. We need to show that also $\mu \GRelLift{\Rel} \nu'$. We consider the three possibilities of Definition~\ref{def:liftRel}.
 \begin{itemize}
   \item If $\mu : \Dist{A}$, then we need to show that there is a $\nu'' : \Dist{B}$ such that $\nu' \limleadsto \nu''$, and a $\rho: \CoupType\Rel{\mu}{\nu''}$. From our assumption $\mu \GRelLift{\Rel} \nu$ we get a $\nu'' : \Dist{B}$ such that $\nu \limleadsto \nu''$, and a $\rho: \CoupType\Rel{\mu}{\nu''}$.
       
       We just need to show that also $\nu' \limleadsto \nu''$. For any $p \in \II$, we know from $\nu \limleadsto \nu''$, that there is a $\nu''' : \LiftDist{B}$ such that $\nu \leadsto \CChoice{p}{\nu''}{\nu'''}$. 
 
       By Lemma~\ref{lemma:leadsto-run}.\ref{lemma:leadsto-run3}, there is an $n$ such that $\nu' \leadsto \run^n \nu$. By Lemma~\ref{lemma:leadsto-run}.\ref{lemma:leadsto-run2} we know:
        \begin{align*}
          \run^n \nu & \leadsto \run^n(\CChoice{p}{\nu''}{\nu'''}) \\
           & \peq \CChoice{p}{\run^n \nu''}{\run^n \nu'''} \\
           & \peq \CChoice{p}{\nu''}{\run^n \nu'''},
        \end{align*} 
        where the last equality holds because $\nu'' : \Dist{B}$.
        Hence by transitivity of $\leadsto$, also:
        \[\nu' \leadsto \CChoice{p}{\nu''}{\run^n \nu'''}, \]
        proving that indeed $\nu' \limleadsto \nu''$.
   \item If $\mu : \Dist{(\Later{}\GLiftDist{A})}$, then we need to show that $\Later{(\alpha : \kappa)}\  (((\Gzeta (\mu))[\alpha]) \GRelLift{\Rel} \nu')$. From our assumption $\mu \GRelLift{\Rel} \nu$ we know that $\Later{(\alpha : \kappa)}\  (((\Gzeta (\mu))[\alpha]) \GRelLift{\Rel} \nu)$, and so by guarded recursion, also $\Later{(\alpha : \kappa)}\  (((\Gzeta (\mu))[\alpha]) \GRelLift{\Rel} \nu')$. 
   \item If there exist $\mu_1 : \Dist{A}, \mu_2 : \Dist(\Later{}\GLiftDist{A})$, and $p \in \II$ such that $\mu = \DistChoice{p}{\mu_1}{\mu_2}$, then we need to show that there exist $\nu'_1, \nu'_2 : \LiftDist{B}$ such that $\nu' \limleadsto \CChoice{p}{\nu'_1}{\nu'_2}$, and $\mu_1 \GRelLift{\Rel} \nu'_1$ and $\mu_2 \GRelLift{\Rel} \nu'_2$.
       
       We know from $\mu \GRelLift{\Rel} \nu$ that there exist $\nu_1, \nu_2 : \LiftDist{B}$ such that $\nu \limleadsto \CChoice{p}{\nu_1}{\nu_2}$, and $\mu_1 \GRelLift{\Rel} \nu_1$ and $\mu_2 \GRelLift{\Rel} \nu_2$. 
       
       By Lemma~\ref{lemma:leadsto-run}.\ref{lemma:leadsto-run3}, there is an $n$ such that $\nu' \leadsto \run^n \nu$. Then by Lemma~\ref{lemma:leadsto-run}.\ref{lemma:leadsto-run2}:
        \begin{align*}
          \run^n \nu & \leadsto \run^n(\CChoice{p}{\nu_1}{\nu_2}) \\
           & \peq \CChoice{p}{\run^n \nu_1}{\run^n \nu_2}
        \end{align*}
        
        Then by Lemma~\ref{lemma:limleadsto-leadsto-interact}.\ref{limleadsto-leadsto-interact1} also $\run^n \nu \limleadsto \CChoice{p}{\run^n \nu_1}{\run^n \nu_2}$, and finally by Lemma~\ref{lemma:limleadsto-leadsto-interact}.\ref{limleadsto-leadsto-interact2} $\nu' \limleadsto \CChoice{p}{\run^n \nu_1}{\run^n \nu_2}$.
        
        What is left to show is that $\mu_1 \GRelLift{\Rel} \run^n \nu_1$ and $\mu_2 \GRelLift{\Rel} \run^n \nu_2$. Since by Lemma~\ref{lemma:leadsto-run}.\ref{lemma:leadsto-run1} $\nu_1 \leadsto \run^n \nu_1$ and $\nu_2 \leadsto \run^n \nu_2$, and $\mu_1 : \Dist{A}$ and $\mu_2 : \Dist(\Later{}\GLiftDist{A})$, these claims follow from the first two cases of this lemma. 
 \end{itemize}

 For right to left, suppose that $\mu \GRelLift{\Rel} \nu'$. We need to show that also $\mu \GRelLift{\Rel} \nu$. We consider the three possibilities of Definition~\ref{def:liftRel}.
 \begin{itemize}
   \item If $\mu : \Dist{A}$, then we need to show that there is a $\nu''$ such that $\nu \limleadsto \nu''$, and a $\rho: \CoupType\Rel{\mu}{\nu''}$. From our assumption $\mu \GRelLift{\Rel} \nu'$ we get a $\nu''$ such that $\nu' \limleadsto \nu''$, and a $\rho: \CoupType\Rel{\mu}{\nu''}$. By Lemma~\ref{lemma:limleadsto-leadsto-interact}, $\nu \leadsto \nu'$ implies that also $\nu \limleadsto \nu''$, and so indeed $\mu \GRelLift{\Rel} \nu$. 
   \item If $\mu : \Dist{(\Later{}\GLiftDist{A})}$, then we need to show that $\Later{(\alpha : \kappa)}\  (((\Gzeta (\mu))[\alpha]) \GRelLift{\Rel} \nu)$. From our assumption $\mu \GRelLift{\Rel} \nu'$ we know that $\Later{(\alpha : \kappa)}\  (((\Gzeta (\mu))[\alpha]) \GRelLift{\Rel} \nu')$, and so by guarded recursion, also $\Later{(\alpha : \kappa)}\  (((\Gzeta (\mu))[\alpha]) \GRelLift{\Rel} \nu)$.  
   \item If there exist $\mu_1 : \Dist{A}, \mu_2 : \Dist(\Later{}\GLiftDist{A})$, and $p \in \II$ such that $\mu = \DistChoice{p}{\mu_1}{\mu_2}$, then we need to show that there exist $\nu_1, \nu_2 : \LiftDist{B}$ such that $\nu \limleadsto \CChoice{p}{\nu_1}{\nu_2}$, and $\mu_1 \GRelLift{\Rel} \nu_1$ and $\mu_2 \GRelLift{\Rel} \nu_2$.
       
       From our assumption $\mu \GRelLift{\Rel} \nu'$ we know that there exist $\nu_1, \nu_2 : \LiftDist{B}$ such that $\nu' \limleadsto \CChoice{p}{\nu_1}{\nu_2}$, and $\mu_1 \GRelLift{\Rel} \nu_1$ and $\mu_2 \GRelLift{\Rel} \nu_2$. By Lemma~\ref{lemma:limleadsto-leadsto-interact}, $\nu \leadsto \nu'$ implies that also $\nu \limleadsto \CChoice{p}{\nu_1}{\nu_2}$, and so indeed $\mu \GRelLift{\Rel} \nu$. \qedhere
 \end{itemize}
\end{proof}

\begin{proof}[Proof of Lemma~\ref{lem:prereqs:limleadsto-liftrel}]$\;$\\
  Given $\mu : \GLiftDist{A}, \nu : \LiftDist{B}$, suppose that $\nu \limleadsto \nu'$ and $\mu \GRelLift{\Rel} \nu'$.
  We consider three cases, one for each part of Definition~\ref{def:liftRel}:
  \begin{itemize}
    \item If $\mu : \Dist{A}$, then we need to show that there is a $\nu''$ such that $\nu \limleadsto \nu''$, and a $\rho: \CoupType\Rel{\mu}{\nu''}$. From our assumption $\mu \GRelLift{\Rel} \nu'$ we get a $\nu''$ such that $\nu' \limleadsto \nu''$, and a $\rho: \CoupType\Rel{\mu}{\nu''}$. By transitivity of $\limleadsto$ (Lemma~\ref{lemma:limleadsto-properties}), also $\nu \limleadsto \nu''$, so this case is immediate from the assumptions.   
    \item If $\mu : \Dist{(\Later{}\GLiftDist{A})}$, then we need to show that $\Later{(\alpha : \kappa)}\  (((\Gzeta (\mu))[\alpha]) \GRelLift{\Rel} \nu)$. From our assumption $\mu \GRelLift{\Rel} \nu'$ we know that $\Later{(\alpha : \kappa)}\  (((\Gzeta (\mu))[\alpha]) \GRelLift{\Rel} \nu')$, and so by guarded recursion, also $\Later{(\alpha : \kappa)}\  (((\Gzeta (\mu))[\alpha]) \GRelLift{\Rel} \nu)$.
    \item If there exist $\mu_1 : \Dist{A}, \mu_2 : \Dist(\Later{}\GLiftDist{A})$, and $p \in \II$ such that $\mu = \DistChoice{p}{\mu_1}{\mu_2}$, then we need to show that there exist $\nu_1, \nu_2 : \LiftDist{B}$ such that $\nu \limleadsto \CChoice{p}{\nu_1}{\nu_2}$, and $\mu_1 \GRelLift{\Rel} \nu_1$ and $\mu_2 \GRelLift{\Rel} \nu_2$.
        
        From our assumption $\mu \GRelLift{\Rel} \nu'$ we know that there exist $\nu_1, \nu_2 : \LiftDist{B}$ such that $\nu' \limleadsto \CChoice{p}{\nu_1}{\nu_2}$, and $\mu_1 \GRelLift{\Rel} \nu_1$ and $\mu_2 \GRelLift{\Rel} \nu_2$. By transitivity of $\limleadsto$ (Lemma~\ref{lemma:limleadsto-properties}), also $\nu \limleadsto \nu''$, so this case is again immediate from the assumptions.   \qedhere
  \end{itemize} 
\end{proof}

\begin{proof}[Proof of Lemma~\ref{lem:prereqs:step-choice}]$\;$\\
For the first statement, notice that
\[
\GChoice{p}{(\GStep \mu_1)}{(\GStep \mu_2)} \peq \Dist(\inr)(\DistChoice{p}{\Dirac{\mu_1}}{\Dirac{\mu_2}}),\] 
and 
\begin{align*}
  \GStep & (\tabs{\tickA}{\kappa} \GChoice{p}{(\tapp[\tickA]{\mu_1})}{(\tapp[\tickA]{\mu_2})}) \\
   & \peq \Dist(\inr)(\Dirac{\tabs{\tickA}{\kappa} \GChoice{p}{(\tapp[\tickA]{\mu_1})}{(\tapp[\tickA]{\mu_2})}}),
\end{align*}
 where both right hands of these equations are of form $\Dist(\inr)(\mu')$ with $\mu' : \Dist(\Later\GLiftDist{A})$. Therefore, by Definition \ref{def:liftRel}(\ref{item:delay}) we have:
\begin{align*}
   & \GChoice{p}{(\GStep \mu_1)}{(\GStep \mu_2)} \GRelLift{\Rel} \nu \\
  \Leftrightarrow \; & \latbind\alpha\kappa{(((\Gzeta (\GChoice{p}{(\Dirac{\mu_1})}{(\Dirac{\mu_2})}))[\alpha]) \GRelLift{\Rel} \nu)} \\
  \peq \; & \latbind\alpha\kappa {((\GChoice{p}{(\tapp[\tickA]{\mu_1})}{(\tapp[\tickA]{\mu_2})}) \GRelLift{\Rel} \nu)} \\
  \peq \; & \latbind\alpha\kappa{(((\Gzeta (\Dirac{\tabs{\tickB}{\kappa} \GChoice{p}{(\tapp[\tickB]{\mu_1})}{(\tapp[\tickB]{\mu_2})}}  ))[\alpha]) \GRelLift{\Rel} \nu)} \\
  \Leftrightarrow \; & \GStep(\tabs{\tickB}{\kappa} \GChoice{p}{(\tapp[\tickB]{\mu_1})}{(\tapp[\tickB]{\mu_2})}) \GRelLift{\Rel} \nu.
\end{align*}

The second statement follows from Lemma~\ref{lem:prereqs:leadsto-liftrel}
since both
\begin{align*}
\CChoice p {(\CStep(\nu_1))}{(\CStep(\nu_2))} & \leadsto \CChoice p{\nu_1}{\nu_2} \\
\CStep(\CChoice p {\nu_1}{\nu_2}) & \leadsto \CChoice p{\nu_1}{\nu_2}. \qedhere
\end{align*}
\end{proof}

\begin{proof} [Proof of Lemma~\ref{lem:prereqs:choice-lemma}]$\;$\\
  We analyse the 3*3 different possibilities for $\mu_1 \GRelLift{\Rel} \nu_1$ and $\mu_2 \GRelLift{\Rel} \nu_2$.
  \begin{enumerate}
   \item If both $\mu_1, \mu_2 : \Dist{A}$, then from $\mu_1 \GRelLift{\Rel} \nu_1$ and $\mu_2 \GRelLift{\Rel} \nu_2$ we know that there are $\nu_1', \nu_2' : \Dist{B}$ such that $\nu_1 \limleadsto \nu_1'$ and $\nu_2 \limleadsto \nu_2'$, and that there are $\rho_1: \CoupType\Rel{\mu}{\nu_1'}$ and $\rho_2: \CoupType\Rel{\mu}{\nu_2'}$.
       
       By Lemma~\ref{lemma:limleadsto-properties} we have $\CChoice{p}{\nu_1}{\nu_2} \limleadsto \CChoice{p}{\nu_1'}{\nu_2'}$. Also: $\DistChoice{p}{\rho_1}{\rho_2} : \CoupType\Rel{\GChoice{p}{\mu_1}{\mu_2}}{\CChoice{p}{\nu_1'}{\nu_2'}}$, and so $(\GChoice{p}{\mu_1}{\mu_2}) \GRelLift{\Rel} (\CChoice{p}{\nu_1}{\nu_2})$ via Definition~\ref{def:liftRel}.\ref{item:val}.
   
   \item If $\mu_1 : \Dist{A}$ and $\mu_2 : \Dist{(\Later{}\GLiftDist{A})}$, then $(\GChoice{p}{\mu_1}{\mu_2}) \GRelLift{\Rel} (\CChoice{p}{\nu_1}{\nu_2})$ follows immediately from the given assumptions, via Definition~\ref{def:liftRel}.\ref{item:both}.
   
   \item If $\mu_2 : \Dist{A}$ and $\mu_1 : \Dist{(\Later{}\GLiftDist{A})}$, then again $(\GChoice{p}{\mu_1}{\mu_2}) \GRelLift{\Rel} (\CChoice{p}{\nu_1}{\nu_2})$ follows immediately from the given assumptions, via Definition~\ref{def:liftRel}.\ref{item:both}, and using that $\GChoice{p}{\mu_1}{\mu_2} \peq (\GChoice{1-p}{\mu_2}{\mu_1})$.
   
   \item If $\mu_1 : \Dist{A}$ and for $\mu_2$ there exist $\mu_{21} : \Dist{A}, \mu_{22} : \Dist(\Later{}\GLiftDist{A})$ such that $\mu_2 \peq \DistChoice{p'}{\mu_{21}}{\mu_{22}}$, then:
         \begin{align*}
           \GChoice{p}{\mu_1}{\mu_2} & \peq \GChoice{p}{\mu_1}{(\GChoice{p'}{\mu_{21}}{\mu_{22}})}  \\
            & \peq \GChoice{p + (1-p)p'}{(\GChoice{\frac{p}{p + (1-p)p'}}{\mu_1}{\mu_{21}})}{\mu_{22}},
         \end{align*}
         where $\GChoice{\frac{p}{p + (1-p)p'}}{\mu_1}{\mu_{21}} : \Dist{A}$ and $\mu_{22} : \Dist(\Later{}\GLiftDist{A})$. 
         
         So by Definition~\ref{def:liftRel}.\ref{item:both}, we need to show that there exist $\nu', \nu'' : \LiftDist{B}$ such that $\CChoice{p}{\nu_1}{\nu_2} \limleadsto \CChoice{p + (1-p)p'}{\nu'}{\nu''}$, and $(\GChoice{\frac{p}{p + (1-p)p'}}{\mu_1}{\mu_{21}}) \GRelLift{\Rel} \nu'$ and $\mu_{22} \GRelLift{\Rel} \nu''$.
       
         We know from $\mu_2 \GRelLift{\Rel} \nu_2$ that there exist $\nu_{21}, \nu_{22} : \LiftDist{B}$ such that $\nu_2 \limleadsto \CChoice{p'}{\nu_{21}}{\nu_{22}}$, and $\mu_{21} \GRelLift{\Rel} \nu_{21}$ and $\mu_{22} \GRelLift{\Rel} \nu_{22}$.
         We also know that $\mu_1 \GRelLift{\Rel} \nu_1$ by assumption.
         We then know, using the reflexivity and sum properties of Lemma~\ref{lemma:limleadsto-properties}, that:
         \begin{align*}
         \CChoice{p}{\nu_1}{\nu_2} & \limleadsto \CChoice{p}{\nu_1}{(\CChoice{p'}{\nu_{21}}{\nu_{22}})} \\
                                   & \peq \CChoice{p + (1-p)p'}{(\CChoice{\frac{p}{p + (1-p)p'}}{\nu_1}{\nu_{21}})}{\nu_{22}}.
         \end{align*}
         So we choose:
         \begin{align*}
           \nu' & \peq \CChoice{\frac{p}{p + (1-p)p'}}{\nu_1}{\nu_{21}} \\
           \nu'' & \peq \nu_{22}.
         \end{align*}
         Then from the first item of this lemma we get that $(\GChoice{\frac{p}{p + (1-p)p'}}{\mu_1}{\mu_{21}}) \GRelLift{\Rel} \nu'$, and finally it follows directly from our assumptions that $\mu_{22} \GRelLift{\Rel} \nu''$.
                
   \item If $\mu_2 : \Dist{A}$ and for $\mu_1$ there exist $\mu_{11} : \Dist{A}, \mu_{12} : \Dist(\Later{}\GLiftDist{A})$ such that $\mu_1 \peq \DistChoice{p'}{\mu_{11}}{\mu_{12}}$, then:
       \begin{align*}
           \GChoice{p}{\mu_1}{\mu_2} & \peq \GChoice{p}{(\GChoice{p'}{\mu_{11}}{\mu_{12}})}{\mu_2}  \\
            & \peq \GChoice{1-(1-p')p}{(\GChoice{\frac{pp'}{1 - (1-p')p}}{\mu_{11}}{\mu_2})}{\mu_{12}},
         \end{align*}
         where $\GChoice{\frac{pp'}{1 - (1-p')p}}{\mu_{11}}{\mu_2} : \Dist{A}$ and $\mu_{12} : \Dist(\Later{}\GLiftDist{A})$.
         
         We again need to satisfy Definition~\ref{def:liftRel}.\ref{item:both}, and via similar arguments as the previous case we can choose
         \begin{align*}
           \nu' & \peq \CChoice{\frac{pp'}{1 - (1-p')p}}{\nu_{11}}{\nu_2} \\
           \nu'' & \peq \nu_{12}
         \end{align*}
         to do so.
   
   \item If both $\mu_1, \mu_2 : \Dist{(\Later{}\GLiftDist{A})}$ then we know that 
   \[\Later{(\alpha : \kappa)}\  (((\Gzeta (\mu_1))[\alpha]) \GRelLift{\Rel} \nu_1)\]
    and 
    \[ \Later{(\alpha : \kappa)}\  (((\Gzeta (\mu_2))[\alpha]) \GRelLift{\Rel} \nu_2).\]
       Then by guarded recursion we know:
       \[
       \Later{(\alpha : \kappa)}\  (\GChoice{p}{((\Gzeta (\mu_1))[\alpha])}{((\Gzeta (\mu_2))[\alpha])} \GRelLift{\Rel} (\CChoice{p}{\nu_1}{\nu_2})), 
       \]
       which is the same as:
       \[
       \Later{(\alpha : \kappa)}\  (((\Gzeta (\GChoice{p}{\mu_1}{\mu_2}))[\alpha]) \GRelLift{\Rel} (\CChoice{p}{\nu_1}{\nu_2})),
       \]
       which shows that $(\GChoice{p}{\mu_1}{\mu_2}) \GRelLift{\Rel} (\CChoice{p}{\nu_1}{\nu_2})$ via Definition~\ref{def:liftRel}.\ref{item:delay}.
       
   \item If $\mu_2 : \Dist{(\Later{}\GLiftDist{A})}$ and for $\mu_1$ there exist $\mu_{11} : \Dist{A}, \mu_{12} : \Dist(\Later{}\GLiftDist{A})$ such that $\mu_1 \peq \DistChoice{p'}{\mu_{11}}{\mu_{12}}$, then:
       \begin{align*}
           \GChoice{p}{\mu_1}{\mu_2} & \peq \GChoice{p}{(\GChoice{p'}{\mu_{11}}{\mu_{12}})}{\mu_2}  \\
            & \peq \GChoice{pp'}{\mu_{11}}{(\GChoice{\frac{(1-p')p}{1 - pp'}}{\mu_{12}}{\mu_2})},
         \end{align*}
         where $\mu_{11} : \Dist{A}$ and $\GChoice{\frac{(1-p')p}{1 - pp'}}{\mu_{12}}{\mu_2} : \Dist(\Later{}\GLiftDist{A})$.
         
         We know from $\mu_1 \GRelLift{\Rel} \nu_1$ that there exist $\nu_{11}, \nu_{12} : \LiftDist{B}$ such that $\nu_1 \limleadsto \CChoice{p_1}{\nu_{11}}{\nu_{12}}$, and $\mu_{11} \GRelLift{\Rel} \nu_{11}$ and $\mu_{12} \GRelLift{\Rel} \nu_{12}$.
         
         If we pick
         \begin{align*}
           \nu' & \peq \nu_{11}\\
           \nu'' & \peq \CChoice{\frac{(1-p')p}{1 - pp'}}{\nu_{12}}{\nu_2},
         \end{align*}
         then by the reflexivity and sum properties of Lemma~\ref{lemma:limleadsto-properties}
         \begin{align*}
         \CChoice{p}{\nu_1}{\nu_2} & \limleadsto \CChoice{p}{(\CChoice{p'}{\nu_{11}}{\nu_{12}})}{\nu_2} \\
          & \peq \CChoice{pp'}{\nu_{11}}{(\CChoice{\frac{(1-p')p}{1 - pp'}}{\nu_{12}}{\nu_2})} \\
          & \peq \CChoice{pp'}{\nu'}{\nu''}
         \end{align*}
         
         In addition, $\mu_{11} \GRelLift{\Rel} \nu_{11}$ immediately gives $\mu_{11} \GRelLift{\Rel} \nu'$, and from the sixth item of this lemma we know that $\mu_{12} \GRelLift{\Rel} \nu_{12}$ and $\mu_2 \GRelLift{\Rel} \nu_2$ imply $(\GChoice{\frac{(1-p')p}{1 - pp'}}{\mu_{12}}{\mu_2}) \GRelLift{\Rel} (\CChoice{\frac{(1-p')p}{1 - pp'}}{\nu_{12}}{\nu_2}) \peq \nu''$.
         
         Hence we have $(\GChoice{p}{\mu_1}{\mu_2}) \GRelLift{\Rel} (\CChoice{p}{\nu_1}{\nu_2})$ via Definition~\ref{def:liftRel}.\ref{item:both}.
   
   \item If $\mu_1 : \Dist{(\Later{}\GLiftDist{A})}$ and for $\mu_2$ there exist $\mu_{21} : \Dist{A}, \mu_{22} : \Dist(\Later{}\GLiftDist{A})$ such that $\mu_2 \peq \DistChoice{p'}{\mu_{21}}{\mu_{22}}$, then:
         \begin{align*}
           \GChoice{p}{\mu_1}{\mu_2} & \peq \GChoice{p}{\mu_1}{(\GChoice{p'}{\mu_{21}}{\mu_{22}})} \\
            & \peq \GChoice{(1-p)p'}{\mu_{21}}{(\GChoice{\frac{p}{1-(1-p)p'}}{\mu_1}{\mu_{22}})},
         \end{align*}
         where $\mu_{21} : \Dist{A}$ and $\GChoice{\frac{p}{1-(1-p)p'}}{\mu_1}{\mu_{22}} : \Dist(\Later{}\GLiftDist{A})$.
         
         We again need to satisfy Definition~\ref{def:liftRel}.\ref{item:both}, and via similar arguments as the previous case we can choose
         \begin{align*}
           \nu' & \peq \nu_{21} \\
           \nu'' & \peq \CChoice{\frac{p}{1-(1-p)p'}}{\nu_1}{\nu_{12}} 
         \end{align*}
         to do so.
          
   \item Finally, if for $\mu_1$ there exist $\mu_{11} : \Dist{A}, \mu_{12} : \Dist(\Later{}\GLiftDist{A})$ such that $\mu_1 \peq \DistChoice{p_1}{\mu_{11}}{\mu_{12}}$, and for $\mu_2$ there exist $\mu_{21} : \Dist{A}, \mu_{22} : \Dist(\Later{}\GLiftDist{A})$ such that $\mu_2 \peq \DistChoice{p_2}{\mu_{21}}{\mu_{22}}$, then:
       \begin{align*}
           \GChoice{p}{\mu_1}{\mu_2} & \peq \GChoice{p}{(\GChoice{p_1}{\mu_{11}}{\mu_{12}})}{(\GChoice{p_2}{\mu_{21}}{\mu_{22}})}  \\
            & \peq \GChoice{pp_1+(1-p)p_2}{(\GChoice{\frac{pp_1}{pp_1+(1-p)p_2}}{\mu_{11}}{\mu_{21}})}{(\GChoice{\frac{(1-p_1)p}{1 - (pp_1+(1-p)p_2)}}{\mu_{12}}{\mu_{22}})},
       \end{align*}
       where $\GChoice{\frac{pp_1}{pp_1+(1-p)p_2}}{\mu_{11}}{\mu_{21}} : \Dist{A}$ and $\GChoice{\frac{(1-p_1)p}{1 - (pp_1+(1-p)p_2)}}{\mu_{12}}{\mu_{22}} : \Dist(\Later{}\GLiftDist{A})$.
       
        We know from $\mu_1 \GRelLift{\Rel} \nu_1$ that there exist $\nu_{11}, \nu_{12} : \LiftDist{B}$ such that $\nu_1 \limleadsto \CChoice{p_1}{\nu_{11}}{\nu_{12}}$, and $\mu_{11} \GRelLift{\Rel} \nu_{11}$ and $\mu_{12} \GRelLift{\Rel} \nu_{12}$,
         and from $\mu_2 \GRelLift{\Rel} \nu_2$ that there exist $\nu_{21}, \nu_{22} : \LiftDist{B}$ such that $\nu_2 \limleadsto \CChoice{p_2}{\nu_{21}}{\nu_{22}}$, and $\mu_{21} \GRelLift{\Rel} \nu_{21}$ and $\mu_{22} \GRelLift{\Rel} \nu_{22}$.
       
       Similar to previous arguments, we can take:
         \begin{align*}
           \nu' & \peq \CChoice{\frac{pp_1}{pp_1+(1-p)p_2}}{\nu_{11}}{\nu_{21}} \\
           \nu'' & \peq \CChoice{\frac{(1-p_1)p}{1 - (pp_1+(1-p)p_2)}}{\nu_{12}}{\nu_{22}}
         \end{align*}
       Since by the sum property of Lemma~\ref{lemma:limleadsto-properties}, and our assumptions: 
       \begin{align*}
         \CChoice{p}{\nu_1}{\nu_2} & \limleadsto \CChoice{p}{(\CChoice{p_1}{\nu_{11}}{\nu_{12}})}{(\CChoice{p_2}{\nu_{21}}{\nu_{22}})} \\
          & \peq \CChoice{pp_1+(1-p)p_2}{(\CChoice{\frac{pp_1}{pp_1+(1-p)p_2}}{\nu_{11}}{\nu_{21}})}{(\CChoice{\frac{(1-p_1)p}{1 - (pp_1+(1-p)p_2)}}{\nu_{12}}{\nu_{22}})}.
       \end{align*}
       Then from the first item of this lemma we know that $(\DistChoice{\frac{pp_1}{pp_1+(1-p)p_2}}{\mu_{11}}{\mu_{21}}) \GRelLift{\Rel} \nu'$ and from the sixth item of this lemma we know $(\DistChoice{\frac{(1-p_1)p}{1 - (pp_1+(1-p)p_2)}}{\mu_{12}}{\mu_{22}}) \GRelLift{\Rel} \nu''$. 
       
       This shows that $(\GChoice{p}{\mu_1}{\mu_2}) \GRelLift{\Rel} (\CChoice{p}{\nu_1}{\nu_2})$ via Definition~\ref{def:liftRel}.\ref{item:both}.\qedhere
   \end{enumerate}
\end{proof}

\begin{proof}[Proof of Lemma~\ref{lem:prereqs:bind-lemma}]$\;$\\
The first statement is by induction on the coupling $\rho: \CoupType\Rel{\mu}{\nu}$.

If $\rho \peq \Dirac{\SemPair{a}{b}}$ for some $a : A$ and $b : B$ such that $a \Rel b$, then $\uext f(\mu) \peq \uext f(\GDirac a) \peq  f(a)$
 and $\uext g(\nu') \peq \uext g(\CDirac b) \peq g(b)$ are related in $\GRelLift{\mathcal{S}}$ by assumption.

If $\rho \peq \DistChoice{p}{\rho_1}{\rho_2}$, let $\mu_i \defeq \DistMarginalFst(\rho_i)$ and $\nu_i \defeq \DistMarginalSnd(\rho_i)$ for $i=1,2$.
 By induction $\uext f(\mu_i) \GRelLift{\mathcal{S}} \uext g(\nu_i)$ and since
\begin{align*}
  \uext f(\mu) & \peq \uext f(\GChoice p{\mu_1}{\mu_2}) \peq \GChoice p{\uext f(\mu_1)}{\uext f(\mu_2)} \\
  \uext g(\nu) & \peq \uext g(\CChoice p{\nu_1}{\nu_2}) \peq \CChoice p{\uext g(\nu_1)}{\uext g(\nu_2)}
\end{align*}
 the case follows from Lemma~\ref{lem:prereqs:choice-lemma}.

For the second statement, suppose that $\mu\GRelLift{\Rel} \nu$. We consider each of the three cases.
  \begin{itemize}
    \item If $\mu : \Dist{A}$, then there exist $\nu_1 : \Dist{B}$ such that $\nu \limleadsto \nu_1$ and there exists a $\rho: \CoupType\Rel{\mu}{\nu_1}$.
          What we want is to show that $\uext f(\mu) \GRelLift{\mathcal{S}} \uext g(\nu)$.
 
          Notice that since the $\limleadsto$ relation is preserved by homomorphisms (Lemma~\ref{lemma:limleadsto-homopreserve}), $\nu \limleadsto \nu_1$ implies $\uext g(\nu) \limleadsto \uext g(\nu_1)$. By Lemma~\ref{lem:prereqs:limleadsto-liftrel} it is therefore enough to show that $\uext f(\mu) \GRelLift{\mathcal{S}} \uext g(\nu_1)$, which follows directly from the first statement of this lemma.

    \item If $\mu : \Dist(\Later{}\GLiftDist{A})$, then $\Later{(\alpha : \kappa)}\  (((\Gzeta (\mu))[\alpha]) \GRelLift{\Rel} \nu)$. By guarded recursion then also: $\Later{(\alpha : \kappa)}\  (\uext f(\Gzeta (\mu)[\alpha]) \GRelLift{\mathcal{S}} \uext g(\nu))$. Note that $\Later{(\alpha : \kappa)}\ (\uext f(\Gzeta (\mu)[\alpha]) = (\Gzeta (\uext f(\mu))[\alpha]))$. We show this by induction on $\mu$:
     
      If $\mu : \Dirac{\mu'}$, then:
      \begin{align*}
         \uext f(\Gzeta (\Dirac{\mu'})[\alpha]) 
        & \peq \uext f(\tapp[\tickA]{\mu'}) \\
        & \peq (\tabs{\tickB}{\kappa} \uext f(\tapp[\tickB]{\mu'})) [\alpha] \\
        & \peq (\Gzeta (\Dirac{\tabs{\tickB}{\kappa} (\uext f(\tapp[\tickB]{\mu'}))}))[\alpha] \\
        & \peq \Gzeta (\uext f(\Dirac{\mu'}))[\alpha].
      \end{align*}

      If $\mu = \GChoice{q}{\mu'}{\mu''}$, then:
      \begin{align*}
        \uext f(\Gzeta (\GChoice{q}{(\tapp[\tickA]{(\mu')})}{(\tapp[\tickA]{(\mu'')})})) 
       & \peq \uext f(\GChoice{q}{(\tapp[\tickA]{(\Gzeta \mu')})}{(\tapp[\tickA]{(\Gzeta \mu'')})})\\
       & \peq \GChoice{q}{(\tapp[\tickA]{\uext f(\Gzeta \mu')})}{(\tapp[\tickA]{\uext f(\Gzeta \mu'')})} \\
       & \peq \GChoice{q}{\tapp[\tickA]{(\Gzeta(\uext f(\mu')))}}{\tapp[\tickA]{(\Gzeta(\uext f(\mu'')))}} \\
       & \peq \tabs{\tickB}{\kappa} \GChoice{q}{(\tapp[\tickB]{(\Gzeta(\uext f(\mu')))})}{(\tapp[\tickB]{(\Gzeta(\uext f(\mu'')))})} [\alpha] \\
       & \peq (\Gzeta (\GChoice{q}{\uext f(\mu')}{\uext f(\mu'')}))[\alpha] \\
       & \peq (\Gzeta (\uext f(\GChoice{q}{\mu'}{\mu''})))[\alpha].
       \end{align*}

       Hence we may conclude that:
       \[
       \Later{(\alpha : \kappa)}\ ((\Gzeta (\uext f(\mu))[\alpha]) \GRelLift{\mathcal{S}} (\uext g(\nu))),
       \]
       which proves that $\uext f(\mu) \GRelLift{\mathcal{S}} \uext g(\nu)$ via the condition~\ref{item:delay} of Definition~\ref{def:liftRel}.
       
    \item If there exist $\mu_1 : \Dist{A}, \mu_2 : \Dist(\Later{}\GLiftDist{A})$, and $p : \II$, such that $\mu \peq \GChoice{p}{\mu_1}{\mu_2}$ then
          from $\mu\GRelLift{\Rel} \nu$ we know that there exist $\nu_1, \nu_2 : \LiftDist{B}$ such that $\nu \limleadsto \CChoice{p}{\nu_1}{\nu_2}$, $\mu_1\GRelLift{\Rel} \nu_1$, and $\mu_2 \GRelLift{\Rel} \nu_2$.
          
           Note that $\uext f(\mu) \peq \DistChoice{p}{\uext f(\mu_1)}{\uext f(\mu_2)}$. Since $\mu_1 : \Dist{A}$, the first case of this lemma proves that from $\mu_1\GRelLift{\Rel} \nu_1$ we may conclude $\uext f(\mu_1)\GRelLift{\Rel} \uext g(\nu_1)$. Similarly since $\mu_2 : \Dist(\Later{}\GLiftDist{A})$, the second case of this lemma proves that from $\mu_2\GRelLift{\Rel} \nu_2$ we may conclude $\uext f(\mu_2)\GRelLift{\Rel} \uext g(\nu_2)$.
          
          Then by Lemma~\ref{lem:prereqs:choice-lemma}: $\uext f(\mu) \peq \DistChoice{p}{\uext f(\mu_1)}{\uext f(\mu_2)} \GRelLift{\Rel} \CChoice{p}{\uext g(\nu_1)}{\uext g(\nu_2)} \peq \uext g(\CChoice{p}{\nu_1}{\nu_2})$.
          
          Finally, by Lemma~\ref{lem:prereqs:leadsto-liftrel} and the fact that $\nu \limleadsto \CChoice{p}{\nu_1}{\nu_2}$, we conclude $\uext f(\mu) \GRelLift{\Rel} \uext g(\nu)$.  
  \end{itemize}
\end{proof}

\subsection*{Section~\ref{sec:rel:syntax:semantics}}
In preparation for the proofs of Lemmas~\ref{lem:guarded-congruence} and~\ref{lem:guarded-fundamental} 
we prove several lemmas which show that $\GLogRel\sigma$ is compatible with the typing rules in Figure~\ref{fig:pfpc:typing}.

\begin{lemma}
  \label{lem:logrel:bind}
  Let $f : \GSVal\sigma \to \GSVal\tau$ and $g: \Val\sigma \to \Val\tau$ as well as $\mu : \GLiftDist{\GSVal\sigma}$ and $\nu : \LiftDist{(\Val\sigma)}$ such that $\mu \GLogRel\sigma \nu$.
  If for any $v:\sigma$ and $V:\Val\sigma$ we have that $(v \GValLogRel\sigma V) \to (f(v) \GValLogRel\tau g(V))$ it follows that
  \[
  (\mu \GBind \GDirac \circ f) \GLogRel\tau (\nu \Bind \CDirac \circ g).
  \]
  or equivalently
  \[
  \GLiftFun{f}(\mu) \GLogRel\tau \LiftFun{g}(\nu)
  \]
\end{lemma}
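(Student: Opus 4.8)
The plan is to obtain this lemma as a direct instance of the second statement of the Bind Lemma (Lemma~\ref{lem:prereqs:bind-lemma}). I would instantiate that lemma with $\Rel \defeq \GValLogRel\sigma$, $\mathcal{S} \defeq \GValLogRel\tau$, and with the two Kleisli maps $\GDirac \circ f : \GSVal\sigma \to \GLiftDist{\GSVal\tau}$ and $\CDirac \circ g : \Val\sigma \to \LiftDist{(\Val\tau)}$. Under this instantiation the premise of Lemma~\ref{lem:prereqs:bind-lemma}(2) is exactly the hypothesis $\mu \GLogRel\sigma \nu$ of the present lemma, so the only thing left to discharge is the pointwise side condition relating the two maps.

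First I would check that for all $v : \GSVal\sigma$ and $V : \Val\sigma$ with $v \GValLogRel\sigma V$ we have $\GDirac(f(v)) \GRelLift{\GValLogRel\tau} \CDirac(g(V))$. This is witnessed by case~(\ref{item:val}) of Definition~\ref{def:liftRel}: $\GDirac(f(v))$ is a distribution of values (a Dirac on $f(v)$); $\CDirac(g(V)) \limleadsto \CDirac(g(V))$ by reflexivity of $\limleadsto$ (Lemma~\ref{lemma:limleadsto-properties}); and $\Dirac{\SemPair{f(v)}{g(V)}}$ is a $\GValLogRel\tau$-coupling of $\GDirac(f(v))$ and $\CDirac(g(V))$ precisely because $f(v) \GValLogRel\tau g(V)$, which follows from the assumed compatibility of $f$ and $g$. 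Hence the maps $\GDirac \circ f$ and $\CDirac \circ g$ satisfy the hypothesis of the Bind Lemma.

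Applying Lemma~\ref{lem:prereqs:bind-lemma}(2) then gives $\uext{(\GDirac \circ f)}(\mu) \GRelLift{\GValLogRel\tau} \uext{(\CDirac \circ g)}(\nu)$, that is $(\mu \GBind \GDirac \circ f) \GLogRel\tau (\nu \Bind \CDirac \circ g)$. For the equivalent formulation it remains to note that the Kleisli extension of a map precomposed with the unit is just the functorial action of the monad, so $\uext{(\GDirac \circ f)} = \GLiftFun{f}$ and $\uext{(\CDirac \circ g)} = \LiftFun{g}$. I do not expect any genuine obstacle: all the real work — the guarded recursion, the three-way case split of Definition~\ref{def:liftRel}, and the bookkeeping with $\limleadsto$ — already happens in the proof of Lemma~\ref{lem:prereqs:bind-lemma}; here one only has to produce the trivial coupling of the two Dirac distributions and identify the Kleisli extensions $\uext{(\GDirac\circ f)}$ and $\uext{(\CDirac\circ g)}$ with the functorial actions $\GLiftFun{f}$ and $\LiftFun{g}$.
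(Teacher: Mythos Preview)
Your proposal is correct and follows essentially the same approach as the paper: apply part~(2) of the Bind Lemma (Lemma~\ref{lem:prereqs:bind-lemma}) with $\Rel = \GValLogRel\sigma$, $\mathcal S = \GValLogRel\tau$, and the Kleisli maps $\GDirac\circ f$, $\CDirac\circ g$, then discharge the pointwise side condition by exhibiting the Dirac coupling. Your write-up is in fact more explicit than the paper's proof, which just points back to the Bind Lemma for the side condition; your direct appeal to case~(\ref{item:val}) of Definition~\ref{def:liftRel} with reflexivity of $\limleadsto$ is exactly the content of that step.
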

\begin{proof}
  This is a direct consequence of Lemma~\ref{lem:prereqs:bind-lemma}.
  Looking at the requirements, it suffices to prove that $v \GValLogRel\sigma V$ implies $\GDirac{(f(v))} \GLogRel\Nat \CDirac{(g(V))}$.
  Using Lemma~\ref{lem:prereqs:eta-lemma}, this follows from our assumption that $(v \GValLogRel\sigma V) \to (f(v) \GValLogRel\tau g(V))$.
\end{proof}

\begin{lemma}
  \label{lem:logrel:bind-two}
  Let $f : \GSVal{\sigma_1 \times \sigma_2} \to \GSVal{\sigma_3}$ and $g: \Val{\sigma_1}\times \Val{\sigma_2} \to \Val{\sigma_3}$ as well as $\mu \GLogRel{\sigma_1} d$ and $\nu \GLogRel{\sigma_2} e$.
  If we have that
  \[
  (v \GValLogRel{\sigma_1} V) \to (w \GValLogRel{\sigma_2} W) \to (f(v,w) \GValLogRel{\sigma_3} g(V,W))
  \]
  it follows that
  \begin{align*}
  (\mu \GBind \lambda v. \nu \GBind \lambda w. \GDirac(f(v,w)))  
  \GLogRel\tau &(d \Bind \lambda V. e \Bind \lambda W.\CDirac(g(V,W))).
  \end{align*}
\end{lemma}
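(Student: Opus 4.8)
The plan is to obtain this as a two‑argument corollary of the single‑variable bind principle Lemma~\ref{lem:logrel:bind} together with the Bind Lemma (Lemma~\ref{lem:prereqs:bind-lemma}), applying them in two nested steps, one for each bound variable. (Note that the conclusion should of course read $\GLogRel{\sigma_3}$, not $\GLogRel\tau$.)

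First I would handle the \emph{inner} bind. Fix $v : \GSVal{\sigma_1}$ and $V : \Val{\sigma_1}$ with $v \GValLogRel{\sigma_1} V$, and set $f_v \defeq \lambda w.\, f(v,w) : \GSVal{\sigma_2} \to \GSVal{\sigma_3}$ and $g_V \defeq \lambda W.\, g(V,W) : \Val{\sigma_2} \to \Val{\sigma_3}$. Specialising the hypothesis at $v$ and $V$ gives exactly $(w \GValLogRel{\sigma_2} W) \to (f_v(w) \GValLogRel{\sigma_3} g_V(W))$, so Lemma~\ref{lem:logrel:bind} applied to $f_v$, $g_V$ and the assumption $\nu \GLogRel{\sigma_2} e$ yields
\[
\GLiftFun{f_v}(\nu) \GLogRel{\sigma_3} \LiftFun{g_V}(e),
\]
that is, $\bigl(\nu \GBind \lambda w.\, \GDirac(f(v,w))\bigr) \GLogRel{\sigma_3} \bigl(e \Bind \lambda W.\, \CDirac(g(V,W))\bigr)$.

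Next comes the \emph{outer} bind. Define $F \defeq \lambda v.\, \nu \GBind \lambda w.\, \GDirac(f(v,w))$, of type $\GSVal{\sigma_1} \to \GLiftDist{\GSVal{\sigma_3}}$, and $G \defeq \lambda V.\, e \Bind \lambda W.\, \CDirac(g(V,W))$, of type $\Val{\sigma_1} \to \LiftDist{(\Val{\sigma_3})}$. The previous step shows precisely $(v \GValLogRel{\sigma_1} V) \to (F(v) \GRelLift{\GValLogRel{\sigma_3}} G(V))$, which is the side condition required to invoke the second clause of Lemma~\ref{lem:prereqs:bind-lemma} with $\Rel \defeq \GValLogRel{\sigma_1}$ and $\mathcal{S} \defeq \GValLogRel{\sigma_3}$. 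Since $\mu \GLogRel{\sigma_1} d$ unfolds by definition to $\mu \GRelLift{\GValLogRel{\sigma_1}} d$, that lemma gives $\uext F(\mu) \GRelLift{\GValLogRel{\sigma_3}} \uext G(d)$; unwinding $\uext F(\mu) = \mu \GBind \lambda v.\, \nu \GBind \lambda w.\, \GDirac(f(v,w))$ and $\uext G(d) = d \Bind \lambda V.\, e \Bind \lambda W.\, \CDirac(g(V,W))$ yields exactly the desired statement.

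I do not expect a genuine obstacle: this is a routine ``two‑argument'' instance of the bind principle. The only points to be careful about are (i) that the inner step uses the coupling‑free clause packaged by Lemma~\ref{lem:logrel:bind}, whereas the outer step must use the general clause~(2) of Lemma~\ref{lem:prereqs:bind-lemma}, because the outer continuation $F$ is genuinely of the form $\GSVal{\sigma_1} \to \GLiftDist{\GSVal{\sigma_3}}$ rather than $\GDirac$ post‑composed with a value function; and (ii) matching the $\GBind$/$\Bind$ notation with the $\uext{(-)}$ notation of Lemma~\ref{lem:prereqs:bind-lemma}. If one prefers not to cite Lemma~\ref{lem:logrel:bind}, the inner step can instead be obtained directly from clause~(2) of Lemma~\ref{lem:prereqs:bind-lemma}, after observing that $w \GValLogRel{\sigma_2} W$ implies $\GDirac(f(v,w)) \GLogRel{\sigma_3} \CDirac(g(V,W))$ via Lemma~\ref{lem:prereqs:eta-lemma}.
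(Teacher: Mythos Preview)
Your proof is correct and follows essentially the same two-step strategy as the paper: apply the Bind Lemma (Lemma~\ref{lem:prereqs:bind-lemma}) for the outer bind over $\mu$ and $d$, reducing to the inner goal, and discharge that inner goal via Lemma~\ref{lem:logrel:bind} with the value-level functions $\lambda w.\,f(v,w)$ and $\lambda W.\,g(V,W)$. The only difference is expository order (you do the inner step first), and your observation that the conclusion's $\tau$ should be $\sigma_3$ is correct.
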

\begin{proof}
  We apply Lemma~\ref{lem:prereqs:bind-lemma} with the functions $\lambda v. \nu \GBind \lambda w. \GDirac{(f(v,w))}$ and $\lambda V.e \Bind \lambda W.\CDirac(g(V,W))$.
  Since we have that $\mu \GLogRel{\sigma_1} d$, it suffices to show that
\begin{align*}
  (v \GValLogRel{\sigma_1}V) \to  (&(\nu \GBind \lambda w. \GDirac{(f(v,w))}) 
  \GLogRel{\sigma_3}(e \Bind \lambda W.\CDirac(g(V,W))))
\end{align*}

  Let now $v \GValLogRel{\sigma_1}V$, using Lemma~\ref{lem:logrel:bind} for the functions $\lambda w. \GDirac{(f(v,w))}$ and $\lambda W. \CDirac(g(V,W))$, it suffices to show that
  \[
  (w \GValLogRel{\sigma_2} W) \to  (f(v,w) \GValLogRel{\sigma_3} g(V,W))
  \]
  and the claim follows.
\end{proof}


\begin{corollary}
  Let $\SemOp{op} \in \{\SemSuc, \SemPred{}\}$ and $\mu \GLogRel\Nat \nu$. Then
  \[
  \GLiftFun{\SemOp{op}}(\mu) \GLogRel\Nat \LiftFun{\SemOp{op}}(\nu).
  \]
\end{corollary}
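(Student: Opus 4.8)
The plan is to read this off directly from Lemma~\ref{lem:logrel:bind}. Observe that $\GLiftFun{\SemOp{op}}(\mu)$ is by definition $\mu \GBind \GDirac\circ\SemOp{op}$ and $\LiftFun{\SemOp{op}}(\nu)$ is $\nu \Bind \CDirac\circ\SemOp{op}$, where on the right the operation is the corresponding function on closed values obtained via the equivalence $\Val\Nat\equi\NN$. So I would instantiate Lemma~\ref{lem:logrel:bind} with $\sigma \defeq \tau \defeq \Nat$, $f \defeq \SemOp{op}$, $g \defeq \SemOp{op}$, and the hypothesis $\mu\GLogRel\Nat\nu$ supplied by assumption. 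It then only remains to discharge the side condition of that lemma, namely that $v\GValLogRel\Nat V$ implies $\SemOp{op}(v)\GValLogRel\Nat\SemOp{op}(V)$.

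For this side condition I would do a case analysis on the derivation of $v\GValLogRel\Nat V$. By Figure~\ref{fig:rel:syntax:semantics}, the value relation at type $\Nat$ has exactly one introduction rule, so $v\GValLogRel\Nat V$ forces $V\peq\Numeral n$ and $v\peq n$ for some $n:\NN$. In the case $\SemOp{op}=\SemSuc$ we have $\SemSuc(n)\peq n+1$ and $\SynSuc(\Numeral n)\peq\Numeral{n+1}$, which are related by the same rule. In the case $\SemOp{op}=\SemPred{}$ we have $\SemPred{}(n)$ equal to $n-1$ if $n>0$ and to $0$ if $n\peq 0$, and likewise $\SynPred(\Numeral n)$ equals $\Numeral{n-1}$ or $\Numeral 0$ respectively; in either subcase the values are again related by the $\Nat$ rule. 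Here I am using that the denotational and operational predecessor agree on $0$, which holds by definition of both.

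There is essentially no obstacle: the content of the corollary is entirely in Lemma~\ref{lem:logrel:bind} (itself a consequence of the bind lemma, Lemma~\ref{lem:prereqs:bind-lemma}), and the side condition is a one-line computation using the unique shape of values of type $\Nat$ under the logical relation. The only point requiring mild care is matching the omitted clauses for $\Pred{}$ in the operational semantics (Figure~\ref{fig:pfpc:eval}) and $\SemPred{}$ in the denotational semantics (Figure~\ref{fig:den:sem}), so that the function $g$ used in the application of Lemma~\ref{lem:logrel:bind} is literally the map on values induced by those clauses; once that bookkeeping is fixed the argument for $\SemPred{}$ is identical in structure to the one for $\SemSuc$.
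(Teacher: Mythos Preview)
Your proposal is correct and follows essentially the same approach as the paper: apply Lemma~\ref{lem:logrel:bind} and discharge its side condition using the shape of the value relation at $\Nat$. The paper's version is slightly terser---it just notes that $n\peq m$ implies $\SemOp{op}(n)\peq\SemOp{op}(m)$, without splitting on $\SemSuc$ versus $\SemPred{}$---but your explicit case analysis is entirely reasonable and amounts to the same argument.
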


\begin{proof}
  This is a direct consequence of Lemma~\ref{lem:logrel:bind}. Note that if $n \peq m$, then also $\SemOp{op}(n) \peq \SemOp{op}(m)$.
\end{proof}
The arguments for $\SemInl{}$, $\SemInr{}$, $\SemFst{}$ and $\SemSnd{}$ proceed similarly.

\begin{corollary}
 Assume that $\mu_1 \GLogRel\Nat \nu_1$, $\mu_2 \GLogRel\sigma \nu_2$ and $\mu_3 \GLogRel\sigma \nu_3$. Then also
 \[
 \left(\mu_1 \GBind
 \begin{casesalt}
  0 &\mapsto \mu_2 \\
  n+1 &\mapsto \mu_3
 \end{casesalt}\right)
 \GLogRel\sigma
 \left(\nu_1 \Bind
 \begin{casesalt}
  \Numeral{0} & \mapsto \nu_2 \\
  \Numeral{n+1} &\mapsto \nu_3
 \end{casesalt}\right)
 \]
\end{corollary}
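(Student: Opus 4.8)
The plan is to reduce the statement directly to the second part of the Bind Lemma (Lemma~\ref{lem:prereqs:bind-lemma}). First I would package the two continuations as Kleisli arrows: let $f : \GSVal\Nat \to \GLiftDist{\GSVal\sigma}$ be given by $f(0) \defeq \mu_2$ and $f(n+1) \defeq \mu_3$, and let $g : \Val\Nat \to \LiftDist{(\Val\sigma)}$ be given by $g(\Numeral{0}) \defeq \nu_2$ and $g(\Numeral{n+1}) \defeq \nu_3$ (using $\GSVal\Nat = \NN$ and $\Val\Nat \equi \NN$). With this notation the left-hand side of the claim is $\uext f(\mu_1) = \mu_1 \GBind f$ and the right-hand side is $\uext g(\nu_1) = \nu_1 \Bind g$, exactly the shape produced by Lemma~\ref{lem:prereqs:bind-lemma}.

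Next I would instantiate that lemma with $\Rel \defeq \GValLogRel\Nat$ and $\mathcal{S} \defeq \GValLogRel\sigma$. Its side hypothesis $\mu_1 \GRelLift{\GValLogRel\Nat} \nu_1$ is precisely the assumption $\mu_1 \GLogRel\Nat \nu_1$, after unfolding the definition of $\GLogRel\Nat$. It then remains to check its premise: $f(v) \GRelLift{\GValLogRel\sigma} g(V)$ whenever $v \GValLogRel\Nat V$. By the only clause defining the value relation at type $\Nat$ in Figure~\ref{fig:rel:syntax:semantics}, $v \GValLogRel\Nat V$ holds only when $V \peq \Numeral{v}$. I would then case-split on $v$ with the induction principle for $\NN$: if $v \peq 0$ then $f(0) \peq \mu_2$ and $g(\Numeral{0}) \peq \nu_2$, so the goal becomes $\mu_2 \GRelLift{\GValLogRel\sigma} \nu_2$, which is the hypothesis $\mu_2 \GLogRel\sigma \nu_2$; if $v \peq m+1$ then $f(m+1) \peq \mu_3$ and $g(\Numeral{m+1}) \peq \nu_3$, so the goal becomes $\mu_3 \GRelLift{\GValLogRel\sigma} \nu_3$, which is the hypothesis $\mu_3 \GLogRel\sigma \nu_3$. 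Both are available, discharging the premise and hence the corollary.

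There is essentially no computation in this argument, so the only point needing a moment's care --- and the closest thing to an obstacle --- is the bookkeeping that $\mu_1 \GBind (\ldots)$ as written in the statement genuinely coincides with $\uext f$ for the $f$ above; this holds because $\GBind$ denotes the extension of a map to a convex delay algebra homomorphism, which is determined by its values on $\GDirac$'s (yielding exactly $f$) and its compatibility with convex sums, and similarly $\nu_1 \Bind (\ldots)$ is the Kleisli extension of $g$ in $\LiftDist$. The analogous corollaries for the sum-type eliminator, and for the $\mathsf{Ifz}$ clauses of both the operational and denotational semantics, follow by the identical two-step pattern, varying only the base relation and the shape of the scrutinee.
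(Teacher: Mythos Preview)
Your proposal is correct and follows essentially the same approach as the paper: both apply the Bind Lemma with the two case-defined continuations as $f$ and $g$, then discharge the pointwise premise by case analysis on the natural number using the definition of $\GValLogRel\Nat$. Your write-up is somewhat more explicit about the identification of the bind expressions with $\uext f$ and $\uext g$, but the argument is the same.
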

\begin{proof}
  We apply the bind lemma for
  $f = \begin{casesalt}
  0 &\mapsto \mu_2 \\
  n+1 &\mapsto \mu_3
 \end{casesalt}
 $
 and
 $g =
 \begin{casesalt}
  \Numeral{0} &\mapsto \nu_2 \\
  \Numeral{n+1} &\mapsto \nu_3
 \end{casesalt}
 $.
 It thus suffices to show that for any $u \GValLogRel\Nat U$ we have that
 $
 f(u) \GLogRel\sigma g(U)
 $.

 We proceed by case distinction on $u : \Nat$.

 \begin{case}{$u \peq 0$}\\
  Then necessarily also $U \peq\Numeral{0}$ and thus $f(u)\peq \mu_2$ and $g(U)\peq \nu_2$.
  But $\mu_2 \GLogRel\sigma \nu_2$ is one of our assumptions.
 \end{case}

 \begin{case}{$u \peq n+1$}\\
  This case is analogous to the previous case.
 \end{case}
\end{proof}

\begin{corollary}
  Let $\mu \GLogRel\sigma d$ and $\nu \GLogRel\tau e$, then
  \begin{align*}
  (\mu \GBind \lambda v. \nu \GBind \lambda w. \GDirac{((v,w))}) 
  & \GLogRel{\ProdTy\sigma\tau} (d \Bind \lambda V. e \Bind \lambda W. \CDirac(\Pair{V}{W})).
  \end{align*}
\end{corollary}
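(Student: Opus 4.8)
The plan is to obtain this as an immediate instance of Lemma~\ref{lem:logrel:bind-two}. I would take $\sigma_1 \defeq \sigma$, $\sigma_2 \defeq \tau$, $\sigma_3 \defeq \ProdTy\sigma\tau$, and let $f : \GSVal\sigma \times \GSVal\tau \to \GSVal{\ProdTy\sigma\tau}$ and $g : \Val\sigma \times \Val\tau \to \Val{\ProdTy\sigma\tau}$ be the semantic and syntactic pairing maps, $f(v,w) \defeq \SemPair{v}{w}$ and $g(V,W) \defeq \Pair{V}{W}$. Since $\GSVal{\ProdTy\sigma\tau} = \GSVal\sigma \times \GSVal\tau$, these typecheck, and with these choices the conclusion of Lemma~\ref{lem:logrel:bind-two} is exactly the displayed equation: $\mu \GBind \lambda v. \nu \GBind \lambda w. \GDirac{(f(v,w))}$ is $\mu \GBind \lambda v. \nu \GBind \lambda w. \GDirac{((v,w))}$, and $d \Bind \lambda V. e \Bind \lambda W. \CDirac(g(V,W))$ is $d \Bind \lambda V. e \Bind \lambda W. \CDirac(\Pair{V}{W})$.

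It then remains only to discharge the hypothesis of Lemma~\ref{lem:logrel:bind-two}, namely
\[
(v \GValLogRel{\sigma} V) \to (w \GValLogRel{\tau} W) \to (\SemPair{v}{w} \GValLogRel{\ProdTy\sigma\tau} \Pair{V}{W}).
\]
But this is precisely the introduction rule for the value relation at product type in Figure~\ref{fig:rel:syntax:semantics}, so no further work is required: from $v \GValLogRel{\sigma} V$ and $w \GValLogRel{\tau} W$ that rule directly yields $\SemPair{v}{w} \GValLogRel{\ProdTy\sigma\tau} \Pair{V}{W}$. Together with the assumed premises $\mu \GLogRel\sigma d$ and $\nu \GLogRel\tau e$, Lemma~\ref{lem:logrel:bind-two} then delivers the statement.

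There is no real obstacle here; the proof is a routine bookkeeping instantiation. The only points worth stating explicitly are that the pairing map $f$ used above agrees with the one appearing in the denotational clause for $\Pair{M}{N}$ in Figure~\ref{fig:den:sem} and that $g$ agrees with the corresponding operational clause, both of which are immediate from the definitions, so that this corollary is exactly the ingredient needed in the congruence step of Lemma~\ref{lem:guarded-congruence} (and the pairing case of Lemma~\ref{lem:guarded-fundamental}).
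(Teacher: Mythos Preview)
Your proposal is correct and follows exactly the paper's approach: apply Lemma~\ref{lem:logrel:bind-two} with the pairing maps and discharge the remaining hypothesis via the introduction rule for $\GValLogRel{\ProdTy\sigma\tau}$ in Figure~\ref{fig:rel:syntax:semantics}. The paper's proof is just a terser version of what you wrote.
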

\begin{proof}
 Using \ref{lem:logrel:bind-two} it suffices to show that
  \[
  (v \GValLogRel{\sigma} V) \to (w \GValLogRel{\tau} W) \to ((v,w) \GValLogRel{\ProdTy\sigma\tau} (V,W))
  \]
  which is immediate by the definition of $(v,w) \GValLogRel{\ProdTy\sigma\tau} (V,W)$.
\end{proof}

\begin{lemma}
  \label{lem:app-compatibility}
  Let $\mu_1 \GLogRel{\FnTy\sigma\tau} \nu_2$ and $\mu_2 \GLogRel{\sigma} \nu_2$, then
  \[
  (\mu_1 \GBind v . \mu_2 \GBind w . v(w)) \GLogRel{\tau} (\nu_1 \Bind \lambda (\Lam{M}). \nu_2 \Bind \lambda W. \CStep{}(\Eval(M[W/x]))).
  \]
\end{lemma}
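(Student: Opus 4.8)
The plan is to derive this compatibility lemma from two nested applications of the Bind Lemma (Lemma~\ref{lem:prereqs:bind-lemma}, second statement), using Lemma~\ref{lem:prereqs:leadsto-liftrel} at the end to absorb the single extra $\CStep{}$ that the operational semantics inserts on the right. Setting $f \defeq \lambda v.\,\mu_2 \GBind \lambda w. v(w)$ and $g \defeq \lambda(\Lam{M}).\,\nu_2 \Bind \lambda W.\,\CStep{}(\Eval(M[W/x]))$ --- the latter being well defined since every syntactic value of type $\FnTy\sigma\tau$ is a $\lambda$-abstraction --- the two sides of the goal are exactly the Kleisli extensions $\uext f(\mu_1)$ and $\uext g(\nu_1)$. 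So I would first invoke the Bind Lemma with $\Rel \defeq \GValLogRel{\FnTy\sigma\tau}$ and $\mathcal{S} \defeq \GValLogRel\tau$; the hypothesis $\mu_1 \GLogRel{\FnTy\sigma\tau} \nu_1$, i.e.\ $\mu_1 \GRelLift{\GValLogRel{\FnTy\sigma\tau}} \nu_1$, discharges the premise on $\mu_1,\nu_1$, leaving only the side condition: whenever $v \GValLogRel{\FnTy\sigma\tau} \Lam{M}$, then
\[
(\mu_2 \GBind \lambda w. v(w)) \GLogRel\tau (\nu_2 \Bind \lambda W.\,\CStep{}(\Eval(M[W/x]))).
\]

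To prove this side condition I would apply the Bind Lemma a second time, now with $\Rel \defeq \GValLogRel\sigma$, $\mathcal{S} \defeq \GValLogRel\tau$ and the functions $\lambda w. v(w)$ and $\lambda W.\,\CStep{}(\Eval(M[W/x]))$. The hypothesis $\mu_2 \GLogRel\sigma \nu_2$ discharges the premise on $\mu_2,\nu_2$, so what remains is: whenever $w \GValLogRel\sigma W$, then $v(w) \GLogRel\tau \CStep{}(\Eval(M[W/x]))$. At this point I would unfold the definition of the value relation at function type from Figure~\ref{fig:rel:syntax:semantics}: $v \GValLogRel{\FnTy\sigma\tau} \Lam{M}$ says precisely that $v(w) \GLogRel\tau \Eval(M[W/x])$ for all related $w,W$. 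Since $\CStep{}(\Eval(M[W/x])) \leadsto \Eval(M[W/x])$ by the defining rule of $\leadsto$, Lemma~\ref{lem:prereqs:leadsto-liftrel} --- invariance of $\GRelLift{\Rel}$ under $\leadsto$-reductions on its second argument --- turns $v(w) \GRelLift{\GValLogRel\tau} \Eval(M[W/x])$ into $v(w) \GRelLift{\GValLogRel\tau} \CStep{}(\Eval(M[W/x]))$, closing the argument.

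The proof is essentially bookkeeping, and I do not expect a real obstacle; the only delicate points are to check that the displayed terms are literally the Kleisli extensions of the chosen $f$ and $g$ (so that the Bind Lemma applies verbatim, via the monad laws for $\GBind$ and $\Bind$), that the pattern match $\lambda(\Lam{M})$ is total on function-typed values, and to notice that the apparent mismatch between the $\CStep{}(\Eval(M[W/x]))$ demanded by the goal and the $\Eval(M[W/x])$ supplied by the value relation is exactly the gap bridged by Lemma~\ref{lem:prereqs:leadsto-liftrel}.
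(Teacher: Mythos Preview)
Your proposal is correct and follows essentially the same route as the paper's proof: two nested applications of the Bind Lemma (Lemma~\ref{lem:prereqs:bind-lemma}) with the same choice of functions, then unfolding the function-type value relation and closing the $\CStep{}$ gap with Lemma~\ref{lem:prereqs:leadsto-liftrel}. The paper's presentation is slightly terser but otherwise identical in structure and content.
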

\begin{proof}
To show this, we use Lemma~\ref{lem:prereqs:bind-lemma} on the functions
\begin{enumerate}
  \item $f_1 \defeq \lambda v. \mu_2 \GBind \lambda w. v(w)$
  \item $g_1 \defeq\lambda (\Lam{M}).\nu_2\GBind \lambda W.\CStep{}(\Eval(M[W/x]))$
\end{enumerate}
   and it remains to show that assuming $v \GValLogRel{\FnTy\sigma\tau}\Lam{M}$ we get that
   \[
   \left(\mu_2 \GBind \lambda w. v(w)\right)\GLogRel\tau\left(\nu_2 \Bind \lambda W. \CStep{}(\Eval{(M[W/x])})\right)
   \]
   We proceed by using Lemma~\ref{lem:prereqs:bind-lemma} once more, this time for the functions
   \begin{enumerate}
    \item $f_2 \defeq \lambda w. v(w)$ and
    \item $g_2\defeq\lambda W.\CStep{}(\Eval(M[W/x]))$.
   \end{enumerate}
  Now it suffices to prove that assuming $(v \GValLogRel{\FnTy\sigma\tau} \Lam{M})$ and $(w \GValLogRel{\FnTy\sigma\tau} W)$ we get
\[
 v(w) \GLogRel\tau \CStep{}(\Eval{}(M[W/x]))
\]
Observe that by the definition of $v \GValLogRel{\FnTy\sigma\tau} \Lam{M}$ we immediately get that $v(w) \GLogRel\tau \Eval{}(M[W/x])$ and thus
the result follows by Lemma~\ref{lem:prereqs:leadsto-liftrel}.
\end{proof}

\begin{lemma}
  \label{lem:case-compatibility}
  Assume $\mu_1 \GLogRel{\CoprodTy{\sigma_1}{\sigma_2}} \nu_1$ as well as
  \begin{enumerate}
    \item $\forall v,V. v \GValLogRel{\sigma_1} V \to e_1(v) \GLogRel\tau \CStep{}(\Eval(M[V/x]))$
    \item $\forall v,V. v \GValLogRel{\sigma_2} V \to e_2(v) \GLogRel\tau \CStep{}(\Eval(N[V/x]))$
  \end{enumerate}
  then it follows that
  \begin{align*}
  \left(\mu_1 \GBind
  \begin{casesalt}
    \SemInl{v} \mapsto e_1(v)\\
    \SemInr{v} \mapsto e_2(v)
  \end{casesalt}\right) 
  \GLogRel\tau \left(\nu_1 \Bind
  \begin{casesalt}
    \SemInl{v} \mapsto \CStep{}(\Eval{(M[V/x])}) \\
    \SemInr{v} \mapsto \CStep{}(\Eval{(N[V/x])})
  \end{casesalt}
  \right)
  \end{align*}
\end{lemma}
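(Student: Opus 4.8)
The plan is to reduce the statement directly to the second part of the bind lemma (Lemma~\ref{lem:prereqs:bind-lemma}), following the pattern of the earlier compatibility lemmas such as Lemmas~\ref{lem:logrel:bind} and~\ref{lem:app-compatibility}. First I would recognise the two sides of the desired relation as Kleisli extensions. Set
\[
  f(\SemInl{v}) \defeq e_1(v),\quad f(\SemInr{v}) \defeq e_2(v)\ :\ \GSVal{\CoprodTy{\sigma_1}{\sigma_2}} \to \GLiftDist{\GSVal\tau},
\]
\[
  g(\Inl{V}) \defeq \CStep(\Eval(M[V/x])),\quad g(\Inr{V}) \defeq \CStep(\Eval(N[V/x]))\ :\ \Val{\CoprodTy{\sigma_1}{\sigma_2}} \to \LiftDist{(\Val\tau)};
\]
these are total, since every element of $\GSVal{\CoprodTy{\sigma_1}{\sigma_2}} = \GSVal{\sigma_1}+\GSVal{\sigma_2}$, and every closed value of sum type, is an injection. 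Since the Kleisli extensions of $\GLiftDist{}$ and $\LiftDist{}$ are homomorphisms of convex delay algebras, the left-hand side of the conclusion is exactly $\uext f(\mu_1)$ and the right-hand side is $\uext g(\nu_1)$.

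Next I would invoke Lemma~\ref{lem:prereqs:bind-lemma}(2) with $\Rel \defeq \GValLogRel{\CoprodTy{\sigma_1}{\sigma_2}}$ and $\mathcal{S} \defeq \GValLogRel\tau$. Its premise $\mu_1 \GRelLift{\Rel} \nu_1$ is precisely the assumption $\mu_1 \GLogRel{\CoprodTy{\sigma_1}{\sigma_2}} \nu_1$ (recall that $\mu \GLogRel\sigma \nu$ is defined as $\mu \GRelLift{\GValLogRel\sigma} \nu$). It thus remains to verify the side condition: whenever $u \GValLogRel{\CoprodTy{\sigma_1}{\sigma_2}} U$, we have $f(u) \GLogRel\tau g(U)$.

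This I would discharge by inversion on the value relation for sum types (Figure~\ref{fig:rel:syntax:semantics}): a witness of $u \GValLogRel{\CoprodTy{\sigma_1}{\sigma_2}} U$ arises either as $u = \SemInl{v}$, $U = \Inl{V}$ with $v \GValLogRel{\sigma_1} V$, or as $u = \SemInr{v}$, $U = \Inr{V}$ with $v \GValLogRel{\sigma_2} V$. In the first case $f(u) = e_1(v)$ and $g(U) = \CStep(\Eval(M[V/x]))$, so $f(u) \GLogRel\tau g(U)$ is exactly hypothesis~(1) applied to $v \GValLogRel{\sigma_1} V$; the second case is symmetric, using hypothesis~(2). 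This completes the argument.

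The proof is essentially bookkeeping, so there is no real obstacle; the one point deserving care is that, in contrast to the application case (Lemma~\ref{lem:app-compatibility}), here no appeal to Lemma~\ref{lem:prereqs:leadsto-liftrel} is required, because the computation step $\CStep$ introduced by the operational semantics of $\Case{-}{-}{-}$ has already been absorbed into $g$ and into hypotheses~(1) and~(2), while the denotational side and $f$ carry no corresponding step; since the value relation is matched pointwise, this mismatch is harmless.
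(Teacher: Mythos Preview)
Your proposal is correct and follows essentially the same approach as the paper: apply the bind lemma (Lemma~\ref{lem:prereqs:bind-lemma}(2)) to reduce to showing that $f(u) \GLogRel\tau g(U)$ whenever $u \GValLogRel{\CoprodTy{\sigma_1}{\sigma_2}} U$, then discharge this by case analysis on the value relation for sums, appealing to hypotheses~(1) and~(2). Your observation that no separate appeal to Lemma~\ref{lem:prereqs:leadsto-liftrel} is needed here (because the $\CStep$ is already built into $g$ and the hypotheses) is also correct and matches the paper.
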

\begin{proof}
  Again, we use Lemma~\ref{lem:prereqs:bind-lemma} and considering the assumptions of the lemma, it only remains to show that given $w \GValLogRel{\CoprodTy{\sigma_1}{\sigma_2}} W$ we also have
  \begin{align*}
   \left(\begin{casesalt}
    \SemInl{v} \mapsto e_1(v)\\
    \SemInr{v} \mapsto e_2(v)
  \end{casesalt}\right)(w) 
  \GLogRel\tau
  \left(\begin{casesalt}
    \SemInl{v} \mapsto \CStep{}(\Eval{(M[V/x])}) \\
    \SemInr{v} \mapsto \CStep{}(\Eval{(N[V/x])})
  \end{casesalt}
  \right)(W)
  \end{align*}
  This we can show by a case distinction on $w \GValLogRel{\CoprodTy{\sigma_1}{\sigma_2}} W$.
  In the case of $\SemInl{w} \GValLogRel{\CoprodTy{\sigma_1}{\sigma_2}} \Inl{W}$
%
  it remains to show that
    \[
      e_1(w) \GLogRel\tau \CStep{}(\Eval{(M[W/x])})
    \]
    which follows from the assumptions. The other case is similar.
%
%
%
\end{proof}

\begin{lemma}
  \label{lem:fold-compatibility}
  If $\mu \GLogRel{\tau[\RecTy\tau/X]}\nu$ then also
  \[
  \GLiftFun{\GNext}(\mu) \GLogRel{\RecTy\tau} \LiftFun{\Fold{}}(\nu) .
  \]
\end{lemma}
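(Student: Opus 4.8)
The plan is to reduce the statement to the Bind Lemma (Lemma~\ref{lem:prereqs:bind-lemma}) in essentially the same way as the preceding compatibility lemmas (Lemma~\ref{lem:logrel:bind}, Lemma~\ref{lem:app-compatibility}, etc.). Recall that $\GLiftFun{\GNext}(\mu)$ is by definition $\mu \GBind \GDirac\circ\GNext$ and $\LiftFun{\Fold{}}(\nu)$ is $\nu \Bind \CDirac\circ\Fold{}$. So what we must show is that the pair of Kleisli maps $\GDirac\circ\GNext : \GSVal{\tau[\RecTy\tau/X]} \to \GLiftDist{\GSVal{\RecTy\tau}}$ and $\CDirac\circ\Fold{} : \Val{\tau[\RecTy\tau/X]} \to \LiftDist{(\Val{\RecTy\tau})}$ send related values to related computations, i.e., that whenever $v \GValLogRel{\tau[\RecTy\tau/X]} V$ we have $\GDirac(\GNext(v)) \GLogRel{\RecTy\tau} \CDirac(\Fold{V})$; given that, the conclusion follows directly from the second statement of Lemma~\ref{lem:prereqs:bind-lemma} (the $\eta$-lemma, Lemma~\ref{lem:prereqs:eta-lemma}) applied to $\mu \GLogRel{\tau[\RecTy\tau/X]}\nu$, exactly as in the proof of Lemma~\ref{lem:logrel:bind}.

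So the crux is the pointwise statement. Here I would unfold the relevant definitions. By Definition~\ref{def:liftRel}, since $\GDirac(\GNext(v)) = \Dirac{(\inl(\GNext(v)))} : \Dist(\GSVal{\RecTy\tau})$ lands in the ``values now'' case, $\GDirac(\GNext(v)) \GRelLift{\GValLogRel{\RecTy\tau}} \CDirac(\Fold V)$ holds iff there is $\nu'$ with $\CDirac(\Fold V) \limleadsto \nu'$ and an $\GValLogRel{\RecTy\tau}$-coupling between $\Dirac(\GNext(v))$ and $\nu'$. Take $\nu' = \CDirac(\Fold V) = \Dirac{(\inl(\Fold V))}$ (using reflexivity of $\limleadsto$, Lemma~\ref{lemma:limleadsto-properties}), so that it remains to exhibit a Dirac coupling, namely $\Dirac{\SemPair{\GNext(v)}{\Fold V}}$, which is a valid coupling provided $\GNext(v) \GValLogRel{\RecTy\tau} \Fold V$. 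By the clause for recursive types in Figure~\ref{fig:rel:syntax:semantics}, $\GNext(v) \GValLogRel{\RecTy\tau}\Fold V$ holds iff $\ClockedLater{(\GNext(v)[\alpha] \GValLogRel{\tau[\RecTy\tau/X]} V)}$, and since $\GNext(v)[\alpha] \peq v$ (by the computation rule $\tapp{(\nextop\, v)} \peq v$ for ticks), this is just $\ClockedLater{(v \GValLogRel{\tau[\RecTy\tau/X]} V)}$, which follows from the hypothesis $v \GValLogRel{\tau[\RecTy\tau/X]} V$ by $\nextop$.

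I do not expect a genuine obstacle here; the lemma is a routine ``constructor compatibility'' case. The only point requiring a little care is the bookkeeping around the $\Later$ in the semantic interpretation $\GSVal{\RecTy\tau} \defeq \Later{\GSVal{\tau[\RecTy\tau/X]}}$: one must make sure that the $\GNext$ appearing in the denotational clause for $\Fold{}$ matches the $\Later$ in the value relation for $\RecTy\tau$, i.e., that $\GNext(v)[\alpha]$ really does reduce to $v$ under a tick, so that applying $\nextop$ to the hypothesis closes the goal. Everything else is a direct appeal to the Bind Lemma and to the reasoning principles already established for $\GRelLift{-}$ and $\limleadsto$.
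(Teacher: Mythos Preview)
Your proposal is correct and follows essentially the same approach as the paper: reduce to a pointwise value-level statement via the bind machinery, then unfold the recursive-type clause of the logical relation to get $\latbind\alpha\kappa (v \GValLogRel{\tau[\RecTy\tau/X]} V)$ from the hypothesis. The only cosmetic difference is that the paper invokes Lemma~\ref{lem:logrel:bind} directly (which already packages the ``Dirac-to-Dirac'' coupling you spell out by hand), so your explicit unfolding of Definition~\ref{def:liftRel} case~(\ref{item:val}) and the Dirac coupling is unnecessary but harmless.
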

\begin{proof}
  We apply Lemma~\ref{lem:logrel:bind} to $f \defeq \GNext$ and $g \defeq \Fold{}$.
  It thus suffices to show that for $v \GValLogRel{\tau[\RecTy\tau/X]} V$ we have that $\GNext{v} \GValLogRel{\RecTy\tau} \Fold{V}$.
  This is however immediate, since
  \begin{align*}
  \GNext{v} \GValLogRel{\RecTy\tau} \Fold{V} &\defeq (\latbind{\alpha}{\kappa}\tapp{(\GNext{v})} \GValLogRel{\RecTy\tau} V) \\
  &\biimp \latbind{\alpha}{\kappa} (v \GValLogRel{\RecTy\tau} V)
  \end{align*}
\end{proof}

\begin{lemma}
  \label{lem:unfold-compatibility}
  If $\mu \GLogRel{\RecTy\tau} \nu$ then
  \begin{align*}
  (\mu \GBind \lambda v. \GStep(\lambda (\alpha:\kappa). \GDirac{(v[\alpha])})) 
  &\GLogRel{\tau[\RecTy\tau/X]} (\nu \GBind \lambda(\Fold{V}).\CStep{(\CDirac{(V)})})
  \end{align*}
\end{lemma}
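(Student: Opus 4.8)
The plan is to reduce the claim to the $\eta$-case of the lifting via the second part of the Bind Lemma (Lemma~\ref{lem:prereqs:bind-lemma}). Apply that lemma with the relations $\Rel \defeq\; \GValLogRel{\RecTy\tau}$ and $\mathcal S \defeq\; \GValLogRel{\tau[\RecTy\tau/X]}$, and the functions $f \defeq \lambda v.\, \GStep(\lambda (\alpha\!:\!\kappa).\, \GDirac{(v[\alpha])})$ and $g \defeq \lambda (\Fold{V}).\, \CStep{(\CDirac{V})}$. Then $\uext f(\mu)$ and $\uext g(\nu)$ are exactly the two sides of the relation to be proved, the hypothesis $\mu \GLogRel{\RecTy\tau}\nu$ is by definition $\mu \GRelLift{\Rel}\nu$, and so it remains only to verify the side condition: whenever $v \GValLogRel{\RecTy\tau} W$, then $f(v) \GRelLift{\mathcal S} g(W)$.

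For the side condition, note that every closed value $W : \Val{\RecTy\tau}$ is of the form $\Fold{V}$ with $V : \Val{\tau[\RecTy\tau/X]}$, and that by definition of the value relation $v \GValLogRel{\RecTy\tau}\Fold{V}$ unfolds to $\latbind{\alpha}{\kappa}{(v[\alpha] \GValLogRel{\tau[\RecTy\tau/X]} V)}$. The goal then becomes $\GStep(\lambda (\alpha\!:\!\kappa).\, \GDirac{(v[\alpha])}) \GRelLift{\mathcal S} \CStep{(\CDirac{V})}$. First I would strip the $\CStep$ on the right using Lemma~\ref{lem:prereqs:leadsto-liftrel} (the ``in particular'' form, $\mu \GRelLift{\mathcal S}(\CStep{}\nu)$ iff $\mu\GRelLift{\mathcal S}\nu$), reducing to $\GStep(\lambda (\alpha\!:\!\kappa).\, \GDirac{(v[\alpha])}) \GRelLift{\mathcal S} \CDirac{V}$. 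The left-hand side is $\Dist(\inr)(\Dirac{\lambda (\alpha\!:\!\kappa).\, \GDirac{(v[\alpha])}})$, hence lies in case~\ref{item:delay} of Definition~\ref{def:liftRel}; using $\Gzeta(\Dirac{x}) = x$ one computes $(\Gzeta(\Dirac{\lambda (\beta\!:\!\kappa).\, \GDirac{(v[\beta])}}))[\alpha] = \GDirac{(v[\alpha])}$, so the remaining obligation is $\latbind{\alpha}{\kappa}{(\GDirac{(v[\alpha])} \GRelLift{\mathcal S} \CDirac{V})}$.

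Finally, working under the tick $\alpha$, the unfolded hypothesis delivers $v[\alpha] \GValLogRel{\tau[\RecTy\tau/X]} V$, and from this $\GDirac{(v[\alpha])} \GRelLift{\mathcal S} \CDirac{V}$ is immediate --- either by Lemma~\ref{lem:prereqs:eta-lemma}, or directly from case~\ref{item:val} of Definition~\ref{def:liftRel} with $\nu' \defeq \Dirac{V}$ (and $\CDirac{V} \limleadsto \CDirac{V}$ by reflexivity of $\limleadsto$) and the point coupling $\Dirac{(v[\alpha],\, V,\, \_)}$ built from the proof of $v[\alpha] \GValLogRel{\tau[\RecTy\tau/X]} V$. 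This discharges the side condition and hence the lemma. I do not expect a real obstacle here: the only delicate point is computing $\Gzeta$ on a Dirac distribution and matching the result against the tick-quantified clause of Definition~\ref{def:liftRel}. The $\GStep$ on the left and the $\CStep$ on the right are both a single computation step, and they are handled at two different levels --- the former by the delay clause of the lifting, the latter by $\leadsto$-invariance --- but they line up without friction.
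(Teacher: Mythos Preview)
Your proposal is correct and follows essentially the same approach as the paper: apply the Bind Lemma to reduce to the pointwise side condition, unfold the value relation at the recursive type, and then discharge the resulting goal by removing the $\CStep$ on the right via Lemma~\ref{lem:prereqs:leadsto-liftrel} and the $\GStep$ on the left via the delay clause of Definition~\ref{def:liftRel}. The paper's proof is more terse, compressing your steps 3--5 into a single appeal to Lemma~\ref{lem:prereqs:leadsto-liftrel}, but your more explicit unpacking of the $\Gzeta$ computation and the Dirac--Dirac coupling is entirely faithful to what is actually happening.
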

\begin{proof}
  We apply Lemma~\ref{lem:prereqs:bind-lemma} with the functions $\lambda v. \GStep(\lambda (\alpha:\kappa). v[\alpha])$ and $\lambda (\Fold{V}). \CStep{}(\CDirac{}(V))$.
  It thereby suffices to show that if $v \GValLogRel{\RecTy\tau} (\Fold{V})$ then also
  \[
    \GStep(\lambda (\alpha:\kappa). \GDirac{(v[\alpha])}) \GLogRel{\tau[\RecTy\tau/X]} \CStep{}(\CDirac(V)).
  \]
  Note that $v \GValLogRel{\RecTy\tau} (\Fold{V})$ is equivalent to $\ClockedLater{}(v[\alpha]\GValLogRel{\tau[\RecTy\tau/X]} V)$, which in turn,
  by Lemma~\ref{lem:prereqs:leadsto-liftrel}, is equivalent to
  \[
  \GStep(\lambda (\alpha:\kappa). \GDirac{(v[\alpha])}) \GLogRel{\tau[\RecTy\tau/X]} \CStep{}(\CDirac(V)).
  \]
  This concludes the claim.
\end{proof}

\begin{lemma}
  \label{lem:choice-compatibility}
  If $\mu \GLogRel\sigma \nu$ and $\mu_1 \GLogRel\sigma \nu_1$ then also
  \[
  \GChoice{p}\mu{\mu_1} \GLogRel\sigma \CChoice{p}{\nu}{\nu_1}
  \]
\end{lemma}
\begin{proof}
  This is a direct consequence of Lemma~\ref{lem:prereqs:choice-lemma}.
\end{proof}

\subsubsection{The Guarded fundamental lemma and congruence lemma}
\label{app:subsec:fundamental-congruence}

\begin{proof}[Proof of Lemma~\ref{lem:guarded-fundamental}]

  The proof proceeds by induction on $M : \Tm[\Gamma]\sigma$ and relies almost entirely on the compatibility lemmas with the only exception of the $M\peq \Lam{M'}$ case.

  \begin{case}{$\Lam{M'}$}

    \noindent By induction hypothesis, we have that
    \[
    \forall \rho, \delta. (\rho \GValLogRel{\Gamma,\sigma} \delta) \to \GInterpret{M'}{\rho} \GLogRel\tau \Eval{(M'[\delta])}.
    \]
    Thus we get that
    \begin{align*}
    &
    \begin{aligned}
    \forall \rho,\delta, v,V. &(\rho \GValLogRel{\Gamma} \delta) \wedge (v \GValLogRel\sigma V) \\
    &\to \GInterpret{M'}{\rho,v} \GLogRel\tau \Eval{(M'[\delta,V/x])}
    \end{aligned}\\
    &\biimp \left(
    \begin{aligned}
    \forall \rho,\delta. &(\rho \GValLogRel\Gamma \delta) \to \forall v,V. (v \GValLogRel\sigma V) \\
    &\to (\GInterpret{M'}{\rho,v} \GLogRel\tau \Eval{((M'[\delta])[V/x])})
    \end{aligned}\right) \\
    &\biimp \left( \forall \rho,\delta. (\rho \GValLogRel\Gamma \delta) \to \GValInterpret{\Lam{M'}}{\rho} \GValLogRel{\FnTy\sigma\tau} \Lam{M'[\delta]} \right)\\
    &\biimp \left(
      \begin{aligned}
      \forall \rho,\delta. &(\rho \GValLogRel\Gamma \delta) \\
      &\to \GDirac{(\GValInterpret{\Lam{M'}}{\rho})} \GLogRel{\FnTy\sigma\tau} \CDirac(\Lam{M'[\delta]})
      \end{aligned} \right)\\
    &\biimp \left(\forall \rho,\delta. (\rho \GValLogRel\Gamma \delta) \to \GInterpret{\Lam{M'}}{\rho} \GLogRel{\FnTy\sigma\tau} \CDirac(\Lam{M'}[\delta])\right)
    \end{align*}
  \end{case}

  \begin{case}{$M\,N$}

  \noindent  Let $\rho \GValLogRel\Gamma \delta$, by induction hypothesis we have $\GInterpret{M}{\rho} \GLogRel{\FnTy\sigma\tau} \Eval{(M[\delta])}$ and $\GInterpret{N}\rho \GLogRel\sigma \Eval{(N[\delta])}$.
  Now Lemma~\ref{lem:app-compatibility} implies that $\GInterpret{MN}\rho \GLogRel\tau \Eval{((M\,N)[\delta])}$.
  \end{case}

  \begin{case}{$\Case{L}{x.M}{y.N}$}\\
     Let $\rho \GValLogRel\Gamma \delta$, by induction hypothesis we have
\begin{enumerate}
  \item $\GInterpret{L}{\rho} \GLogRel{\CoprodTy{\sigma_1}{\sigma_2}} \Eval{(L[\delta])}$
  \item $\forall v, V.(v \GValLogRel{\sigma_1}V) \to \GInterpret{M}{\rho.v} \GLogRel\tau \Eval{(M[\delta,V/x])}$
  \item $\forall w,W. (w \GValLogRel{\sigma_2} W) \to \GInterpret{N}{\rho.w}\GLogRel\tau \Eval{(N[\delta,V/y])}$.
\end{enumerate}
     By Lemma~\ref{lem:case-compatibility} we then get
  \begin{align*}
     \left(\GInterpret{L}{\rho} \GBind
     \begin{casesalt}
     \SemInl{v} \mapsto \GInterpret{M}{\rho.v}\\
     \SemInr{w} \mapsto  \GInterpret{N}{\rho.w}
     \end{casesalt}\right) 
     \GLogRel\tau
     \left(
      \Eval{(L[\delta])} \Bind
      \begin{casesalt}
        \Inl{V} \mapsto \CStep{(\Eval{(M[\delta][V/x])})}\\
        \Inr{V} \mapsto \CStep{(\Eval{(N[\delta][V/x])})}
      \end{casesalt}
      \right)
     \end{align*}
     Observe, that this is precisely
     \[
     \GInterpret{\Case{L}{x.M}{y.M}}\rho \GLogRel\tau \Eval{(\Case{L}{x.M}{y.M}[\delta])}
     \]
  \end{case}

  \begin{case}{$\Unfold{M}$}\\
    Let $\rho \GValLogRel\Gamma \delta$, by the induction hypothesis we have $\GInterpret{M}\rho \GLogRel{\RecTy\tau}\Eval{(M[\delta])}$ and now 
    Lemma~\ref{lem:unfold-compatibility} implies that
    $\GInterpret{\Unfold{M}}\rho \GLogRel{\tau[\RecTy\tau/X]} \Eval{(\Unfold{M}[\delta])}$
  \end{case}

  The remaining cases follow similarly.
\end{proof}

\begin{proof}[Proof of Lemma~\ref{lem:guarded-congruence}]
  The proof proceeds by induction on context derivations. All cases --- except for the $M\peq \Lam{M'}$ case --- are direct consequences of the compatibility lemmas.

  \begin{case}{$C \peq \Lam{C'} : \CtxTy\Delta{\tau_1 \to \tau_2}$} \\
  Assume $M \GOLogRel\sigma N$. We have that $C' : \CtxTy{\Delta,(x:\tau_1)}{\tau_2}$ and by the induction hypothesis it follows that
    \[
    C'[M] \GOLogRel{\tau_2} C'[N]
    \]
    It now follows that
    \begin{align*}
     C'[M] \GOLogRel{\tau_2} C'[N]  
     \defeq\ &\forall \rho,\delta. \rho \GValLogRel{\Delta,\tau_1} \delta \to \GInterpret{C'[M]}\rho \GLogRel{\tau_2} \Eval{((C'[N])[\delta])} \\
     \biimp\ &\left(
     \begin{aligned}
     \forall \rho,\delta,v,V. &\left(\rho\GValLogRel\Delta \delta \wedge v\GValLogRel{\tau_1} V\right) \\
     &\to \GInterpret{C'[M]}{\rho.v} \GLogRel{\tau_2} \Eval{((C'[N])[\delta,V/x])}
     \end{aligned}\right)\\
     \biimp\ &\left(
     \begin{aligned}
     \forall &\rho,\delta (\rho\GValLogRel\Delta \delta) \\
     &\to \left(
            \begin{aligned}
                 \forall &v,V. v\GValLogRel{\tau_1} V \\
                 &\to \GInterpret{C'[M]}{\rho.v} \GLogRel{\tau_2} \Eval{((C'[N])[\delta,V/x])}
            \end{aligned}
      \right)
     \end{aligned}\right)\\
     \biimp\ &\left(
     \begin{aligned}
     \forall &\rho,\delta (\rho\GValLogRel\Delta \delta) \\
     &\to \GValInterpret{\Lam{C'[M]}}{\rho} \GValLogRel{\FnTy{\tau_1}{\tau_2}} (\Lam{C'[N]})[\delta,V/x]
     \end{aligned}\right)\\
     \biimp\ &\left(
     \begin{aligned}
      \forall &\rho,\delta (\rho\GValLogRel\Delta \delta) \\
      &\to \GDirac{(\GValInterpret{\Lam{C'[M]}}{\rho})} \GLogRel{\FnTy{\tau_1}{\tau_2}} \CDirac{(\Lam{C'[N]})[\delta,V/x]}
     \end{aligned}\right)\\
     \biimp\ & \left(\forall \rho,\delta (\rho\GValLogRel\Delta \delta) \to \GInterpret{C[M]}{\rho} \GLogRel{\FnTy{\tau_1}{\tau_2}} (C[N])[\delta,V/x]\right) \\
     \biimp\ &C[M] \GOLogRel{\FnTy{\tau_1}{\tau_2}} C[N]
    \end{align*}
  \end{case}

  \begin{case}{$C \peq \Fold{C'}:\CtxTy\Delta{\RecTy\tau}$}\\
    Assume $M \GOLogRel\sigma N$. We have that $C' :\CtxTy\Delta{\tau[\RecTy\tau/X]}$ and thus by induction hypothesis we have
    $
    C'[M] \GOLogRel{\tau[\RecTy\tau/X]} C'[N].
    $
    By Lemma~\ref{lem:fold-compatibility} we have that
    \begin{align*}
     &C'[M] \GOLogRel{\tau[\RecTy\tau/X]} C'[N] \\
     \biimp\ & \left(\forall\rho,\delta. (\rho\GValLogRel\Delta \delta) \to \GInterpret{C'[M]}\rho \GLogRel{\tau[\RecTy\tau/X]} \Eval{((C'[N])[\delta])}\right)\\
     \to\ &\left(
      \begin{aligned}
      &\forall\rho,\delta. (\rho\GValLogRel\Delta \delta) \\
      &\to \left(\GLiftFun{\GNext}(\GInterpret{C'[M]}\rho)\right) \GLogRel{\RecTy\tau} \LiftFun{\Fold{}}\left(\Eval{}\left({C'[N]}[\delta]\right)\right)
      \end{aligned} \right)
      \\
     \biimp\ &\forall\rho,\delta. (\rho\GValLogRel\Delta \delta) \to \GInterpret{\Fold{C'[M]}}\rho \GLogRel{\RecTy\tau} \Eval{((\Fold{C'[N]})[\delta])}\\
     \biimp\ & C[M] \GOLogRel{\RecTy\tau} C[N]
    \end{align*}
  \end{case}

  \begin{case}{$C \peq \Case{L}{x.C'}{y.N}: \CtxTy\Delta\tau$} \\
    Assume that $M \GOLogRel\sigma M'$. We furthermore have that
    \begin{enumerate}
      \item $\Delta \vdash L : \CoprodTy{\tau_1}{\tau_2}$
      \item $C' : \CtxTy{\Delta,x:\tau_1}\tau$
      \item $\Delta,y:\tau_2 \vdash N : \tau$
    \end{enumerate}
    By the induction hypothesis and Lemma~\ref{lem:guarded-fundamental} it follows directly that
    $C'[M] \GOLogRel\tau C'[M']$, $L \GOLogRel{\CoprodTy{\tau_1}{\tau_2}} L$ and $N \GOLogRel\tau N$.
    Thus, for all $\rho$ and $\delta$ such that $\rho \GValLogRel\Delta \delta$, we get that
    \begin{align*}
      &(1)\quad\GInterpret{L}\rho \GLogRel{\CoprodTy{\tau_1}{\tau_2}} \Eval{(L[\delta])}\\
      &(2)\quad\begin{aligned}
      \forall v,V. &(v \GValLogRel{\tau_1} V) 
      \to \GInterpret{C'[M]}{\rho.v} \GLogRel\tau \Eval{((C'[M'])[\delta,V/x])}
      \end{aligned}\\
      &(3)\quad\forall v,V. (v \GValLogRel{\tau_2} V) \to \GInterpret{N}{\rho.v} \GLogRel\tau \Eval{(N[\delta,V/x])}
    \end{align*}
    Now, using Lemma~\ref{lem:case-compatibility} we conclude that
    \begin{align*}
      \left(\GInterpret{L}\rho \GBind \SemCase{}{\SemInl{v}}{\GInterpret{C'[M]}{\rho.v}}{\SemInr{v}}{\GInterpret{N}{\rho.v}} \right)
      &\GLogRel\tau \Eval{(\Case{L}{x.C'[M']}{y.N}[\delta])}
    \end{align*}
    and thus we get that
    \[
    \forall \rho,\delta. (\rho \GValLogRel\Delta \delta) \to \GInterpret{C[M]}\rho \GLogRel\tau \Eval{(C[M'])}
    \]
  \end{case}

  \begin{case}{$C \peq \Choice{p}{C'}{N}:\CtxTy\Delta\tau$}\\
    Assume $M \GOLogRel\sigma M'$. It follows from the assumptions that furthermore $\Delta \vdash N : \sigma$ and thus by 
    Lemma~\ref{lem:guarded-fundamental} we get $N \GOLogRel\sigma N$.
   Now the induction hypothesis implies $C'[M] \GOLogRel[\Delta]\sigma C'[M']$ and consequently $C[M] \GOLogRel[\Delta]\tau C[M']$ follows by Lemma~\ref{lem:choice-compatibility}.
  \end{case}

  The remaining cases are similar.
\end{proof}


\subsubsection{Proof of Theorem~\ref{thm:eq-pterm-denot-op}}

The proof of Theorem~\ref{thm:eq-pterm-denot-op} is only briefly sketched in the main text. We start by
defining the alternative denotational semantics $\SInterp{-}$ and proving it sound with respect to the operational semantics. 
First recall that $\SInterp{-}$ agrees with $\Interp{-}$ on types and most terms. The only place where they do not agree are 
\begin{align*}
    \GInterpret{MN}\rho &\defeq \GInterpret{M}\rho \GBind \lambda f. \GInterpret{N}\rho \GBind \lambda v. fv \\
    \SGInterpret{MN}\rho &\defeq  \SGInterpret{M}\rho \GBind \lambda f. \SGInterpret{N}\rho \GBind \lambda v.\GStep(\tabs\tickA\kappa fv) \\
     \GInterpret{\Case{L}{x.M}{y.N}}\rho &\defeq \GInterpret{L}\rho \GBind 
    \begin{cases}
      \SemInl{v}\mapsto {\GInterpret{M}{\rho.x\mapsto v}} \\
      \SemInr{v}\mapsto {\GInterpret{N}{\rho.y\mapsto v}}
    \end{cases} \\
     \SGInterpret{\Case{L}{x.M}{y.N}}\rho &\defeq \GInterpret{L}\rho \GBind 
    \begin{cases}
      \SemInl{v}\mapsto \GStep{}(\tabs\tickA\kappa{\SGInterpret{M}{\rho.x\mapsto v}})\\
      \SemInr{v}\mapsto \GStep{}(\tabs\tickA\kappa{\SGInterpret{N}{\rho.y\mapsto v}})
    \end{cases}
\end{align*}

We first show the following soundness theorem. 
 \begin{theorem}[Soundness]
   \label{thm:guarded-soundness}
   For any well typed closed expression $\cdot \vdash M : \sigma$ we have that
   \[
   \GLiftFun{\SGValInterpret{-}{}}(\GEval M) \peq \SGInterpret{M}{}
   \]
 \end{theorem}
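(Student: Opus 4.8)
The plan is to prove Theorem~\ref{thm:guarded-soundness} by the standard recipe for adequacy-style soundness results in synthetic guarded domain theory: an outer guarded recursion paired with an inner structural induction on the term $M$. Since both $\GEval$ and $\SGInterpret{-}{}$ are defined by an outer $\fix^\kappa$ and an inner induction on terms, this matches their shape exactly. So first I would assume we are given $p : \later^\kappa(\text{the statement of the theorem})$ — i.e. the induction hypothesis saying that $\GLiftFun{\SGValInterpret{-}{}}(\GEval M) \peq \SGInterpret{M}{}$ holds one step later, for all closed $M$ — and then proceed by cases on the syntactic form of $M$.

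The key auxiliary fact I would establish (or invoke, as it is essentially the defining property of $\SGValInterpret{-}{}$) is that for every closed value $V$ we have $\SGInterpret{V}{} \peq \GDirac(\SGValInterpret{V}{})$, so that $\GLiftFun{\SGValInterpret{-}{}}(\GEval V) = \GLiftFun{\SGValInterpret{-}{}}(\GDirac V) = \GDirac(\SGValInterpret{V}{}) = \SGInterpret{V}{}$, disposing of the value case. For the compound cases the argument in each is a routine unfolding: push $\GLiftFun{\SGValInterpret{-}{}}$ through the monadic structure using the homomorphism equations for $\uext{(-)}$ (i.e. $\uext f$ commutes with $\GStep$ and with $\GChoiceOp$, and naturality $\GLiftFun{g}(\mu \GBind f) = \mu \GBind (\GLiftFun{g} \circ f)$), then appeal to the inner induction hypothesis on the structurally smaller subterms, and — crucially — in the cases $M\,N$, $\Case{L}{x.M}{y.N}$ and $\Unfold{M}$, where the recursive calls sit under a $\GStep$, appeal to the \emph{guarded} hypothesis $p$ via the extensionality principle~(\ref{eq:later:ext}) for $\later^\kappa$. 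The point of defining $\SGInterpret{-}{}$ with exactly the same $\GStep$ placement as $\GEval$ is precisely that these guarded steps line up one-to-one, so no $\limleadsto$ or $\leadsto$ bookkeeping is needed here; it is a strict equality throughout. The substitution lemma (Lemma~\ref{lem:subst-lemma}, which also holds for $\SGInterpret{-}{}$ since it agrees with $\GInterpret{-}{}$ except for added steps that substitution commutes with) is needed in the $\beta$-redex-producing cases ($M\,N$, $\Case{}{}{}$) to match $\SGInterpret{M'[V/x]}{}$ against the semantic application.

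I would organize it as: (i) state and prove the value lemma $\SGInterpret{V}{} = \GDirac(\SGValInterpret{V}{})$ by induction on $V$; (ii) set up the outer guarded recursion and inner term induction; (iii) dispatch the "no new step" cases ($\Suc{}$, $\Inl{}$, $\Fold{}$, pairing, projection, $\Ifz{}{}{}$, $\Choice{p}{}{}$) purely by the monad-homomorphism equations and the inner IH; (iv) dispatch the "new step" cases ($M\,N$, $\Case{}{}{}$, $\Unfold{}$) using the guarded IH $p$ through~(\ref{eq:later:ext}) together with the substitution lemma. The main obstacle — really the only subtle point — is getting the bookkeeping right in the application case, where $\GEval(M\,N)$ binds through $\GEval M$ and $\GEval N$ and then takes a $\GStep$ over $\GEval(M'[V/x])$: one must carefully check that $\GLiftFun{\SGValInterpret{-}{}}$ commutes past both binds (this needs that $\SGValInterpret{-}{}$ on a $\lambda$-value is the semantic function $\lambda v. \SGInterpret{M'}{\rho.x\mapsto v}$, so that applying it and then evaluating matches evaluating the substituted body), and that the residual $\GStep$ is matched by invoking $p$ at the subterm $M'[V/x]$. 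Once that case is written out, the others are strictly easier instances of the same pattern, and the remaining consequence $\probtermseq{\Eval\,M} \peq \probtermseq{\SInterp{M}{}}$ follows by clock-quantifying and observing $\probtermOp{n}$ is insensitive to the (matching) step structure.
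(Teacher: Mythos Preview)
Your approach is correct and matches the paper's proof: outer guarded recursion plus inner structural induction on $M$, pushing $\GLiftFun{\SGValInterpret{-}{}}$ through the monad structure via the homomorphism laws, using the substitution lemma in the application and $\mathsf{case}$ cases, and invoking the guarded hypothesis under $\GStep$. One small correction: in the $\Unfold{M}$ case the recursive call $\GEval M$ is \emph{not} under the $\GStep$ (only $\GDirac V$ is), so the inner structural IH on $M$ suffices there --- the guarded hypothesis is needed only in the application and $\mathsf{case}$ cases, where $\GEval$ is applied to a substituted, non-structurally-smaller body.
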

\begin{proof}
  First, assume the guarded hypothesis
  \[
   \forall M : \Tm\sigma. (\ClockedLater (\GEval M \GBind \GDirac \circ \SGValInterpret{-}{}) \peq \SGInterpret{M}{}).
  \]
 We proceed with case analysis of term derivations.

\begin{case}{$M = V$}
  The value cases are all the same:
  \begin{align*}
    \GLiftFun{\SGValInterpret{-}{}}(\GEval V) &\peq \GLiftFun{\SGValInterpret{-}{}}(\GDirac(V))\\
    &\peq \GDirac(\SGValInterpret{V}{}) \\
    &\peq \SGInterpret{V}{}
  \end{align*}
\end{case}

\begin{case}{$M = \Suc{M'}$}
  \begin{align*}
  \GLiftFun{\SGValInterpret{-}{}}(\GEval (\Suc{M'})) &\peq\GLiftFun{\SGValInterpret{-}{}}\left(\GLiftFun{\SynSuc{}}(\GEval M')\right) \\
  &\peq \GLiftFun{\SGValInterpret{\SynSuc(-)}{}}(\GEval M') \\
  &\peq \GLiftFun{\SemSuc(\SGValInterpret{-}{})}(\GEval M') \\
  &\peq \GLiftFun{\SemSuc{}}\left(\GLiftFun{\SGValInterpret{-}{}}(\GEval M')\right)\\
  &\peq \GLiftFun{\SemSuc{}}\left(\SGInterpret{M'}{}\right)\\
  &\peq \SGInterpret{\Suc{M'}}{}
  \end{align*}
\end{case}
The case of $M = \Pred{M'}$ is similar
%

\begin{case}{$M = \Ifz{L}{M'}{N}$}
  \begin{align*}
    \GLiftFun{\SGValInterpret{-}{}}(\GEval (\Ifz{L}{M'}{N})) 
    \peq \ & \GLiftFun{\SGValInterpret{-}{}}\left(\GEval L \GBind
    \begin{casesalt}
      \Zero \mapsto \GEval(M') \\
      \Numeral{n+1} \mapsto \GEval(N)
    \end{casesalt} \right)\\
    \peq\ & \left(\GEval L \GBind
    \begin{casesalt}
      \Zero \mapsto \GLiftFun{\SGValInterpret{-}{}}(\GEval(M')) \\
      \Numeral{n+1}\mapsto \GLiftFun{\SGValInterpret{-}{}}(\GEval(N))
    \end{casesalt} \right)\\
  \peq\ &\GEval L \GBind
    \begin{casesalt}
      \Zero \mapsto \SGInterpret{M'}{}\\
      \Numeral{n+1} \mapsto\SGInterpret{N}{}
    \end{casesalt}
    \\
  \peq\ &\GEval L \GBind \lambda \Numeral{n}.\mathsf{match\ }\SGValInterpret{\Numeral{n}}{}\mathsf{\ with}
    \begin{casesalt}
      0 \mapsto \SGInterpret{M'}{}\\
      n+1 \mapsto \SGInterpret{N}{}
    \end{casesalt}
    \\
  \peq\ &(\GEval L \GBind \GDirac \circ \SGValInterpret{-}{}) \GBind
    \begin{casesalt}
      0 \mapsto \SGInterpret{M'}{}\\
      n+1 \mapsto \SGInterpret{N}{}
    \end{casesalt}
    \\
  \peq&\SGInterpret{\Ifz{L}{M'}{N}}{}
  \end{align*}
\end{case}

\begin{case}{$M=\Pair{M'}{N}$}
  \begin{align*}
  \GLiftFun{\SGValInterpret{-}{}}(\GEval{\Pair{M'}{N}})
  \peq\ &\GLiftFun{\SGValInterpret{-}{}}\left(\GEval{M'} \GBind \lambda V.(\GEval N \GBind \lambda W. \GDirac(\Pair{V}{W}))\right)\\
  \peq\ &\GEval{M'} \GBind \lambda V.\left(\GEval N \GBind \lambda W. \GLiftFun{\SGValInterpret{-}{}}\left(\GDirac(\Pair{V}{W})\right)\right)\\
  \peq\ &\GEval{M'} \GBind \lambda V.\GEval N \GBind \lambda W. \GDirac(\SGValInterpret{\Pair{V}{W}}{})  \\
  \peq\ &\GEval{M'} \GBind \lambda V.\GEval N \GBind \lambda W. \GDirac(\SemPair{\SGValInterpret{{V}}{}}{\SGValInterpret{{W}}{}})  \\
  \peq\ &
  \begin{aligned}
  (\GLiftFun{\SGValInterpret{-}{}}(\GEval{M'})) \GBind &\lambda v.(\GLiftFun{\SGValInterpret{-}{}}(\GEval N)) \\
  &\GBind \lambda w. \GDirac(\SemPair{v}{w})
  \end{aligned}\\
  \peq\ & \SGInterpret{M'}{} \GBind \lambda v.\SGInterpret{N}{} \GBind \lambda w. \GDirac(\SemPair{v}{w})  \\
  \peq\ &\SGInterpret{\Pair{M'}{N}}{}
  \end{align*}
\end{case}

\begin{case}{$M = \Fst{M'}$}
  \begin{align*}
  \GLiftFun{\SGValInterpret{-}{}}(\GEval{(\Fst{M'})})
  \peq\ & \GLiftFun{\SGValInterpret{-}{}}\left(\GLiftFun{\SynFst}{}(\GEval{M'})\right)\\
  \peq\ & \GLiftFun{\SGValInterpret{\SynFst{(-)}}{}}\left(\GEval{M'}\right)\\
  \peq\ & \GLiftFun{\SemFst{}}{}\left(\GLiftFun{\SGValInterpret{-}{}}\left(\GEval{M'}\right)\right)\\
  \peq\ & \GLiftFun{\SemFst{}}\left(\SGInterpret{M'}{}\right)\\
  \peq\ & \SGInterpret{\Fst{M'}}{}
  \end{align*}
\end{case}
The case of $M = \Snd{M'}$ is similar. 
%

\begin{case}{$M = MN$}

  \noindent We start by unfolding the definition of $(\GEval MN)\GBind \GDirac \circ \SGValInterpret{-}{}$ and use associativity as well as the definition of $\GBind$ to get
  \begin{align*}
    &
    \begin{aligned}
      \big(\GEval M &\GBind \lambda(\Lam{M'}).\GEval N \GBind \\
      &\lambda V. (\ExpStep(\GEval(M'[V/x])))\big) \GBind \GDirac \circ \SGValInterpret{-}{}
    \end{aligned}\\
    \peq\ & \left(
    \begin{aligned}
      \GEval M \GBind &\lambda(\Lam{M'}).\GEval N \\
      &\GBind \lambda V. (\ExpStep\big(\GEval(M'[V/x])\big) \GBind \GDirac \circ \SGValInterpret{-}{})
    \end{aligned}\right)\\
    \peq\ & \left(
    \begin{aligned}
     \GEval M \GBind &\lambda(\Lam{M'}).\GEval N \\
     &\GBind \lambda V. \ExpStep\left(\GEval(M'[V/x]) \GBind \GDirac \circ \SGValInterpret{-}{}\right)
    \end{aligned}\right)\\
    \peq\ & \left(
    \begin{aligned}
    \GEval M \GBind &\lambda(\Lam{M'}).\GEval N \\
    &\GBind \lambda V. (\GStep{}(\lambda\alpha. \SGInterpret{M'[V/x]}{\rho}))
    \end{aligned}\right)
  \end{align*}
  In the last step we applied the guarded hypothesis.
  Next, we use the substitution lemma for terms and recall the definition of semantic function type values, which leaves us with
  \begin{align*}
    \peq\ &\GEval M \GBind \lambda(\Lam{M'}).\GEval N \GBind \lambda W. (\ExpStep (\SGInterpret{\ M'}{\SGValInterpret{W}{}})) \\
    \peq\ &\left(
      \begin{aligned}
    \GEval M \GBind &\lambda(\Lam{M'}).\GEval N \\
    &\GBind \lambda W. (\ExpStep(\SGValInterpret{\Lam{M'}}{}(\SGValInterpret{W}{})))
      \end{aligned} \right)\\
    \peq\ &\GEval M \GBind \lambda V. \GEval N \GBind \lambda W. (\ExpStep (\SGValInterpret{V}{}(\SGValInterpret{W}{})))
  \end{align*}
  In the last equation, we simply omitted the superfluous case analysis of function values.
  We now use the functoriality of $\GLiftDist{}$ to get that the above equals
  \begin{align*}
    (\GEval M\GBind \SGValInterpret{-}{}) \GBind &\lambda v. (\GEval N \GBind \SGValInterpret{-}{}) 
    \GBind \lambda w. \ExpStep(vw)
  \end{align*}
  which by the induction hypothesis gives
   \begin{align*}
     \SGInterpret{M}{} \GBind \lambda v. \SGInterpret{N}\cdot \GBind \lambda w. \GStep(\lambda(\alpha:\kappa).v w) 
    &\peq \SGInterpret{M}{} \cdot \SGInterpret{N}{}
    \peq \SGInterpret{MN}{}
  \end{align*}
\end{case}

\begin{case}{$M=\Inl{M'}$}
  \begin{align*}
  \GLiftFun{\SGValInterpret{-}{}}\left(\GEval{(\Inl{M'})}\right) 
  \peq\ &\GLiftFun{\SGValInterpret{-}{}}\left(\GLiftFun{\Inl{}}\left(\GEval{M'}\right)\right) \\
  \peq\ &\GLiftFun{\SGValInterpret{\Inl{(-)}}{}}\left(\GEval{M'}\right) \\
  \peq\ &\GLiftFun{\SemInl{}}{}\left(\GLiftFun{\SGValInterpret{-}{}}{}\left(\GEval{M'}\right)\right) \\
  \peq\ &\GLiftFun{\SemInl{}}{}\left(\SGInterpret{M'}{}\right) \\
  \peq\ &\SGInterpret{\Inl{M'}}{}
  \end{align*}
\end{case}
The case of $M=\Inr{M'}$ is similar.
%

\begin{case}{$M = \Case{L}{x.M}{y.N}$}
%
  We proceed similarly to the function application case and first unfold the definition. 
   \begin{align*}
   & \GLiftFun{\SGValInterpret{-}{}}\left(\GEval{(\Case{L}{x.M}{y.N})}\right) \\
    \peq &\GLiftFun{\SGValInterpret{-}{}}\left( \GEval L \GBind
    \begin{casesalt}
      \Inl{V} \mapsto (\ExpStep\GEval(M[V/x])) \\
      \Inr{V} \mapsto (\ExpStep\GEval(N[V/x]))
    \end{casesalt} \!\!\!\!\! \right)\\
    \peq\ &\GEval L \GBind
    \begin{casesalt}
      \Inl{V} \mapsto\GLiftFun{\SGValInterpret{-}{}}\left(\ExpStep(\GEval(M[V/x]))\right) \\
      \Inr{V} \mapsto\GLiftFun{\SGValInterpret{-}{}}\left(\ExpStep(\GEval(N[V/x]))\right)
    \end{casesalt}  \\
    \peq\ &\GEval L \GBind
    \begin{casesalt}
      \Inl{V} \mapsto\ExpStep(\GLiftFun{\SGValInterpret{-}{}}\left(\GEval(M[V/x])\right)) \\
      \Inr{V} \mapsto\ExpStep(\GLiftFun{\SGValInterpret{-}{}}\left(\GEval(N[V/x])\right))
    \end{casesalt}  \\
    \peq\ &\GEval L \GBind
    \begin{casesalt}
      \Inl{V} \mapsto (\ExpStep\SGInterpret{M[V/x]}{}) \\
      \Inr{V} \mapsto (\ExpStep\SGInterpret{N[V/x]}{})
    \end{casesalt}  \\
    \peq\ &\GEval L \GBind
    \begin{casesalt}
      \Inl{V} \mapsto \ExpStep(\SGInterpret{M}{\SGValInterpret{V}{}}) \\
      \Inr{V} \mapsto \ExpStep(\SGInterpret{N}{\SGValInterpret{V}{}})
    \end{casesalt}  \\
    \peq\ &\GEval L \GBind \lambda V. \mathsf{match\ }\SGValInterpret{V}{}\mathsf{\ with}
    \begin{casesalt}
      \SemInl{v} \mapsto (\ExpStep\SGInterpret{M}{v}) \\
      \SemInr{v} \mapsto (\ExpStep\SGInterpret{N}{v})
    \end{casesalt} \\
    \peq\ &\left(\GLiftFun{\SGValInterpret{-}{}}(\GEval L)\right) \GBind
    \begin{casesalt}
      \SemInl{v} \mapsto \ExpStep(\SGInterpret{M}{v}) \\
      \SemInr{v} \mapsto \ExpStep(\SGInterpret{N}{v})
    \end{casesalt} \\
    \peq\ &\SGInterpret{L}{}\GBind
    \begin{casesalt}
      \SemInl{v} \mapsto \ExpStep(\SGInterpret{M}{v}) \\
      \SemInr{v} \mapsto \ExpStep(\SGInterpret{N}{v})
    \end{casesalt}\\
    \peq\ &\SGInterpret{\Case{L}{x.M}{y.N}}{}
  \end{align*}
\end{case}

\begin{case}{$M = \Fold{M'}$}
 \begin{align*}
    \GLiftFun{\SGValInterpret{-}{}}\left(\GEval (\Fold{M'})\right)  
   \peq\  &\GLiftFun{\SGValInterpret{-}{}}\left(\GLiftFun{(\Fold{})}\left(\GEval (M')\right)\right)  \\
   \peq\  &\GLiftFun{\SGValInterpret{\Fold{-}}{}}\left(\GEval (M')\right)  \\
   \peq\  &\GLiftFun{\GNext \circ \SGValInterpret{-}{}}\left(\GEval (M')\right)  \\
   \peq\  &\GLiftFun{\GNext} \left(\Dist{(\SGValInterpret{-}{})}\left(\GEval (M')\right)\right)  \\
   \peq\  &\GLiftFun{\GNext} \left(\SGInterpret{M'}{}\right)  \\
    \peq\ &\SGInterpret{\Fold{M'}}{}
 \end{align*}
\end{case}

\begin{case}{$M=\Unfold{M'}$}. 
  Note that for $V' : \Val{\tau[\RecTy\tau/X]}$ we have that
  \[
    \latbind{\alpha}{\kappa}\left(\tapp{\SGValInterpret{\Fold{V'}}{}} \peq\SGValInterpret{V'}{}\right)
    \]
  and thus we get
  \begin{align*}
    &\GLiftFun{\SGValInterpret{-}{}}\left(\GEval (\Unfold M')\right) \\
    \peq\ &\GLiftFun{\SGValInterpret{-}{}}\left(\GEval(M')\GBind \lambda (\Fold{V'}). \ExpStep (\GDirac(V'))\right) \\
    \peq\ &\GEval(M')\GBind \lambda (\Fold{V'}).\GLiftFun{\SGValInterpret{-}{}}(\ExpStep(\GDirac(V'))) \\
    \peq\ &\GEval(M')\GBind \lambda (\Fold{V'}).(\GStep (\tabs{\alpha}{\kappa}(\GDirac(\SGValInterpret{V'}{})))) \\
    \peq\ &\GEval(M')\GBind \lambda (\Fold{V'}).(\GStep (\tabs{\alpha}{\kappa}(\GDirac(\tapp{\SGValInterpret{\Fold{V'}}{}})))) \\
    \peq\ &\GEval(M')\GBind \lambda V.(\GStep (\tabs{\alpha}{\kappa}(\GDirac(\tapp{\SGValInterpret{V}{}})))) \\
    \peq\ &\GEval(M')\GBind \left(\lambda v.\GStep(\tabs{\alpha}{\kappa}\GDirac(\tapp{v}))\right) \circ \SGValInterpret{-}{} \\
    \peq\ &(\GLiftFun{\SGValInterpret{-}{}}(\GEval(M')))\GBind \lambda v.\GStep(\tabs{\alpha}{\kappa}\GDirac(\tapp{v}{})) \\
    \peq\ &\SGInterpret{\Unfold M'}{}
  \end{align*}
\end{case}

\begin{case}{$M=\Choice{p}{N_1}{N_2}$}
  \begin{align*}
    & \GLiftFun{\SGValInterpret{-}{}}\left(\GEval{}(\Choice{p}{N_1}{N_2})\right) \\
    \peq\ &\GLiftFun{\SGValInterpret{-}{}}\left(\GChoice{p}{\GEval{}(N_1)}{\GEval{}(N_2)}\right)  \\
    \peq\ &\GChoice{p}{\left(\GLiftFun{\SGValInterpret{-}{}}\left(\GEval{}(N_1)\right)\right)}{\left(\GLiftFun{\SGValInterpret{-}{}}\left(\GEval{}(N_2)\right)\right)}  \\
    \peq\ &\GChoice{p}{\SGInterpret{N_1}{}}{\SGInterpret{N_2}{}} \\
    \peq\ &\SGInterpret{\Choice{p}{N_1}{N_2}}{}
  \end{align*}
\end{case}
\end{proof}

As a consequence, we note the following easy corollary.

\begin{corollary}  \label{cor:soundness:PT:unit}
If $M$ is a closed term of unit type, then
 $\probtermseq {\Eval{\,M}} \peq \probtermseq {\SInterp{M}}$
\end{corollary}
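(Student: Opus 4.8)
The plan is to derive this directly from Theorem~\ref{thm:guarded-soundness} by clock-quantifying it and then observing that $\probtermseq{-}$ is unchanged under pushing a computation forward along an arbitrary map. First I would record the latter as an auxiliary fact: for any $f : A \to B$ and any $\nu : \LiftDist{A}$ we have $\probterm{n}{\LiftFun{f}(\nu)} \peq \probterm{n}{\nu}$ for all $n : \NN$. Since $\probterm{n}{\nu} = \probterm{0}{\run^n\nu}$ by definition, it suffices to prove the two commutations $\run(\LiftFun{f}(\nu)) \peq \LiftFun{f}(\run(\nu))$ and $\probterm{0}{\LiftFun{f}(\nu)} \peq \probterm{0}{\nu}$; both follow by unfolding $\LiftDist{A} \equi \Dist(A + \LiftDist{A})$ and inducting on $\Dist$, using that $\LiftFun{f}$ sends $\CDirac{a}$ to $\CDirac{(f(a))}$, commutes with $\CStep$, and distributes over $\CChoice{p}{-}{-}$ --- exactly the clauses along which $\run$ and $\probtermOp{0}$ are defined.

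Next I would clock-quantify Theorem~\ref{thm:guarded-soundness}. At the unit type the value interpretation $\SGValInterpret{-}{} : \Val{\Unit} \to \GSVal{\Unit}$ mentions no clock, since $\GSVal{\Unit} = \SemUnit$; write $f$ for it. The theorem gives $\GLiftFun{f}(\GEval M) \peq \SGInterpret{M}{}$ for every $\kappa$, so applying $\Lambda\kappa.(-)$ to both sides yields $\Lambda\kappa.\,\GLiftFun{f}(\GEval M) \peq \SInterp{M}$. Because the functorial action of $\GLiftDist{}$ commutes with clock quantification --- an instance of the fact underlying Proposition~\ref{prop:LiftDist}, applicable here since $\Val{\Unit}$ and $\SemUnit$ are clock-irrelevant --- the left-hand side is $\LiftFun{f}(\Eval{\,M})$, so
\[
 \LiftFun{f}(\Eval{\,M}) \peq \SInterp{M}.
\]

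Finally, combining the displayed equation with the auxiliary fact applied to $f$ and $\nu \defeq \Eval{\,M}$ gives
\[
 \probtermseq{\Eval{\,M}} \peq \probtermseq{\LiftFun{f}(\Eval{\,M})} \peq \probtermseq{\SInterp{M}},
\]
which is the claim. The only step needing any care is the commutation of $\GLiftFun{}$ with clock quantification, and that reduces to clock-irrelevance of the finite type $\Val{\Unit}$ together with a $\Dist$-induction; there is no real obstacle, which is why the result is stated as a corollary.
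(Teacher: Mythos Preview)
Your proof is correct and spells out exactly the kind of argument the paper leaves implicit (the paper states the corollary without proof, treating it as immediate from Theorem~\ref{thm:guarded-soundness}). The only observation worth adding is that at unit type the situation is even simpler than your general pushforward lemma requires: $\SGValInterpret{-}{} : \Val{\Unit} \to \SemUnit$ is a map between one-element sets and hence an isomorphism, so $\GLiftFun{\SGValInterpret{-}{}}$ is an isomorphism of convex delay algebras, from which invariance of $\probtermseq{-}$ is immediate; but your route via the general invariance of termination probability under $\LiftFun{f}$ is equally valid and arguably cleaner.
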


To relate the two denotational semantics we 
construct a logical relation 
\begin{align*}
\GSValLogRel\sigma & :\SGSVal\sigma \to \SVal\sigma \to \Prop \\
\GLogRel\sigma & :\GLiftDist{}{\SGSVal\sigma} \to \LiftDist{}(\SVal\sigma) \to \Prop
\end{align*}
by guarded recursion and induction over $\sigma$ as follows
\begin{align*}
v \GSValLogRel\Unit w & \defeq  \True \\
n \GSValLogRel\Nat m & \defeq  (n\peq m) \\
v \GSValLogRel{\ProdTy \sigma\tau} w & \defeq  \left(\SemFst{}v \GSValLogRel{\sigma} \SemFst{} w\right) \land\left( \SemSnd{}v \GSValLogRel{\tau} \SemSnd{} w \right) \\
v \GSValLogRel{\CoprodTy \sigma\tau} w & \defeq  \left(\exists v', w' . v \peq \SemInl v' \land w \peq \SemInl w' \land v' \GSValLogRel{\sigma} w'\right) \lor  \\
 & \hspace{.5cm} \left(\exists v', w' . v \peq \SemInr v' \land w \peq \SemInr w' \land v' \GSValLogRel{\tau} w'\right) \\
f \GSValLogRel{\sigma\to\tau} g & \defeq \forall v,w . (v \GSValLogRel{\sigma} w) \to (f\,v)\GSLogRel{\tau}(\lambda\kappa. \capp g\,(\capp w)) \\
v \GSValLogRel{\RecTy\tau} w & \defeq \latbind\tickA\kappa (\tapp v \GSValLogRel{\tau\subst{\RecTy\tau}X}\force( w)) \\
x \GSLogRel{\sigma} y & \defeq x \GRelLift{\GSValLogRel{\sigma}} y
\end{align*}

Note that many of these cases use implicit isomorphism, for example
\begin{align*}
 \SVal{\Nat} & = \forall\kappa. \NN \equi \NN \\
 \SVal{\CoprodTy\sigma\tau} & = \forall\kappa . (\GSVal{\sigma} + \GSVal{\tau}) \equi (\forall\kappa . \GSVal{\sigma}) + (\forall\kappa . \GSVal{\tau}) = \SVal\sigma + \SVal\tau
\end{align*}
In the case of function types, $g : \SVal{\sigma\to\tau}$ and so $\lambda\kappa. \capp g\,(\capp w) : \forall\kappa. \GLiftDist(\GSVal\tau)$, which is not quite right, because,
to apply $\GSLogRel{\tau}$ we need something of type $\LiftDist(\SVal\tau)$. However, the next lemma states that these sets are isomorphic, and we will leave the 
isomorphism implicit in order not to clutter proofs.

%

\begin{lemma} \label{lem:forall:iso}
 Given a type $X$, in which $\kappa$ may appear, the two types $\LiftDist (\forall\kappa .X)$ and $\forall\kappa .\GLiftDist(X)$ are isomorphic. The isomorphism is natural in $X$. 
\end{lemma}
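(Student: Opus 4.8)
The plan is to realise both types as the final coalgebra of one and the same functor $F(Y) \defeq \Dist\bigl((\forall\kappa.X) + Y\bigr)$ on clock-closed types, so that they are canonically isomorphic. For the left-hand side this is essentially immediate: since, by the standing convention that the argument of $\LiftDist$ is clock-irrelevant (cf. the remark following Proposition~\ref{prop:LiftDist}), the type $\forall\kappa.X$ is clock-irrelevant, Proposition~\ref{prop:LiftDist} applies with $A \defeq \forall\kappa.X$ and shows that $\LiftDist(\forall\kappa.X)$ is the final $F$-coalgebra, with $\LiftDist(\forall\kappa.X) \equi F(\LiftDist(\forall\kappa.X))$ as its (iso) structure map.

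For the right-hand side I would first check that $\forall\kappa.\GLiftDist{X}$ is an $F$-fixed point, by pushing the outer $\forall\kappa$ through the defining equivalence of $\GLiftDist{}$:
\begin{align*}
\forall\kappa.\GLiftDist{X}
&\equi \forall\kappa.\Dist\bigl(X + \later^\kappa(\GLiftDist{X})\bigr)
\equi \Dist\bigl(\forall\kappa.(X + \later^\kappa(\GLiftDist{X}))\bigr) \\
&\equi \Dist\bigl((\forall\kappa.X) + \forall\kappa.\later^\kappa(\GLiftDist{X})\bigr)
\equi \Dist\bigl((\forall\kappa.X) + \forall\kappa.\GLiftDist{X}\bigr) = F\bigl(\forall\kappa.\GLiftDist{X}\bigr),
\end{align*}
using in turn that $\Dist$ commutes with clock quantification (as established in the proof of Proposition~\ref{prop:LiftDist}), that $\forall\kappa$ commutes with $+$, and the equivalence $\forall\kappa.A \equi \forall\kappa.\later^\kappa A$ from Section~\ref{sec:cctt}. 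The crux is then to upgrade this fixed point to a proof that $\forall\kappa.\GLiftDist{X}$ is itself \emph{final} for $F$. Given any $F$-coalgebra $c : C \to \Dist((\forall\kappa.X)+C)$, I would construct the mediating morphism $m : C \to \forall\kappa.\GLiftDist{X}$ as $\Lambda\kappa.\,m_\kappa$, where $m_\kappa : C \to \GLiftDist{X}$ (at the bound clock $\kappa$) is defined by guarded recursion: on input $x$, run $c(x)$, reindex its $(\forall\kappa.X)$-summand along the instantiation map $(\forall\kappa.X) \to X$ at the current clock, delay the $C$-summand with $\nextop$ and feed it to the recursion hypothesis, then re-fold into $\GLiftDist{X} \equi \Dist(X + \later^\kappa(\GLiftDist{X}))$. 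I would then verify that $m$ is an $F$-coalgebra morphism and that it is the unique such, both arguments reducing — after clock-extensionality — to a guarded induction at each clock. Uniqueness of final coalgebras then yields $\LiftDist(\forall\kappa.X) \equi \forall\kappa.\GLiftDist{X}$, and naturality in $X$ is routine, since $F$ depends functorially on $X$ via $A \mapsto \forall\kappa.A$ and mediating isomorphisms of final coalgebras commute with such functor maps.

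I expect the finality argument to be the main obstacle: a fixed point of $F$ is in general neither initial nor final, so one cannot shortcut past the corecursive construction of $m$, and it must be performed uniformly under $\Lambda\kappa$, with the single later modality $\later^\kappa$ of $\GLiftDist{X}$ (rather than an iterated one) absorbing exactly one unfolding of the flat coalgebra $c$ — hence the $\nextop$ in the construction. A secondary subtlety is the mismatch that $c$ produces elements of $\forall\kappa.X$ whereas $\GLiftDist{X}$ stores elements of $X$; this is bridged by clock instantiation, but the compatibility of that reindexing with the two coalgebra structures needs to be checked carefully. A cleaner but heavier alternative would be to first establish a ``clock-quantified functor'' analogue of the coinductive-types encoding recalled in Section~\ref{sec:cctt} — that $\forall\kappa.\nu^\kappa(G_\kappa) \equi \nu(\hat G)$ for a clock-dependent functor $G_\kappa(Y)=\Dist(X+Y)$ whose clock-quantified form is $\hat G(Y)=\Dist((\forall\kappa.X)+Y)$ — of which the present lemma is exactly the instance at hand; the two routes amount to the same work.
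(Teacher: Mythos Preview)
Your approach is correct and takes a genuinely different route from the paper. The paper constructs the isomorphism by hand: it defines $\Psi : \LiftDist(\forall\kappa.X) \to \forall\kappa.\GLiftDist(X)$ as $\Lambda\kappa.\GLiftDist(\mathsf{ev}_\kappa)$, spelling out its action on $\CDirac$, $\CStep$, and $\CChoiceOp$; defines an inverse $\Phi$ via a guarded-recursive auxiliary $\Phi^\kappa$ using exactly the fixed-point equivalence $\forall\kappa.\GLiftDist(X)\equi\Dist((\forall\kappa.X)+\forall\kappa.\GLiftDist(X))$ that you derive in your step~3; and then checks both composites are identities by guarded recursion and case analysis on $\Dist$. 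Your mediating map $m_\kappa$ is essentially the paper's $\Phi^\kappa$ instantiated at an arbitrary coalgebra rather than the specific one, so the underlying guarded computation is the same --- your finality proof (existence, coalgebra-morphism condition, uniqueness) is not less work than the direct round-trip verification. What your route buys is a conceptual explanation and the parametrised generalisation you sketch at the end; what the paper's route buys is an explicit description of $\Psi$, which is needed immediately afterwards (Lemma~\ref{lem:two:binds}) to compute how the isomorphism interacts with $\Bind$. One small caveat: invoking Proposition~\ref{prop:LiftDist} requires $\forall\kappa.X$ itself to be clock-irrelevant; this holds in CCTT but is not recorded in the paper, and the ``standing convention'' you cite is a blanket assumption rather than a theorem --- the paper's direct construction sidesteps the issue.
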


\begin{proof}
 Define the map $\Psi: \LiftDist (\forall\kappa .X) \to \forall\kappa .\GLiftDist(X)$ as $\forall\kappa.\GLiftDist(\mathsf{ev}_\kappa)$, so 
\begin{align*}
 \capp{\Psi(\CDirac{x})} & = \GDirac{(\capp x)} \\
 \capp{\Psi(\CStep{x})} & = \GStep{(\tabs\tickA\kappa\capp{\Psi(x)})} \\ 
 \capp{\Psi(\CChoice pxy)} & = \GChoice p{\capp{\Psi(x)}}{\capp{\Psi(y)}}
\end{align*}
 For the other direction, use the isomorphism
 \begin{equation} \label{eq:lem:forall:iso}
 \forall\kappa .\GLiftDist(X) \equi \Dist{}(\forall\kappa . X + \forall\kappa .\GLiftDist(X))
 \end{equation}
 to define $\Phi :  (\forall\kappa .\GLiftDist(X)) \to \LiftDist (\forall\kappa .X)$ and $\Phi^\kappa  : (\forall\kappa .\GLiftDist(X)) \to \GLiftDist (\forall\kappa .X)$ by
\begin{align*}
 \Phi(x) & \defeq \forall\kappa . \Phi^\kappa(x) \\
 \Phi^\kappa(\Dirac{\inl\,{x}}) & \defeq \GDirac(x) \\
 \Phi^\kappa(\Dirac{\inr\,{x}}) & \defeq \GStep(\tabs\tickA\kappa{\Phi^\kappa(x)}) \\
 \Phi^\kappa(\DistChoice pxy) & \defeq \GChoice p{\Phi^\kappa(x)}{\Phi^\kappa(y)}
\end{align*}
We show that $\capp{(\Phi\circ\Psi(x))} = \capp x$ by guarded recursion and $\Dist$-induction:
\begin{align*}
 \capp{(\Phi\circ\Psi(\CDirac(x)))} & = \Phi^\kappa(\Dirac{\inl{\,x}}) \\
 & = \GDirac{x} \\
 & = \capp{\CDirac x} \\
 \capp{(\Phi\circ\Psi(\CStep(x)))} & = \Phi^\kappa(\Psi(\CStep x)) \\
 & = \Phi^\kappa(\Dirac{\inr\,(\Psi(x))} \\
 &  =  \GStep(\tabs\tickA\kappa{\Phi^\kappa(\Psi(x))}) \\
 & = \GStep(\tabs\tickA\kappa{\capp x}) \\
 & = \capp{\CStep x}
\end{align*}
The case of $\CChoice pxy$ is also easy. Similarly, we show that $\capp{\Psi(\Phi(x))} = \capp x$ by guarded recursion:
\begin{align*}
 \capp{\Psi(\Phi(\Dirac{\inl\,{x}}))} & = \capp{\Psi(\CDirac(x))} 
 = \GDirac{(\capp x)} 
\end{align*}
 which up to the isomorphism (\ref{eq:lem:forall:iso}) is $\Dirac{\inl\,{x}}$. 
 \begin{align*}
 \capp{\Psi(\Phi(\Dirac{\inr\,{x}}))} & = \capp{\Psi(\lambda\kappa . \GStep(\tabs\tickA\kappa{\Phi^\kappa(x)}))} \\ 
 & = \GStep(\tabs\tickA\kappa{\capp{\Psi(\Phi(x))}}) \\
 & = \GStep(\tabs\tickA\kappa{\capp x})
\end{align*}
which up to the isomorphism (\ref{eq:lem:forall:iso}) is $\Dirac{\inr\,{x}}$.
\end{proof}

\begin{lemma} \label{lem:fund:log:rel:sem}
 If $M : \Tm[\Gamma]\sigma$ and $\rho\GSValLogRel\Gamma \delta$, then $\SGInterpret M\rho \GSLogRel{\sigma} \Interpret M\delta$. 
\end{lemma}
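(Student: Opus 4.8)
The plan is to prove this by induction on $M$ (equivalently, on its typing derivation, using that terms carry enough type annotations to reconstruct the types of subterms), after establishing a suite of \emph{compatibility lemmas} for $\GSLogRel{}$ mirroring Lemmas~\ref{lem:logrel:bind}--\ref{lem:choice-compatibility}. Two preliminary observations organise the argument. First, $\SInterp{-}$ and $\Interp{-}$ agree on types, so $\SGSVal\sigma$ \emph{is} $\GSVal\sigma$, and they agree on every term former except application and case-on-sums, where $\SGInterpret{-}{}$ inserts an extra $\GStep\circ\GNext$ that $\GInterpret{-}{}$ omits. Second, throughout we leave the isomorphism $\LiftDist(\forall\kappa.X)\equiv\forall\kappa.\GLiftDist(X)$ of Lemma~\ref{lem:forall:iso}, together with the fact that $\forall\kappa.(-)$ commutes with $+$ and with products, implicit, exactly as in the definition of $\GSValLogRel{\sigma\to\tau}$.

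For the compatibility lemmas: a two-sided bind lemma is immediate from Lemma~\ref{lem:prereqs:bind-lemma}; compatibility for $\SemSuc{},\SemPred{},\SemInl{},\SemInr{},\SemFst{},\SemSnd{}$, for pairing, for the conditional, and for probabilistic choice are obtained just as the corresponding statements of Section~\ref{sec:rel:syntax:semantics}, using that related values yield $\GRelLift{\Rel}$-related Dirac distributions (via a Dirac coupling) together with the reasoning principles of Figure~\ref{fig:rules:rel:lift}. Fold and unfold go through as in Lemmas~\ref{lem:fold-compatibility} and~\ref{lem:unfold-compatibility}: the bind lemma reduces the goal to a statement about $\Later$-ed value relations, discharged by unfolding $\GSValLogRel{\RecTy\tau}$ definitionally and applying $\GNext$ (and, for unfold, Lemma~\ref{lem:prereqs:leadsto-liftrel} to absorb the $\CStep$ appearing on the $\Interp$-side). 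The one genuinely new point is application and case-on-sums, where the $\SInterp$-side carries an extra $\GStep(\GNext(-))$ the $\Interp$-side lacks; here we use that, by Definition~\ref{def:liftRel}(\ref{item:delay}) together with $\Gzeta(\Dirac{x}) = x$, one has $\GStep(\GNext(\mu))\GRelLift{\Rel}\nu$ precisely when $\Later(\mu\GRelLift{\Rel}\nu)$, and since $fv$ (for semantic values $f,v$) is plain meta-level application and is not a recursive call to $\SInterp$, the hypothesis $\mu\GRelLift{\Rel}\nu$ already yields $\Later(\mu\GRelLift{\Rel}\nu)$ via $\GNext$. Thus the extra step is absorbed for free, and the application and case compatibility lemmas follow just like Lemmas~\ref{lem:app-compatibility} and~\ref{lem:case-compatibility}, but without needing any reduction on the right-hand side.

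The induction on $M$ is then routine. For every term former other than $\Lam{M'}$, apply the induction hypotheses to the immediate subterms together with the matching compatibility lemma, using the assumption $\rho\GSValLogRel\Gamma\delta$. For $\Lam{M'}$ one argues as in the proof of Lemma~\ref{lem:guarded-fundamental}: both $\SGInterpret{\Lam{M'}}\rho$ and $\GInterpret{\Lam{M'}}\delta$ are Dirac distributions, so it suffices to show the two semantic closures are $\GSValLogRel{\sigma\to\tau}$-related, i.e.\ that for all $v\GSValLogRel\sigma w$ we have $\SGInterpret{M'}{\rho.x\mapsto v}\GSLogRel\tau(\lambda\kappa.\GInterpret{M'}{\delta.x\mapsto w})$; but this is precisely the induction hypothesis for $M'$ at the extended environment, for which $\rho.x\mapsto v\GSValLogRel{\Gamma,\sigma}\delta.x\mapsto w$ holds since $\rho\GSValLogRel\Gamma\delta$ and $v\GSValLogRel\sigma w$.

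The main obstacle is bookkeeping rather than conceptual: keeping the implicit isomorphisms of Lemma~\ref{lem:forall:iso} coherent when passing between $\GLiftDist{}/\GSVal{}$ on the $\SInterp$-side and $\LiftDist{}/\SVal{}$ on the $\Interp$-side --- in particular checking that the value $\lambda\kappa.\capp g\,(\capp w)$ in the definition of $\GSValLogRel{\sigma\to\tau}$ is exactly the closure produced by the $\Lam{M'}$ clause of $\Interp{-}$ under these isomorphisms --- and making sure that at each occurrence of an extra $\GStep$ on the $\SInterp$-side (application, case) it is genuinely absorbed by the cheap $\GNext$ argument above and does not secretly require a guarded recursion.
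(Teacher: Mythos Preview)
Your proposal is correct and follows essentially the same approach as the paper: induction on $M$, compatibility-style reasoning via the bind lemma, and absorbing the extra $\GStep\circ\GNext$ on the $\SInterp$-side using Definition~\ref{def:liftRel}(\ref{item:delay}) plus $\GNext$ applied to the already-available hypothesis. The one thing the paper makes explicit that you leave as ``bookkeeping'' is Lemma~\ref{lem:two:binds}, which shows that $\Interpret{M}\delta \defeq \lambda\kappa.\GInterpret{M}{\capp\delta}$ really does decompose as a $\LiftDist$-bind under the isomorphism of Lemma~\ref{lem:forall:iso}, so that Lemma~\ref{lem:prereqs:bind-lemma} applies; this is exactly the coherence check you flag as the main obstacle, and its proof is a routine guarded recursion.
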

Here $\rho$ is as usual, $\delta$ is a mapping of variables $x: \tau$ in $\Gamma$ to $\delta (x) : \SVal\tau$ and $\Interpret M\delta \defeq \lambda\kappa. \GInterpret M{\capp \delta}$,
where $\capp \delta$ is pointwise application to $\kappa$. We have left the isomorphism of Lemma~\ref{lem:forall:iso} implicit in the expression $\Interpret M\delta$.


Before proving Lemma~\ref{lem:fund:log:rel:sem}, we prove a lemma that relates the $\Bind$ of $\LiftDist$ to the $\GBind$ for $\GLiftDist$ as it is used in the definition of $\Interp M_\delta$.
This will allow us to use the bind lemma in the proof.

\begin{lemma} \label{lem:two:binds}
 Let $g : \forall\kappa. \GSVal{\sigma} \to \GLiftDist(\GSVal\tau)$ and $x : \forall\kappa . \GLiftDist(\GSVal\sigma)$, and let $\Psi : \forall\kappa . \GLiftDist(\GSVal\sigma) \to \LiftDist(\SVal\sigma)$
 be the isomorphism of Lemma~\ref{lem:forall:iso}. Then
 \[
  (\Psi(x) \Bind z . \lambda\kappa . \capp g (\capp z)) = 
  \lambda\kappa . (\capp x) \GBind \capp g 
 \]
 both sides of type $\forall\kappa.  \GLiftDist(\GSVal\tau)$.
\end{lemma}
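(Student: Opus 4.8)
The plan is to unwind both sides of the claimed equation through the isomorphisms supplied by Lemma~\ref{lem:forall:iso} and then to prove the resulting equality of elements of $\LiftDist(\SVal\tau)$ by $\forall\kappa$-extensionality, an outer guarded recursion, and an inner $\Dist$-induction on $x$ --- the same shape of argument as the definitions of the two bind operators involved. Throughout I keep the identifications of $\LiftDist(\SVal\sigma)$ with $\forall\kappa.\GLiftDist(\GSVal\sigma)$, and of $\LiftDist(\SVal\tau)$ with $\forall\kappa.\GLiftDist(\GSVal\tau)$, implicit, as in the rest of this section, and I write $\Psi^\kappa$, $\Psi^\kappa_\tau$ for the clock-components of $\Psi$, with the defining clauses recalled in the proof of Lemma~\ref{lem:forall:iso}. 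Writing $h$ for the Kleisli arrow $z\mapsto\lambda\kappa.\capp g(\capp z)$ (with the implicit application of $\Psi_\tau$), the goal reads $\uext h(\Psi(x))\peq\Psi_\tau(\lambda\kappa.(\capp x)\GBind\capp g)$, and by functional extensionality for $\forall\kappa$ it suffices to prove this after evaluation at an arbitrary clock. I then argue by guarded recursion, assuming the statement one step later for all $x$.

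For the inner argument I case on $x$ via the equivalence~(\ref{eq:lem:forall:iso}), i.e.\ by the induction principle for $\Dist$, using that under this equivalence the operations $\CDirac$, $\CStep$, $\CChoiceOp$ on $\LiftDist$ correspond clockwise to $\GDirac$, $\GStep$, $\GChoiceOp$ on $\GLiftDist$, a $\nextop$ being inserted in the $\CStep$ case. If $x$ corresponds to $\Dirac(\inl\,z)$ with $z:\SVal\sigma$, then $\Psi(x)\peq\CDirac(z)$ and $\capp x\peq\GDirac(\capp z)$, so the left-hand side is $h(z)$ and the right-hand side is $\Psi_\tau(\lambda\kappa.\capp g(\capp z))$, and these agree by definition of $h$. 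If $x$ corresponds to $\Dirac(\inr\,x')$, then $\Psi(x)\peq\CStep(\Psi(x'))$ and $\capp x\peq\GStep(\tabs\tickA\kappa\capp{x'})$; using the characterising equation $\uext h(\CStep(-))\peq\CStep(\uext h(-))$ for the $\LiftDist$-Kleisli extension (from the lemma following Proposition~\ref{prop:LiftDist}) and the fact that $\uext{(\capp g)}$ is a delay-algebra homomorphism, both sides reduce to $\GStep$ applied to a $\latbind\tickA\kappa$-abstraction, and the remaining obligation is exactly the one-step-later instance of the goal for $x'$, obtained from the guarded hypothesis via the $\later$-extensionality axiom~(\ref{eq:later:ext}). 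If $x$ corresponds to $\DistChoice{p}{x_1}{x_2}$, then $\Psi(x)\peq\CChoice{p}{\Psi(x_1)}{\Psi(x_2)}$, and since $\uext h$ and $\uext{(\capp g)}$ both commute with convex sums each side distributes over $p$ and the case closes using the two inner induction hypotheses.

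The main obstacle is bookkeeping rather than mathematics: one must verify that the two layers of unfolding --- the isomorphism of Lemma~\ref{lem:forall:iso} on the one hand, and the guarded-recursive unfolding $\GLiftDist A\equi\Dist(A+\later^\kappa\GLiftDist A)$ used both to compute $\Psi^\kappa$ and to evaluate $\GBind$ on the other --- are mutually coherent, so that the images of $\Dirac(\inr\,x')$ under $\Psi$, under $\capp{(-)}$, and under the two bind operators all line up once the appropriate $\nextop$'s are inserted, and that the characterising equations for the two Kleisli extensions (those of $\LiftDist$ from the lemma after Proposition~\ref{prop:LiftDist}, and those of $\GLiftDist$ from its universal property as the free convex delay algebra) are applied consistently on the two sides. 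Once these coherences are pinned down, the three cases above are routine calculations.
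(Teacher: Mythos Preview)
Your proposal is correct and follows essentially the same approach as the paper: reduce to a clockwise equality by $\forall\kappa$-extensionality, then proceed by guarded recursion together with a $\Dist$-induction on $x$ via the equivalence~(\ref{eq:lem:forall:iso}), handling the $\Dirac(\inl\,-)$, $\Dirac(\inr\,-)$, and $\DistChoice{p}{-}{-}$ cases exactly as you describe. The only presentational difference is that the paper begins by unfolding the concrete definition $(x \Bind f) = \lambda\kappa . (\capp x) \GBind \lambda z. \capp{f(z)}$ rather than invoking the abstract characterising equations for the $\LiftDist$-Kleisli extension, which makes the computation in each case slightly more direct but is otherwise equivalent to what you do.
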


\begin{proof}
  First recall that $\Bind$ is defined in terms of $\GBind$. Precisely, given $x : \LiftDist X$, $f : X \to \LiftDist Y$,
\begin{align*}
 (x \Bind f) = \lambda\kappa . (\capp x) \GBind \lambda z. (\capp{f(z)})
\end{align*}
We prove that 
\[
\capp{(\Psi(x) \Bind z . \lambda\kappa . \capp g (\capp z))} = 
  (\capp x) \GBind \capp g
\]
by guarded recursion. Since $\Dist$ commutes with $\forall\kappa$,  $x$ is on one of the following three forms
\begin{align*}
 x & = \lambda\kappa . \Dirac{\inl(\capp {x'})} && \text{some } x' : \forall\kappa. \GSVal\sigma \\
 x & = \lambda \kappa . \Dirac{\inr(\GStep(\nextop(\capp{x'}))) }&& \text{some } x' : \forall\kappa . \GLiftDist(\GSVal\sigma) \\
 x & =  \lambda\kappa . \DistChoice p{\capp y}{\capp z} && \text{some } z,y :  \forall\kappa . \GLiftDist(\GSVal\sigma)
\end{align*}
In the first case both sides reduce to $\capp g (\capp{x'})$. In the second case we get
\begin{align*}
 \capp{(\Psi(x) \Bind z . \lambda\kappa . \capp g (\capp z))} 
 & = (\capp{\Psi(x)} \GBind z . \capp g (\capp z)) \\
 & = (\Dirac{\inr(\GStep(\nextop(\capp{(\Psi(x'))})))} \GBind z . \capp g (\capp z)) \\
 & = \Dirac{\inr(\GStep(\nextop(\capp{(\Psi(x')} \GBind z .\capp g (\capp z)))))} \\
 & = \Dirac{\inr(\GStep(\nextop(\capp{(\Psi(x') \Bind z . \lambda\kappa . \capp g (\capp z))})))} \\
 & = \Dirac{\inr(\GStep(\nextop(\capp {x'}) \GBind \capp g))} \\
 & = \Dirac{\inr(\GStep(\nextop(\capp {x'})))} \GBind \capp g \\
 & = \capp x \GBind \capp g 
\end{align*}
We omit the last easy case. 
\end{proof}

\begin{proof}[Proof of Lemma~\ref{lem:fund:log:rel:sem}]
  The proof is by induction on the typing derivation of $M$. 
  Most cases are trivial, we just do a few.
  
  Case $M = N\, P$. If $f \GSValLogRel{\sigma\to\tau} g$ and $v \GSValLogRel{\sigma} w$, then by definition $f\, v\GSLogRel{\tau} \lambda\kappa. \capp g\, (\capp w)$. So, then also 
  \[\GStep(\tabs\tickA\kappa(f\, v))\GSLogRel{\tau} \lambda\kappa. \capp g\, (\capp w)\] 
  Now, by Lemma~\ref{lem:two:binds},
\begin{align*}
 \Interp{N\, P}_\delta & = \lambda\kappa . \GInterpret{N\, P}{\capp \delta} \\
 & = \lambda\kappa . \GInterpret{N}{\capp\delta} \GBind  g. \GInterpret{P}{\capp\delta} \GBind \lambda w. gw \\
 & = \Interp{N}_\delta \Bind g. \Interp{P}_\delta \Bind w. \lambda\kappa. \capp g(\capp w)
\end{align*}
  The case now follows by the bind lemma and the induction hypothesis. 
  
  
  Case $M = \Fold N$. By induction $\SGInterpret N\rho \GSLogRel{\tau\subst{\RecTy\tau}X} \Interpret N\delta$. We must show that 
\begin{align*}
 \GLiftDist{(\GNext)}(\SGInterpret{M}\rho) \GRelLift{\GSValLogRel{\RecTy\tau}} \lambda\kappa . \GLiftDist{(\GNext)}(\GInterpret{M}{\capp\delta})
\end{align*}
Suppose first that $v\GSValLogRel{\tau\subst{\RecTy\tau}X} w$. Then 
\begin{align*}
  (\GNext v \GSValLogRel{{\RecTy\tau}} \lambda\kappa . (\GNext {\capp w})) &  \biimp \Later(v\GSValLogRel{\tau\subst{\RecTy\tau}X} w)
\end{align*}
because $\lambda\kappa . (\GNext {\capp w}) \leadsto w$, and so is true. Note that 
\begin{align*}
 \GLiftDist{(\GNext)}(\SGInterpret{M}\rho) & =  \SGInterpret{M}\rho \GBind v. \GDirac(\GNext v) \\
 \lambda\kappa . \GLiftDist{(\GNext)}(\GInterpret{M}{\capp\delta}) & =  \lambda\kappa . \GInterpret{M}{\capp\delta} \GBind w. \GDirac(\GNext w) \\
 & =  \Interpret{M}\delta \Bind w. \CDirac(\lambda\kappa. (\GNext (\capp w))) 
\end{align*}
using Lemma~\ref{lem:two:binds} in the last of these, so that the case now follows from an application of the bind lemma.

Case $M = \Unfold N$. By induction $\SGInterpret N\rho \GSLogRel{\RecTy\tau} \Interpret N\delta$. We must show that 
\begin{align*}
 \SGInterpret{M}\rho \GBind\lambda v. \GStep (\tabs\tickA\kappa \GDirac{(\tapp v)})
  \GRelLift{\GSValLogRel{\tau\subst{\RecTy\tau}X}} 
  \lambda\kappa . \GInterpret{M}{\capp\delta} \GBind\lambda w. \GStep (\tabs\tickA\kappa \GDirac{(\tapp w)})
\end{align*}
Note first that by Lemma~\ref{lem:two:binds}, 
\[
\lambda\kappa . \GInterpret{M}\delta \GBind\lambda w. \GStep (\tabs\tickA\kappa \GDirac{(\tapp w)})
= \Interp{M}_\delta \Bind \lambda w . \CStep(\CDirac w) 
\]
so by the bind lemma, the proof obligation reduces to showing 
\begin{equation} \label{eq:fund:log:rel:sem:unfold:goal}
\GStep (\tabs\tickA\kappa \GDirac{(\tapp v)}) \GRelLift{\GSValLogRel{\tau\subst{\RecTy\tau}X}}  \CStep(\CDirac w)
\end{equation}
under the assumption that $v \GSValLogRel{\RecTy\tau} w$. By definition, the assumption means 
\[
 \latbind\tickA\kappa(\tapp v \GSValLogRel{\tau\subst{\RecTy\tau}X} \force (w))
\]
and this is equivalent to (\ref{eq:fund:log:rel:sem:unfold:goal}) because $\CStep(\CDirac w) \leadsto \force(w)$.


In the case of $M = \Case{L}{x.N}{y.P}$, note that
\begin{align*}
 \Interpret{\Case{L}{x.N}{y.P}}\rho 
 &\defeq \lambda\kappa . \GInterpret{L}{\capp\rho} \GBind 
    \begin{cases}
      \SemInl{w}\mapsto {\GInterpret{N}{(\capp\rho).x\mapsto w}} \\
      \SemInr{w}\mapsto {\GInterpret{P}{(\capp\rho).y\mapsto w}}
    \end{cases} \\
 & = \Interpret{L}{\rho} \Bind 
    \begin{cases}
      \SemInl{w}\mapsto {\Interpret{N}{\rho.x\mapsto w}} \\
      \SemInr{w}\mapsto {\Interpret{P}{\rho.y\mapsto w}}
    \end{cases} \\
\end{align*}
so that by the induction hypothesis and the bind lemma, the proof obligation reduces to showing 
\[
 \left(\GDirac v\GBind 
    \begin{cases}
      \SemInl{v}\mapsto \GStep{}(\lambda\tickA.{\SGInterpret{N}{\rho.x\mapsto v}})\\
      \SemInr{v}\mapsto \GStep{}(\lambda\tickA.{\SGInterpret{P}{\rho.y\mapsto v}})
    \end{cases} \right)
    \GRelLift{\GSValLogRel{\tau}}
 \left(\CDirac w\Bind 
    \begin{cases}
      \SemInl{w}\mapsto {\Interpret{N}{\rho.x\mapsto w}} \\
      \SemInr{w}\mapsto {\Interpret{P}{\rho.y\mapsto w}}
    \end{cases} \right)
\]
assuming $v \GSValLogRel{\CoprodTy{\sigma_1}{\sigma_2}}w$. There are two cases for the latter, and we just show
the case of $v = \SemInl{v'}$ and $w = \SemInl{w'}$ with $v' \GSValLogRel{\sigma_1}w'$. In that case the obligation reduces to 
\[
\GStep{}(\lambda\tickA.{\SGInterpret{N}{\rho.x\mapsto v'}}) \GRelLift{\GSValLogRel{\tau}} {\Interpret{N}{\rho.x\mapsto w'}}
\]
which by definition is 
\[
\Later(\SGInterpret{N}{\rho.x\mapsto v'} \GRelLift{\GSValLogRel{\tau}} {\Interpret{N}{\rho.x\mapsto w'}})
\]
which is just $\Later$ applied to the induction hypothesis for $N$. 
\end{proof}

We can now finally prove Theorem~\ref{thm:eq-pterm-denot-op}. 

\begin{proof}[Proof of Theorem~\ref{thm:eq-pterm-denot-op}]
 One direction follows directly from Lemma~\ref{lem:guarded-fundamental}: Since $\forall \kappa. \GInterpret{M}{} \GLogRel\Unit \Eval{\,M}$, by 
 Lemma~\ref{lem:eq1-probterm} also $\probtermseq{\Interpret{M}{}} \leqlim \probtermseq{\Eval \,M}$. For the other direction, 
 $\probtermseq{\Eval \,M} \peq \probtermseq{\SInterp{M}{}}$ by Corollary~\ref{cor:soundness:PT:unit}, and since 
 $\forall\kappa.(\SGInterpret M{} \GSLogRel{\Unit} \Interp M{})$ also $\probtermseq{\SInterp M} \leqlim \probtermseq{\Interp M{}}$ by Lemma~\ref{lem:eq1-probterm},
 so $\probtermseq{\Eval \,M} \leqlim \probtermseq{\Interp M{}}$ as required.
\end{proof}


\subsection*{Section~\ref{sec:examples}}

\begin{proof}[Proof of Theorem~\ref{ex:thm:randw}]
 Let $V_{\RandW}$, $V_{\RandWTwo}$ and $\VEverySnd$ be the values that 
 $\RandW$, $\RandWTwo$ and $\EverySnd$ reduce to. It suffices to show that $\VEverySnd(\VRandW(\Numeral{2n})) \CtxEq \VRandWTwo(\Numeral{2n})$. 
 We first show $\GInterpret{\VEverySnd(\VRandW(\Numeral{2n}))}{} \GLogRel\LazyL \Eval(\VRandWTwo(\Numeral{2n}))$ by guarded recursion, focussing just on the case of
 $n>0$, which is the harder one. Now,
\begin{align*}
 \GInterpret{\VRandW(\Numeral{2n})}{} & = \ExpStep\GInterpret{G_{\RandW}\, \VRandW (\Numeral{2n})}{} \\
 & \peq \ExpStep(\GDirac(\nextop(\SemInr{\SemPair{\Numeral{2n}}{
 \lambda\_ .\GInterpret{{\Choice{\frac{1}{2}}{\VRandW(\Numeral{2n\!-\!1})}{\VRandW(\Numeral{2n\!+\!1)}}}}{}}}))) \\
 & \peq \ExpStep\left( \GDirac(\nextop(\SemInr{\SemPair{\Numeral{2n}}{\lambda\_ .\GChoice{\frac12}{\GInterpret{\VRandW(\Numeral{2n\!-\!1})}{}}{\GInterpret{\VRandW(\Numeral{2n\!+\!1})}{}} } }))  \right)
\end{align*}
So that 
\begin{align*}
 \GInterpret{\VEverySnd(V_{\RandW}(\Numeral{2n}))}{} &
 \peq  \ExpStep(\GInterpret{G_{\EverySnd}(\YComb{}{\,G_{\EverySnd}})(V_{\RandW}(\Numeral{2n}))}{}) \\
 & \peq  \ExpStep(\GInterpret{G_{\EverySnd}\,\VEverySnd(V_{\RandW}(\Numeral{2n}))}{}) \\
 & \peq (\ExpStep)^3( \GDirac(\nextop(\SemInr{\SemPair{\Numeral{2n}}{\lambda\_ . W_{\RandW}}})))
\end{align*}
where 
\[
W_{\RandW} =  
  \left( \GChoice{\frac12}{\GInterpret{\VEverySnd(\Tail(\VRandW(\Numeral{2n\!-\!1})))}{}}{\GInterpret{\VEverySnd(\Tail(\VRandW(\Numeral{2n\!+\!1})))}{}}  \right) 
\]
Similarly, 
\begin{align*}
 \Eval(\VRandWTwo(\Numeral{2n})) & \leadsto \Eval(\Cons{\Numeral{2n}}{\Lam[y]{\Ifz{\Numeral{2n}}{\Nil}{\Choice{\frac{1}{2}}{\VRandWTwo(\Numeral{2n})}{\Choice{\frac{1}{2}}{\VRandWTwo(\Numeral{2n-2})}{\VRandWTwo(\Numeral{2n+2})}}}}}) \\
 & \peq \Fold{(\Inr(\Numeral{2n}, \Lam[y]{\Ifz{\Numeral{2n}}{\Nil}{\Choice{\frac{1}{2}}{\VRandWTwo(\Numeral{2n})}{\Choice{\frac{1}{2}}{\VRandWTwo(\Numeral{2n\!-\!2})}{\VRandWTwo(\Numeral{2n\!+\!2})}}}}))}
\end{align*}
So showing $\GInterpret{\VEverySnd(\VRandW(\Numeral{2n}))}{} \GLogRel\LazyL \Eval(\VRandWTwo(\Numeral{2n}))$ easily reduces to showing 
\begin{equation} \label{eq:randomwalk:1:app}
 \Later(W_{\RandW} \GLogRel\LazyL \Eval(\Choice{\frac{1}{2}}{\VRandWTwo(\Numeral{2n})}{\Choice{\frac{1}{2}}{\VRandWTwo(\Numeral{2n\!-\!2})}{\VRandWTwo(\Numeral{2n\!+\!2})}} ))
\end{equation}
Since $2n-1>0$, a similar reduction to above shows that 
\begin{align*}
 \GInterpret{\VEverySnd(\Tail(\VRandW(\Numeral{2n\!-\!1})))}{} 
 & \peq (\ExpStep)^2\left(\GChoice{\frac12}{\GInterpret{\VEverySnd(\VRandW(\Numeral{2n\!-\!2}))}{}}{\GInterpret{\VEverySnd(\VRandW(\Numeral{2n}))}{}}\right)   \\
 \GInterpret{\VEverySnd(\Tail(\VRandW(\Numeral{2n\!+\!1})))}{} 
 & \peq (\ExpStep)^2\left(\GChoice{\frac12}{\GInterpret{\VEverySnd(\VRandW(\Numeral{2n}))}{}}{\GInterpret{\VEverySnd(\VRandW(\Numeral{2n\!-\!2}))}{}}  \right)
\end{align*}
so by Lemma~\ref{lem:prereqs:step-choice} (\ref{eq:randomwalk:1:app}) is equivalent to 
\[
  (\Later)^3
  \left( 
  \begin{aligned}
   &\GChoice{\frac12}{\GInterpret{\VEverySnd(\VRandW(\Numeral{2n}))}{}}{\left(\GChoice{\frac12}{\GInterpret{\VEverySnd(\VRandW(\Numeral{2n\!-\!2}))}{}}{\GInterpret{\VEverySnd(\VRandW(\Numeral{2n\!+\!2}))}{}}\right)} \\
  & \GLogRel\LazyL \CChoice{\frac{1}{2}}{\Eval(\VRandWTwo(\Numeral{2n}))}{\left(\CChoice{\frac{1}{2}}{\Eval(\VRandWTwo(\Numeral{2n\!-\!2}))}{\Eval(\VRandWTwo(\Numeral{2n\!+\!2}))}\right)} 
  \end{aligned}
  \right)
\]
which follows from the guarded induction hypothesis and Lemma~\ref{lem:prereqs:choice-lemma}. 

For the other direction, we will show $\GInterpret{\VRandWTwo(\Numeral{2n})}{} \GLogRel\LazyL \Eval({\VEverySnd(\VRandW(\Numeral{2n}))})$ by guarded recursion, focussing
again on the case of $n>0$. First unfold definitions on the left hand side:
\begin{align*}
 \GInterpret{\VRandWTwo(\Numeral{2n})}{} & = \ExpStep\GInterpret{G_{\RandWTwo}\, \VRandWTwo (\Numeral{2n})}{} \\
 & \peq \ExpStep(\GDirac(\nextop(\SemInr{\SemPair{\Numeral{2n}}{
 \lambda\_ .\GInterpret{{\Choice{\frac{1}{2}}{\VRandWTwo(\Numeral{2n})}{\Choice{\frac{1}{2}}{\VRandWTwo(\Numeral{2n\!-\!2)})}{\VRandWTwo(\Numeral{2n\!+\!2)}}}}}{}}}))) \\
 & \peq \ExpStep\left( \GDirac(\nextop(\SemInr{\SemPair{\Numeral{2n}}{\lambda\_ .\GChoice{\frac12}{\GInterpret{\VRandWTwo(\Numeral{2n})}{}}{\left( \GChoice{\frac12}{\GInterpret{\VRandWTwo(\Numeral{2n\!-\!2)})}{}}{\GInterpret{\VRandWTwo(\Numeral{2n\!+\!2)})}{}}  \right)} }}))  \right)
\end{align*}
and on the right:
\begin{align*}
 \Eval(\VRandW(\Numeral{2n})) & \leadsto \Eval(G_{\RandW}\, \VRandW (\Numeral{2n})) \\
 & \leadsto \Cons{\Numeral{2n}}{\Lam[y]{\Ifz{\Numeral{2n}}{\Nil}{\Choice{\frac{1}{2}}{\VRandW(\Numeral{2n}\!-\!1)}{\VRandW(\Numeral{2n}\!+\!1)}}}}
\end{align*}
so that 
\begin{align*}
 \Eval({\VEverySnd(\VRandW(\Numeral{2n}))}) 
 & \leadsto \Eval(G_{\EverySnd} \, \EverySnd \, (\VRandW(\Numeral{2n}))) \\
 & \leadsto \Eval(G_{\EverySnd} \, \VEverySnd \, (\VRandW(\Numeral{2n}))) \\
 & \leadsto \Cons{\Numeral{2n}}{\Lam[y]\VEverySnd(\Tail(\Lam[y]{\Ifz{\Numeral{2n}}{\Nil}{\Choice{\frac{1}{2}}{\VRandW(\Numeral{2n}\!-\!1)}{\VRandW(\Numeral{2n}\!+\!1)}}}\, \Star))}
\end{align*} 
and so the goal reduces to showing
\begin{equation}  \label{eq:randomwalk:2:app}
  \Later
  \left( 
  \begin{aligned}
   & \GChoice{\frac12}{\GInterpret{\VRandW(\Numeral{2n})}{}}{\left( \GChoice{\frac12}{\GInterpret{\VRandW(\Numeral{2(n\!-\!1)})}{}}{\GInterpret{\VRandW(\Numeral{2(n\!+\!1)})}{}}  \right)} 
   \GLogRel\LazyL \\
   & 
   \Eval(\VEverySnd(\Tail(\Lam[y]{\Ifz{\Numeral{2n}}{\Nil}{\Choice{\frac{1}{2}}{\VRandW(\Numeral{2n}\!-\!1)}{\VRandW(\Numeral{2n}\!+\!1)}}}\, \Star)))
  \end{aligned}
  \right)
\end{equation}
Now, 
\begin{align*}
 & \Eval(\Tail(\Lam[y]{\Ifz{\Numeral{2n}}{\Nil}{\Choice{\frac{1}{2}}{\VRandW(\Numeral{2n}\!-\!1)}{\VRandW(\Numeral{2n}\!+\!1)}}}\, \Star)) \\
 & \leadsto \Eval(\Tail(\Choice{\frac{1}{2}}{\VRandW(\Numeral{2n}\!-\!1)}{\VRandW(\Numeral{2n}\!+\!1)}))  \\
 & \leadsto \CChoice{\frac12}{\Eval(\Tail(\VRandW(\Numeral{2n}\!-\!1)))}{\Eval(\Tail(\VRandW(\Numeral{2n}\!+\!1)))} \\
 & \leadsto \CChoice{\frac12}{\Eval(\Choice{\frac{1}{2}}{\VRandW(\Numeral{2n\!-\!2})}{\VRandW(\Numeral{2n})})}{\Eval(\Choice{\frac{1}{2}}{\VRandW(\Numeral{2n})}{\VRandW(\Numeral{2n\!+\!2})})} \\
 & \peq \CChoice{\frac 12}{\Eval(\VRandW(\Numeral{2n}))}{\left( \CChoice{\frac 12}{\Eval(\VRandW(\Numeral{2n\!-\!2}))}{\Eval(\VRandW(\Numeral{2n\!+\!2}))}  \right)}
\end{align*}
so that 
\begin{align*}
 & \Eval(\VEverySnd(\Tail(\Lam[y]{\Ifz{\Numeral{2n}}{\Nil}{\Choice{\frac{1}{2}}{\VRandW(\Numeral{2n}\!-\!1)}{\VRandW(\Numeral{2n}\!+\!1)}}}\, \Star))) \\
 & \leadsto
 \CChoice{\frac 12}{\Eval(\VEverySnd(\VRandW(\Numeral{2n})))}{\left( \CChoice{\frac 12}{\Eval(\VEverySnd(\VRandW(\Numeral{2n\!-\!2})))}{\Eval(\VEverySnd(\VRandW(\Numeral{2n\!+\!2})))}  \right)}
\end{align*}
From this it follows that (\ref{eq:randomwalk:2:app}) can be proved using the guarded recursion assumption. 
\end{proof}

\end{document}